\newtheorem{theorem}{Theorem}[section]
\newtheorem{lemma}[theorem]{Lemma}
\newtheorem{claim}{Claim}
\newtheorem*{claim*}{Claim}
\newtheorem{definition}{Definition}
\numberwithin{theorem}{section}
\declaretheoremstyle[
    bodyfont=\normalfont  %
]{upright}
\declaretheorem[
    style=upright,  %
    refname={Algorithm,Algorithms},
    Refname={Algorithm,Algorithms},
    name={Algorithm}
]{algorithm}
\newcommand{\dist}{d}
\newcommand{\eps}{\varepsilon}
\newcommand{\JMS}{Greedy\xspace}
\newcommand{\Sf}{S_{\mathsf{free}}}
\newcommand{\Sr}{S_{\mathsf{reg}}}
\newcommand{\Ff}{F_{\mathsf{free}}}
\newcommand{\calI}{\mathcal{I}}
\newcommand{\poly}{\mathrm{poly}}
\newcommand{\calB}{\mathcal{B}}
\newcommand{\calC}{\mathcal{C}}
\newcommand{\calN}{\mathcal{N}}
\newcommand{\calQ}{\mathcal{Q}}
\newcommand{\calH}{\mathcal{H}}
\newcommand{\calM}{\mathcal{M}}
\newcommand{\calP}{\mathcal{P}}
\newcommand{\calX}{\mathcal{X}}
\newcommand{\calU}{\mathcal{U}}
\newcommand{\bcalU}{\widetilde{\mathcal{U}}}
\newcommand{\tmu}{\widetilde{\mu}}
\newcommand{\calR}{\mathcal{R}}
\newcommand{\bcalR}{\widetilde{\mathcal{R}}}
\newcommand{\calL}{\mathcal{L}}
\newcommand{\clcost}{\text{cost}}
\newcommand{\closedclcost}{\text{closedcost}}
\newcommand{\increase}{\text{cost-inc}}
\newcommand{\clients}{D}
\newcommand{\clientspure}{D^*_{\texttt{pure}}}
\newcommand{\clientscheap}{D^*_{\texttt{cheap}}}
\newcommand{\clientsexpensive}{D^*_{\texttt{exp}}}
\newcommand{\facilities}{F}
\newcommand{\opt}{\text{OPT}}
\newcommand{\sopt}{\text{opt}}
\newcommand{\optlpfl}{\text{opt}_{\text{LP}}}
\newcommand{\Maxdist}{M}
\newcommand{\dummyset}{\Lambda}
\newcommand{\core}{C^{\text{core}}}
\newcommand{\algstatex}{\ensuremath{(\alpha, S, A, \theta)}\xspace}
\newcommand{\logadaptalg}{\textsc{LogAdaptive}\xspace}
\newcommand{\JMSalg}{\textsc{Greedy}\xspace}
\newcommand{\mergealg}{\textsc{MergeSolutions}\xspace}
\newcommand{\completesol}{\textsc{CompleteSolution}\xspace}
\newcommand{\completesolx}{\textsc{CompleteSolution}}
\newcommand{\completesequence}{\textsc{CompleteSequence}\xspace}
\newcommand{\calLexp}{\calL_{\text{exp}}}
\title{A $(2{+\eps})$-Approximation Algorithm for Metric $k$-Median}
\author{Vincent Cohen-Addad\footnote{Google Research, France.} \and
Fabrizio Grandoni\footnote{IDSIA, USI-SUPSI, Switzerland. Supported in part by the SNF Grants 200021-200731 and 200021-236706.} \and Euiwoong Lee\footnote{University of Michigan, Ann Arbor, MI, USA. Supported in part by NSF 2236669.} \and Chris Schwiegelshohn \footnote{Aarhus University, Denmark. Supported in part by the Independent Research Fund Denmark (DFF) under grant No 1051-00106B and a Google Research Award.} \and Ola Svensson\footnote{EPFL, Switzerland. Supported in part by the Swiss State Secretariat for Education, Research and Innovation (SERI) under contract number MB22.00054.}}
\newlength\hwhatcwidth%
\newlength\hwhatuwidth%
\newcommand*\halfwidehat[1]{%
  \settowidth\hwhatcwidth{\ensuremath{#1}}%
  \settowidth\hwhatuwidth{\ensuremath{\mathrm{#1}}}%
  \addtolength\hwhatuwidth{-\hwhatcwidth}%
  \makebox[0pt][l]{%
    \kern\dimexpr0.25\hwhatcwidth-0.667\hwhatuwidth\relax%
    \ensuremath{\widehat{\vphantom{#1}\rule{0.5\hwhatcwidth}{0pt}}}%
  }#1%
}%
\date{}
\begin{document}
\maketitle

\begin{abstract}
\noindent In the classical NP-hard (metric) $k$-median problem, we are given a set of $n$ clients and centers with metric distances between them, along with an integer parameter $k \geq 1$. The objective is to select a subset of $k$ \emph{open} centers that minimizes the total distance from each client to its closest open center.

In their seminal work, Jain, Mahdian, Markakis, Saberi, and Vazirani presented the \JMS algorithm for facility location,  which implies a $2$-approximation algorithm for $k$-median that opens $k$ centers in \emph{expectation}. 
Since then, substantial research has aimed at narrowing the gap between their algorithm and the best achievable approximation by an algorithm guaranteed to open \emph{exactly} $k$ centers,
as required in the $k$-median problem.
 During the last decade, all improvements have been achieved by leveraging their algorithm (or a small improvement thereof), followed by a second step called bi-point rounding, which inherently adds an additional factor to the approximation guarantee.

Our main result closes this gap: for any $\eps > 0$, we present a $(2+\eps)$-approximation algorithm for the $k$-median problem, improving the previous best-known approximation factor of $2.613$. Our approach builds on a combination of two key algorithms. First, we present a non-trivial modification of the \JMS algorithm that operates with only $O(\log n{/\eps^2})$ adaptive phases. Through a novel walk-between-solutions approach, this enables us to construct a $(2+\eps)$-approximation algorithm for $k$-median that consistently opens at most $k + O(\log n{/\eps^2})$ centers: via known results, this already implies a $(2+\eps)$-approximation algorithm that runs in quasi-polynomial time. Second, we develop a novel $(2+\eps)$-approximation algorithm tailored for stable instances,  
where removing any center from an optimal solution increases the cost by at least an $\Omega(\eps^3/\log n)$ fraction. Achieving this involves several ideas, including a sampling approach inspired by the $k$-means++ algorithm and a reduction to submodular optimization subject to a partition matroid.
This allows us to convert the previous result into a polynomial time algorithm that opens exactly $k$ centers while maintaining the $(2+\eps)$-approximation guarantee. 
 
\end{abstract}

\thispagestyle{empty}

\newpage

\tableofcontents
\thispagestyle{empty}

\newpage
\setcounter{page}{1}
\section{Introduction}

Given a collection $\clients$ of {$n$} \emph{points}, a clustering is a partition of the points into (typically) disjoint subsets (the \emph{clusters}) such that points in the same cluster are \emph{similar} and points in different clusters are \emph{dissimilar}.
One of the classical and best-studied clustering problems is \emph{$k$-median} (see also Table \ref{tab:results}). Here, in addition to the set $D$ (points are also called \emph{clients} in this framework), we are given an integer parameter $k>0$ and a set $\facilities$\footnote{Sometimes in the literature one assumes $\facilities=\clients$ or $\clients\subseteq \facilities$. We consider the more general case and make no such assumption.} (the \emph{facilities} or \emph{centers}), with metric distances $d(a,b)$ for $a,b\in \clients\cup \facilities$. Our goal is to select $k$ centers $S\subseteq F$ (\emph{open} centers) so as to minimize the sum of the distances from each client to the closest open center. In other words, the goal is to minimize the following objective function
$$
\clcost(S):=\sum_{j\in D} \min_{i\in S}d(i,j).
$$
The $k$-clustering induced by $S$ is obtained by considering, for each $c\in S$, the subset of points that have $c$ as a closest center in $S$ (breaking ties arbitrarily). 
We use $\sopt_k$ to denote the optimal cost, and simply use $\sopt$ when $k$ is clear from the context.

For most clustering variants, it is NP-hard to compute an optimal clustering and metric $k$-median is no exception. For this reason a lot of effort was devoted to the design of (polynomial-time) approximation algorithms for this problem, i.e., algorithms that compute a feasible solution $S$ whose cost is provably within some factor $\alpha>1$ of $\sopt$ (the \emph{approximation factor} or \emph{approximation ratio/guarantee}). 

Most of the best-known approximation algorithms for $k$-median, including the current-best one \cite{Cohen-AddadLS23,GowdaPST23}, are based on the following high-level approach (see Section \ref{sec:relatedWork} for alternative approaches in the literature). One considers the associated facility location problem, where instead of $k$ we are given a (uniform) facility opening cost $f$. In this problem a feasible solution $S$ can contain an arbitrary number of centers/facilities (rather than $k$), however in the objective function, besides the \emph{connection cost} $cost(S)$, one has to pay also the \emph{opening cost} $f|S|$. Then one uses a Lagrangian Multiplier Preserving (LMP) $\alpha_{LMP}$-approximation for the problem (formal definition in Section \ref{sec:prelim}) to construct a \emph{bi-point} solution for $k$-median, i.e., a convex combination of a solution $S_1$ (opening fewer than $k$ centers) and a solution $S_2$ (opening more than $k$ centers), that opens exactly $k$ centers in a fractional sense. In particular, sampling one of the two solutions according to the respective coefficients would give a solution of expected cost $\alpha_{LMP}\cdot \sopt$ that opens $k$ centers \emph{in expectation}. Finally, a bi-point rounding algorithm is used to derive a feasible $k$-median solution, while losing another factor $\alpha_{BPR}$ in the approximation. The product $\alpha_{LMP}\cdot \alpha_{BPR}$ gives the overall approximation factor. We remark that the state of the art leaves a significant gap between the approximation factor for a solution with $k$ centers \emph{in expectation} and a true $k$-median solution.

\begin{table}[h]
\vspace{1em}
    \centering
    \begin{tabularx}{430pt}{|c|c|>{\centering\arraybackslash}X|}
    \hline
     \textbf{Reference} & \textbf{Approximation factor} & \textbf{Technique}  \\
         \hline
         \hline
           \cite{Bartal96} & $O(\log n \log\log n)$ & tree embeddings 
           \\\hline
           \cite{CharikarCGG98} & $O(\log k \log\log k)$ & tree embeddings 
           \\
           \hline
           \cite{CharikarGTS99} & 6.667 & dependent LP rounding 
           \\\hline
           \cite{JaiV01} & 6 & 
           LMP + bi-point rounding 
           \\\hline
           \cite{JMS02,JainMMSV03} & 4 & LMP + bi-point rounding
           \\\hline
           \cite{AryaGKMMP04} & $3+\varepsilon$ & 
           local search 
           \\\hline
           \cite{LiS16} & 2.733 & 
           LMP + bi-point rounding
           \\\hline
           \cite{ByrkaPRST17} & 2.675 & 
           LMP + bi-point rounding
           \\\hline
           \cite{Cohen-AddadLS23,GowdaPST23} & 2.613 & LMP + bi-point rounding
           \\\hline
           \textbf{Theorem \ref{thm:main}} & $2+\varepsilon$ & LMP interpolation + stable algorithm
           \\
           \hline
    \end{tabularx}
    \caption{Progression of approximation factors for $k$-median. Dependencies on $\varepsilon$ are omitted due to numerical overestimation, except for \cite{AryaGKMMP04} and the final result.
    }\label{tab:results}
\end{table}

In more detail, for the LMP stage, \cite{JaiV01} gave a $3$-approximation algorithm which was subsequently improved to $2$ by \cite{JMS02,JainMMSV03}, and then to $2-\eta$ for some $\eta > 2.25\cdot 10^{-7}$ by \cite{Cohen-AddadLS23}.
\cite{JaiV01} also presented a factor $2$ algorithm for the bi-point rounding stage. $\cite{LiS16}$ subsequently improved this to a factor $\frac{1+\sqrt{3}}{2} \approx 1.366$ when allowing to use $k+O(1)$ many centers\footnote{Solutions opening slightly more than $k$ centers are sometimes called \emph{pseudo-solutions} in the literature.}, and also gave a polynomial-time algorithm for obtaining a proper $k$-median clustering from a $k+O(1)$ clustering essentially without any loss in the approximation ratio. 
This approach was further refined by \cite{ByrkaPRST17} who gave a $1.337$-approximate bi-point rounding algorithm and also showed that no LP-based rounding algorithm can do better than $\frac{1+\sqrt{2}}{2}$.
The state of the art for rounding methods is due to \cite{GowdaPST23} who gave a $1.306$-approximation algorithm, while showing that no algorithm can do better than $\sqrt{\phi} \approx 1.272$, where $\phi = \frac{1+\sqrt{5}}{2}$ is the golden ratio. 

Unfortunately, there is an inherent limitation to this approach. No approximation algorithm for the LMP stage can do better than $1+2/e$ \cite{JMS02,JainMMSV03}, and using the lower bound for bi-point rounding algorithms by \cite{GowdaPST23}, the best possible analysis cannot yield an approximation factor better than $(1+2/e)\sqrt{\phi}\approx 2.207$.
This leaves a significant gap to the inapproximability lower bound of $(1+2/e)\approx 1.735$ \cite{JMS02,JainMMSV03}, and even to the best-known approximation algorithm that opens $k$-centers in expectation~\cite{Cohen-AddadLS23}, which achieves an approximation guarantee of $2-\eta$ for some small $\eta>0$.

\subsection{Our Results}
Our main result is stated as follows. (As commonly done, we use "with high probability" to mean that an event occurs with probability at least $1 - 1/\text{poly}(n)$.)
\begin{theorem}
\label{thm:main}
   For every $\eps>0$, there is a {randomized} polynomial-time algorithm {for $k$-median} that returns a solution with cost at most $(2+\eps)\sopt$ {with high probability}.
\end{theorem}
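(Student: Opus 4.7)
The plan is to attack the $(2+\eps)$-approximation in two complementary phases, matching the outline sketched in the abstract. First I would design a variant of the \JMS/LMP facility-location algorithm whose execution is organized into only $O(\log n/\eps^2)$ \emph{adaptive phases}, in the sense that within a phase the set of tight centers and the client assignments depend on a single current Lagrangian multiplier $\lambda$. The motivation is that, as $\lambda$ decreases, the number of opened centers $|S(\lambda)|$ grows; if this growth were continuous we could find $\lambda$ with $|S(\lambda)|=k$ directly, but $|S(\lambda)|$ actually jumps in discrete steps, and these jumps are precisely what forces the bi-point construction in prior work. Restricting the algorithm to $O(\log n/\eps^2)$ phases bounds the total number of such jumps, so I would then \emph{walk} between two consecutive LMP solutions $S,S'$ by a carefully chosen sequence of single-facility additions/removals with clients re-routed greedily. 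A potential argument shows that along the walk the cost stays within $(2+\eps)\sopt$, so one intermediate pseudo-solution opens at most $k + O(\log n/\eps^2)$ centers at cost $(2+\eps)\sopt$. Together with Li--Svensson-style rounding of $(k+O(\log n))$-pseudo-solutions, this already yields a quasi-polynomial time $(2+\eps)$-approximation.

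To get polynomial running time I would reduce to \emph{stable} instances, where removing any single center from OPT increases $\sopt$ by at least an $\Omega(\eps^3/\log n)$ fraction. Stability guarantees that each optimum cluster has enough irreplaceable client mass that a close-to-optimum center can be identified from the data. The plan is to build, in polynomial time, a candidate pool of size $\poly(n,1/\eps)$ by iteratively sampling clients proportional to their current connection cost, in the spirit of $k$-means++; a cluster-by-cluster $D^2$-sampling analysis shows that w.h.p. the pool contains, for each optimum cluster, a candidate whose cost is within $(1+\eps)$ of the cluster's best center. I would then view the task of choosing one candidate per cluster as maximizing a monotone submodular function — the cost \emph{saved} relative to a trivial solution — subject to a \emph{partition matroid} whose parts are the per-cluster candidate sets, and apply a continuous-greedy/swap-rounding procedure guided by the pseudo-solution from the first phase (which specifies good parts). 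This converts the $k + O(\log n/\eps^2)$-center pseudo-solution into an exact $k$-center solution at a multiplicative cost of $1+O(\eps)$.

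The main obstacle, in my view, is bridging the two components when the given instance is \emph{not} stable. I would therefore set up a recursive case-split: test whether the current instance is stable with parameter $\Omega(\eps^3/\log n)$; if yes, invoke the submodular--matroid procedure; if no, a center of OPT is almost redundant, which means we can effectively merge its cluster with a neighbor and solve a $(k-1)$-median instance while losing at most $O(\eps^3/\log n)\sopt$ per recursion level. Keeping the recursion depth at $O(\log n/\eps^3)$ then confines the total additive loss to $O(\eps)\sopt$. The delicate part is threefold: (i) making the stability test constructive when OPT is not accessible, by certifying stability from the pseudo-solution of the first phase; (ii) ensuring that the deletion step in the unstable case actually corresponds to a single drop in $k$ in the surviving instance; and (iii) keeping the candidate pool polynomial across recursions without the stability parameter deteriorating too fast. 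This bookkeeping, more than either individual component, is where I expect the technical heart of the argument to lie.
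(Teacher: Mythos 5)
Your two building blocks are essentially the paper's (Theorem~\ref{thm:pseudo-approx} and Theorem~\ref{thm:mainadditivecenters}), but the way you combine them — a constructive stability test plus a "delete a redundant center and recurse on $(k-1)$-median" loop — has a genuine gap, and you have correctly located it yourself in point (i)--(iii) without resolving it. First, the recursion depth is not $O(\log n/\eps^3)$: the only consequence of consecutive instability is $\sopt_{k-t}\le (1+\beta)^{t}\sopt_k$, and since the aspect ratio is $\poly(n)$ the instance can stay $\beta$-unstable for up to $\Theta(\beta^{-1}\log n)=\Theta(\log^2 n/\eps^3)$ consecutive values of $k$, at which point the accumulated factor is polynomial in $n$. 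Second, even granting your claimed depth, the arithmetic does not close: depth $O(\log n/\eps^3)$ times a per-level loss of $O(\eps^3/\log n)\cdot\sopt$ is $\Theta(\sopt)$, a constant-factor loss, not $O(\eps)\sopt$. Third, "merge the redundant cluster with a neighbor" is not an executable step (it presupposes access to $\opt$), and certifying $\sopt_{k-1}\ge(1+\beta)\sopt_k$ from an approximate or pseudo-solution with accuracy $\beta=\Theta(\eps^3/\log n)$ is exactly the kind of test no component of your construction provides.

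The paper avoids all of this: no stability test and no recursion are needed. Let $\Delta=O(\log n/\eps^2)$ be the surplus of Theorem~\ref{thm:pseudo-approx}. Run that algorithm with target $k'=k-\Delta$ (so it opens at most $k'+\Delta=k$ centers), run the stable-instance algorithm of Theorem~\ref{thm:mainadditivecenters} for \emph{every} $k''\in(k',k]$ with stability parameter $\beta=\Theta(\eps/\Delta)=\Theta(\eps^3/\log n)$, and output the cheapest of all solutions produced. If $\sopt_{k'}\le(1+\eps)\sopt_k$, the first solution already costs $(2+O(\eps))\sopt_k$. Otherwise, walking from $k$ down through the window of width $\Delta$, there must be a first $k''$ with $\sopt_{k''-1}\ge(1+\beta)\sopt_{k''}$, and for this $k''$ one also has $\sopt_{k''}\le(1+\eps)\sopt_k$ (since all levels above it were unstable); the stable run at that $k''$ then returns a solution with $k''\le k$ centers of cost $(2+O(\eps))\sopt_k$. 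The algorithm never needs to know which case occurred or which $k''$ is stable — correctness is a pigeonhole statement about one of the computed candidates. The coupling your plan misses is that the width of this window is exactly the surplus $\Delta$ of the pseudo-approximation, so the entire unstable regime is absorbed simply by lowering the target to $k-\Delta$; this is what makes both the stability certification and the recursion unnecessary. (A minor further divergence: the paper's stable-instance algorithm is self-contained, built on local search, leader sampling and submodular maximization, and is not "guided by the pseudo-solution," but that concerns the internals of Theorem~\ref{thm:mainadditivecenters} rather than the combination proving Theorem~\ref{thm:main}.)
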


The algorithm in our main theorem combines two novel algorithms. First, we address the limitations of the bi-point rounding method by introducing a non-trivial variant of the classic \JMS algorithm~\cite{JMS02, JainMMSV03}. The \JMS algorithm can be highly {\em adaptive} in that tiny changes to the opening cost $f$ may lead to vastly different solutions $S$ and $S'$, where $|S| < k$ and $|S'| > k$, forcing previous bi-point rounding approaches to incur a large additional factor in the approximation guarantee. We resolve this high-adaptivity by presenting a low-adaptivity version, which makes adaptive decisions in only $O(\log(n)/\eps^2)$ phases. This modification intuitively renders the algorithm less adaptive than the original, where each choice depends heavily on previous ones. We then introduce a novel walk-between-solutions algorithm that enables us to return a solution with at most $k + O(\log(n)/\eps^2)$ centers, with no additional loss in the approximation factor (up to $1 + \epsilon$). Specifically, we achieve the following result.

\begin{restatable}[]{theorem}{pseudoapprox}
   For every constant $\eps \in (0, 1/6)$, there is a polynomial-time algorithm {for $k$-median} that returns a solution {containing} at most $k+ O(\log n/\eps^2)$ many centers and {of} cost at most $(2+\eps)\sopt$.
    \label{thm:pseudo-approx}
\end{restatable}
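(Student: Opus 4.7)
The plan is to combine the low-adaptivity variant of the \JMS algorithm (the first novelty advertised in the introduction) with a ``walk between solutions'' argument. Recall the standard Lagrangian relaxation: any Lagrangian-Multiplier-Preserving (LMP) $\alpha$-approximation for facility location, run with uniform opening cost $f$, outputs a set $S(f)$ satisfying
\[
\clcost(S(f)) + \alpha\, f\, |S(f)| \;\le\; \alpha\bigl(\sopt + f k\bigr),
\]
since the optimal $k$-median solution is feasible for facility location at cost $\sopt + fk$. Setting $\alpha = 2+\eps$ and performing a parametric search on $f$, I would first locate two very close values $f_1 < f_2$ with $|S(f_1)| > k \ge |S(f_2)|$; with $f$ of order $\eps \sopt / k$, both solutions have clustering cost at most $(2+O(\eps))\sopt$, but the gap $|S(f_1)| - |S(f_2)|$ may still be very large because of the extreme adaptivity of the classical \JMS procedure.

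To overcome this adaptivity, I would invoke the new phase-structured variant of \JMS, which is the main combinatorial ingredient advertised in the paragraph preceding the theorem: a $(2+\eps)$-LMP algorithm that executes in only $\ell = O(\log n / \eps^2)$ adaptive phases, each phase being a deterministic function of the state after the previous phase together with the current opening cost $f$. Given this, I would define the hybrid sequence $H_0, H_1, \dots, H_\ell$, where $H_t$ runs the first $t$ phases with opening cost $f_2$ and the remaining $\ell - t$ phases with opening cost $f_1$, so that $H_0 = S(f_1)$ and $H_\ell = S(f_2)$. A robustness extension of the LMP analysis (in which each phase tolerates a multiplicative $(1+\eps)$ perturbation of $f$) should show that every $H_t$ still satisfies $\clcost(H_t) \le (2+\eps) \sopt$. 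By an intermediate-value argument on $|H_0|, \dots, |H_\ell|$, there exists an index $t^{*}$ with $|H_{t^{*}}| > k \ge |H_{t^{*}+1}|$.

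The final step is to bound $|H_{t^{*}}|$ itself. Since consecutive hybrids differ only in the decisions of a single phase, $|H_{t^{*}}| - |H_{t^{*}+1}|$ equals the difference between the facilities opened by phase $t^{*}+1$ under $f_{1}$ versus $f_{2}$. The key structural claim is that, because $f_{1}$ and $f_{2}$ lie within a tiny relative range and because a phase batches decisions around a controlled ``threshold event,'' each phase contributes at most $O(1)$ excess openings, and in any case the total gap summed across all $\ell$ phases is at most $O(\log n/\eps^{2})$. Consequently $|H_{t^{*}}| \le k + O(\log n / \eps^{2})$, and returning $H_{t^{*}}$ yields the desired pseudo-solution.

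The principal obstacle is the construction of the $O(\log n / \eps^2)$-phase LMP algorithm together with the robustness of its LMP inequality under hybrid executions that mix two opening costs. Classical \JMS is an inherently sequential primal-dual process in which each facility is opened in response to all the dual-growth events preceding it; compressing this into polylogarithmically many adaptive rounds appears to require an $\eps$-discretization of dual values with geometric batching, and showing that swapping $f$ between phases preserves the LMP inequality demands an analysis in which each phase's dual contribution can be charged independently of the others. Both features are exactly what make the walk between $S(f_{1})$ and $S(f_{2})$ closable without blowing up the approximation factor.
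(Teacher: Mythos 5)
There is a genuine gap at the step where you bound $|H_{t^*}|$. Your argument correctly localizes, via the intermediate-value argument over the $\ell$ hybrids, a pair of consecutive hybrids that sandwich $k$; but the claim that consecutive hybrids ``differ only in the decisions of a single phase'' and that ``each phase contributes at most $O(1)$ excess openings'' is exactly the difficulty the theorem has to overcome, and it is false as stated. Within a single phase of the low-adaptivity algorithm, stage~1 still opens facilities adaptively (opening one facility changes which others are openable), and an exponentially small change of $f$ -- or a change in the prefix of earlier phases -- can swing the number of facilities opened \emph{inside that one phase} by an arbitrarily large amount: think of one facility $i_1$ that conflicts with many facilities $i_2,\dots,i_m$, where one valid maximal execution opens only $i_1$ and another opens all of $i_2,\dots,i_m$. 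So your $|H_{t^*}|$ can exceed $k$ by far more than $O(\log n/\eps^2)$, and nothing in the hybrid construction controls this jump. (A secondary, smaller issue: the binary search does not let you assume $f=\Theta(\eps\,\sopt/k)$; fortunately the LMP inequality already gives cost $\le (2+O(\eps))\sopt$ for the side opening more than $k$ facilities, so the cost half of your argument is fine provided the hybrids satisfy the LMP guarantee.)

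The paper closes precisely this gap with machinery your proposal omits: instead of switching the parameter between whole phases, it walks \emph{within} the offending phase one facility at a time, from one maximal execution to the other. To keep every intermediate sequence a valid (maximal) execution while removing facilities one by one, it introduces \emph{free facilities} (at most three per phase), whose opening cost is never charged in the LMP accounting but which preserve maximality; it then finds two consecutive intermediate executions sandwiching $k$ that differ in a single facility, replaces that facility by a free copy with a continuous distance parameter $u(\tilde\imath)$, and binary-searches on $u(\tilde\imath)$ (exactly as one does on $f$) to reduce to two solutions identical through phase $p$ whose single ``difference parameter'' differs by $2^{-n}$; the robustified ($\eta$-valid) LMP analysis tolerates this slack, and the process recurses into phase $p+1$. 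The output then opens \emph{exactly} $k$ regular facilities plus $O(\log n/\eps^2)$ free ones, which is what yields both the cardinality bound and the $(2+\eps)$ cost bound. Without the free-facility walk and the binary search on the new parameter, the ``intermediate-value over phases'' argument alone cannot bound the number of opened centers.
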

Theorem~\ref{thm:pseudo-approx} is proved in Section~\ref{sec:log_adaptivity}, and an overview of the main ideas is given in Section~\ref{sec:pseudo-overview}. %

Combining this result with the \cite{LiS16} reduction from $k$-median {pseudo-solutions} to true $k$-median solutions already yields a $(2+\varepsilon)$-approximation in quasi-polynomial time.
In order to improve this to a polynomial time algorithm, we use a completely different algorithm. We say that an instance is $\beta$-stable if {$\sopt_{k-1} \geq (1+\beta)\cdot \sopt_{k}$}. Our result is as follows.
\begin{restatable}[]{theorem}{stableapprox}
For any two constants $\eps, \zeta>0$, there exists a randomized polynomial-time algorithm that, given a $(\zeta/\log n)$-stable $k$-median instance,  returns a solution of cost at most $(2+\eps)\sopt$ with high probability. 
    \label{thm:mainadditivecenters}
\end{restatable}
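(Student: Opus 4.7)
The starting point is Theorem~\ref{thm:pseudo-approx}: first I would invoke it with a suitably scaled error parameter to produce, in polynomial time, a pseudo-solution $S_0$ of cost at most $(2+\eps/4)\sopt$ opening $k+\ell$ centers, where $\ell=O(\log n/\eps^2)$. The remaining goal is to turn $S_0$ into an honest $k$-center solution while losing only a $(1+\eps/4)$ multiplicative factor. Stability is what makes this affordable: the hypothesis $\sopt_{k-1}\ge(1+\zeta/\log n)\sopt$ guarantees that every optimal center is ``worth'' at least a $\zeta/\log n$ fraction of $\sopt$, so the $\ell$ superfluous centers of $S_0$ cannot be indispensable once the true optimal centers are properly identified.

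Next I would enrich $S_0$ into a polynomial-sized candidate pool $\calU\subseteq F$ via a $k$-means++-style $D$-sampling phase, in which centers are drawn with probability proportional to the current connection cost. Standard $D$-sampling analysis ensures that any cluster of the optimal solution contributing an $\Omega(1/\poly(n))$ fraction of $\sopt$ is captured (in the sense that a near-optimal representative is placed in $\calU$) with high probability. Stability upgrades this to a simultaneous guarantee for all $k$ optimal centers: since removing any single $o_i$ from the optimum inflates the cost by at least $(\zeta/\log n)\sopt$, a triangle-inequality/reassignment argument shows that cluster $i$ contributes a comparable amount to $\sopt$ in its immediate neighborhood, so polynomially many rounds suffice to hit every optimal cluster.

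The remainder is the reduction to constrained submodular optimization advertised in the abstract. Using $S_0$ as a skeleton, I would partition $\calU$ into $k$ blocks so that, with high probability over the sampling, each optimal center is witnessed entirely within one block. Selecting at most one candidate per block is then a partition matroid constraint, while the cost-reduction function $T\mapsto \clcost(\emptyset)-\clcost(T)$ is monotone submodular on $\calU$. Hence the task becomes monotone submodular maximization subject to a partition matroid, for which standard continuous-greedy or local-search techniques give a $(1-1/e)$-approximation out of the box.

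The main obstacle is precisely this $(1-1/e)$ loss, which alone would push the final ratio well past $2+\eps$. To escape it I would exploit the fact that $S_0$ is already a matroid-feasible warm start of cost at most $(2+\eps/4)\sopt$, and run a local-search/swap-based variant tailored to the $k$-median objective, in the spirit of the classical analysis of \cite{AryaGKMMP04}. Because the partition matroid aligns each block with at most one optimal center, single-block swaps behave like $(1,1)$-swaps between an open center and its near-optimal representative inside $\calU$, and the stability slack $\zeta/\log n$ is exactly what absorbs the cumulative error incurred over the $O(\ell)=O(\log n/\eps^2)$ corrective swaps. Telescoping the per-swap guarantees yields the desired $(1+\eps/4)$ loss on top of $\clcost(S_0)$, and hence the overall $(2+\eps)\sopt$ bound in polynomial time.
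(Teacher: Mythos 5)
There is a genuine gap, and it appears already in your first step. You treat stability as if it lets you delete the $\ell=O(\log n/\eps^2)$ surplus centers of the pseudo-solution $S_0$ at a $(1+\eps/4)$ loss, but the hypothesis $\sopt_{k-1}\ge(1+\zeta/\log n)\sopt_k$ points in the opposite direction: it is a \emph{lower} bound on the cost increase when a center is removed from an optimal solution, and it gives no upper bound whatsoever on the cost of removing centers from an arbitrary $(k+\ell)$-center solution. (Indeed, in the proof of Theorem~\ref{thm:main} the regime where removing centers is cheap is precisely the \emph{non}-stable case, which is handled by Theorem~\ref{thm:pseudo-approx} with $k'=k-\Delta$; the stable case is the one where no such shortcut exists.) Converting a $k+\Theta(\log n/\eps^2)$-center pseudo-solution into a true $k$-center solution with only $(1+\eps)$ loss in polynomial time is essentially the open difficulty the paper is circumventing — the known reduction of Li--Svensson only removes $O(1)$ centers per polynomial factor — so your plan cannot start from Theorem~\ref{thm:pseudo-approx}. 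The paper's algorithm for stable instances never uses the pseudo-solution: it starts from a plain local-search solution with exactly $k$ centers and uses stability only to bound the number of ``non-pure'' optimal clusters by $O(\log n/\eps^3)$ (via $|C^*_o|\,d(o,\opt-\{o\})\ge\beta\,\sopt$), which is what makes all subsequent enumeration polynomial.

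Your later steps inherit further problems of the same kind. $D$-sampling cannot ``hit every optimal cluster'' in polynomially many rounds: $k$ may be polynomial in $n$ and many optimal clusters contribute a negligible amount to the cost of the current solution, so they are sampled with negligible probability; stability does not imply each cluster carries an $\Omega(\zeta/\log n)$ cost share (it lower-bounds mass times separation, not connection cost). The paper sidesteps this by only sampling leaders of the $O_\eps(\log n)$ non-pure, non-basic-cheap clusters and proving the local-search solution is already a $(2+O(\eps))$-approximation on everything else (Lemmas~\ref{lem:purecost} and~\ref{lem:cheapcost}); consequently its partition matroid has only $O_\eps(\log n)$ parts, not $k$, which is also what keeps the leader/radius guessing polynomial. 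Finally, your fix for the $(1-1/e)$ loss is quantitatively unsound: a per-swap slack of $\zeta/\log n$ accumulated over $O(\log n/\eps^2)$ swaps gives $\zeta/\eps^2$, not $O(\eps)$, and no per-swap guarantee in the spirit of \cite{AryaGKMMP04} is actually established (that analysis natively gives $3+\eps$, not a $(1+\eps)$-preserving refinement of a warm start). The paper avoids the $(1-1/e)$ issue differently: the submodular objective measures the improvement over a dummy-based solution $M'_D$ of cost about $3\sopt$, while the target $M'_O$ has cost about $\sopt$ on the expensive clusters, with $\clcost(M'_O,\mu'_O)+\clcost(M'_D,\mu'_D)\le(4+O(\eps))\sopt$; then even a $\tfrac12$-approximate maximizer yields a solution of cost at most the average, i.e., $(2+O(\eps))\sopt$, and it must also \emph{simultaneously} choose which $|\calR|$ existing centers to close, a coupling your proposal does not address.
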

The algorithm for stable instances is inspired by a recent fixed-parameter-tractable approach that essentially identifies, by enumeration, small-radius balls guaranteed to contain the optimal centers and then solves the problem by maximizing a submodular function subject to a partition matroid constraint. However, there are two major challenges: first, the running time of~\cite{Cohen-AddadG0LL19} is not polynomial even when $k = O(\log n)$; and second, because $k$ may be large (potentially polynomial in $n$), we need to identify almost all clusters of the optimal solution before proceeding with any guessing/enumeration steps. We address these challenges through several novel ideas, including a sampling approach based on the $k$-means++ algorithm.  The proof of Theorem~\ref{thm:mainadditivecenters} is presented in Section~\ref{sec:centerremoval}, with an overview of the techniques provided in Section~\ref{sec:removal-overview}.

Let us see why Theorems \ref{thm:pseudo-approx} and \ref{thm:mainadditivecenters} combine to yield Theorem \ref{thm:main}. We compute a few feasible solutions. Let $\Delta$ be the surplus number of centers used in Theorem \ref{thm:pseudo-approx}. Let $k'=\max\{1,k-\Delta\}$. One solution is obtained by computing the optimum solution with one center for $k'=1$, and otherwise by running the algorithm from Theorem \ref{thm:pseudo-approx} with the number of centers being $k' = k-\Delta$. 
If $\sopt_{k'} \leq \sopt_k\cdot (1+\varepsilon)$, this is a good enough approximation. Then we run the algorithm from Theorem \ref{thm:mainadditivecenters} for every $k'' \in (k',k]$. If the solution with $k'$ centers computed before is not good enough, then there exists some $k'' \in (k',k]$ such that $\sopt_{k''-1} \geq (1+\beta)\cdot \sopt_{k''}$ with $\beta \in \Omega(\eps/\Delta) = \Omega(\varepsilon^3/\log n)$ and $\sopt_{k''}\leq (1+\varepsilon)\cdot \sopt_{k}$.
In this case, Theorem \ref{thm:mainadditivecenters} yields the desired result for a proper choice of the parameters. Outputting the best of all computed solutions yields Theorem \ref{thm:main}.

For Euclidean spaces, the set of candidate centers is infinite. By a standard reduction, we can find a polynomial-sized set $F\subset \mathbb{R}^d$ in polynomial time such that the cost of any constant factor approximate solution is approximated by a set of centers obtained from $F$, see for example Lemma 5.3 of \cite{KumarSS10}. Thus we also obtain a $(2+\varepsilon)$-approximation in Euclidean spaces, improving on $2.406$ \cite{Cohen-AddadEMN22}.

 Finally, let us remark that Theorem~\ref{thm:main} essentially closes the gap between the approximation guarantee of the best-known algorithm that opens $k$ centers in expectation~\cite{Cohen-AddadLS23} and algorithms that consistently open at most $k$ centers. However, a major open problem remains: obtaining a tight approximation result for $k$-median, as it is only known that it is hard to approximate the problem within a factor less than $(1 + 2/e)$~\cite{JMS02}. Building on this work, a natural next step would be to resolve this question or to significantly improve the approximation guarantee, even in the setting where $k$ centers are opened in expectation.

\subsection{Related Work}
\label{sec:relatedWork}

We already reviewed the LMP-based approximation algorithms earlier. But there are several other important results for $k$-median and closely related problems which we cover here. 

There also exist other approaches for designing approximation algorithms for metric $k$-median, most notably local search \cite{AryaGKMMP04,Cohen-AddadGHOS22,GT08,KanungoMNPSW04}, combinatorial algorithms \cite{MeP03}, metric embeddings \cite{Bartal98,Bartal96,CharikarCGG98}, or other LP-based rounding approaches \cite{ArcherRS03,CGTS02,CharikarLi12}.
There has also been a lot of work for the related problems $k$-means and $k$-center, which minimize the sum of squared distances and the maximum distance, respectively. Specifically for $k$-center, \cite{Gon85,HoS86} gave a $2$-approximation which is optimal unless $P\neq NP$.
$k$-means admits a $(9+\varepsilon)$-approximation in general metrics \cite{AhmadianNSW20}. In addition, both $k$-median and $k$-means have been studied in $d$-dimensional Euclidean spaces, where a PTAS is possible when either $d$ or $k$ are fixed \cite{ARR98,KumarSS10,FriggstadRS19,CFS19}. If neither $d$ nor $k$ are fixed, the approximation factors for Euclidean $k$-median and $k$-means are also better than their general metric counterparts \cite{AhmadianNSW20,GrandoniORSV22}, with the best known approximation ratio being $5.913$ for $k$-means and, prior to our work, being $2.406$ for $k$-median by \cite{Cohen-AddadEMN22}.
Both in high dimensional Euclidean spaces as well as general metrics, all of these aforementioned problems are APX hard \cite{AwasthiCKS15,CKL21,Cohen-AddadS19,GI03,JMS02,LeeSW17}.

LMP-based approaches are also very relevant for the related facility location problem.
Mahdian et al.~\cite{MahdianYZ06} showed that an LMP $(1+2/e)$-approximation yields a $(\gamma_{GK} \approx 1.463)$-approximation for UFL, which matches the hardness result of Guha and Khuller~\cite{GuK99} where $\gamma_{GK}$ is the solution of $\gamma = 1 + 2e^{-\gamma}$. 
Nevertheless, most improvements do not rely on an LMP approach. 
Following a long line of  research~\cite{ByrkaA10,CharikarG05,ChudakS03,JainMMSV03,JMS02,
KPR00,MahdianYZ06,ShmoysTA97}, Li presented an algorithm with the currently best known approximation ratio of roughly $1.488$~\cite{Li13}.

\subsection{Preliminaries}
\label{sec:prelim}
We are given a set of $n$ clients $D$ and set of $m$ potential centers or facilities $F$. Given a set $S\subseteq F$, we use $d(j,S)=\min_{i\in S} d(j,i)$, where $d$ is the underlying distance function.
$\tilde{O}(x)$ suppresses $\text{polylog}(x)$ factors, and $O_\eps(x)$ suppresses polynomial factors in $\eps$.
We use $[a]^+ = \max(a,0)$.
We already defined the $k$-median problem. The uncapacitated facility location problem $(UFL)$ with uniform opening costs is given a non-negative number $f$ and asks to find a set $S\subseteq F$ minimizing
$$\text{cost}_{UFL}(S)=\sum_{j\in D}d(j,S) + f\cdot |S|.$$
We refer to $\sum_{j\in D} d(j,S)$ as the \emph{connection cost} and $f\cdot |S|$ as the \emph{opening cost}.
Let $x_{i,j}$ denote in the indicator variable for serving client $j$ with facility $i$ and let $y_i$ be the indicator variable for opening facility $i$. Relaxing these to continuous variables in $[0,1]$, the LP relaxation to $k$-median and UFL are given by the following equations:
\begin{multicols}{2}
\begin{align*}
\min &  \sum_{i\in F,j\in D}d(i,j)x_{i,j} & (LP_{km})\\
s.t. &  \sum_{i\in F}x_{i,j}\geq 1 & \forall j\in D\\
& y_i - x_{i,j}\geq 0 & \forall j\in D,i\in F\\
&  \sum_{i\in F} y_i \leq k  \\
& x, y \geq 0
\end{align*}
\columnbreak
\begin{align*}
\\ &\nonumber \\
\min & \sum_{i\in F,j\in D}d(i,j)x_{i,j} + f\cdot \sum_{i\in F}y_i & (LP_{UFL}(f))\\
s.t. & \sum_{i\in F}x_{i,j}\geq 1  & \forall j\in D \\
&  y_i - x_{i,j}\geq 0 & \forall j\in D,i\in F \\
& x, y \geq 0
 \end{align*}
\end{multicols}
The constraint $\sum_{i\in F}x_{i,j}\geq 1$ ensures that every client is served by some facility and the constraint $y_i-x_{i,j}\geq 0$ ensures that a client $j$ can only be served by facility $i$ if $i$ is open.
Moreover, the dual for the facility location LP relaxation is given by the equations:
\begin{align*}
\max & \sum_{j\in D}\alpha_j & (DP_{UFL}(f))\\
s.t. & \sum_{j\in D} [\alpha_j - d(i,j)]^+ \leq f & \forall i\in F \\
& \alpha \geq 0
 \end{align*}
The usual interpretation of the $\alpha_j$ is that they are the budgets for each client, accounting for both connecting the client to a facility, as well as opening a facility.
The opening contribution dual variables $\beta_{i,j}$ obtained from the $y_i-x_{i,j}\geq 0$ constraints, as well as the dual constraints $\alpha_j - \beta_{i,j}\leq d(i,j)$ are removed here, as we can always set $\beta_{i,j} = [\alpha_j-d(i,j)]^+$.
Recall that $\sopt=\sopt_k$ is the cost of an optimal $k$-median clustering.
We use $\sopt_{LP}$ and $\sopt_{LP}(f)$ to denote the cost of an optimal solution to $LP_{km}$ and  $LP_{UFL}(f)$, resp. Trivially $\sopt\geq\sopt_{LP}$. Since any feasible solution for $LP_{km}$ is also feasible for $LP_{UFL}(f)$, one has $\sopt_{LP}\geq \sopt_{LP}(f)-kf$. We will use the fact that, by weak duality, any feasible solution to $DP_{UFL}(f)$ induces a lower bound on $\sopt_{LP}(f)$. In particular, for any feasible solution $\alpha$ to $DP_{UFL}(f)$, we have $\sopt \geq \sopt_{LP}(f) - kf\geq  \sum_{j\in \clients} \alpha_j - kf$.

We say that an algorithm is a Lagrangian Multiplier Preserving (LMP) $\alpha$-approximation algorithm for UFL if it returns a solution $S$ with
$$\text{cost}(S)\leq \alpha\cdot (\sopt_{LP}(f)-f|S|).$$

Finally, we will make the assumption that the distances are integers in $[1,n^3/\eps]$. The justification is given by the following standard reduction.
\begin{lemma}
\label{lem:aspectratio}
For any constants $\eps > 0$ and $\alpha>1$, given a polynomial-time $\alpha$-approximation algorithm for $k$-median on instances with distances in $\{1,\ldots,n^3/\eps\}$, there exists a polynomial-time $\alpha(1+O(\eps))$-approximation algorithm for $k$-median on general instances. 
\end{lemma}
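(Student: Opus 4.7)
The plan is a standard scale-and-round reduction, so the main work is bookkeeping rather than a new idea. First, I will invoke any polynomial-time constant-factor approximation algorithm for metric $k$-median that works without assumptions on the aspect ratio (for instance the $(3+\eps)$-approximation of \cite{AryaGKMMP04}) to compute a value $T$ with $\sopt\leq T\leq c\cdot\sopt$ for some absolute constant $c$. I will then use $T$ as a scale to reshape distances into the desired range.

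Concretely, define intermediate distances $d_1(i,j):=\min\{d(i,j),\,nT\}$ and then set $d'(i,j):=\max\bigl(1,\,\lceil d_1(i,j)\cdot n^2/(\eps T)\rceil\bigr)$. Capping at $nT$ preserves the triangle inequality and will be inactive on any solution of cost $O(T)$, since in such a solution every individual client-to-center distance is at most $O(T)\ll nT$. The ceiling preserves the triangle inequality via $\lceil a+b\rceil\leq\lceil a\rceil+\lceil b\rceil$, and the final pointwise max with $1$ preserves it by a short case analysis. By construction $d'(i,j)\in\{1,\ldots,\lceil n^3/\eps\rceil\}$, and a cosmetic tightening of the scale constant puts the values exactly in $\{1,\ldots,n^3/\eps\}$ as required.

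For the approximation analysis, observe the per-pair distortion bound $d_1(i,j)\leq d'(i,j)\cdot(\eps T/n^2)\leq d_1(i,j)+\eps T/n^2$, coming jointly from the ceiling and the floor-at-$1$. Summed over the at most $n$ client–center assignments in any feasible $S$, the cost distortion satisfies $|\clcost_{d_1}(S)-\clcost_{d'}(S)\cdot(\eps T/n^2)|\leq n\cdot\eps T/n^2=\eps T/n\leq O(\eps)\sopt$. Applying this to an optimal solution $S^*$ of the original instance (for which the cap is inactive and $\clcost_{d_1}(S^*)=\sopt$) yields $\sopt_{d'}\cdot(\eps T/n^2)\leq (1+O(\eps))\sopt$. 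If $S^\alpha$ is the output of the assumed $\alpha$-approximation algorithm on $(\clients,\facilities,d')$, then $\clcost_{d_1}(S^\alpha)\leq\alpha\cdot\sopt_{d'}\cdot(\eps T/n^2)+O(\eps)\sopt\leq\alpha(1+O(\eps))\sopt$. Since this value is $O(T)\ll nT$, no client-to-center distance in $S^\alpha$ reaches the cap, so $\clcost_d(S^\alpha)=\clcost_{d_1}(S^\alpha)\leq\alpha(1+O(\eps))\sopt$, as required.

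The only mild obstacle is the degenerate case $T=0$ (when at most $k$ facilities already serve every client at zero cost), which is detected directly and solved by returning those facilities. For $n$ below a fixed constant, brute-force enumeration over $k$-subsets of $\facilities$ is polynomial, so without loss of generality $n$ is large enough that $\alpha(1+O(\eps))<n$, which is exactly what is needed to keep the cap inactive in the calculation above.
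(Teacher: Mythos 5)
Your proof is correct, but it takes a genuinely different route from the paper. The paper avoids any auxiliary algorithm: it guesses $M:=\max_{j\in\clients}d(j,\opt)$ by trying all $n\cdot m$ candidate values, rescales by $\frac{n}{\eps M}$ with rounding, sends every distance exceeding $M$ to $\frac{n^3}{\eps}$, takes the metric closure, and runs the given algorithm once per guess, keeping the best output. You instead obtain a scale $T$ with $\sopt\le T\le c\cdot\sopt$ from an off-the-shelf constant-factor approximation (e.g.\ local search or primal--dual, which indeed run in polynomial time on arbitrary aspect ratios, so there is no circularity), cap distances at $nT$, and rescale by $\frac{n^2}{\eps T}$; your cap/ceiling/max-with-$1$ operations preserve the triangle inequality directly, so no metric closure is needed. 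What each buys: your reduction invokes the black-box algorithm only once (versus $nm$ times) and the error analysis is a clean per-pair distortion bound of $\eps T/n^2$, while the paper's version is self-contained in that it does not presuppose any other approximation algorithm for general metrics, at the cost of enumeration over guesses of $M$. Two cosmetic points you already flag or should tighten: the range comes out as $\{1,\dots,\lceil n^3/\eps\rceil\}$ unless you shave the scale constant, and "brute-force over $k$-subsets of $\facilities$" for constant $n$ should really be enumeration over the at most $m^{O(n)}$ choices of serving centers (an optimal solution uses at most $n$ centers that serve anyone), since $\binom{m}{k}$ itself need not be polynomial; the paper is equally terse on this latter point.
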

\begin{proof}
Consider any input instance $(\clients\cup \facilities,d)$ of $k$-median. Assume that $n:=|\clients|$ is large enough w.r.t. $\alpha$, otherwise the problem can be solved by brute force in polynomial time.  
We guess $M:=\max_{j\in \clients}d(j,\opt)$, where $\opt$ is some optimal $k$-median solution, by trying all the $n\cdot m$ possibilities. Next, for each $(i,j)\in \facilities\times \clients$, define  $d'(j,i)=d'(i,j)=\max\{1,\lceil\frac{d(i,j)}{M}\frac{n}{\eps}\rceil\}$ if $d(i,j)\leq M$. Set all the remaining 
$d'(a,b)$, $a\neq b$, to $\frac{n^3}{\eps}$. Finally, replace the $d'(i,j)$'s with the corresponding metric closure. We run the given $\alpha$-approximation algorithm on the $k$-median instance $(D\cup F,d')$, hence obtaining a solution $S$. It is easy to check that $S$ is a good enough approximation for the input instance.  %
\end{proof}

\section{Technical Overview}

In this section, we give an overview of our result. 
Section~\ref{sec:pseudo-overview} introduces ideas behind Theorem~\ref{thm:pseudo-approx} that achieves a $(2+\eps)$-approximation while opening $k + O(\log n / \eps^2)$ centers, and Section~\ref{sec:removal-overview} discusses Theorem~\ref{thm:mainadditivecenters} for stable instances. 

\subsection{$(2+\eps)$-Approximation with 
$O_{\eps}(\log n)$ Extra Centers}
\label{sec:pseudo-overview}

Let us motivate our high-level plan by revisiting the previous framework. 
The previous best approximation algorithms~\cite{lis13, ByrkaPRST17, GowdaPST23, Cohen-AddadLS23} for $k$-median start by the following initialization step: starting from two LMP approximate solutions for UFL $S, S' \subseteq F$ with the respective facility costs $f, f'$ such that $|S| < k, |S'| > k$\footnote{Throughout this overview, we assume no solution opens exactly $k$ centers, because we are done if this happens.} and $f$ possibly much larger than $f'$, the algorithm uses binary search to obtain two new solutions $(S, f), (S', f')$ such that $|S| < k, |S'| > k$ and $|f - f'| \leq 2^{-n}$. This step can be already viewed as a {\em merging} step that takes two solutions whose number of facilities {\em sandwich} $k$ and makes them (specifically the facility cost) closer. Then, the natural question is {\em can we merge $S$ and $S'$ into the (almost) same set of open facilities so that both open (almost) exactly $k$ facilities?}

One naive implementation of this idea might be the following (from now on, let us assume $f = f'$). Just like binary search, suppose that we find an LMP approximate solution $S''$ which is {\em strictly between} $S$ and $S'$ in some formal sense. If $|S''| > k$, we will let $S' \leftarrow S''$ and otherwise $S \leftarrow S''$. 
Then the invariant $|S| < k$, $|S'| > k$ is maintained while the two solutions got {\em strictly closer}. So we can continue this process until they become the same!

Of course, the main question is whether one can find such $S''$ given $S$ and $S'$. Just like $S$ and $S'$, $S''$ must be an outcome of the LMP approximation algorithm, or at least close to such an outcome in some formal sense. Since {\em being a valid outcome of an algorithm} is a highly complex property, especially when the algorithm has a large number of {\em adaptive} steps whose action depends on the previous actions, this task remains challenging for a general LMP approximation algorithm. 

However, there is an example of success for 
Jain and Vazirani's primal-dual~\cite{JaiV01} algorithm, which gives a LMP $3$-approximation. At a high level, this algorithm consists of two phases, where the first phase yields a set $T$ of tentative open facilities, and the second phase is allowed to choose an arbitrary $S \subseteq T$ that forms a maximal independent set in an auxiliary graph, so there is a freedom in choosing the independent set. 
Ahmadian et al.~\cite{AhmadianNSW20} used this freedom to give a walking procedure that finds a maximal independent set in the auxiliary graph of cardinality exactly $k$, which led to an improved approximation ratio for Euclidean $k$-means/$k$-median and metric $k$-means.

However, the \JMS algorithm~\cite{JMS02, JainMMSV03} with dual-fitting analysis, which guarantees a better $2$-LMP approximation, is more adaptive; in fact, it can be a completely deterministic algorithm without leaving any freedom. Therefore, in order to implement our idea, it would be better to find a low-adaptive version of the \JMS algorithm, where the algorithm runs in a small ($O_{\eps}(\log n)$ in our case) number of {\em phases}, and each phase presents some freedom of choices for the algorithm.

\paragraph{The \JMS Algorithm.}
Let us recall the \JMS algorithm~\cite{JMS02, JainMMSV03}. 
The algorithm maintains the set $A$ of \emph{active} clients and the set $S$ of open facilities. Initially, all clients are active, i.e., $A = D$, and no facilities are opened, i.e., $S = \emptyset$. Additionally, the algorithm maintains dual variables $(\alpha_j)_{j \in D}$, initialized to $0$.

While there is at least one active client, the algorithm uniformly increases the $\alpha$-values of the active clients until there is a facility $i$ whose "bids" are sufficient to open it (or an active client becomes ``close'' to an already opened facility). The bids from a client $j$ to a facility $i$ depend on whether $j$ is active or not. If $j \in A$, then a portion of its budget $\alpha_j$, specifically $d(i,j)$, is allocated to cover the connection cost, and the remainder, $\alpha_j - d(i,j)$, is used to bid towards the facility's opening cost $\hat f := 2f$. For a client $j \notin A$, which is already connected to an open facility at a distance $d(j,S)$, the bid towards $i$ is the difference in distance, i.e., $d(j,S) - d(i,j)$ (or $0$ if $i$ is farther).
Finally, the algorithm marks any client $j$ with $\alpha_j \geq d(j,S)$ as inactive, ensuring that at any point in time, active clients $j$ satisfy $\alpha_j \leq d(j,S)$.

In order to prove that the algorithm yields an LMP $2$-approximation, let $(\alpha^*_j)_{j\in D}$ be the final $\alpha$ values and $S^*$ be the set of opened facilities. 
The design of the algorithm guarantees that the $\alpha^*$ values pay for both the opening costs of the facilities in $S^*$ and the connection costs of the clients, leading to $\sum_{j\in D} \alpha^*_j = \sum_{j\in D} d(j,S^*) + \hat f \cdot |S^*|$. 

The more interesting part of the analysis is to show that $\alpha^*/2$ is a dual feasible solution for the dual facility location LP with facility cost $f$, namely 
$\sum_{j \in D}[ \alpha^*_j - 2d(i,j)]^+ \leq \hat f$ for every facility $i$. 
The dual feasibility follows from two key properties that, at any point in time, (a) $\alpha_j \leq d(j, S)$ for every $j \in A$ and (b) for every $i \in F$, no facility is overbid: 
    \begin{equation}
     \sum_{j\in A} [\alpha_j - d(i,j)]^+ + \sum_{j\in D \setminus A} [d(j,S) - d(j,i)]^+ \leq \hat f\,.
     \label{eq:no-overbid-overview}
    \end{equation}

\subsubsection{Logarithmic Adaptivity}
\label{sec:log_overview}

As stated, the \JMSalg is a highly adaptive algorithm; for instance, 
consider the case when two facilities $i$ and $i'$ both become paid for at the same time $\theta$. Whether we open $i$ or $i'$ can dramatically change the bids to future facilities and completely change the trace of the algorithm. 

To reduce such high adaptivity, we limit the number of time steps where such decisions are made. Specifically, our high-level goal is to only consider the time steps of the form $(1+\eps^2)^0, (1+\eps^2)^1, (1+\eps^2)^2, \ldots$. Then, as the minimum nonzero distance is $1$ and the maximum distance is $\poly(n)$, the total number of time steps that we need to consider is $O(\log n /\eps^2)$, and this is what we refer to as $O(\log n)$ adaptivity.

The main difficulty in implementing this idea is to ensure~\eqref{eq:no-overbid-overview}. Consider the situation where all active clients have $\alpha$ value $\theta = (1+\eps^2)^t$. If their $\alpha$ values jump from $\theta$ to $(1+\eps^2)\theta$, it is possible that a previously underbid facility becomes strictly overbid. 

We fix this issue in two steps. The first easy fix is to {\em scale down the metric} by a factor of $(1-\delta)$ where $\delta = O(\eps)$; that is, we consider a facility $i$ paid for if
\begin{equation}
     \sum_{j\in A} [\alpha_j - (1-\delta) d(i,j)]^+ + (1-\delta) \sum_{j\in D \setminus A} [d(j,S) - d(j,i)]^+  \geq \hat f\,.
     \label{eq:paid-overview}
\end{equation}
At the cost of losing a factor $1/(1-\delta)$ in the approximation guarantee, this has the following benefit. If $i$ is not paid according to~\eqref{eq:paid-overview} when all active clients have $\alpha_j = \theta$, then even when $\alpha_j$'s are increased to $(1+\eps^2)\theta$, for a client $j$ with $d(j, i) \geq \eps \theta$, its contribution $[(1+\eps^2)^{t+1} - d(i,j)]^+$ to~\eqref{eq:no-overbid-overview} is at most its previous contribution 
$[(1+\eps^2)^{t} - (1-\delta) d(i,j)]^+$ to~\eqref{eq:paid-overview}, which makes~\eqref{eq:no-overbid-overview} unlikely to be violated. 

Then our second fix deals with active clients close to co-located with $i$. We do so by allowing the clients $j$ in the ball $B(i, \eps \theta)$ to increase their bids for $i$ up to $\min((1+\eps^2)\theta, (1-\delta)d(j,S))$, so that if $i$ could not be opened even with increased bids from nearby clients, then increasing active clients' $\alpha$ values from $\theta$ to $(1+\eps^2)\theta$ should not make $i$ overbid. These increased bids bring other considerations (e.g., ensuring dual feasibility of other facilities), leading to the following definition of {\em openable} facilities, which can be efficiently checked by a linear program.

\begin{definition}
    A facility $i$ is \emph{openable} with respect to the algorithm's state \algstatex if there are bids $(\tau_j)_{j\in A}$ of active clients that satisfy the following conditions.
    \begin{itemize}
        \item Only nearby clients are increased:
        \begin{gather*}
            \tau_j = \alpha_j \qquad \mbox{for every $j \in A - B(i, \eps \theta)$.}
        \end{gather*}  
        \item Nearby clients are only increased slightly:  
        \begin{gather*}
            \alpha_j \leq \tau_j \leq \min\{(1-\delta) d(j,S), (1+\eps^2)\theta\} \qquad \mbox{for every $j\in A \cap B(i, \eps \theta)$.}
        \end{gather*}
        \item Facility $i$ is paid for: 
        \begin{gather*}
            \sum_{j\in A} [\tau_j - (1-\delta)d(i,j)]^+ + (1-\delta) \sum_{j\in D \setminus A} [d(j,S) - d(i,j)]^+ \geq \hat f\,.
        \end{gather*}
        \item Dual feasibility: for every facility $i_0$ and $\ell \in A$,
        \begin{gather*}
            \sum_{j \in A} [\tau_j - d(i_0, j)]^+ + \sum_{j \in D \setminus A} [\tau_\ell - 2d(j, i_0) - d(\ell, i_0)]^+ \leq \hat f.
        \end{gather*}
    \end{itemize}
\label{def:openable_overview}
\end{definition}

\begin{figure*}[h!]
\begin{center}
\begin{minipage}{1.0\textwidth}
\begin{mdframed}[hidealllines=true, backgroundcolor=gray!15]

\begin{algorithm}[\logadaptalg] \ \\[0.2cm]
\textbf{{Initialization:}} Set $\hat f = 2f, S = \emptyset, A=D, \theta = 1$ and $\alpha_j = \theta$ for every $j\in A=D$. \\[0.1cm]

{Repeat the following while $A\neq \emptyset$:}

\begin{enumerate}
\item While there is an unopened facility $i$ that is openable (chosen arbitrarily if there are many),
\begin{quote} Calculate $(\tau_j)_{j\in A}$ so that $i$ is openable with bids $(\tau_j)_{j\in A}$. Open $i$ (i.e., add it to $S$), 
let $\alpha_j \leftarrow \tau_j$ for every $j \in A$ (note that only the values of the clients in $A \cap B(i, \eps \theta)$ change), mark all clients $j\in A$ such that $\alpha_j \ge (1-\delta) d(j,S)$ as served (i.e., remove them from $A$).
\end{quote}

\item Perform the following step and proceed to the next phase. 
\begin{quote}
For every remaining $j \in A$, let $\alpha_j \leftarrow \min((1 + \eps^2)\theta, (1-\delta)d(j, S))$, remove it from $A$ when $\alpha_j = (1 - \delta)d(j, S)$.  Update $\theta = (1+\eps^2)\theta$.
\end{quote}
\end{enumerate}
\label{alg:log_adapt_alg_overview}
\end{algorithm}
\end{mdframed}
\end{minipage}
\end{center}
\end{figure*}

Given this definition of openability, 
Section~\ref{subsec:log_adaptive} presents our modified \JMS algorithm with $O(\log n)$ phases, \logadaptalg, which is restated here as Algorithm~\ref{alg:log_adapt_alg_overview}. 
Here a {\em phase} is defined to be the run of the algorithm for a fixed value of $\theta = 1, (1+\eps^2), (1+\eps^2)^2, \dots$, which is broken down to two {\em stages} according to the description of  Algorithm~\ref{alg:log_adapt_alg_overview}. Note that at least in the beginning, stage 1 allows opening of any openable facility, providing the desired freedom to walk between two solutions within the phase. While stage 1 still presents some adaptivity as an opening of a facility might impact the openability of another, it is sufficient for the merging procedure described in Section~\ref{sec:walking_overview}. 

For the analysis, in Section~\ref{sec:log_adaptivity_analysis}, we prove that \logadaptalg yields an LMP $(2+O(\eps))$-approximate solution for UFL. It is a nontrivial extension of the dual-fitting analysis of the \JMS algorithm for our definition of openability and design of the algorithm. %

\subsubsection{Merging Two Solutions}
\label{sec:walking_overview}
We describe our final algorithm \mergealg that maintains two LMP approximate solutions whose number of facilities sandwich $k$ and gradually merges them until one of them opens almost exactly $k$ facilities. 

\paragraph{Setup.}
Based on the discussion about \logadaptalg, we will describe a solution $\calH$ as $(H_1, \dots, H_L)$ with $L = O(\log n / \eps^2)$, where $H_i$ is a {\em valid execution} of stage 1 of phase $i$ (i.e., when $\theta = (1+\eps^2)^{i-1}$), which is formally represented as a sequence of opened facilities. $H_i$'s validity also requires {\em maximality}, meaning that no openable facility remains in the phase.
In order to facilitate the merging, given a phase, we extend the notion of a valid execution to include the following actions. 

\begin{itemize}
    \item We allow opening a {\em free facility}, at most three times per phase. A free facility $\tilde i$ is based on a regular facility $i \in F$, and when it is created, we can choose the distance parameter $u(\tilde i) = d(\tilde i, i)$. All other distances $d(\tilde i, x)$ are defined to be $u(\tilde i) + d(i, x)$. 

    \item We slightly relax the notion of openability so that one can open a facility $i$ by paying $\hat f - n \eta$ instead of $\hat f$ for parameter $\eta := 2^{-n}$ (in the third bullet of Definition~\ref{def:openable_overview}). 
\end{itemize}

Call $\calH = (H_1, \dots, H_L)$ a {\em solution of $\eta$-valid sequences} if $H_i$ is a valid execution (i.e., an $\eta$-valid sequence) of the $i$th phase in the above relaxed notion given $H_1, \dots, H_{i-1}$. 
We extend the analysis of \logadaptalg in Appendix~\ref{sec:robust_analysis_full} to show that such $\calH$ in the above sense still yields an LMP $(2+O(\eps + n\eta))$-approximation, if we do not count the opening cost of the free facilities (while using the connections they provide). 
Therefore, a solution of $\eta$-valid sequences that opens exactly $k$ regular facilities, and at most $3$ free facilities per phase,  is a $(2+O(\eps + n\eta))$-approximation to $k$-median that opens at most $O(L) = O(\log n/\eps^2)$ extra facilities. 

Initially, by the standard binary search on the facility cost and running \logadaptalg, we start with two solutions $\calH = (H_1, \dots, H_L)$ and $\calH' = (H'_1, \dots, H'_L)$ with facility costs $f$ and $f'$ such that $|f - f'| \leq \eta$ and their numbers of opened facilities sandwich $k$. (We will associate parameters $(f, u)$ for $\calH$ and $(f', u')$ for $\calH'$. Also, note that every facility in $\calH$ (resp. $\calH'$) has cost $f$ (resp. $f'$).)
Throughout the algorithm, we will maintain the invariants that (1) $\calH$ and $\calH'$ open the same set of free facilities (so that $u$ and $u'$ are defined on the same set), (2) their number of opened regular facilities sandwich $k$, and (3) $(f, u) = (f', u')$ except for one parameter $f$ or $u(\tilde i)$ for one free $\tilde i$, whose values in the two solutions differ by at most $\eta$. Call such a parameter the {\em difference parameter}. 

For each $i = 1, \dots, L$, we will outline a procedure that, given that the prefixes $(H_1, \dots, H_{p-1})$ and 
$(H'_1, \dots, H'_{p-1})$ are identical, modifies the suffixes $(H_p, \dots, H_L)$ and $(H'_p, \dots, H'_L)$ such that $H_p$ and $H'_p$ become identical while satisfying the above invariants, unless we already find a solution that opens exactly $k$ regular facilities. This will allow us to gradually merge the solution until one of them opens exactly $k$ regular facilities (and at most $O(L)$ free facilities). 

\paragraph{How to walk in each phase. }
Finally, we focus on phase $p$ and introduce ideas behind ensuring $H_p = H'_p$. Our first step is to change the difference parameter value of $\calH'$ to that of $\calH$; formally, we let $\calH'' = \completesolx_{(f, u)} (H'_1, \dots, H'_p)$, which is the solution resulting from running \logadaptalg from a given partial solution $(H'_1, \dots, H'_p)$ with parameters $(f, u)$ instead of $(f', u')$. Such a change of the difference parameter is the reason for introducing the $n\eta$ slack for $\eta$-valid sequences; for instance, if $H_i$ was the exact execution with parameters $(f' ,u')$, our Claim~\ref{claim:1ststep_etavalid} shows that it is still $\eta$-valid with respect to $(f, u)$.

If $\calH''$ opens $k$ regular facilities we are done. Also, if the number of open regular facilities of $\calH''$ and $\calH'$ sandwich $k$, then we already made progress as they now coincide on the $p$th phase while still having one difference parameter.
Therefore, we only need to handle the case where $\calH''$ and $\calH$ sandwich $k$, where the difference parameters already became identical but $H_p$ and $H'_p$ are different. Let us rename $\calH''$ to $\calH'$. 

Now our goal is to gradually change $H'_p$ to $H_p$ (i.e., {\em walk} from $H'_p$ to $H_p$). 
Instead of describing the full procedure, which is outlined in Section~\ref{sec:merging_algorithm}, we illustrate the procedure on a particular example. 
Suppose that $H_p = \langle i_1 \rangle$, $H'_p = \langle i_2, \dots, i_{6} \rangle$ where one can open at most one between $i_1$ and $i_j$ for any $j \in \{2, \dots, 6 \}$; this constraint can be expressed as a star graph with $i_1$ as the center vertex and $i_2, \dots, i_6$ are only connected to $i_1$, and a valid solution (i.e., $\eta$-valid sequence) here means a maximal independent set. (Suppose that this is the only constraint on openings, so $H_p$ and $H'_p$ can be thought of as sets instead of sequences.)

In this example, $H_p$ and $H'_p$ are the only two maximal independent sets, so it is impossible to {\em smoothly} walk from $H'_p$ to $H_p$. Here is where the notion of a free facility becomes useful because we can assume that having a free facility helps {\em maximality} while not hurting {\em independence}. Then, from $H'_p$, by (Step 1) adding a free copy of $i_1$ called $\tilde i_1$ with $u(\tilde i_1) = 0$, (Step 2) removing $i_2, \dots, i_6$ one-by-one, and (Step 3) converting $\tilde i_1$ to a regular $i_1$, we obtain a sequence of maximal independent sets where a consecutive pair of independent sets differ by at most one facility. Figure~\ref{figure:walking-overview} illustrates this process. 
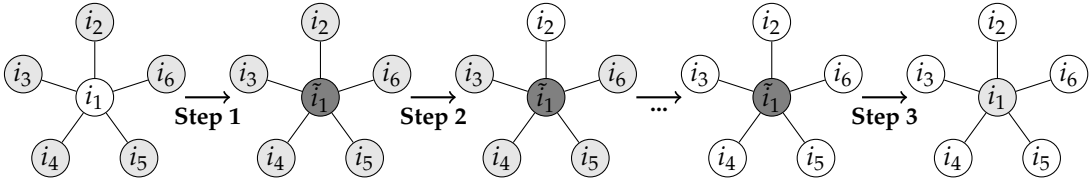
\begin{figure*}
\centering
\begin{tikzpicture}
    \def \radius {1}

    \node[circle, draw, minimum size=0.5cm, inner sep=0pt] (1a) at (0,0) {\(i_1\)};
    \foreach \i [count=\j from 2] in {90, 162, 234, 306, 18} {
        \node[circle, draw, fill=black!10, minimum size=0.5cm, inner sep=0pt] (\j a) at (\i:\radius) {\(i_{\j}\)};
        \draw (1a) -- (\j a); %
    }

    \node[circle, draw, fill=black!50, minimum size=0.5cm, inner sep=0pt] (1b) at (3,0) {\(\tilde{i}_1\)};
    \foreach \i [count=\j from 2] in {90, 162, 234, 306, 18} {
        \node[circle, draw, fill=black!10, minimum size=0.5cm, inner sep=0pt] (\j b) at ($(1b) + (\i:\radius)$) {\(i_{\j}\)};
        \draw (1b) -- (\j b); %
    }

    \node[circle, draw, fill=black!50, minimum size=0.5cm, inner sep=0pt] (1c) at (6,0) {\(\tilde{i}_1\)};
    \node[circle, draw, minimum size=0.5cm, inner sep=0pt] (2c) at ($(1c) + (90:\radius)$) {\(i_2\)};
    \draw (1c) -- (2c);
    \foreach \i [count=\j from 3] in {162, 234, 306, 18} {
        \node[circle, draw, fill=black!10, minimum size=0.5cm, inner sep=0pt] (\j c) at ($(1c) + (\i:\radius)$) {\(i_{\j}\)};
        \draw (1c) -- (\j c); %
    }

    \node[circle, draw, fill=black!50, minimum size=0.5cm, inner sep=0pt] (1d) at (9,0) {\(\tilde{i}_1\)};
    \node[circle, draw, minimum size=0.5cm, inner sep=0pt] (2d) at ($(1d) + (90:\radius)$) {\(i_2\)};
    \draw (1d) -- (2d);
    \foreach \i [count=\j from 3] in {162, 234, 306, 18} {
        \node[circle, draw, minimum size=0.5cm, inner sep=0pt] (\j d) at ($(1d) + (\i:\radius)$) {\(i_{\j}\)};
        \draw (1d) -- (\j d); %
    }

    \node[circle, draw, fill=black!10, minimum size=0.5cm, inner sep=0pt] (1e) at (12,0) {\(i_1\)};
    \foreach \i [count=\j from 2] in {90, 162, 234, 306, 18} {
        \node[circle, draw, minimum size=0.5cm, inner sep=0pt] (\j e) at ($(1e) + (\i:\radius)$) {\(i_{\j}\)};
        \draw (1e) -- (\j e); %
    }

    \draw[->, thick] ($(1a) + (1.2, 0)$) -- ($(1b) + (-1.2, 0)$) node[midway, below] {\small\textbf{Step 1}};
    \draw[->, thick] ($(1b) + (1.2, 0)$) -- ($(1c) + (-1.2, 0)$) node[midway, below] {\small\textbf{Step 2}};
    \draw[->, thick] ($(1c) + (1.2, 0)$) -- ($(1d) + (-1.2, 0)$) node[midway, below] {\small\textbf{...}};
    \draw[->, thick] ($(1d) + (1.2, 0)$) -- ($(1e) + (-1.2, 0)$) node[midway, below] {\small\textbf{Step 3}};

\end{tikzpicture}
\caption{How we walk from $H'_p$ to $H_p$. Dark grey color denotes a free facility and light grey color denotes regular open facilities.}
\label{figure:walking-overview}
\end{figure*}

For each intermediate $H'_p$, we consider its completion (i.e., $\completesolx_{(f, u)} (H'_1, \dots, H'_p)$) and see how many regular facilities it opens. Since the completions from the initial $H'_p$ and $H_p$ sandwich $k$, there must be a consecutive pair in the above sequence of independent sets (say $H^{a}_p$ and $H^b_p$) that sandwich $k$, and they differ by only one facility $i_q$ (say $H^{a}_p$ contains $i_q$ while $H^b_p$ does not). For future execution of the algorithm, $H^{a}_p$ and $H^{b}_p$ are the same as the solutions that have the free copy $\tilde i_q$ (instead of $i_q$) with the $u(\tilde i_q)$ value $0$ (for $H^{a}_p$) and $\infty$ (or sufficiently large number, for $H^{b}_p$). Therefore, by performing binary search on the $u(\tilde i_q)$, just like we do on the facility cost $f$, we yield two solutions that are now identical up to phase $p$ except the $\eta$ difference on the only one different parameter $u(\tilde i_q)$, which archives our goal of merging two solutions at phase $p$!%

In general, the fact that $H_p$ and $H'_p$ are not just sets but sequences of opened facilities requires more technical consideration. 
However, the full procedure is still an extension of this idea, trying to increase the length of the common prefix between $H_p$ and $H'_p$ by walking from $H'_p$ to $H_p$ by one coordinate at a time, until two consecutive solutions are sandwiching $k$. 

\subsection{$(2+\varepsilon)$-Approximation for $O_\eps (\frac{1}{\log n})$-Stable Instances}
\label{sec:removal-overview} 

The other component of our algorithm is an improved approximation to $k$-median assuming that the instance has a weak version of a structural property known as stability, initially formulated by \cite{ORSS12}. Recall that an instance is $\beta$-stable if $\sopt_{k-1}\geq (1+\beta)\sopt_k$.
For any $\beta$-stable instance, 
\cite{ABS10,Cohen-AddadS17} showed how to obtain 
a PTAS for the problem in time $n^{\beta^{-1}\eps^{-O(1)}}$ and~\cite{ABS10} left as an open question
as whether one could obtain a PTAS with running 
time $f(\beta,\eps)n^{O_{\eps}(1)}$ for any computable 
function $f$ for general metrics (they show it is possible for Euclidean metrics). We would indeed
like to have an algorithm running in time
$2^{\text{poly}(\beta, \eps)} n^{O_{\eps}(1)}$ to 
solve our $(\eps / \log n)$-stable instances.

Unfortunately, since any $k$-median instance is 
$\Omega(1/k)$-stable, obtaining a PTAS with the above
running time would violate Gap-ETH~\cite{Cohen-AddadG0LL19}: The $k$-median
problem cannot be approximated better than $1+2/e$ in time $f(k,\eps) n^{O_\eps(1)}$ assuming Gap-ETH. 

The saving grace here is that we are not after 
a $(1+\eps)$-approximation algorithm but we can use
any $(2+O(\eps))$-approximation with the above running time. 
Since a connection between $\beta$-stable and 
FPT algorithm exists (as explained in the above discussion), it is tempting to make the journey 
from FPT algorithms back to $\beta$-stable instances and use
the state-of-the-art approximation algorithm of~\cite{Cohen-AddadG0LL19} to solve $\beta$-stable 
instances. Unfortunately the algorithm presented
in~\cite{Cohen-AddadG0LL19} runs in time
$(\log n)^{\tilde{O}(k/\eps)} n^{O(1)}$ and so
importing the techniques directly would result in a
running time of the form $(\log n)^{\tilde{O}((\beta/\eps)^{-1})} n^{O(1)}$ and  not to a polynomial
running time for our setting.
We must thus go beyond the state-of-the-art techniques for FPT and $\beta$-stable $k$-median 
instances.

\subsubsection{Local Optima and Stability}
\label{sec:removal-overview-local}

One particularly important tool when dealing with stable solutions are properties of local optima. The local search procedure starts with any feasible solution $S$, and searches for an alternative solution $S'$ such that $|S\triangle S'|= 2$ (i.e., $S$ and $S'$ differ by one center) and $\text{cost}(S')< \text{cost}(S)$. If such an $S'$ exists, $S$ is replaced by $S'$ and the process is iterated. At the end of the procedure we say that $S$ is locally optimal. The running time of this procedure is polynomial, where we exploit the assumption that distances are polynomially bounded integers (by Lemma \ref{lem:aspectratio}).

We first introduce a bit of notation. Suppose we are given a locally optimal solution $S$. We say that a cluster $C^*$ of the optimal solution $\opt$ is \emph{pure} with respect to $S$, if there exists a cluster $C'$ induced by $S$ such that both clusters agree on all but  $O(\varepsilon)\cdot \min(|C^*|,|C'|)$ of the points. 
We begin by examining structural properties for $\beta$-stable solutions, summarized by in 
Section~\ref{sec:localSearchAnalysis}.
\begin{itemize}
    \item For any given constant-factor approximation $S$, at most $O_\eps(\beta^{-1})$ many clusters are not pure, see Lemma \ref{lem:numnonpure}.
    \item In any locally optimal solution $S$, we have
    $$\sum_{p\in C^*,~C^* \text{ pure}} d(p,S) \leq \sum_{p\in C^*,~ C^* \text{ pure}} d(p,\opt) + O(\varepsilon)\cdot \sopt,$$ that is the cost of the clients in pure clusters is well approximated.
\end{itemize}
More precisely, in the second bullet we show that all the points in each pure cluster $C^*$ with center $c^*$ can be assigned to the \emph{same} center $t(c^*)$ in $S$ (i.e., the center $c'$ of $C'$) with a moderate increase of their cost (see Lemma \ref{lem:purecost}).

The first property relies on the fact that for every non-pure cluster, $S$ must merge at least two clusters of the optimum. If $S$ is a constant-factor approximation, this cannot happen more than $O_{\eps}(\beta^{-1})$ many times before the approximation guarantee no longer holds.
The second property is implied by the local optimality of $S$. Specifically, swapping out the center $c'\in S$ serving $C'$ for the center $c^*$ of the pure cluster $C^*$ cannot change the cost by much as these two clusters agree on most of their points.

In a nutshell, a locally optimal solution $S$ approximates all but $O_{\eps}(\beta^{-1})$ many clusters of the optimal solution very well. In what follows, we will describe how we may extend $S$ to a $(2+\varepsilon)$-approximation overall.

\subsubsection{Identifying the Non-Pure Clusters}
To find a good approximation for the non-pure 
clusters, we would like to identify these clusters.
By the above discussion, we may assume that we have a locally optimal solution $S$ for which all but $O_\eps(\beta^{-1}) = O_\eps(\log n)$ many clusters are well approximated.
Immediate ideas, such as determining the badly approximated clusters by brute force, will not improve over the quasi-polynomial running time.
Instead, we look towards fixed parameter tractable (FPT) approaches, parameterized by the number of centers. In fact,
we would like an FPT approximation algorithm \emph{parameterized
by the number of non-pure clusters}.
Starting point is an FPT algorithm based on submodular maximization, initially due to \cite{Cohen-AddadG0LL19}, which we first review. However, recall that the running time of this
algorithm is $(\log k)^{\tilde{O}(k\eps^{-1})}$ and so will
not be enough for our purpose. We thus introduce new ideas to
speed things up.

\paragraph{Review of the FPT approximation algorithm 
of~\cite{Cohen-AddadG0LL19}.}
Suppose $c^*$ is a center of cluster $C^*$ in an optimal solution $\opt$. \cite{Cohen-AddadG0LL19} introduce the notion of a \emph{leader} client $\ell\in C^*$ which minimizes the distance to $c^*$. 
Using coresets to first reduce the number of clients
to $\tilde{O}(k\log n/\eps^{2})$, the leaders of all clusters, as well as the distances $d(\ell,c^*)$ can be identified in FPT-time, i.e. in time $(\log n)^{\tilde{O}(k \varepsilon^{-1})} \cdot n^{O(1)}$.
A solution $S$ obtained by simply selecting an arbitrary facility within distance $d(\ell,c^*)$ from $\ell$ yields a $3$-approximation.
\cite{Cohen-AddadG0LL19} improve upon this by showing that for any two candidate sets of centers $S'$ and $\dummyset$
$$ \text{cost}(\dummyset)-\text{cost}(\dummyset\cup S')$$
is a monotone submodular function. 

The first part which consists in guessing the leaders is the 
computational bottleneck since optimizing a monotone submodular
function can be done in polynomial time.

\medskip We next describe how we deal with the clusters that are not well approximated. Recall that we first compute a locally optimal solution $S$.
By the properties of local optima, $S$ is already a sufficiently good solution for the pure clusters. For the remaining $O_\eps(\log n)$ non-pure clusters, we only have coarser approximation bounds. 

\paragraph{Leader Identification in Polynomial Time via $D^2$-Sampling -- Sections~\ref{sec:dsampleproc} and~\ref{sec:ballguesses}.}
Our next high-level goal is to find leaders of the non-pure clusters and guess their
(approximate) distances to the respective optimum centers in time $2^{\beta^{-1}\eps^{-O(1)}} n^{O(1)}$ for
$\beta$-stable instances.

\cite{Cohen-AddadG0LL19} identify leaders in time ${k\log n \choose k}$ by sparsifying the instance via coresets and then determining the distance of leaders to the optimal solution in time $(\log n)^k$.
Our overall approach significantly speeds up these two steps by way of distance sampling similar to $k$-means++ \cite{ArV07}.

To identify the leaders faster, we lift the stringent requirement that the leader of cluster $C^*$ is the client closest to the center $c^*$ in the optimal solution. Instead, we consider any client among the $\varepsilon\cdot |C^*|$ closest points to $c^*$ as a valid leader with $\text{avg}_{C^*, \opt}$ being the maximum distance of any leader of $C^*$ to $c^*$. We later show that this only induces
a minor cost increase in the total approximation bound.

The second relaxation is that we only aim to target clusters that are sufficiently expensive in our locally stable solution $S$. More precisely, a non-pure cluster $C^*$ with center $c^*$ is \emph{basic-cheap} if (a) the total cost of its leaders in $S$ is at most $ \varepsilon^{O(1)}\cdot\beta \cdot \sopt = \frac{\varepsilon^{O(1)}}{\log n} \cdot \sopt$ or (b) $d(c^*,S)\leq (1+\varepsilon)\text{avg}_{C^*, \opt} +\frac{\eps}{|C^*|} \cdot \sum_{p\in C^*}(d(p,c^*)+d(p,S))$. 
Then we can show that $S$ is already an almost $2$-approximation for these clusters modulo some additive slack, namely \[
\sum_{p\in C^*,~C^* \text{ basic-cheap}} d(p, S) \leq 
2 \sum_{p\in C^*,~C^* \text{ basic-cheap}} d(p, \opt) + O(\eps \cdot \sopt).\] 
More precisely, similarly to the case of pure clusters, we can show that all the points of a basic-cheap cluster can be reassigned to the center $t(c^*)\in S$ which is closest to $c^*$ with an increase of the cost of the above type (see Lemma \ref{lem:cheapcost}). Indeed, by triangle inequality $d(p, S) \leq d(p, c^*) + d(c^*, S)$ for any $p$. If $C^*$ is basic-cheap by condition (b), one directly obtains the desired upper bound on $d(c^*, S)$. Otherwise, i.e. if $C^*$ is basic-cheap by  condition (a), we use the fact that the number of basic-cheap clusters (which is at most the number of non-pure clusters) is $\frac{\log n}{\eps^{O(1)}}$ so the sum of their costs in $S$ is $O(\eps \cdot \sopt)$. This essentially allows us to focus only on non-pure and non-basic-cheap clusters.

We use these two relaxations as follows. 
Let $W$ be a sample of clients picked independently and proportionately to $\frac{d(p,S)}{\text{cost}(S)}$. 
With probability $\varepsilon^{O(1)}/\log n$, we are picking a leader from a given non-pure, non-basic cheap cluster: in this case we say that $C^*$ is hit by the sample.
Thus, when picking $O_\eps(\log n)$ samples, we have picked a leader from all but a small fraction of non-pure, non-basic cheap clusters. Thereafter, we can show that with good probability, we can assign all the points of a non-pure non-basic-cheap non-hit cluster $C^*$, with center $c^*$, to the center $t(c^*)\in S$ which is closest to $c^*$ with a small cost (see Lemma \ref{lem:probcost} in Section~\ref{sec:dsampleproc}). We next call \emph{expensive} the non-pure non-basic-cheap clusters which are hit, and \emph{cheap} the non-pure clusters which are either basic-cheap or non-basic-cheap and non-hit. Thus the clusters are partitioned into pure, cheap and expensive ones. Since we have sampled $O_\eps(\log n)$ clients, we can enumerate all the subsets in polynomial time
and focus on a subset that contains exactly one leader of each expensive cluster.

Using a constant number
of potential guesses per leader, we then show how to get a sufficiently good approximation of the distance $\rho$ from 
each leader to its closest optimum center, see Lemma \ref{lem:ballguesses} in Section~\ref{sec:ballguesses}. Intuitively, the cost of the leader in solution $S$ is a rough approximation of $\rho$, so we can interpolate around it.

\paragraph{Existence of a Good Solution.}
We now use $W$ from last section to identify a modified solution from $S$ that is $(2+O(\eps))$-approximate. Let $\hat{OPT}$ denote the set of centers of expensive clusters.
The cost of all the other optimal clusters, i.e., the pure and cheap ones, is within a factor roughly $2$ from the optimum.
We thus wish to add to $S$ the optimal centers of $\hat{\opt}$ and remove an equally sized set $S_0\subseteq S$.
An important constraint on $S_0$ is that it contains no center 
$t(c^*)$, where $c^*$ is the center of an optimal cluster which is pure or cheap. Intuitively, such centers $t(c^*)\in S$ guarantee that the cost of points in pure and cheap clusters is within a factor roughly $2$ from the optimum.
More precisely, in Lemma \ref{lem:structSminusS0} we show that such an $S_0$ exists so that 
$M_O := S - S_0 \cup \hat{\opt}$ is a $(2+O(\eps))$-approximation:
If a cluster of $\opt$ is pure or cheap, there is one center
$t(c^*)$ in $S-S_0$ serving their clients up to a $2+O(\eps)$ factor; for the remaining ones, we would have their optimum 
center (in $\hat{\opt}$).

So our goal from now on is to approximate the solution 
$M_O$, where $|S_0| = \hat{\opt} = O_{\eps}(\log n)$, and this final goal has two key components: (1) removing some existing centers from $S$ and (2)
adding the centers from $\hat{\opt}$. The issue is that stated
as such this seems as hard as solving the $k$-median problem.
We first address (1), and we will address (2) by making use
of the leaders we have correctly guessed for the centers in
$\hat{\opt}$ at the previous steps.

\paragraph{Center Removal.}
Let us first try to address (1). Remember we start with solution $S$, and the leader and radii guesses from the previous steps. The key issue with (1) is that
the clients in some of the clusters we would like to remove
(i.e., served by $S_0$ in solution $S$) are actually served by a center in $\hat{\opt}$, but we do not know which are these 
centers yet! We will thus make use of some approximate loose bound
on their cost in solution $S - S_0 \cup \hat{\opt}$ by introducing
\emph{dummy} centers $\dummyset$. For each leader $\ell$ and guessed distance/radii $\rho$,
we will introduce a dummy center at distance $\rho$ from $\ell$ 
and at distance $\dist(p, \ell) + \rho$ from any other vertex $p$.

If we promise to open a center in the ball of radii $\rho$ around
$\ell$ -- which solution  $S - S_0 \cup \hat{\opt}$ does -- then
we will guarantee that there will be a center not further away
than the dummy centers in the final solution. We now work with
this solution $S \cup \dummyset$ which provides a worst-case fall 
back on where the centers in $\hat{\opt}$ could be. We can show
that $M_D := S \cup \dummyset - S_0$ is a $(3+O(\eps))$-approximate solution.

Equipped with this we proceed to our 3-step center removal. 
We know that we have at most $|S_0| = O_{\eps}(\log n)$ centers
we need to remove. Our first procedure (see Section~\ref{sec:removalofExpensive}) aims at removing the centers of clusters of high cost in $S_0$.
Start with solution $S \cup \dummyset$. For $O_{\eps}(\log n)$ times, we sample a center $c\in S$ proportionally to the cost of its cluster in solution $S \cup \dummyset$, remove $c$ from $S$ with probability 
$1/2$, and repeat on the resulting solution. Let $\calQ$ be the centers removed from $S$. Then 
with 
probability at least $2^{-O_{\eps}(\log n)}$ the procedure 
succeeds by either obtaining $\calQ=S_0$
or ensuring that the total cost of the clusters of $S_0-\calQ$ in the solution is 
at most $O(\eps \cdot \sopt)$. This indeed follows from the fact that, 
if we only removed centers $\calQ\subseteq S_0$ up to a certain step,
the solution is never more than a $
(3+O(\eps))$-approximation (since $S 
\cup \dummyset - S_0$ is a $(3+O(\eps))$-approximation) and so the probability
of sampling a new center of $S_0-\calQ$ is $\Omega(\eps)$, unless the total cost of the clusters of $S_0-\calQ$ is small. Therefore, repeating the process $n^{O_\eps(1)}$
times, we have that one of the runs succeeds with high probability.

Thus, after this step, we have removed a set $\calQ$ of centers
of $S_0$ and the total cost of the clusters of $S_0-\calQ$ is $O(\eps \cdot \sopt)$.
Following this procedure, we only have to remove inexpensive clusters. While here it is not possible to identify the correct ones, making a mistake is not too costly. For any $c\in S_0-\calQ$, we can now think of its whole cluster
as being assigned entirely to a unique center in $S - S_0 \cup
\hat{\opt}$ since by triangle inequality this would only increase
its cost by $O(\eps \cdot \sopt)$ in total.
We can thus think of the remaining centers $S_0-\calQ$ as either being assigned to $\hat{\opt}$ (we call this set $\calR$), or to a center in $S - S_0$ (we call this set $\calU$).
Suppose we know how many centers fall into these two types, a figure we determine by trying all possibilities. The next step would be ideally to remove $\calU$. We instead remove a \emph{proxy} $\bcalU$ of $\calU$, $|\bcalU|=|\calU|$(see Section \ref{sec:removalofCheap}).
A natural starting point to find a center for reassignment is to do so by greedily removing the center $c$ for which the reassignment cost is minimized. However, we must be careful. First, the reassignment costs are dynamic, that is when removing a center, it affects reassignment costs of subsequently removed ones. Second, if a center $c$ was used to bound the reassigment cost of an deleted center, we cannot delete it later in order to prevent the error from adding up. We design a procedure that  produces a $\bcalU$ satisfying all these constraints (see Lemma \ref{lem:successcheapremove}). We remark that $\bcalU\cap \calR=\emptyset$, hence removing $\calR$ is still a valid option in the later stages.

Thus, for the purpose of analysis (the algorithm does not know $\calR$), we obtain the solutions $M_D' = S- (\calQ \cup \bcalU\cup \calR) \cup \dummyset$ and $M_O' = S- (\calQ \cup \bcalU \cup \calR) \cup \hat{\opt}$. Moreover, the costs satisfy $\text{cost}(M_O')+\text{cost}(M_D')\leq (4+O(\varepsilon))\opt$, which is the same as summed up costs of $M_O$ and $M_D$, see Lemma \ref{lemma:costboundofMOandMDwithassignments} and Claim \ref{claim:Mprimebounds}.

\paragraph{Improvement via Submodular Maximization.}
We can now use submodular maximization techniques (specifically 
algorithms to maximize a submodular function over a partition matroid constraint) in a similar but more general setting as in the work of~\cite{Cohen-AddadG0LL19}. 
In more detail, we wish to find the best $|\hat{\opt}|$ centers $X$ to open (instead of the dummy centers $\dummyset$), one at the prescribed distance from each leader, to
improve the cost of $M_D'$ as much as possible. This is almost 
identical to the work of~\cite{Cohen-AddadG0LL19} except that we also need to close $|\calR|$ other centers $\widetilde{\calR}$ in $S-\calQ-\bcalU$.  Ideally, we would like to open $\hat{\opt}$ and close $\calR$. We show that selecting a good enough pair $(X,\widetilde{\calR})$ can be reduced to a submodular maximization problem under a partition matroid constraint. In particular, we carefully define our submodular function so that its value, for a given subset $X$ of centers, is calculated by optimally selecting the set $\widetilde{\calR}$ to close (for the choice $X$).  Then we can use state-of-the-art algorithms to optimize this and \emph{simultaneously} approximate the centers $\hat{\opt}$ to open and the set $\calR$ to close.

\section{$(2+\eps)$-Approximation with $O(\log n /\eps^2)$ Extra Centers}
\label{sec:log_adaptivity}

In this section, %
we present a polynomial-time algorithm that opens $k+ O(\log n/\eps^2)$ many centers and returns a solution of cost at most $(2+\eps)\sopt$. We restate the theorem for convenience of the reader.
\pseudoapprox*

Our algorithm is a non-trivial extension of the \JMS algorithm with dual fitting analysis~\cite{JMS02,JainMMSV03}, next called \JMSalg, so as to guarantee that we only open $k+ O(\log n/\eps^2)$ centers instead of only doing so in expectation.
We first recall the basics of \JMSalg\, in Section~\ref{sec:basicJMS}. Section~\ref{subsec:log_adaptive} presents a modified algorithm \logadaptalg that runs in $O(\log n / \eps^2)$ adaptive steps (that we call {\em phases}) and Section~\ref{sec:log_adaptivity_analysis} analyzes the latter algorithm.

Finally, building on additional setups in Section~\ref{sec:robust_analysis},
Section~\ref{sec:merging_algorithm} presents our final algorithm \mergealg achieving the guarantee of Theorem~\ref{thm:pseudo-approx}.
It is an extension of \logadaptalg maintaining two solutions that open at least and at most $k$ centers respectively, while opening at most three {\em free} centers per phase; this replaces the {\em bipoint rounding}~\cite{JaiV01, LiS16, ByrkaPRST17, GowdaPST23} and directly returns a $(2+\eps)$ approximation that opens at most $k + O(\log n /\eps^2)$ centers. Its analysis, which is an extension of the analysis of Section~\ref{sec:log_adaptivity_analysis}, is presented in Appendix~\ref{sec:robust_analysis_full}.

\subsection{Review of the \JMS Algorithm}
\label{sec:basicJMS}
In this section, we provide a brief overview of the classic $2$-approximate LMP algorithm from~\cite{JMS02, JainMMSV03}. Our algorithms are derived by modifying this procedure, and these modifications become more intuitive when we revisit the basic concepts of the original algorithm. We present it in Algorithm~\ref{alg:JMS} as a primal-dual algorithm to emphasize the changes we introduce.

\begin{figure}[h!]
    \begin{center}
        \begin{minipage}{1.0\textwidth}
            \begin{mdframed}[hidealllines=true, backgroundcolor=gray!15]

                \begin{algorithm}[\JMSalg] \ \\[0.2cm]
                    \textbf{{Initialization:}} Set $\hat f = 2f, S = \emptyset, A=D$ and $\alpha_j = 0$ for every $j\in A=D$. \\[0.1cm]

                    {While $A\neq \emptyset$:}\\[-0.1cm]

                    Increase $\alpha$-values of active clients  $(\alpha_j)_{j\in A}$ uniformly until one of the following events happen:

                    \begin{enumerate}
                        \item There is an unopened facility $i$ whose bids pay for it:
                              \[
                                  \sum_{j\in A} [\alpha_j - d(i,j)]^+ + \sum_{j\in D - A} [d(j,S) - d(j,i)]^+ = \hat f\,.
                              \]
                              Open $i$ (i.e., add it to $S$) and mark all clients $j\in A$ such that $\alpha_j \ge d(j,S)$ as served (i.e., remove them from $A$).
                        \item There is a client $j\in A$ so that \[\alpha_j = d(j,S).\]
                              The client $j$ is removed from the set $A$ of active clients.
                    \end{enumerate}
                    \label{alg:JMS}
                \end{algorithm}
            \end{mdframed}
        \end{minipage}
    \end{center}
\end{figure}

The algorithm maintains the set $A$ of \emph{active} clients and the set $S$ of opened facilities. Initially, all clients are active, i.e., $A = D$, and no facility is open, i.e., $S = \emptyset$. Additionally, the algorithm maintains an approximately feasible dual solution $\alpha$, where each client $j \in D$ has a corresponding value $\alpha_j$. These values are initialized to $0$.

While there is at least one active client, the algorithm uniformly increases the $\alpha$-values of the active clients until there is a facility $i$ whose "bids" are sufficient to open it or an active client ``gets connected'' to an already opened facility. The bids from a client $j$ to a facility $i$ depend on whether $j$ is active or not. If $j \in A$, then a portion of its budget $\alpha_j$, specifically $d(i,j)$, is allocated to cover the connection cost of $j$, and the remainder, $\alpha_j - d(i,j)$, is used to bid towards the facility's opening cost $\hat f:=2f$. For a client $j \notin A$, which is already connected to an open facility at a distance $d(j,S)$, the bid towards $i$ is the difference in distance, i.e., $d(j,S) - d(i,j)$ (or $0$ if $i$ is farther away from $j$ than $S$).

Finally, the algorithm marks any client $j$ with $\alpha_j \geq d(j,S)$ as inactive, ensuring that at any point in time, active clients $j$ satisfy $\alpha_j \leq d(j,S)$.

An overview of the analysis of \JMSalg is as follows.  We consider the final $\alpha$-values denoted by $(\alpha^*_j)_{j\in A}$, and the final set $S^*$ of opened facilities.
The definition of the bids is to guarantee that the $\alpha^*$-values pay for both the opening costs of the facilities in $S^*$ and the connection costs of the clients:
\begin{lemma}
    $\sum_{j\in D} \alpha^*_j = \sum_{j\in D} d(j,S^*) + \hat f \cdot |S^*|.$
    \label{lemma:JMS-cost}
\end{lemma}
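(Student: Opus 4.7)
The plan is to prove a per-client accounting identity and sum it over all clients. Concretely, for each $j \in D$ I aim to show
\[
\alpha_j^* \;=\; d(j, S^*) \;+\; \sum_{i \in S^*} c_{ij},
\]
where $c_{ij}$ is the contribution $j$ makes to the opening cost of $i$ at the moment $i$ opens, i.e.\ $[\alpha_j(t_i) - d(j,i)]^+$ if $j \in A(t_i^-)$ and $[d(j, S(t_i^-)) - d(j, i)]^+$ otherwise. Once this per-client identity is established, the opening condition $\sum_j c_{ij} = \hat f$ for each $i \in S^*$ (the event that triggers the opening of $i$ in Algorithm~\ref{alg:JMS}) gives after summing over $i \in S^*$ and $j \in D$ exactly the claim $\sum_j \alpha_j^* = \sum_j d(j, S^*) + \hat f \cdot |S^*|$.

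To verify the per-client identity, I would order the facilities $i_1,\dots,i_m$ of $S^*$ by opening time, set $d_l(j) = d(j, \{i_1,\dots,i_l\})$ with $d_0(j) = \infty$, and let $l^*-1$ be the index of the last facility $i_l$ that was opened while $j$ was still in $A(t_{i_l}^-)$. For the \emph{inactive} openings $l \geq l^*$, the contribution $[d_{l-1}(j) - d(j, i_l)]^+$ equals $d_{l-1}(j) - d_l(j)$, and the telescoping sum gives $d_{l^*-1}(j) - d(j, S^*)$. For the \emph{active} openings $l \leq l^*-1$, the invariant that active clients satisfy $\alpha_j < d(j, S)$ (otherwise they would have been marked served) implies that $\alpha_j(t_{i_l}) < d_{l-1}(j) \leq d(j, i_l)$ for every $l \leq l^*-2$, so these contributions vanish; only $l = l^*-1$ may contribute.

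Now a short case analysis on how $j$ leaves $A$ finishes things. If $j$ leaves via Event~1 at $i_{l^*-1}$, then $\alpha_j^* = \alpha_j(t_{i_{l^*-1}}) \geq d_{l^*-1}(j)$, and combining with the pre-opening invariant $\alpha_j(t_{i_{l^*-1}}) < d_{l^*-2}(j)$ forces $d_{l^*-1}(j) = d(j, i_{l^*-1})$; the contribution at $l^*-1$ is therefore $\alpha_j^* - d_{l^*-1}(j)$. If instead $j$ leaves via Event~2 between $i_{l^*-1}$ and $i_{l^*}$, then its $\alpha$-value grows from some $\theta_{l^*-1} < d(j, i_{l^*-1})$ up to $\alpha_j^* = d_{l^*-1}(j)$, the contribution at $l^*-1$ is zero, and the equation $\alpha_j^* - d_{l^*-1}(j) = 0$ again holds. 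In both subcases the active part equals $\alpha_j^* - d_{l^*-1}(j)$, and adding the telescoping inactive part produces $\alpha_j^* - d(j, S^*)$, as desired.

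The main technical obstacle is the boundary facility $i_{l^*-1}$: the active-client invariant only controls $\alpha_j$ relative to the \emph{old} open set $S(t_{i_{l^*-1}}^-)$, not relative to the new facility $i_{l^*-1}$ itself, so one must delicately distinguish the two possible inactivation modes and verify that the natural identity $\alpha_j^* \geq d(j,S^*)$ persists. Once this case analysis is handled, the algebra is a clean telescoping argument, and summing over $j$ together with the per-facility payment equality $\sum_j c_{ij} = \hat f$ closes the proof.
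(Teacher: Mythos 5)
Your proof is correct, and it takes a different route from what the paper does. The paper states Lemma~\ref{lemma:JMS-cost} without proof (it is part of the review of the \JMS algorithm), and the closest argument it actually writes out is the proof of the analogous Lemma~\ref{lemma:approx_guarantee} for \logadaptalg: an induction over the algorithm's steps maintaining the aggregate invariant $\sum_{j\in I}\alpha_j \geq \sum_{j\in I}(1-\delta)d(j,S)+\sum_{i\in S}\hat f$, where each facility opening and each deactivation is charged globally against the payment condition. You instead prove a per-client identity $\alpha^*_j = d(j,S^*)+\sum_{i\in S^*}c_{ij}$ by telescoping the post-deactivation bids and isolating the one boundary facility whose opening (possibly) deactivates $j$, then sum over clients and use $\sum_j c_{ij}=\hat f$ per opened facility. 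Both arguments are sound; the aggregate induction generalizes more smoothly to the paper's setting (scaled metric, increased bids $\tau_j$, free facilities, $\eta$-slack), where it only yields an inequality, while your per-client decomposition gives the exact equality claimed in Lemma~\ref{lemma:JMS-cost} and makes transparent where each unit of $\alpha^*_j$ goes.

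One justification in your write-up is stated incorrectly, though the conclusion it supports is fine: for $l\leq l^*-2$ you claim $\alpha_j(t_{i_l}) < d_{l-1}(j)\leq d(j,i_l)$, but the second inequality need not hold --- the newly opened $i_l$ may well be closer to $j$ than all previously opened facilities while $j$ still remains active. The correct argument is to apply the invariant \emph{after} $i_l$ opens: since $j$ is not marked served at that moment, $\alpha_j(t_{i_l}) < d(j,\{i_1,\dots,i_l\}) = d_l(j) \leq d(j,i_l)$, so the bid $[\alpha_j(t_{i_l})-d(j,i_l)]^+$ vanishes. With that one-line fix (and the same observation reused in your Event-2 subcase at $l^*-1$), the case analysis and the telescoping go through exactly as you describe.
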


This is complemented by showing that $\alpha^*/2$ is a feasible dual solution.
\begin{lemma}
    For a facility $i$, $\sum_{j \in D}[ \alpha^*_j - 2d(i,j)]^+ \leq \hat f$.
    \label{lemma:JMS-dual-feasible}
\end{lemma}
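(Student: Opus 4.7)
The plan is to follow the dual-fitting strategy from the original JMS analysis. Fix an arbitrary facility $i$ and restrict attention to the set $N := \{j \in D : \alpha^*_j > 2 d(i,j)\}$ of clients contributing positively to the left-hand side. For each such $j$ I introduce the witness time $t_j := \alpha^*_j - d(i,j)$. I will argue that at this moment client $j$ is making a bid of exactly $\alpha^*_j - 2 d(i,j)$ toward $i$ while $i$ is still unopened, and then aggregate these bids across $j \in N$ into a single $\leq \hat f$ bound.

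First I would verify two structural properties at time $t_j$. Since active clients' duals grow uniformly and since $j$ becomes inactive only at its final value $\alpha^*_j$, and since $t_j < \alpha^*_j$, the client $j$ is still active at $t_j$ with $\alpha_j(t_j) = t_j$. Moreover, $i$ must be unopened at $t_j$: otherwise, at the strictly earlier moment when $\alpha_j$ first reached $d(i,j) < \alpha^*_j/2 < t_j$, the client $j$ would already have been marked served through $i$, forcing $\alpha^*_j \leq d(i,j)$ and contradicting $\alpha^*_j > 2 d(i,j)$. Consequently the bid of $j$ toward the unopened $i$ at time $t_j$ equals $t_j - d(i,j) = \alpha^*_j - 2 d(i,j)$, and invoking the no-overbid invariant from the overview (Equation~\eqref{eq:no-overbid-overview}) at time $t_j$ certifies that this amount was charged legitimately against $i$'s opening budget.

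The hard part will be aggregating these pointwise bids, made at different witness times, into a single inequality bounded by $\hat f$. My plan is to order $N = \{j_1, \ldots, j_m\}$ by nondecreasing $\alpha^*$, and invoke the no-overbid invariant at each time $t_{j_k}$, at which $i$ is still unopened and every $j_\ell$ with $\ell \geq k$ is active with common dual value $t_{j_k}$. This yields the family of inequalities $\sum_{\ell \geq k} [t_{j_k} - d(i, j_\ell)]^+ \leq \hat f$ for $k = 1, \ldots, m$, and the question reduces to whether this family of constraints forces $\sum_{\ell=1}^{m} (\alpha^*_{j_\ell} - 2 d(i, j_\ell)) \leq \hat f$. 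This is exactly the classical \emph{factor-revealing LP} step from the JMS paper, where the factor $2$ emerges as the worst-case ratio between the objective and the constraints. I would either import this LP bound directly or re-derive it via a short weighting/telescoping argument that assigns carefully chosen multipliers to the $m$ inequalities; this combinatorial step is the genuine obstacle in the proof, while the rest is bookkeeping.
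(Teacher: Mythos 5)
There is a genuine gap, and it sits exactly where you locate "the genuine obstacle": the aggregation step cannot be completed from the constraint family you set up. At each witness time you invoke the no-overbid invariant but keep only the active clients' bids, obtaining $\sum_{\ell \geq k} [t_{j_k} - d(i,j_\ell)]^+ \leq \hat f$ with $t_{j_k} = \alpha^*_{j_k} - d(i,j_k)$. These inequalities alone do not imply $\sum_{\ell}[\alpha^*_{j_\ell} - 2d(i,j_\ell)]^+ \leq \hat f$. Concretely, take two clients with $d(i,j_1)=1$, $d(i,j_2)=0$ and $\alpha^*_{j_1}=\alpha^*_{j_2}=10$: your constraints read $8+9\leq \hat f$ and $10 \leq \hat f$, which are satisfied by $\hat f = 17$, while the target is $8+10=18\leq \hat f$. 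So no choice of multipliers or telescoping can close the argument; the "classical factor-revealing LP step" is not the system you wrote down. What is missing is the contribution of the clients that are already \emph{inactive} at the witness time: in the invariant~\eqref{eq:no-overbid-overview} these contribute $[d(j,S)-d(i,j)]^+$, and the second key property of the algorithm ($\alpha_k \leq d(k,S)$ for active $k$, which your proposal never uses) combined with the triangle inequality $d(k,S) \leq d(k,i)+d(i,j)+d(j,S)$ lower-bounds each such term by $[\alpha^*_k - 2d(i,j) - d(i,k)]^+$. This is precisely how the factor $2$ enters for earlier clients, and it yields the per-client inequality of Lemma~\ref{lemma:JMS-dual-feasible-technical} (with the invariant evaluated at time $\alpha^*_k$, not at $t_k$, so the later clients contribute the stronger terms $[\alpha^*_k - d(i,j)]^+$); the lemma then follows by summing over $k$ with the counting argument spelled out in the proof of Lemma~\ref{lemma:dual_feasibility}.

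Two smaller remarks. First, your argument that $i$ is unopened at time $t_j$ only treats the case where $i$ opens before $\alpha_j$ reaches $d(i,j)$; if $i$ opens at some time $\theta_i$ with $d(i,j) \leq \theta_i \leq t_j$, then $j$ becomes inactive at that opening, giving $\alpha^*_j = \theta_i \leq \alpha^*_j - d(i,j)$, which is the contradiction you want (for $d(i,j)>0$) but by a different case analysis; this is fixable. Second, even after fixing the constraint family, you should apply the invariant per client $k$ at the moment just before $k$ becomes inactive and carry the inactive-client terms along, rather than discarding them — otherwise you are re-deriving a strictly weaker system than the one the dual-fitting analysis actually needs.
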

To prove dual-feasibility, two key properties are used:
\begin{description}
    \item[No overbidding:] At any point in time, for every $i\in F$,
          \begin{gather*}
              \sum_{j\in A} [\alpha_j - d(i,j)]^+ + \sum_{j\in D - A} [d(j,S) - d(j,i)]^+ \leq \hat f\,.
          \end{gather*}
    \item[No active clients bid on open facilities:] At any point in time, $\alpha_j \leq d(j, S)$ for $j\in A$.
\end{description}
Specifically, if we give a total order of the clients so that $j\leq \ell$ if $\alpha^*_j \leq \alpha^*_\ell$, the above two properties are used to prove the following technical lemma; intuitively, at time right before $\alpha^*_{\ell}$, client $j \in D-A$ satisfies $d(j, S) + d(j, i) + d(i, \ell) \geq \alpha^*_{\ell}$, which implies $d(j, S) - d(j, i) \geq \alpha^*_{\ell} - 2d(i, j) - d(i, \ell)$.
\begin{lemma}
    For any $\ell \in D$,
    \begin{equation*}
        \sum_{j \in D : j \geq \ell} [\alpha^*_\ell - d(i, j)]^+ +
        \sum_{j \in D : j < \ell} [\alpha^*_\ell - 2d(i, j) - d(i, \ell)]^+ \leq \hat f.
    \end{equation*}
    \label{lemma:JMS-dual-feasible-technical}
\end{lemma}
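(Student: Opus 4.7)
The plan is to evaluate the \emph{no overbidding} invariant for facility $i$ at the instant $t = \alpha^*_k$ at which client $k$'s dual value stops growing. Under the ordering convention $j \leq k \iff \alpha^*_j \leq \alpha^*_k$, the set of clients still active at time $t$ is exactly $\{j : j \geq k\}$, and each such $j$ currently has $\alpha_j = t = \alpha^*_k$, so its instantaneous bid on $i$ is precisely $[\alpha^*_k - d(i,j)]^+$, matching the first sum in the lemma verbatim. The clients $j < k$ have already been marked inactive and contribute $[d(j, S_t) - d(i,j)]^+$, where $S_t$ denotes the set of facilities opened by time $t$. Invoking \emph{no overbidding} at time $t$ therefore gives
\[
\sum_{j \geq k} [\alpha^*_k - d(i,j)]^+ \;+\; \sum_{j < k} [d(j, S_t) - d(i,j)]^+ \;\leq\; \hat f,
\]
so the remaining task is to show, for every $j < k$, that $d(j, S_t) - d(i,j) \geq \alpha^*_k - 2d(i,j) - d(i,k)$.

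For this, let $i'_j \in S_t$ realize $d(j, S_t)$. Because $k$ is still active at time $t$, the second invariant (\emph{no active client bids on an open facility}) forces $\alpha^*_k = \alpha_k(t) \leq d(k, S_t) \leq d(k, i'_j)$. Triangle inequality along the path $k \to i \to j \to i'_j$ then yields
\[
d(j, i'_j) \;\geq\; d(k, i'_j) - d(k,i) - d(i,j) \;\geq\; \alpha^*_k - d(i,k) - d(i,j),
\]
and subtracting $d(i,j)$ gives the desired bound $d(j, S_t) - d(i,j) \geq \alpha^*_k - 2d(i,j) - d(i,k)$. Since $x \mapsto [x]^+$ is monotone, this inequality survives taking positive parts; summing over $j < k$ and combining with the no overbidding inequality above completes the proof.

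The step I expect to require the most care is the tie-breaking at the instant $t = \alpha^*_k$: a facility may open, $k$ itself may be deactivated, and other clients with $\alpha^*_j = \alpha^*_k$ may simultaneously straddle the active/inactive boundary. I would sidestep this by first applying no overbidding at time $t - \varepsilon$ for infinitesimal $\varepsilon > 0$ (where both invariants hold literally by the algorithm's maintenance), then letting $\varepsilon \to 0$, and fixing the ordering on clients with equal $\alpha^*$ to match the order in which the algorithm actually removed them from $A$. With this convention the partition $\{j \geq k\} \cup \{j < k\}$ matches the active/inactive split at time $t$, and the argument above goes through unchanged.
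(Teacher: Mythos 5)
Your proof is correct and is essentially the argument the paper relies on: the paper states this lemma without proof as part of its review of the JMS analysis, but its proof of the analogous Lemma~\ref{lemma:dual-feasible-inactive} for \logadaptalg has exactly your structure --- invoke the no-overbidding invariant just before $k$ becomes inactive, take the active clients' bids verbatim, and dominate each inactive $j<k$ term via the triangle inequality combined with the invariant $\alpha^*_k \le d(k,S)$. Your tie-breaking concern is also harmless: a client tied with $k$ but ordered before it is still active at that moment and contributes $[\alpha^*_k - d(i,j)]^+ \ge [\alpha^*_k - 2d(i,j) - d(i,k)]^+$, so the domination goes through either way.
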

Lemma~\ref{lemma:JMS-dual-feasible} then follows by summing up the above inequality for all $\ell\in D$.
Let $\optlpfl(f)$ be the value of an optimal solution to the linear programming relaxation when opening costs are $f$.
Then, we get the following theorem by combining Lemma~\ref{lemma:JMS-cost}  and Lemma~\ref{lemma:JMS-dual-feasible}.
\begin{theorem}
    The set $S^*$ of facilities output by \JMSalg satisfies
    \[
        \sum_{j\in D} d(j, S^*) \leq 2 \cdot \left( \optlpfl(f) - f \cdot |S^*| \right)\,.
    \]
    \label{thm:JMS}
\end{theorem}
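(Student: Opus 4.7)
The plan is to combine Lemma~\ref{lemma:JMS-cost} and Lemma~\ref{lemma:JMS-dual-feasible} via LP duality; the proof should be essentially one-line manipulations once those two lemmas are in hand. Concretely, let $(\alpha^*_j)_{j\in D}$ be the final dual values and $S^*$ the final set of opened facilities.

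First, I would invoke Lemma~\ref{lemma:JMS-dual-feasible}, which states that $\sum_{j\in D}[\alpha^*_j - 2d(i,j)]^+ \leq \hat f = 2f$ for every $i\in F$. This is exactly the statement that $\alpha^*/2$ is a feasible solution to $DP_{\mathrm{UFL}}(f)$ (the dual LP for facility location with opening cost $f$), as recalled in Section~\ref{sec:prelim}. Weak LP duality therefore gives
\[
\frac{1}{2}\sum_{j\in D} \alpha^*_j \;\leq\; \optlpfl(f).
\]

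Second, I would apply Lemma~\ref{lemma:JMS-cost}, which says $\sum_{j\in D}\alpha^*_j = \sum_{j\in D} d(j,S^*) + \hat f\cdot |S^*|$. Substituting $\hat f = 2f$ and rearranging,
\[
\sum_{j\in D} d(j,S^*) \;=\; \sum_{j\in D}\alpha^*_j - 2f\cdot |S^*| \;\leq\; 2\optlpfl(f) - 2f\cdot|S^*| \;=\; 2\bigl(\optlpfl(f) - f\cdot|S^*|\bigr),
\]
which is exactly the desired inequality. There is no real obstacle here: all the work has been pushed into Lemmas~\ref{lemma:JMS-cost} and~\ref{lemma:JMS-dual-feasible}, the former being an accounting identity guaranteed by how bids are defined (each $\alpha^*_j$ pays its connection cost plus its share of opening costs of facilities it bid on), and the latter being the technical content proved via Lemma~\ref{lemma:JMS-dual-feasible-technical} using no-overbidding and the invariant that active clients never bid on open facilities. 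The final statement is then a direct combination via weak duality, so I would not expect any additional difficulty in this step.
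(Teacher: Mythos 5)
Your proposal is correct and follows exactly the route the paper intends: the paper obtains Theorem~\ref{thm:JMS} precisely by combining Lemma~\ref{lemma:JMS-cost} with Lemma~\ref{lemma:JMS-dual-feasible}, interpreting the latter as dual feasibility of $\alpha^*/2$ for $DP_{UFL}(f)$ and applying weak duality, just as you do (and as is spelled out explicitly for the analogous Theorem~\ref{thm:log-adapt}). No gaps.
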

Recall that $\optlpfl(f) - f\cdot k$ is a lower bound on the optimal solution to the $k$-median problem (see Section~\ref{sec:prelim}). Hence, if \JMSalg outputs a solution $S^*$ such that $|S^*| = k$, then it is a 2-approximate solution to $k$-median. Moreover, we can always set $f$ high enough so that any good solution opens at most $k$ facilities and $f$ low enough so that any good solution opens at least $k$ facilities. Thus, a simple binary search can find values $f$ and $f'$ such that running \JMSalg with opening costs $f$ opens $k_1 \geq k$ facilities, while running it with opening costs $f'$ opens $k_2 \leq k$ facilities. However, even though we can ensure that $|f - f'|$ is exponentially small, say $1/2^n$, it turns out that \JMSalg is quite unstable, and we may have $k_1 \ll k$ and $k_2 \gg k$, which does not directly imply anything useful for $k$-median. In other works, this has been addressed through an additional rounding algorithm called bi-point rounding, which introduces an extra factor into the approximation guarantee. In contrast, we modify \JMSalg into a similar algorithm that is ``smoother'' and less adaptive in its selection of facilities to open.

\subsection{\logadaptalg}
\label{subsec:log_adaptive}

It is natural to think of the \JMSalg as increasing the $\alpha$-values of active clients with a time $\theta$. That is, we always have $\alpha_j = \theta$ for $j\in A$. A difficulty in ensuring that \JMSalg opens $k$ facilities is that it is a very adaptive algorithm. To see this, consider the case when two facilities $i$ and $i'$ both become paid for at the same time $\theta$. Whether we open $i$ or $i'$ can dramatically change the bids to future facilities and completely change the trace and what facilities are opened by the algorithm. Our idea to overcome this difficulty is to limit the number of time steps where such decisions are made, thus lowering the adaptivity of the algorithm. Specifically, our high-level goal is to only consider the time steps of the form $(1+\eps^2)^0, (1+\eps^2)^1, (1+\eps^2)^2, \ldots $. Then, as the minimum nonzero distance is $1$ and the maximum distance is $\poly(n)$, the standard arguments (formalized in Section~\ref{sec:merging_algorithm}) imply that the total number of time steps that we need to consider is $O(\log(n)/\eps^2)$, and this is what we refer to as $O(\log n)$ adaptivity.

The difficulty in only considering those time steps is that we easily lose the property that we never overbid a facility $i$, which is crucial in the analysis of \JMSalg. Indeed, consider a time step $\theta = (1+\eps^2)^t$ and let $A$ and $I$ be the current active and inactive clients, respectively. Now, suppose that there is a facility $i$ whose bids
\begin{gather*}
    \sum_{j\in A} [\alpha_j - d(i,j)]^+ + \sum_{j\in I} [d(j,S) - d(j,i)]^+
\end{gather*}
are just barely below $\hat f$. However, in that case, increasing the $\alpha$-value of active clients by a factor $1+\eps^2$ may cause the bids to be strictly above $\hat f$, leading to difficulties in the analysis. (Specifically, in the proof that $\alpha^*/2$ is dual feasible.)
We intuitively change \JMSalg in two ways to handle this. First, we open centers if the bids pay for them in the metric scaled down by a factor of $(1-\delta)$. That is, we consider a facility $i$ paid for if
\begin{gather*}
    \sum_{j\in A} [\alpha_j - (1-\delta) d(i,j)]^+ + (1-\delta) \sum_{j\in I} [d(j,S) - d(j,i)]^+  \geq \hat f\,.
\end{gather*}
As we scale down the metric (i.e., the connection cost) by a factor $1-\delta$, this loses a factor $1/(1-\delta)$ in the approximation guarantee.
However, the benefit is the following.
Let $X = \{j\in A \mid \alpha_j \geq (1-\delta) d(j,i)\}$ be the  active clients that bid for $i$. Then
\[
    \sum_{j\in X} \left(\alpha_j - (1-\delta) d(i,j) \right)  \geq  \sum_{j\in X} \left((1+\eps^2) \alpha_j  - d(i,j) \right)
\]
assuming that $\sum_{j\in X} \eps^2 \alpha_j = \eps^2 \theta |X|$ is smaller than $\delta \sum_{j\in X} d(i,j)$. We will select $\delta = 3\eps$ and so this means that it is safe to increase the $\alpha$-values of active clients by a factor $(1+\eps^2)$ \emph{as long as} the average distance from bidding active clients is at least $\eps^2 \theta/\delta = \eps \theta/3$.
Our second change deals with the remaining case when the clients in $X$ are close to co-located with $i$. We do so by allowing the clients $j$ in the ball $B(i, \eps \theta)$ to increase their bids for $i$ up to $\min((1+\eps^2)\theta, (1-\delta)d(j,S))$. These changes are captured in the following definition of openable facilities. The bids $(\tau_j)_{j\in A}$ from active clients are only allowed to increase for nearby clients (the first two properties). A facility is considered paid for with respect to the scaled down metric (the third property). Finally, for the proof of dual feasibility even after increasing the bids of nearby clients, we need a technical fourth condition in the definition. We remark that we use \algstatex to denote the algorithm's state, where $\alpha$ is the approximate feasible dual solution, $S$ the set of open facilities, $A$ the set of active clients (which determines $I = D \setminus A$), and $\theta$ the time. Both \JMSalg and \logadaptalg (that we describe below) satisfy $\alpha_j = \theta$ for all $j\in A$. %

\begin{definition}
    A facility $i$ is \emph{openable} with respect to the algorithm's state \algstatex if there are bids $(\tau_j)_{j\in A}$ of active clients that satisfy the following conditions.
    \begin{itemize}
        \item Only nearby clients are increased:
              \begin{gather*}
                  \tau_j = \alpha_j \qquad \mbox{for every $j \in A - B(i, \eps \theta)$.}
              \end{gather*}
        \item Nearby clients are only increased slightly:
              \begin{gather*}
                  \alpha_j \leq \tau_j \leq \min\{(1-\delta) d(j,S), (1+\eps^2)\theta\} \qquad \mbox{for every $j\in A \cap B(i, \eps \theta)$.}
              \end{gather*}
        \item Facility $i$ is paid for:
              \begin{gather*}
                  \sum_{j\in A} [\tau_j - (1-\delta)d(i,j)]^+ + (1-\delta) \sum_{j\in I} [d(j,S) - d(i,j)]^+ \geq \hat f\,.
              \end{gather*}
        \item Dual feasibility: for every facility $i_0$ and $\ell \in A$,
              \begin{gather*}
                  \sum_{j \in A} [\tau_j - d(i_0, j)]^+ + \sum_{j \in I} [\tau_\ell - 2d(j, i_0) - d(\ell, i_0)]^+ \leq \hat f.
              \end{gather*}
    \end{itemize}
    \label{def:openable}
\end{definition}
We note that when the state of the algorithm is clear from the context, we simply say that "$i$ is openable" without explicitly specifying the state.

The following lemma follows from that we can write the selection of the bids as a linear program. Indeed, given the state $(\alpha, S, A, \theta)$ of the algorithm, each of the four conditions on the bids $(\tau_j)_{j\in A}$ can be formulated as linear inequalities.
\begin{lemma}
    There is a polynomial-time algorithm that, given the state $(\alpha, S, A, \theta)$ and a facility $i$, either outputs bids $(\tau)_{j\in A}$ that satisfy all the conditions of Definition~\ref{def:openable} or certifies that $i$ is not openable.
\end{lemma}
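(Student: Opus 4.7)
The plan is to encode the feasibility of the conditions in Definition~\ref{def:openable} as a polynomial-size linear program whose variables are $(\tau_j)_{j \in A}$ together with some auxiliary slack variables. Conditions (1) and (2) are already linear: the first is a set of equalities $\tau_j = \alpha_j$ for $j \in A \setminus B(i,\eps\theta)$, and the second gives a pair of box constraints on each $\tau_j$ with $j \in A \cap B(i,\eps\theta)$. So the real work lies in linearizing conditions (3) and (4), which involve $[\cdot]^+$ terms.

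For condition (4), I would apply the standard LP linearization for an upper bound on a sum of positive parts. For each pair $(i_0, k) \in F \times A$, introduce auxiliary variables $w_{i_0,k,j} \geq 0$ for $j \in D$, enforce $w_{i_0,k,j} \geq \tau_j - d(i_0,j)$ when $j \in A$ and $w_{i_0,k,j} \geq \tau_k - 2d(j,i_0) - d(k,i_0)$ when $j \in I$, and require $\sum_{j \in D} w_{i_0,k,j} \leq \hat f$. Any LP-feasible $w$ satisfies $w_{i_0,k,j} \geq [\cdot]^+$ termwise, so the total $[\cdot]^+$ sum is bounded by $\hat f$; conversely, if $\tau$ is valid for condition (4), setting $w$ equal to the actual positive parts witnesses LP feasibility. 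This costs $|F| \cdot |A| \cdot |D|$ auxiliary variables.

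The subtler step, which I expect to be the main obstacle, is linearizing condition (3), because here the $[\cdot]^+$ appears on the lower-bound side and does not admit a direct LP encoding via slack variables. My plan is to show that the $[\cdot]^+$ is actually redundant on the variable part of the sum. Concretely, for $j \in A \cap B(i,\eps\theta)$ the algorithm's invariant gives $\alpha_j \geq \theta$ (in stage~1 of each phase, $\alpha_j$ can only increase from its starting value $\theta$), while $d(i,j) \leq \eps\theta$ by the ball definition, so one has $(1-\delta) d(i,j) \leq (1-\delta)\eps\theta \leq \theta \leq \alpha_j \leq \tau_j$ as long as $(1-\delta)\eps \leq 1$. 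Hence $[\tau_j - (1-\delta)d(i,j)]^+ = \tau_j - (1-\delta)d(i,j)$ throughout the feasible region. For every other term appearing in condition (3), either $\tau_j = \alpha_j$ is fixed (for $j \in A \setminus B(i,\eps\theta)$) or the term involves only constants (for $j \in I$), so these contributions can be moved to the right-hand side as a single constant. Condition (3) therefore collapses to one linear inequality in the free variables.

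Putting these pieces together yields an LP with $O(|A| + |F| \cdot |A| \cdot |D|)$ variables and polynomially many constraints, solvable in polynomial time. From a feasible solution we read off $\tau$ as the required witness of openability, and from infeasibility we conclude that no valid bids exist, certifying that $i$ is not openable.
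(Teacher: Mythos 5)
Your proposal is correct and follows essentially the same route as the paper: formulate openability as an LP in the bids of clients in $A \cap B(i,\eps\theta)$ (treating the others as constants), observe that the $[\cdot]^+$ in the payment constraint can be dropped because each such client has $\tau_j \ge \alpha_j = \theta \ge (1-\delta)\eps\theta \ge (1-\delta)d(i,j)$, and linearize the dual-feasibility constraints with per-term slack variables whose sum is bounded by $\hat f$. In fact you spell out the justification for dropping the positive part (via the invariant $\alpha_j=\theta$ for active clients) that the paper only asserts, so no gap remains.
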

\begin{proof}
    Fix $i$ and $(\alpha, S, A, \theta)$ and consider a linear program with variables $(\tau_j)_{j \in A}$. We claim that each of the four bullets in Definition~\ref{def:openable} can be expressed as linear constraints. Indeed, the first two bullets are immediate. As $\tau_j = \alpha_j$ for every $j \in A - B(i, \eps \theta)$, we can treat them as constants and only have $(\tau_j)_{j \in A \cap B(i, \eps \theta)}$ as the true variables. Then, the third bullet,
    \begin{gather*}
        \sum_{j\in A} [\tau_j - (1-\delta)d(i,j)]^+ + (1-\delta) \sum_{j\in I} [d(j,S) - d(i,j)]^+ \geq \hat f\,
    \end{gather*}
    becomes a linear inequality in the true variables, since the only terms involving the true variables are $[\tau_j - (1 - \delta)d(i,j)]^+$ for $j \in A \cap B(i, \eps \theta)$, and for those $j$,
    $[\tau_j - (1 - \delta)d(i,j)]^+$ is indeed equal to $(\tau_j - (1 - \delta)d(i,j))$ without $[\cdot]^+$ because $(\tau_j - (1 - \delta)d(i,j)) \geq \theta - (1 - \delta)\eps\theta > 0$.

    Finally, the fourth bullet
    \begin{gather*}
        \sum_{j \in A} [\tau_j - d(i_0, j)]^+ + \sum_{j \in I} [\tau_\ell - 2d(j, i_0) - d(\ell, i_0)]^+ \leq \hat f
    \end{gather*}
    is equivalent to introducing variables $t_j$ for $j \in D$ and adding the following constraints:  (1) $t_j \geq \tau_j - d(i_0, j)$ for $j \in A$, (2) $t_j \geq \tau_\ell - 2d(j,i_0) - d(\ell, i_0)$ for $j \in I$, (3) $t_j \geq 0$ for $j \in D$, and (4)
    $
        \sum_{j \in A} t_j + \sum_{j \in I} t_j \leq \hat f.
    $
\end{proof}

The description of \logadaptalg is given in Algorithm~\ref{alg:log_adapt_alg}.
\begin{figure}[h!]
    \begin{center}
        \begin{minipage}{1.0\textwidth}
            \begin{mdframed}[hidealllines=true, backgroundcolor=gray!15]

                \begin{algorithm}[\logadaptalg] \ \\[0.2cm]
                    \textbf{{Initialization:}} Set $\hat f = 2f, S = \emptyset, A=D, \theta = 1$ and $\alpha_j = \theta$ for every $j\in A=D$. \\[0.1cm]

                    {Repeat the following while $A\neq \emptyset$:}

                    \begin{enumerate}
                        \item While there is an unopened facility $i$ that is openable (chosen arbitrarily if there are many),
                              \begin{quote} Calculate $(\tau_j)_{j\in A}$ so that $i$ is openable with bids $(\tau_j)_{j\in A}$. Open $i$ (i.e., add it to $S$),
                                  let $\alpha_j \leftarrow \tau_j$ for every $j \in A$ (note that only the values of the clients in $A \cap B(i, \eps \theta)$ change), mark all clients $j\in A$ such that $\alpha_j \ge (1-\delta) d(j,S)$ as served (i.e., remove them from $A$).
                              \end{quote}

                        \item Perform the following step and proceed to the next phase.
                              \begin{quote}
                                  For every remaining $j \in A$, let $\alpha_j \leftarrow \min((1 + \eps^2)\theta, (1-\delta)d(j, S))$, remove it from $A$ when $\alpha_j = (1 - \delta)d(j, S)$.  Update $\theta = (1+\eps^2)\theta$.
                              \end{quote}
                    \end{enumerate}
                    \label{alg:log_adapt_alg}
                \end{algorithm}
            \end{mdframed}
        \end{minipage}
    \end{center}
\end{figure}
It takes a parameter $\eps\in [0, 1/6]$ and $\delta = 3\eps$.
The time  $\theta$ is now increased in steps by a multiplicative factor of $(1+\eps^2)$. This is done in each iteration of the outermost loop, "Repeat the following while $A \neq \emptyset$," which we refer to as a \emph{phase} of the algorithm.
Inside a phase, we refer to the two points as \emph{stages}, so stage 1 corresponds to the while-loop where we open openable facilities, and stage 2 corresponds to the increase of $\alpha$-values of the active clients $A$.
At any point in time, the state of the algorithm is described, as aforementioned, by the tuple $(\alpha, S, A, \theta)$.
The careful reader may observe that $A$ is redundant because the active clients $A$ are determined by $S$. Indeed,  the same holds for the set of \emph{inactive} clients $I = D- A = \{j\in D: \alpha_j < (1-\delta) d(j,S) \}$. However, it will be notationally convenient to explicitly keep track of $A$ in the algorithm's state.

The parameter $\eps >0$ regulates the number of phases the algorithm will make.
Indeed, all active clients have the same $\alpha$-value, and after the completion of $p$ phases, the $\alpha$-value of active clients equals $(1+\eps^2)^p$.
This ensures that the number of phases until the algorithm terminates is $O(\log(n)/\eps^2)$.
Hence, the increase of $\alpha$-values only depends on $O(\log(n)/\eps^2)$ states of the algorithm, the states before the second stage in each of the phases. This is why we refer to the algorithm to be of logarithmic adaptivity.

Apart from impacting the number of phases, the choice of $\eps$ also impacts the approximation guarantee. Specifically, we prove the following theorem in the subsequent section. (Recall that $\delta = 3\eps$.)
\begin{theorem}
    The set $S^*$ of facilities output by \logadaptalg satisfies
    \[
        \sum_{j\in D} d(j, S^*) \leq \frac{2}{1-\delta} \cdot \left( \optlpfl(f) - f \cdot |S^*| \right)\,.
    \]
    \label{thm:log-adapt}
\end{theorem}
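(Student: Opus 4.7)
The plan is to mirror the dual-fitting analysis of the classical \JMS algorithm (Lemmas~\ref{lemma:JMS-cost},~\ref{lemma:JMS-dual-feasible},~\ref{lemma:JMS-dual-feasible-technical}), adjusting for the $(1-\delta)$-scaling of the metric, the nearby-client bid boosts $\tau_j \geq \alpha_j$, and the geometric rather than continuous growth of $\theta$. Let $\alpha^*$ and $S^*$ denote the final dual values and opened facilities. We establish two facts: (i) a \emph{cost-covered} bound $\sum_{j\in D} \alpha_j^* \geq (1-\delta)\sum_{j\in D} d(j, S^*) + \hat f\,|S^*|$ and (ii) \emph{dual feasibility} of $\alpha^*/2$ for $DP_{UFL}(f)$, namely $\sum_{j \in D}[\alpha_j^* - 2\,d(i,j)]^+ \leq \hat f$ for every $i \in F$. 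By weak duality, (ii) gives $\sum_j \alpha_j^* \leq 2\,\optlpfl(f)$; combining with (i) and $\hat f = 2f$ yields $(1-\delta)\sum_j d(j, S^*) \leq 2\,\optlpfl(f) - 2f\,|S^*|$, which is the theorem.

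For (i), every client $j$ is eventually removed from $A$, either during stage~1 when $\alpha_j\gets \tau_j^{(i)} \geq (1-\delta)\,d(j, S)$ after opening some $i$, or during stage~2 via the cap $\alpha_j \gets \min((1+\eps^2)\theta,(1-\delta) d(j, S))$. In either case $\alpha_j^* \geq (1-\delta)\,d(j, S_{t(j)}) \geq (1-\delta)\,d(j, S^*)$, giving the connection term. For the opening term, each $i \in S^*$ receives at least $\hat f$ at its opening time by the third bullet of Definition~\ref{def:openable}. Charging the total bid of each $j$ across all openings (active bids $[\tau_j^{(i)} - (1-\delta)\,d(i,j)]^+$, and after deactivation inactive bids $(1-\delta)[d(j, S) - d(i,j)]^+$) against the excess $\alpha_j^* - (1-\delta)\,d(j, S^*)$ proceeds as in \JMS: the active-bid jump $\tau_j^{(i)} - \alpha_j^{t_i-}$ exactly matches the imposed jump in $\alpha_j$, while later inactive bids telescope through successive updates to $S$. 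Summing over $i \in S^*$ then yields $\sum_j (\alpha_j^* - (1-\delta)\,d(j, S^*)) \geq \hat f\,|S^*|$.

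For (ii), we maintain throughout the algorithm the following generalization of the invariant behind Lemma~\ref{lemma:JMS-dual-feasible-technical}: for every facility $i_0$ and every $k \in A$ attaining the minimum active $\alpha$-value,
\[
\sum_{j \in A}[\alpha_k - d(i_0, j)]^+ + \sum_{j \in I}[\alpha_k - 2\,d(j, i_0) - d(k, i_0)]^+ \leq \hat f.
\]
The fourth bullet of Definition~\ref{def:openable} is precisely this inequality evaluated at the proposed bids $\tau$, so each stage-1 opening preserves the invariant by construction. A stage-2 scaling by $(1+\eps^2)$ preserves it because no facility was openable at the end of stage~1: otherwise the candidate bids $\tau_j := \min((1+\eps^2)\theta, (1-\delta)\,d(j, S))$ for $j \in A \cap B(i_0, \eps\theta)$ and $\tau_j := \alpha_j$ elsewhere would satisfy all four bullets of Definition~\ref{def:openable} for some $i_0$, contradicting the loop exit condition. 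Applying the invariant at the moment each client $k$ becomes inactive (and so $\alpha_k = \alpha_k^*$) and summing over $k$ in deactivation order, as in~\cite{JMS02, JainMMSV03}, yields the target $\sum_j [\alpha_j^* - 2\,d(i_0, j)]^+ \leq \hat f$ for every $i_0$.

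The main obstacle is the interplay between the scaling factor $\delta = 3\eps$ and the phase multiplier $(1+\eps^2)$ in showing that stage~2 preserves the dual-feasibility invariant. The crucial inequality is that for any active bidder $j$ with $d(i_0, j) \geq \eps\theta/3$, the bid increase $\eps^2\theta$ induced by $\alpha_j \to (1+\eps^2)\alpha_j$ is at most $\delta\,d(i_0, j) \geq 3\eps \cdot \eps\theta/3 = \eps^2\theta$, so the $(1-\delta)$ metric slack in the openability check already absorbs it; the bidders $j \in B(i_0, \eps\theta)$ left uncovered by this argument are exactly those accommodated by the allowed $\tau_j$ increase in the second bullet of Definition~\ref{def:openable}. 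Together with the observation that $\theta$ grows geometrically from $1$ to $\poly(n)$, giving $O(\log n/\eps^2)$ phases, this yields the theorem at approximation ratio $2/(1-\delta) = 2 + O(\eps)$.
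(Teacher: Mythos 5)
Your overall architecture is the same as the paper's: a cost-covering bound (the paper's Lemma~\ref{lemma:approx_guarantee}) plus dual feasibility of $\alpha^*/2$ (Lemma~\ref{lemma:dual_feasibility}), combined via weak duality and $\hat f = 2f$. Part (i) of your sketch is essentially the paper's induction and is fine. The problems are in part (ii), and they sit exactly at the two places where \logadaptalg genuinely departs from \JMSalg. First, in your stage-2 preservation argument, the sentence ``the candidate bids $\tau_j := \min((1+\eps^2)\theta, (1-\delta)d(j,S))$ \dots would satisfy all four bullets of Definition~\ref{def:openable}'' is asserted, not proved, and the fourth bullet is the hard one: the boosted bids near the violated facility could overbid some \emph{other} facility $i_0'$, in which case the violated facility is not openable and no contradiction with maximality follows. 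The paper's Claim~\ref{claim:dual-feasible-induction} resolves this with a minimal-counterexample choice of the threshold $\tau'$ (the first facility to become tight is the one considered, and a fourth-bullet violation at another facility would contradict minimality of $\tau'$); some argument of this kind is needed and is missing from your proposal.

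Second, and more seriously, the final step ``apply the invariant at the moment each client $k$ becomes inactive (and so $\alpha_k=\alpha^*_k$) and sum over $k$ in deactivation order, as in \JMSalg'' does not go through. For a client deactivated through a bid boost, $\alpha_k$ at that moment equals $\theta$ while $\alpha^*_k = \tau_k$ can be as large as $(1+\eps^2)\theta$, so the invariant evaluated at that time is not an inequality in $\alpha^*_k$; the parenthetical claim is simply false for boosted clients. More structurally, for a fixed facility $i$ the JMS-type inequality (the paper's Lemma~\ref{lemma:dual-feasible-inactive}) can only be established for clients that become inactive \emph{strictly before} some facility opens within distance $(1+3\eps)\theta/2$ of $i$ (before $i$ is ``frozen''); for clients still active at that moment, the only available inequality is the fourth bullet of the freezing facility's openability, whose active sum carries $\tau_j$ rather than $\alpha^*_k$, so the per-client inequalities no longer have the uniform shape that lets you just sum them. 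The paper bridges this with (a) the claim that every $j \in D^*\cap A$ at the freeze has $\alpha^*_j = \tau_j$ — whose proof is where $\delta = 3\eps$ is really used, to show that clients whose value keeps growing afterwards cannot lie in $D^*$ — and (b) a weighted counting over $D^*$ mixing the two families of inequalities so that each $\alpha^*_j$ appears with coefficient $|D^*|$ and each $d(i,j)$ with coefficient at most $2|D^*|$. Your proposal contains neither the freezing case analysis nor the counting, and the $\delta$-versus-$\eps^2$ computation you do supply is used only for stage-2 preservation, not for this finalization issue; as written, the dual-feasibility half of the proof does not close.
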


\subsection{Analysis of \logadaptalg}
\label{sec:log_adaptivity_analysis}

An {\em atomic step} refers to the set of operations in the indented paragraphs in \logadaptalg that change the state $(\alpha, S, A, \theta)$. (I.e., an atomic step happens when a new facility $i$ is opened and at the end of each phase.) By ``at any point of the algorithm'', we mean any point in the algorithm's execution that is not in the middle of an atomic step. Throughout the analysis, we let $\alpha^*_j$ be the final $\alpha$-value of a client $j\in D$, and we let $S^*$ be the set of opened facilities.
We start by analyzing the approximation guarantee of the algorithm with respect to  $\alpha^*$, and we then prove that $\alpha^*/2$ is a feasible solution to the dual.

\subsubsection{Approximation Guarantee.}

In the following lemma, we prove that $\sum_{j\in D} \alpha^*_j$ pays for the opening costs of the facilities in $S^*$ and the connection cost of the clients scaled by a factor $(1-\delta)$.
\begin{lemma}
    We have $\sum_{j\in D} \alpha^*_j \geq (1-\delta) \sum_{j\in D} d(j, S^*) +  \hat f\,|S^*|$.
    \label{lemma:approx_guarantee}
\end{lemma}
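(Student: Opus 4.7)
The plan is to adapt the dual-fitting charging behind the classical analysis (Lemmas~\ref{lemma:JMS-cost} and \ref{lemma:JMS-dual-feasible}) to the modifications in \logadaptalg, namely the $(1-\delta)$-scaled metric, the quantized time $\theta$, and the permitted bid increases $\tau_j$ for nearby active clients. For every opened facility $i \in S^*$, let $A_i$ and $I_i$ denote the active and inactive clients immediately before $i$ is opened, $S_{i-1}$ the set of already opened facilities, and $\tau_j^i$ the bids used for $i$. By the third bullet of Definition~\ref{def:openable},
\[
\mathrm{Bid}(i) \;:=\; \sum_{j\in A_i}[\tau_j^i - (1-\delta)\,d(i,j)]^+ \;+\; (1-\delta)\sum_{j\in I_i}[d(j, S_{i-1}) - d(j,i)]^+ \;\geq\; \hat f.
\]
Summing over $i \in S^*$ and interchanging the order of summation yields $\hat f \,|S^*| \leq \sum_{j \in D} C(j)$, where $C(j)$ is the total contribution of client $j$ to all opening bids. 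Hence it suffices to prove the per-client inequality
\[
C(j) \;\leq\; \alpha_j^* \;-\; (1-\delta)\,d(j, S^*),
\]
and then sum over $j$ and rearrange.

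To bound $C(j)$, I would split it into an \emph{active part} (openings made while $j \in A$) and an \emph{inactive part} (openings made while $j \in I$). Let $t_j$ be the moment $j$ becomes inactive and $S_j^+$ the value of $S$ right after deactivation; note that $\alpha_j^*$ is frozen at $t_j$ and satisfies $\alpha_j^* \geq (1-\delta)\,d(j, S_j^+)$ by the deactivation rule in both stage~1 and stage~2. For an opening of $i'$ after which $j$ is still active, the second bullet of Definition~\ref{def:openable} forces $\tau_j^{i'} \leq (1-\delta)\,d(j, S \cup \{i'\}) \leq (1-\delta)\,d(i', j)$, so $j$'s bid at such an $i'$ vanishes. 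Consequently the only potentially positive active contribution is at the opening $i_{t_j}$ that deactivates $j$ (if any), which equals $[\alpha_j^* - (1-\delta)\,d(i_{t_j}, j)]^+$; using $d(j, S_j^+) \leq d(j, i_{t_j})$ together with $\alpha_j^* \geq (1-\delta)\,d(j, S_j^+)$ bounds this by $\alpha_j^* - (1-\delta)\,d(j, S_j^+)$. If $j$ is instead deactivated during stage~2, the active part is $0$ and $\alpha_j^* = (1-\delta)\,d(j, S_j^+)$, so the same inequality holds trivially. For the inactive part, a standard telescoping over subsequent openings yields $(1-\delta)\,(d(j, S_j^+) - d(j, S^*))$, because each later opening either leaves $d(j, S)$ unchanged or decreases it by exactly $[d(j, S) - d(j, i)]^+$. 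Adding the two parts gives the claimed bound on $C(j)$.

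The main difficulty I expect is in the active-part analysis: when a facility $i'$ is opened near $j$ (i.e., $j \in A \cap B(i', \eps \theta)$), the bid $\tau_j^{i'}$ is allowed to exceed the previous $\alpha_j$, so a priori one might worry that $j$ contributes positively to $\mathrm{Bid}(i')$ even without being deactivated. The second bullet of Definition~\ref{def:openable} is engineered precisely to rule this out: it caps $\tau_j^{i'}$ at $(1-\delta)\,d(j, S)$, which forces the bid on $i'$ to vanish unless $j$ is simultaneously deactivated by the opening. Once this observation is in hand, the rest of the proof is a clean telescoping together with a single rearrangement.
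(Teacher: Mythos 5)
Your proposal is correct. It proves the same statement from the same two ingredients the paper uses — the ``facility is paid for'' condition (third bullet of Definition~\ref{def:openable}) and the fact that clients only shed value consistent with their final $\alpha^*_j$ and the shrinking distance $d(j,S)$ — but it organizes the accounting differently: you sum the payment inequalities over all opened facilities and bound each client's aggregate contribution $C(j)$ by $\alpha^*_j-(1-\delta)d(j,S^*)$ (active part at the deactivating opening plus a telescoping inactive part), whereas the paper runs an induction over atomic steps maintaining the invariant $\sum_{j\in I}\alpha_j\ \geq\ (1-\delta)\sum_{j\in I}d(j,S)+\hat f\,|S|$ and checks that each opening (Stage~1) and each deactivation (Stage~2) preserves it. The two are essentially the same bookkeeping viewed per-client versus per-step; your version is closer to the classical dual-fitting proof of Lemma~\ref{lemma:JMS-cost}, while the paper's invariant form is what it later generalizes verbatim to the robust setting (Lemma~\ref{lemma:approx_guarantee_robust}). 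One small imprecision: for an opening of $i'$ that does \emph{not} deactivate $j$, the bound $\tau_j^{i'}\leq(1-\delta)d(j,S\cup\{i'\})$ does not come from the second bullet of Definition~\ref{def:openable} (which caps $\tau_j$ by $(1-\delta)d(j,S)$ with $S$ excluding $i'$); it comes from the deactivation rule itself — if $j$ remains active after $i'$ is opened, then by the removal criterion $\tau_j<(1-\delta)d(j,S\cup\{i'\})\leq(1-\delta)d(i',j)$, so the bid vanishes. You state exactly this dichotomy (``the bid vanishes unless $j$ is simultaneously deactivated'') in your final paragraph, so the argument goes through; only the attribution of the inequality should be corrected.
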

In the next subsection, we prove the following lemma showing that $\alpha^*/2$ is a feasible dual solution.

\begin{restatable}{lemma}{lemmadualfeasibility}
    For every facility $i$,
    $\
        \sum_{j \in D} [\alpha^*_j - 2d(i,j)]^+ \leq \hat f.
    $
    \label{lemma:dual_feasibility}
\end{restatable}

These two lemmas imply
\[
    \optlpfl(f) \geq \sum_{j\in D} \frac{\alpha^*_j}{2} \geq  \frac{1-\delta}{2}\sum_{j\in D} d(j, S^*) + f\,|S^*|,
\]
where we used that $\hat f = 2 \cdot f$. Therefore, we have an LMP $2/(1-\delta)$-approximation: the connection cost of the output solution $S^*$ satisfies
\begin{gather*}
    \sum_{j\in D} d(j,S^*) \leq \frac{2}{1-\delta} \left( \optlpfl(f) - f\,|S^*| \right)\,.
\end{gather*}
In other words, Lemma~\ref{lemma:approx_guarantee} and Lemma~\ref{lemma:dual_feasibility} imply Theorem~\ref{thm:log-adapt}.

We proceed to prove Lemma~\ref{lemma:approx_guarantee} (and prove Lemma~\ref{lemma:dual_feasibility} in the next subsection).
\begin{proof}[Proof of Lemma~\ref{lemma:approx_guarantee}]
    We prove by induction that at any point of the algorithm
    \begin{gather}
        \sum_{j\in I} \alpha_j \geq   \sum_{j\in I} (1-\delta)d(j, S)  + \hat f\,|S|.
        \label{eq:IH_approx_guarantee}
    \end{gather}
    The equality is initially true since $A= D$ (thus $I = \emptyset$) and $S = \emptyset$. We now analyze each of the two stages separately.

    \paragraph{Stage 1.} Consider what happens when we open a facility $i$, i.e., add it to $S$.
    Let $(\alpha, S, A, \theta)$ be the state right before opening $i$, $A'$ be the clients that are removed from $A$ by the opening, which means $A' = \{ j \in A : \tau_j \geq (1 - \delta)d(i, j) \}$. (Recall that the opening lets $\alpha_j \leftarrow \tau_j$ for $j \in A$.)
    Further, let $X := \{j \in I: d(j, i) < d(j, S) \}$ be the subset of clients of $I$ that make positive bids to $i$.
    The change of cost of the right-hand side of~\eqref{eq:IH_approx_guarantee} is at most
    \[
        \hat f +   \sum_{j\in A'} (1-\delta) d(i,j) + \sum_{j \in X}  (1 - \delta)(d(i,j) - d(j, S))\,.
    \]
    Since $i$ is fully paid (the third bullet of Definition~\ref{def:openable}), we also have
    \begin{align*}
        \hat f & \leq
        \sum_{j\in A'} ( \tau_j  - (1-\delta) d(i,j)    )
        +
        \sum_{j \in X}  (1-\delta)(d(j, S) -  d(i,j)) \,.
    \end{align*}
    We thus get that the change of cost of the right-hand side is at most $\sum_{j\in A'} \tau_j$, which is the change of the left-hand side.

    \paragraph{Stage 2.}
    No facility is open at this stage, and every client $j$ that becomes inactive at this stage has $\alpha_j = (1 - \delta)d(j ,S)$, which implies that the left-hand side and right-hand side both increase with the same amount when a client $j$ becomes inactive.
\end{proof}

\subsubsection{Dual Feasibility}

We start by showing that we never overbid a facility $i$. This is rather immediate in the analysis of \JMSalg that we outlined in Section~\ref{sec:basicJMS}.
While we prove this remains true for \logadaptalg, proving it requires more work, especially because of the discrete increase of $\alpha$-values in the second stage of a phase.
\begin{claim}[No Over-Bidding]\label{clm:nooverbid}
    At any point of the algorithm, for every facility $i$,
    \begin{equation}
        \sum_{j \in A} [\alpha_j - d(i,j)]^+ + \sum_{j \in I} [(1 - \delta)d(j, S) - d(j, i)]^+ \leq \hat{f}.
        \label{eq:dual-feasible-induction}
    \end{equation}
    \label{claim:dual-feasible-induction}
\end{claim}
\begin{proof}
    In the beginning, the claim is true because every client is active and $\alpha_j = 1 \leq d(i, j)$ for every $j \in D$.
    Let us prove that each step of the algorithm maintains this invariant. We first consider the steps that happen during the first stage and then those during the second stage.

    \paragraph{Stage 1.}
    Suppose that facility $i'$ is open with the bids $(\tau_j)_{j \in A}$.
    It does (1) potentially increase the $\alpha$-values of clients in $B(i', \eps \theta) \cap A$, and (2) make clients in $A \cap B(i', \theta/(1 - \delta))$ inactive, which includes $B(i', \eps \theta) \cap A$ considered in (1).

    Consider any facility $i$ and see how~\eqref{eq:dual-feasible-induction} is impacted by this change.
    First, observe that a client $j$ that was inactive before the opening of $i'$ does not increase its contribution to~\eqref{eq:dual-feasible-induction}. This is because $d(j,S)$ monotonically decreases in $S$.
    We now turn our attention to the more interesting case.
    For a client $j$ becoming inactive from active, the contribution to~\eqref{eq:dual-feasible-induction} changes from $[\alpha_j - d(i,j)]^+$ (where $\alpha$-values are measured right before the opening) to
    $[(1 - \delta)d(j, S) - d(j, i)]^+$ (where $S$ includes $i'$).

    If $\alpha_j$ does not change by the opening of $i'$, the fact that $j$ becomes inactive means $(1 - \delta)d(j, S) \leq \alpha_j$, so $j$'s contribution to~\eqref{eq:dual-feasible-induction} cannot increase.
    If $\alpha_j$ does strictly increase by the opening of $i'$, it means that $d(j, S) \leq d(j, i') \leq \eps \theta$ after the opening, so the contribution to~\eqref{eq:dual-feasible-induction} decreases by at least $\theta - (1 - \delta)\eps\theta$ or becomes $0$.

    We have thus proved that no client increases their contribution to~\eqref{eq:dual-feasible-induction} during the steps of the first stage, and so these steps maintain the inequality.

    \paragraph{Stage 2.}
    Assume towards contradiction that at the end of the phase, if we increase $\alpha_j$ of every $j \in A$ to $\min((1 + \eps^{2})\theta, (1 - \delta)d(j, S))$, the claim is violated for a non-empty subset of facilities $F'$.
    We select a ``minimal'' such counter example in the following way: let $\tau' \leq (1+\eps^2) \theta$ be the smallest value so that $\alpha'_j := \min(\tau', (1 - \delta)d(j, S))$ satisfies
    \begin{equation}
        \sum_{j \in A} [\alpha_j' - d(i,j)]^+ +  \sum_{j \in I} [(1 - \delta)d(j, S) - d(j, i)]^+ = \hat{f},
        \label{eq:dual-feasible-induction-tauprime}
    \end{equation}
    for some $i\in F'$.
    We remark that $\tau' \geq \theta$ since~\eqref{eq:dual-feasible-induction} is satisfied for all facilities after the first stage by the previous arguments.
    Furthermore,  $i\not \in S$ is not yet open since if it was open then there would be no client $j\in A$ for which $\alpha'_j-d(i,j)$ is strictly positive. Hence the increase in $\alpha$-values cannot cause $i$ to violate~\eqref{eq:dual-feasible-induction} if it were already opened, which would contradict $i\in F'$.

    We now show that this $i$  with $\tau_j = \alpha'_j$ for $j \in A \cap B(i, \eps \theta)$ and $\tau_j = \alpha_j$ for $j \in A - B(i, \eps \theta)$ satisfies the conditions of Definition~\ref{def:openable} for $i$. In other words, $i$ is an openable facility  (that was not yet opened), which contradicts the completion of the first stage.  We verify the conditions of Definition~\ref{def:openable} one-by-one. The first two bullets are satisfied by the definition of the bids $(\tau_j)_{j\in A}$.
    For the third bullet,
    \begin{align*}
                   & \sum_{j \in A}  [\tau_j - (1 - \delta)d(i,j)]^+
        + (1 - \delta)\sum_{j \in I} d(j, S) - d(i, j)]^+                                        \\
        = \quad    & \sum_{j \in A \cap B(i, \eps \theta)}  [\alpha'_j - (1 - \delta)d(i,j)]^+ +
        \sum_{j \in A - B(i, \eps \theta)}  [\alpha_j - (1 - \delta)d(i,j)]^+
        + (1 - \delta)\sum_{j \in I} [d(j, S) - d(i, j)]^+                                       \\
        \geq \quad & \sum_{j \in A \cap B(i, \eps \theta)}  [\alpha'_j - d(i,j)]^+ +
        \sum_{j \in A - B(i, \eps \theta)}  [\alpha'_j - d(i,j)]^+
        + \sum_{j \in I} [(1 - \delta)d(j, S) - d(i, j)]^+\,.
    \end{align*}
    To upper bound the second summation we used that $\alpha_j - (1 - \delta)d(i,j) \geq (1 + \eps^{2})\alpha_j - d(i, j) \geq \alpha'_j - d(i,j) $ when $d(i, j) \geq \frac{\eps^2\alpha_j}{\delta}=\frac{\eps \theta}{3}$ (recall that $\alpha_j = \theta$ for all $j\in A$), condition which is satisfied by the clients not in $B(i,\eps \theta)$.

    It remains to verify the fourth bullet of Definition~\ref{def:openable}. Suppose toward contradiction that there is a facility $i_0$ and  $\ell \in A$ such that
    \begin{gather*}
        \sum_{j\in A } [\tau_{j} - d(i_0, j)]^+ + \sum_{j\in I} [\tau_\ell  - d(i_0,\ell) - 2d(i_0, j)]^+ > \hat f.
    \end{gather*}
    Note that $\ell \in A$ implies $\tau_\ell \leq (1 - \delta)d(\ell, S)$ and for every $j \in I$,
    $
        (1 - \delta)d(\ell, S)
        \leq (1 - \delta)(d(\ell, i_0) + d(i_0, j) + d(j, S)).
    $
    Thus
    \begin{equation}\label{eqn:clm:nooverbid:stage2}
        \tau_\ell - d(i_0, \ell) - 2d(i_0, j) \leq (1-\delta) d(j,S) - d(i_0,j)\,.
    \end{equation}
    We conclude that
    \begin{align*}
             & \sum_{j \in A} [\alpha'_j - d(i_0,j)]^+ +  \sum_{j \in I} [(1 - \delta)d(j, S) - d(i_0, j)]^+ \\
        \geq & \sum_{j \in A} [\tau_j - d(i_0,j)]^+ +  \sum_{j \in I} [(1 - \delta)d(j, S) - d(i_0, j)]^+    \\ \overset{\eqref{eqn:clm:nooverbid:stage2}}{\geq}  & \sum_{j\in A} [\tau_{j} - d(i_0, j)]^+ + \sum_{j\in I} [\tau_\ell  - d(i_0,\ell) - 2d(i_0, j)]^+ > \hat f
    \end{align*}
    This implies that we could have further lowered $\tau'$ while keeping equality ~\eqref{eq:dual-feasible-induction-tauprime} true for $i_0$. This contradicts the minimality of $\tau'$.

\end{proof}

Having proved that \logadaptalg as \JMSalg maintains the invariant that there is no overbidding, we are ready to prove the dual-feasibility of $\alpha^*/2$.  (This corresponds to Lemma~\ref{lemma:JMS-dual-feasible} in the analysis of \JMSalg.)

\lemmadualfeasibility*

We remark that this implies dual feasibility of $\alpha^*/2$ by dividing both sides by $2$ (recall $\hat f = 2 \cdot f$).

The rest of the section is devoted to the proof of Lemma \ref{lemma:dual_feasibility}, and from now on, we focus on an arbitrary but fixed facility $i\in F$. Let
$D^* = \{ j \in D : \alpha^*_j > 2d(i, j) \}$ be those clients that contribute to the left-hand side of the inequality in the claim. Further, order the clients in $D^*$ according to $\alpha^*$ values by the time they are removed from $A$ (in the algorithm) and break ties according to $\alpha^*$ values (in increasing order). We will use $j, \ell$ to denote clients in $D^*$, and let us slightly abuse notation and use $j \leq \ell$ with respect to this ordering. In other words, we have $j\leq \ell$ if $j$ was removed from $A$ in an atomic step before $\ell$ was removed, or $j$ and $\ell$ were removed during the same atomic step and $\alpha^*_j \leq \alpha^*_\ell$.
We also use
the terminology that a facility $i_0$ {\em freezes} $i$ if $i_0$ becomes open when $i$ is not, and $d(i, i_0) \leq (1+3\eps)\theta/2$. (Of course, $i$ freezes itself when it is open.) This notion is important as for a facility $i$, its $D^*$ and their $\alpha^*$ values do not change after it is frozen.

The following lemma, analogous to Lemma~\ref{lemma:JMS-dual-feasible-technical} in the analysis of \JMSalg, is key to proving dual feasibility.
\begin{lemma}
    For any $\ell \in D^*$ that becomes inactive strictly before $i$ becomes frozen,
    \begin{equation}
        \sum_{j \in D^* : j \geq \ell} [\alpha^*_\ell - d(i, j)]^+ +
        \sum_{j \in D^* : j < \ell} [\alpha^*_\ell - 2d(i, j) - d(i, \ell)]^+ \leq \hat f.
        \label{eq:dual-feasible-inactive}
    \end{equation}
    \label{lemma:dual-feasible-inactive}
\end{lemma}
\begin{proof}
    Let us do the following case analysis depending on the stage $\ell$ becomes inactive.

    \paragraph{Stage 1.}
    $\ell$ becomes inactive because of the opening of some $i_0$. By the statement of the lemma, it suffices to handle the case that $i_0$ does not freeze $i$. So $d(i, i_0) > (1 + 3\eps)\theta / 2$. The opening of $i_0$ might have strictly increased the $\alpha$-value of the clients in $B(i_0, \eps \theta)$, but no such client $j$ will be in $D^*$, since they immediately become inactive while
    \[
        d(i, j) \geq d(i, i_0) - d(i_0, j) > ((1 + 3\eps) / 2 - \eps) \theta \geq (1+\eps^2) \theta/2 \geq \alpha^*_j / 2.
    \]
    Applying this argument to every facility that is open in this phase (which did not freeze $i$), we can conclude that, right before $i_0$ is open, every $j \in D^*$ has $\alpha_j \leq \theta$ and all active ones have $\alpha_j = \theta$. (So, $\alpha^*_\ell = \theta$ as well since the opening of $i_0$ does not strictly increase $\alpha_\ell$.)
    Claim~\ref{claim:dual-feasible-induction}, applied right before $i_0$ is open, ensures that
    \[
        \sum_{j \in A} [\alpha_j - d(i, j)]^+
        + \sum_{j \in I} [(1 - \delta)d(j, S) - d(i, j)]^+ \leq \hat f.
    \]
    This satisfies~\eqref{eq:dual-feasible-inactive} as (1) every $j \in D^*$ with $j \geq \ell$ is in $j \in A$ at that point by our ordering of the clients within $D^*$, and (2) for every $j < \ell$ with $j \in I$,
    \[
        \alpha^*_\ell < (1 - \delta)d(\ell, S)
        \leq (1 - \delta)d(j, S) + d(j, i) + d(i, \ell)
    \]
    implies
    \[
        \alpha^*_\ell - 2d(i, j) - d(i, \ell)
        \leq (1 - \delta)d(j, S) - d(i, j).
    \]
    (Note that this $S$ does not contain $i_0$ and $\ell$ was not connected yet.)

    \paragraph{Stage 2.}
    Suppose that $\ell$ becomes inactive by the increase of the $\alpha$-values at the end of a phase.
    Then Claim~\ref{claim:dual-feasible-induction} ensures that, at the end of this phase (after $\ell$ becomes inactive),
    \begin{equation}
        \sum_{j \in A} [\alpha_j - d(i,j)]^+ + \sum_{j \in I} [(1 - \delta)d(j, S) - d(j, i)]^+ \leq \hat{f}.
        \label{eq:dual-feasible-induction-again}
    \end{equation}

    Let us see that the left-hand side of~\eqref{eq:dual-feasible-induction-again} is at least that of~\eqref{eq:dual-feasible-inactive} by comparing the contribution of $j \in D^*$ to both. If $j \in A$, its contribution to the former is definitely at least that to the latter, since any active $j$ has $\alpha_j = (1 + \eps^{2})\theta \geq  \alpha^*_\ell$.

    For $j \in I$, we must have
    \[
        \alpha^*_\ell \leq (1-\delta) d(\ell,S)  \leq  (1-\delta) ( d(\ell,i) +d(i,j)+ d(j,S))\leq  (1-\delta)d(j,S)+d(\ell,i) +d(i,j)\,.
    \]
    (since $\alpha_\ell$ is never increased above $(1-\delta) d(\ell,S)$). Thus
    \[
        \alpha^*_\ell  - 2d(i,j) - d(i,\ell)  \leq (1-\delta) d(j,S) - d(i,j)
    \]
    which satisfies our goal when $j<\ell$.

    Finally, when $j\in I$ but $j \geq \ell$, it means that $j$ was removed at the same time as $\ell$ at the end of the phase. In that case $(1-\delta) d(j,S) = \alpha^*_j$ and, as we break ties with $\alpha^*$-values, $\alpha^*_j \geq \alpha^*_\ell$,  which again satisfies our goal.
\end{proof}

Equipped with the above lemma, we are ready to complete the proof of dual feasibility.

\begin{proof}[Proof of Lemma~\ref{lemma:dual_feasibility}]

    Consider the case that $i$ is frozen by $i_0$ with $(\tau_j)_{j \in A}$. We first prove the following claim that says that after $i$ is frozen the $\alpha$-values of clients in $D^*$ are finalized.

    \begin{claim*}
        We have $\alpha^*_j = \tau_j$ for every $j \in D^* \cap A$.
    \end{claim*}
    \begin{proof}[Proof of Claim]
        Consider $j \in A$.
        First, note that if $\tau_j \geq (1-\delta) d(j,i_0)$ then $j$ is removed from $A$ when $i_0$ is opened and so $\alpha^*_j = \tau_j$.

        In the other case, when $\alpha_j \leq \tau_j < (1-\delta) d(j,i_0)$ we show that $j \not \in D^*$.
        Consider the (future) time right after $j$ is removed from the active clients. The $\alpha$-value of $j$ then (which is equal to the final $\alpha^*_j$) cannot be strictly greater than $(1 - \delta)d(j, i_0)$; whether it is increased in stage 1 (as $\alpha \leftarrow \tau$) or stage 2, since $i_0 \in S$ (where $S$ is the set of open facilities right before $j$ becomes inactive), the algorithm ensures that it cannot be strictly more than $(1 - \delta)d(j, S) \leq (1 - \delta)d(j, i_0)$.
        By the triangle inequality
        \begin{gather*}
            d(j, i) \geq
            d(j, i_0) - d(i_0, i) =
            \left(1 - \frac{d(i_0, i)}{\theta}\frac{\theta}{d(j,i_0)}\right)d(j,i_0)\,,
        \end{gather*}
        where $\theta$ is considered at the time when $i_0$ was open. As $i_0$ freezes $i$, we have $d(i_0, i) \leq (1+3\eps)\theta/2$ and, by the assumption of the case we have $\theta = \alpha_j \leq (1-\delta) d(j,i_0)$. Plugging in those bounds to the above inequality yields,
        \begin{gather*}
            d(j,i) \geq
            \left(1 - \frac{(1+3\eps)}{2}(1-\delta)\right)d(j,i_0)\,.
        \end{gather*}
        We rewrite the above inequality (multiplying both sides by two and using that $\delta = 3\eps$) to obtain
        \begin{gather*}
            2d(j,i) \geq
            \left(2 - {(1+3\eps)}(1-\delta)\right)d(j,i_0) = \left(1+ 9\eps^2 \right) d(j,i_0) \,.
        \end{gather*}
        In other words we have $\alpha^*_j \leq d(j,i_0) < 2d(j,i)$, so  $j$ cannot be in $D^*$.
    \end{proof}

    Having proved the above claim, we proceed to analyze the dual feasibility.
    As the $\alpha$-values of clients in $A\cap D^*$ remain unchanged after the removal of $i_0$, the fourth bullet of Definition~\ref{def:openable} ensures that, for any $\ell \in A \cap D^*$,
    \begin{equation}
        \sum_{j \in A \cap D^*} [\alpha^*_j - d(i, j)]^+ +
        \sum_{j \in I \cap D^*} [\alpha^*_\ell - 2d(i, j) - d(\ell, i)]^+ \leq \hat f.
        \label{eq:openable-fourth-again-again}
    \end{equation}

    Moreover, for every client $\ell\in D^* \cap I$ that became inactive strictly before $i_0$ was opened, Lemma~\ref{lemma:dual-feasible-inactive} goves

    \begin{equation}
        \sum_{j \in D^* : j \geq \ell} [\alpha^*_\ell - d(i, j)]^+ +
        \sum_{j \in D^* : j < \ell} [\alpha^*_\ell - 2d(i, j) - d(i, \ell)]^+ \leq \hat f.
        \label{eq:dual-feasible-inactive-again}
    \end{equation}
    Let us consider the summation of~\eqref{eq:dual-feasible-inactive-again} for every $\ell \in I \cap D^*$ and~\eqref{eq:openable-fourth-again-again} for every $\ell \in A \cap D^*$,
    and consider how many times each term appears. Let $a = |A \cap D^*|, b = |D^*|$.

    \begin{itemize}
        \item $-d(i, j')$ when $j' \in I\cap D^*$: Say $j'$ is the $q$th client in $D^*$ for some $q \leq b-a$. Then $-d(i, j')$ appears $b + (b - q)$ times in~\eqref{eq:openable-fourth-again-again} ($2a$ times) and~\eqref{eq:dual-feasible-inactive-again} ($2(b-a-q)+q$ times) as $-d(i, j)$ and $(q - 1)$ times as $-d(i, \ell)$ in~\eqref{eq:dual-feasible-inactive-again} when $\ell = j'$, so the total is $2b - 1$.

        \item $-d(i, j')$ when $j' \in A\cap D^*$:
              It appears $b$ times in~\eqref{eq:openable-fourth-again-again} and~\eqref{eq:dual-feasible-inactive-again} as $-d(i, j)$ and $b - a$ times as $-d(i, \ell)$ in~\eqref{eq:dual-feasible-inactive-again} when $\ell=j'$, so the total is $2b - a$.

        \item $\alpha^*_{j'}$ for $j' \in I\cap D^*$: It appears $b$ times in~\eqref{eq:dual-feasible-inactive-again} as $\alpha^*_\ell$ when $\ell = j'$.

        \item $\alpha^*_{j'}$ for $j' \in A\cap D^*$: It appears $b -a$ times in~\eqref{eq:openable-fourth-again-again} as $\alpha^*_\ell$ when $\ell = j'$ and once in~\eqref{eq:openable-fourth-again-again} as $\alpha^*_j$ for each $\ell \in A \cap D^*$, so $b$ times in total.
    \end{itemize}
    Summarizing we have,
    \begin{gather*}
        b \sum_{j\in D^*} \alpha^*_j - (2b-a)\sum_{j\in D^* \cap A} d(i,j) - (2b-1) \sum_{j\in D^* \cap I} d(i,j) \leq b \hat f,
    \end{gather*}
    and so in particular $\sum_{j\in D^*} \left(\alpha^*_j - 2 d(i,j)\right) \leq \hat f$, which proves the lemma in the case the opening of a facility $i_0$ freezes $i$.

    To complete the proof, when no facility freezes $i$, Lemma~\ref{lemma:dual-feasible-inactive} holds for every client in $D^*$, so we can simply sum up~\eqref{eq:dual-feasible-inactive-again} over all clients in $D^*$, which corresponds to the special case of the above analysis when $D^*\cap A = \emptyset$.
\end{proof}

\subsection{Walking Between Two Solutions: Setup} %
\label{sec:robust_analysis}

Given the description of the \logadaptalg, we present our final algorithm \mergealg, resulting in a solution that opens $k+O_{\eps}(\log n)$ centers by maintaining two partial solutions that lead to opening at least $k$ and at most $k$ centers respectively and {\em gradually merge them} into a single solution that opens exactly $k+O_{\eps}(\log n)$ facilities, exploiting the $O_{\eps}(\log n)$ adaptivity of \logadaptalg.

Before detailing the merging procedure, which we call \mergealg, it is helpful to observe that we may generalize the analysis to make it more ``robust''. We introduce this more robust analysis in this subsection, where we also define the notation that will be used later to describe \mergealg in Section~\ref{sec:merging_algorithm}.

Starting from the two solutions obtained in the standard way whose facility costs differ by at most $\eta := 2^{-n}$,
the two solutions we maintain will have an exponentially small difference $\eta$ in one of their parameters.
(These parameters will be either the facility cost $f$ or some distance related to a {\em free facility} that will be introduced later.)
For this reason, when we merge the two solutions, it may occur that some facilities in one solution are almost completely paid for, rather than fully paid for. To address this, we generalize the definition of "openable" to "$\eta$-openable," where the only difference is in the third condition: we now require each facility that is opened to be paid $\hat{f} - \eta n$ instead of $\hat{f}$ (as in Definition~\ref{def:openable}).
We also note that being openable only depends on the time $\theta$ and the set $S$ of open facilities so far. This is because $\alpha_j = \theta$ for $j\in A$, $I = \{j\in D: \theta \geq  (1-\delta)d(j,S)\}$ and $A = D- I$. This allows us to simplify notation and we have the following generalization of Definition~\ref{def:openable}.

\begin{definition}
    Consider the time $\theta$ and let $S$ be the set of opened facilities.
    Then set of inactive and active clients are $I= \{j\in D: \theta \geq  (1-\delta) d(j,S) \}$ and $A = D - I$, respectively.
    We say that a facility $i\in F$ is \emph{$\eta$-openable} (with respect to $\theta$ and $S$) if there are bids $(\tau_j)_{j\in A}$ of active clients that satisfy the following conditions.
    \begin{itemize}
        \item Only nearby clients are increased:
              \begin{gather*}
                  \tau_j = \alpha_j \qquad \mbox{for every $j \in A - B(i, \eps \theta)$.}
              \end{gather*}
        \item Nearby clients are only increased slightly:
              \begin{gather*}
                  \alpha_j \leq \tau_j \leq \min\{(1-\delta) d(j,S), (1+\eps^2)\theta\} \qquad \mbox{for every $j\in A \cap B(i, \eps \theta)$.}
              \end{gather*}
        \item Facility $i$ is paid for (up to the error parameter $\eta$):
              \begin{gather*}
                  \sum_{j\in A} [\tau_j - (1-\delta)d(i,j)]^+ + (1-\delta) \sum_{j\in I} [d(j,S) - d(i,j)]^+ \geq \hat f - n \eta\,.
              \end{gather*}
        \item Dual feasibility: for every facility $i_0$ and $\ell \in A$,
              \begin{gather*}
                  \sum_{j \in A} [\tau_j - d(i_0, j)]^+ + \sum_{j \in I} [\tau_\ell - 2d(j, i_0) - d(\ell, i_0)]^+ \leq \hat f.
              \end{gather*}
    \end{itemize}
    \label{def:robust_openable}
\end{definition}

We also introduce the concept of \emph{free facilities}. During the process of merging the two solutions, we introduce free facilities, whose opening costs are not necessarily paid. While we will introduce at most $O(\log n)$ such free facilities, we do not restrict the number of free copies a regular facility may have. Let $\Sf \subseteq S$ denote the multiset of free facilities that have been opened.

Moreover, for each free copy $\tilde{i} \in \Sf$ of $i \in F$, we define a parameter $u(\tilde{i})$, which represents the distance between the freely opened facility $\tilde{i}$ and its original copy $i$. So for any point $x$ in the metric space (either a facility or a client), the distance is given by $d(\tilde{i}, x) = u(\tilde{i}) + d(i, x)$.

We further introduce the following terminology: as mentioned, we refer to the facilities in $\Sf$ as \emph{free facilities}, the facilities in $F$ and $\Sr = S - \Sf$ as \emph{regular} facilities, and when we simply mention "facilities," we are referring to their union.

Finally, we will use the letters $i$ and $h$ to denote facilities. Specifically, we use $i \in F$ to denote a regular facility and use $\tilde{i}$ to denote a free copy of $i$. When referring to a facility that can be either free or regular, we use the letter $h$.

We now introduce the definition of valid sequences, which aims to capture the sequence of facilities that are opened during the while-loop of stage 1 in \logadaptalg.
\begin{definition} [$\eta$-valid sequence]
    Consider the time $\theta$ and let $S$ be the set of opened facilities.
    A sequence $\langle h_1, \ldots, h_\ell\rangle$ of facilities is $\eta$-valid (with respect to $\theta$ and $S$), if the following conditions hold.
    \begin{itemize}
        \item Each $h_t$ is either free or $\eta$-openable with respect to the time $\theta$ and  a superset $S' \supseteq S \cup \{h_1, \ldots, h_{t-1}\}$ of opened facilities.
        \item  The sequence is maximal, i.e., there is no openable facility with respect to the time $\theta$ and  $S \cup \{h_1, \ldots, h_\ell \}$.
    \end{itemize}
    \label{def:valid_sequence}
\end{definition}

We remark that \logadaptalg generates $0$-valid sequences in each phase. Specifically, it generates sequences $\calH = (H_1, H_2, \ldots, H_L)$, where each $H_p$ is $0$-valid (with respect to $\theta = (1+\eps^2)^{p-1}$ and $S = \cup_{i \in [p-1]} H_i$) and contains only regular facilities. The set $S$ of opened facilities is such that $A = \emptyset$ at the final time $\theta$.

For a general sequence of sequences $\calH = (H_1, H_2, \ldots, H_L)$, let us say it consists of $\eta$-valid sequences if, for every $p \in [L]$, $H_p$ is an $\eta$-valid sequence with respect to $\theta = (1+\eps^2)^{p - 1}$ and $S = \cup_{q \in [p-1]} H_q$. (Here we abuse notation and treat $H_q$ as a set of facilities.)
Furthermore, we say that $\calH$ is a \emph{solution} if the set $S$ of opened facilities ensures that $A = \emptyset$ at the end, meaning that at the time $\theta = (1+\eps^2)^L$ of the $(L+1)$st phase, the set of inactive clients $I = \{ j \mid \theta \geq (1-\delta) d(j,S) \}$ equals all clients in $D$.
Otherwise, we say that $\calH$ is a partial solution.

Theorem~\ref{thm:log-adapt} can thus be stated as: if $\calH$ is a solution of $0$-valid sequences consisting of regular facilities then the set $S$ of opened facilities satisfies
\[
    \sum_{j\in D} d(j, S) \leq \frac{2}{1-\delta} \cdot \left( \optlpfl(f) - f \cdot |S| \right)\,.
\]
We now allow to open regular facilities by paying at least $\hat f- \eta n$ (instead of $\hat f$) and incorporate the notion free facilities, which leads to the following generalization of the analysis, proved in Appendix~\ref{sec:robust_analysis_full}.
\begin{theorem}
    Consider a solution $\calH$ of $\eta$-valid sequences and let $S$ be the opened facilities. Further, let $\Sf$ and  $\Sr = S - \Sf$ be the free copies and regular facilities, respectively. Then
    \[
        \sum_{j\in D} d(j, S) \leq \frac{2}{1-\delta} \cdot \left( \optlpfl(f) - (f-\eta n) \cdot |\Sr| \right)\,.
    \]
    \label{thm:robust}
\end{theorem}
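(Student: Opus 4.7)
The plan is to mirror the two-step proof of Theorem~\ref{thm:log-adapt} in Section~\ref{sec:log_adaptivity_analysis}, generalizing its two key lemmas to handle (i) the free facilities in $\Sf$, whose opening costs are not counted, and (ii) the $\hat f - n\eta$ relaxation in the third bullet of $\eta$-openability (Definition~\ref{def:robust_openable}). Letting $\alpha^*$ be the final dual values and $S$ the opened facilities, I would aim for (a) a robust approximation-guarantee inequality $\sum_j \alpha^*_j \geq (1-\delta)\sum_j d(j,S) + (\hat f - n\eta)\cdot|\Sr|$, and (b) the dual feasibility $\sum_j [\alpha^*_j - 2d(i,j)]^+ \leq \hat f$ for every $i \in F$. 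Combining (a) and (b), together with $\hat f = 2f$ and $\optlpfl(f) \geq \tfrac{1}{2}\sum_j \alpha^*_j$, gives $\optlpfl(f) \geq \tfrac{1-\delta}{2}\sum_j d(j,S) + (f - n\eta/2)|\Sr|$, which implies the stated bound (the $(f - n\eta)$ in the theorem statement is a slight overestimate of our $(f - n\eta/2)$).

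For the robust approximation guarantee, the plan is to maintain the inductive invariant
\[ \sum_{j\in I}\alpha_j \;\geq\; (1-\delta)\sum_{j\in I} d(j,S) \;+\; (\hat f - n\eta)\cdot|\Sr|, \]
following the induction of Lemma~\ref{lemma:approx_guarantee} with a case split on the type of facility being opened in stage 1. For a regular $\eta$-openable $i$, the calculation carries through verbatim with $\hat f$ replaced by $\hat f - n\eta$, using the relaxed third bullet. For a free $\tilde i$, no opening cost is charged on the right-hand side; the reassignment terms $\sum_{j\in A'}(1-\delta)d(j, S_{\text{new}}) + \sum_{j\in X}(1-\delta)(d(j, S_{\text{new}}) - d(j, S_{\text{old}}))$ are bounded above by the left-hand side change $\sum_{j\in A'}\alpha_j$, because the $X$-sum is non-positive (adding $\tilde i$ only decreases $d(j,S)$) and every newly inactive $j$ satisfies $(1-\delta)d(j, S_{\text{new}}) \leq \alpha_j$. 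The stage-2 step is identical to the original.

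For the robust dual feasibility, the plan is to first reprove that the no-overbidding invariant of Claim~\ref{claim:dual-feasible-induction} continues to hold with right-hand side exactly $\hat f$. Stage 1 preserves it for both facility types because no client's contribution can increase (for a free $\tilde i$ no $\alpha_j$ is raised, and monotonicity of $d(j,S)$ handles the reassignment as in the original proof). For stage 2 the minimality argument still yields a contradiction: if no-overbidding were violated after the $\alpha$-increase, selecting the smallest $\tau'$ at which equality with $\hat f$ holds for some $i$ produces bids $\tau$ satisfying the third bullet with $\geq \hat f$; since this is strictly stronger than $\hat f - n\eta$, the facility $i$ would be not merely $\eta$-openable but in fact \emph{openable}, contradicting the maximality required of an $\eta$-valid sequence in Definition~\ref{def:valid_sequence}. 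Given no-overbidding with right-hand side $\hat f$, Lemma~\ref{lemma:dual-feasible-inactive} and the combinatorial summation at the end of the proof of Lemma~\ref{lemma:dual_feasibility} carry over verbatim, since they rely only on no-overbidding and on the fourth bullet of $\eta$-openability, which is identical to that of Definition~\ref{def:openable}.

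The main obstacle is the subcase where the target facility $i$ is frozen by a \emph{free} $\tilde i_0$ rather than by a regular one, because $\tilde i_0$ carries no associated bids $(\tau_j)$ for the freezing claim inside the proof of Lemma~\ref{lemma:dual_feasibility} to leverage. I would handle this by arguing directly that any client $j$ still active right after $\tilde i_0$ is opened satisfies $\theta = \alpha_j < (1-\delta)d(j, \tilde i_0) = (1-\delta)(u(\tilde i_0) + d(j, i_0))$, and combining this with the freezing condition $d(\tilde i_0, i) \leq (1+3\eps)\theta/2$ and the triangle inequality (exactly the computation used in the original claim, with $i_0$ replaced by $\tilde i_0$) yields $\alpha^*_j \leq (1-\delta)d(j, \tilde i_0) < 2d(j,i)$, so no such $j$ lies in $D^*$. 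The remaining inactive clients in $D^*$ are then handled by summing Lemma~\ref{lemma:dual-feasible-inactive}, exactly as in the $D^* \cap A = \emptyset$ subcase of the original proof, concluding $\sum_{j \in D^*}(\alpha^*_j - 2d(i,j)) \leq \hat f$ in all cases.
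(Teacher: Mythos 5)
Your overall architecture matches the paper's proof in Appendix~\ref{sec:robust_analysis_full}: prove a robust analogue of Lemma~\ref{lemma:approx_guarantee} with $\hat f$ replaced by $\hat f - n\eta$ and with free facilities charged nothing, prove that $\alpha^*/2$ stays dual feasible, and combine via weak duality (your remark that this actually yields $f-n\eta/2$ and hence the stated, weaker, $f-\eta n$ bound is consistent with what the paper does). Your treatment of the approximation guarantee, of Stage~1 of no-overbidding, and of the Stage~2 maximality contradiction (the violating facility pays at least $\hat f \ge \hat f - n\eta$, hence is openable, contradicting the maximality clause of Definition~\ref{def:valid_sequence}) is sound and in fact slightly cleaner than the paper's phrasing.

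However, there is a genuine gap in the dual-feasibility half. Definition~\ref{def:valid_sequence} only guarantees that a regular facility $h_t$ in an $\eta$-valid sequence is $\eta$-openable with respect to \emph{some superset} $S' \supseteq S \cup \{h_1,\dots,h_{t-1}\}$ of the actually opened facilities (this slack is exactly what the merging procedure exploits when it deletes facilities from a sequence), and the paper explicitly flags this as the second change needed, affecting precisely the end of the proof of Lemma~\ref{lemma:dual_feasibility_robust}. Consequently, when the facility $h_0$ freezing your fixed $i$ is regular, its fourth-bullet certificate refers to the active/inactive partition induced by $S'$, not by the true execution state: the quantification over $k$ and the first sum range only over the clients $A'$ that are active w.r.t.\ $S'$, while the clients in $I' := A \setminus A'$ (active in the actual execution, inactive w.r.t.\ $S'$, and receiving no bids) are not covered. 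So your assertion that ``the combinatorial summation at the end of the proof of Lemma~\ref{lemma:dual_feasibility} carries over verbatim'' is false in exactly this case. The paper repairs it by showing $\alpha^*_k = \theta$ for every $k \in I' \cap D^*$, deriving for such $k$ a substitute inequality from the no-overbidding invariant taken right before $h_0$ opens, and redoing the double counting with three client classes ($A'\cap D^*$, $I'\cap D^*$, $I\cap D^*$). Your ``main obstacle'' (freezing by a free facility) is essentially the degenerate case $A'=\emptyset$ of this issue, and even there your fix is slightly off: clients that become inactive at the very atomic step in which $\tilde i_0$ opens do not satisfy the hypothesis of Lemma~\ref{lemma:dual-feasible-inactive} (``strictly before $i$ becomes frozen''), so you cannot literally sum that lemma over all of $D^*$; they too need the no-overbidding-based inequality (which does hold, since a free opening raises no $\alpha$-values), i.e., the same mechanism you omitted for the superset case.
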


It will be useful to use similar terminology as we did in Section~\ref{sec:log_adaptivity_analysis}, let us consider the {\em execution} of the solution $\mathcal{H}$ where for $p = 1, \dots, L$, we run the $p$-th phase according to the sequence $H_p$; in Stage 1, each facility in the sequence becomes open one by one with the corresponding $(\tau_j)_j$ values, and in Stage 2 at the end of the phase, $\theta \leftarrow \theta(1+\eps^2)$.

Then an {\em atomic step} can still be defined in the same way as in \logadaptalg; it corresponds to opening a facility in Stage 1 (and updating $\alpha, S, A$ accordingly) or increasing $\alpha$ values in Stage 2 simultaneously (and updating $S$ accordingly). By ``at any point of the algorithm'', we mean any point in the algorithm's execution that is not in the middle of an atomic step.

With this notation, we have the following generalization of Claim~\ref{clm:nooverbid}:
\begin{quote}
    At any point of the algorithm, for every facility $i$,
    \begin{equation}
        \sum_{j \in A} [\alpha_j - d(i,j)]^+ + \sum_{j \in I} [(1 - \delta)d(j, S) - d(j, i)]^+ \leq \hat{f}.
        \label{eq:dual-feasible-local-robust}
    \end{equation}
\end{quote}
This is formally proved in Claim~\ref{claim:dual-feasible-induction_robust} in Appendix~\ref{sec:robust_analysis_full}, and the proof is essentially the same as that of Claim~\ref{clm:nooverbid}. We use it to prove Lemma~\ref{lemma:payableimpliesopenable}, which we will use in the merging procedure to ensure maximality.
At a point of the execution, we say that a facility $i$ is \emph{payable} if it is openable (with $\eta = 0$) except that the dual feasibility condition is not necessarily satisfied. The payable notion is thus a weaker notion than openable, and a facility $i$ may be payable but not openable. However, we have the following lemma that says that if there is a payable facility, there is also an openable facility. We remark that it does not guarantee that a payable facility is also openable but it does guarantee that if there exists a payable facility, there is also an openable facility, which is sufficient for our arguments.
\begin{lemma}
    At any point of the algorithm, if there is a payable facility, then there is also an openable facility (that is not yet opened).
    \label{lemma:payableimpliesopenable}
\end{lemma}
\begin{proof}
    The argument is similar to that of the ``Stage 2'' proof of Claim~\ref{clm:nooverbid}.
    Suppose there is a non-empty subset of facilities $F'$ that are payable. To prove the lemma, we need to show that there is an openable facility.

    We choose a ``minimal'' payable facility as follows. Let $\tau' \in [\theta,(1+\eps^2)\theta]$ be the smallest value such that there exists some $i\in F'$, for which the values  $\tau_j$, defined by
    $$ \tau_j := \begin{cases} \min\{\tau', (1-\delta)d(j,S)\}, & \text{if } j \in A \cap B(i,\eps \theta),\\ \alpha_j, & \text{otherwise}, \end{cases} $$
    satisfy
    \begin{equation}
        \sum_{j \in A} [\tau_j - (1-\delta)d(i,j)]^+ +  (1 - \delta)\sum_{j \in I} [d(j, S) - d(j, i)]^+ \geq \hat{f},
        \label{eq:dual-feasible-induction-tauprime-local}
    \end{equation}
    for some $i\in F'$.
    We remark that $i\not \in S$ is not yet open since if it was open then there would be no client $j\in A$ for which $\tau_j-(1-\delta)d(i,j)$ is strictly positive and the bids of inactive clients would be $0$. (Strictly speaking, here we assume $\hat{f} >0$ since otherwise all facilities are opened at the very beginning of the algorithm and there is nothing to prove.)

    We now show that this $\tau$   satisfies the conditions of Definition~\ref{def:robust_openable} for $i$. In other words, $i$ is an openable facility  (that was not yet opened), which implies the lemma.  We verify the conditions of Definition~\ref{def:robust_openable} with $\eta =0$ one-by-one. The first two bullets are satisfied by the definition of the bids $(\tau_j)_{j\in A}$.
    The third bullet is satisfied by~\eqref{eq:dual-feasible-induction-tauprime-local}.
    Finally, we verify the dual feasibility condition.
    Suppose toward contradiction that there is a facility $i_0$ and  $\ell \in A$ such that
    \begin{gather*}
        \sum_{j\in A } [\tau_{j} - d(i_0, j)]^+ + \sum_{j\in I} [\tau_\ell  - d(i_0,\ell) - 2d(i_0, j)]^+ > \hat f.
    \end{gather*}
    Note that $\ell \in A$ implies $\tau_\ell \leq (1 - \delta)d(\ell, S)$ and for every $j \in I$,
    $
        (1 - \delta)d(\ell, S)
        \leq (1 - \delta)(d(\ell, i_0) + d(i_0, j) + d(j, S)).
    $
    Thus
    \begin{equation}\label{eqn:clm:nooverbid:payableopenable}
        \tau_\ell - d(i_0, \ell) - 2d(i_0, j) \leq (1-\delta) d(j,S) - d(i_0,j)\,.
    \end{equation}
    We conclude that
    \begin{align*}
        \sum_{j \in A} [\tau_j - d(i_0,j)]^+ & +  \sum_{j \in I} [(1 - \delta)d(j, S) - d(i_0, j)]^+ \\ &   \overset{\eqref{eqn:clm:nooverbid:payableopenable}}{\geq}   \sum_{j\in A} [\tau_{j} - d(i_0, j)]^+ + \sum_{j\in I} [\tau_\ell  - d(i_0,\ell) - 2d(i_0, j)]^+ > \hat f\,.
    \end{align*}
    We remark that this implies that $\tau' > \theta$ since otherwise, as then $\tau_j = \alpha_j =\theta$ for every $j\in A$, the above inequality would contradict~\eqref{eq:dual-feasible-local-robust}.

    We proceed to prove that the above inequality implies that we could have further lowered $\tau'$ while keeping equality ~\eqref{eq:dual-feasible-induction-tauprime-local} true for $i_0$, which  contradicts the minimality of $\tau'$ (using that $\tau'>\theta$). (It implies that $i_0$ is also payable with a smaller $\tau'$, which contradicts the minimality of $\tau'$.)
    To see this, define the values $\tau^{(i_0)}_j$ for every $j\in A$ by
    $$ \tau^{(i_0)}_j := \begin{cases} \min\{\tau', (1-\delta)d(j,S)\}, & \text{if } j \in A \cap B(i_0,\eps \theta),\\ \alpha_j, & \text{otherwise}, \end{cases} $$
    We claim
    \[
        \tau^{(i_0)}_j - (1-\delta)d(i_0,j) \geq  \tau_j - d(i_0,j) \qquad \mbox{for every $j\in A$}.
    \]
    If $j\in B(i_0, \eps \theta)$, then $\tau^{(i_0)}_j \geq \tau_j$ and the inequality is immediate. Else if $j\not \in  B(i_0, \eps \theta)$, then $\tau^{(i_0)}_j = \alpha_j = \theta$ and $\tau_j \leq \tau'\leq (1+\eps^2) \theta$, so the inequality is also satisfied in this case because $\delta d(i_0,j) \geq \delta \eps \theta = 3\eps^2 \theta \geq \eps^2 \theta$.

    By the above calculations, we have
    \begin{align*}
        \sum_{j \in A} & [\tau^{(i_0)}_j - (1-\delta)d(i_0,j)]^+ +  (1 - \delta)\sum_{j \in I} [d(j, S) - d(j, i_0)]^+ \\
                       & \geq
        \sum_{j \in A}  [\tau_j - d(i_0,j)]^+ +  (1 - \delta)\sum_{j \in I} [d(j, S) - d(j, i_0)]^+                    \\
                       & \geq
        \sum_{j \in A}  [\tau_j - d(i_0,j)]^+ +  \sum_{j \in I} [(1 - \delta)d(j, S) - d(j, i_0)]^+                    \\
                       & > \hat f\,,
    \end{align*}
    which implies that $\tau'$ could have been further lowered while keeping~\eqref{eq:dual-feasible-induction-tauprime-local} true for $i_0$, contradicting the minimality of $\tau'$. This concludes the proof that there is an openable facility, which proves the lemma.
\end{proof}

In the next subsection we will merge two solutions into one with $|\Sr| =k$ and $\Sf =O(\log n/\eps^2)$. To our help, we will have two basic routines $\completesol$ and $\completesequence$:
\begin{itemize}
    \item $\completesol_{f,u}(\calH)$ takes a partial solution $\calH$ as input  and returns a solution by running the while-loop of stage 1 in \logadaptalg using the opening cost $f$ and the distances $u$ for free facilities already opened.
    \item $\completesequence_{f,u}(\calH = (H_1, \ldots, H_{p-1}), H_p = \langle h_1, \ldots, h_\ell\rangle )$ where $\calH$ is  partial solution and $H_p$ is an $\eta$-valid sequence except that it may not be maximal. $\completesequence$ then runs the while-loop of stage $1$ in \logadaptalg starting with $H_p$ to return a maximal sequence whose prefix is $H_p$ (with respect to opening cost $f$ and distances $u$).
\end{itemize}
We note that all facilities added by \completesol and \completesequence are $0$-openable and regular. Additionally, for intuition, observe that, with the above notation, the set of facilities opened by $\completesol_{f,u}(\emptyset)$ is equivalent to that opened by \logadaptalg. (There is no free facility so $u$ is meaningless here.)
To simplify notation, we only use $\completesequence$ when the partial solution $\calH$ is clear from the context, so we simply write $\completesequence_{(f,u)}(\langle h_1, \ldots, h_\ell \rangle)$. Furthermore, we omit $(f,u)$ when the parameters are clear from the context.

\subsection{\mergealg}
\label{sec:merging_algorithm}

We now present our procedure \mergealg that {\em walks} between two solutions. We will maintain two solutions $(\calH, f, u)$ and $(\calH', f', u')$ whose numbers of open regular facilities {\em sandwich $k$}. (Formally, let us say $a$ and $b$ sandwich $k$ if $a<k<b$ or $b<k<a$.) We will gradually make them closer until one of them opens exactly $k$ regular facilities.
We start from the two initial solutions obtained using the standard method. Note that the minimum nonzero distance is $1$ and the maximum distance is $\Maxdist = \poly(n)$, which implies that $\sopt \leq n\Maxdist$.

\paragraph{Initialization.} We consider executions of $\completesol_{f, u}(\emptyset)$ with empty $u$. When $f = 1/n^2$, every client $j \in D$ satisfies $d(j, S) = 0$ if $d(j, F) = 0$ and $d(j, S) \leq d(j, F) + 2/n^2$ otherwise, as each client can open its closest facility by itself when $\alpha _j = d(j, F) + 2/n^2$. If this solution opens at most $k$ centers, then it is already an $(1+2/n)$-approximation and we return it. On the other hand, when $f = 4n\Maxdist$, the algorithm will open exactly one facility, since the first facility is open when $\theta \geq 2\Maxdist$, so every client becomes immediately inactive when it is open. We then perform a binary search in the interval $[1/n^2, 4n\Maxdist]$ of values of $f$, keeping the invariant that for the current research interval $[f,f']$ the corresponding two solutions $(\calH, f, u)$ and $(\calH', f', u)$ are such that $(\calH, f, u)$ opens less than $k$ facilities and $(\calH', f', u)$ opens more than $k$ facilities. We halt the process when $f \leq f'  \leq f + \eta$.

\paragraph{General Step.}
\mergealg works on phase $p = 1, 2, \dots$ iteratively, and ensures that the two solutions are identical up to the $p$-th phase. Formally, \mergealg takes two solutions $(\calH, f, u)$ and $(\calH', f', u')$ as well as the current phase $p$ as input. The following promises on them will be satisfied whenever we call \mergealg.
\begin{itemize}
    \item Both solutions consist of $\eta$-valid sequences.
    \item The numbers of open regular facilities by the two solutions sandwich $k$.
    \item Exactly one parameter, which we call the {\em difference parameter}, differs by at most $\eta$ between the two solutions, i.e, if $f\neq f'$ then $|f-f'| \leq \eta$ and $u(\tilde i) = u'(\tilde i)$ for every $\tilde i \in \Ff$; otherwise if $f=f'$ there is exactly one free facility $\tilde i\in \Ff$ so that $u(\tilde i) \neq u'(\tilde i)$ and $|u(\tilde i) - u'(\tilde i)| \leq \eta$.
    \item $\calH = (H_1, H_2, \ldots, H_L)$ and $\calH' = (H'_1, H'_2, \ldots, H'_{L'})$ have a common prefix of length $p-1$, i.e., $H_i = H'_i$ for $i=1,2, \ldots, p-1$. Furthermore, the remaining sequences $H_p, \ldots, H_L$ and $H'_p,\ldots, H'_{L'}$ only contain regular facilities and are $0$-valid with respect to the parameters $f, u$ and $f', u'$, respectively. (This implies that $\calH$ and $\calH'$ open the same set of free facilities.)
\end{itemize}

Given the promises, the goal of \mergealg is to
\begin{itemize}
    \item[(1)] either find a solution of $\eta$-valid sequences that opens exactly $k$ regular facilities and $O_{\eps}(\log n)$ free facilities, or
    \item[(2)] produce two new solutions that satisfy the above promises for phase $p + 1$.
\end{itemize}
Since the maximum distance between any two points is $\Maxdist \leq \poly(n)$ and $f \leq 4n\Maxdist$,
before $\theta$ reaches $6M$, at least one facility will open and every client will become inactive.
Therefore, for some threshold $p^* = O(\log n / \eps^2)$, once $\calH$ and $\calH'$ agree on the first $p^*$ phases, they open exactly the same set of facilities which violates the promise that the numbers of open regular facilities sandwich $k$. Therefore, case (1) must happen for some $p \leq p^*$, which yields a solution of $\eta$-valid sequences that opens exactly $k$ regular facilities and $O(\log n)$ free facilities.
Combined with Theorem~\ref{thm:robust}, this proves Theorem~\ref{thm:pseudo-approx}. Indeed, we then have a solution consisting of open centers $S$ with $|S| = k  + O(\log(n)/\eps^2)$ such that, by Theorem~\ref{thm:robust},
\begin{align*}
    \sum_{j\in D} d(j, S) & \leq \frac{2}{1-\delta} \cdot \left( \optlpfl(f) - (f-\eta n) \cdot k \right) \leq (2+O(\eps)) \sopt\,,
\end{align*}
where, for the last inequality, we used that $\eta = 2^{-n}$ and $\sopt\geq 1$ (since any distance is at least $1$). Theorem~\ref{thm:pseudo-approx} then follows by scaling $\eps$ appropriately.

In the remainder of this section, we present \mergealg for two solutions and a given phase $p$. It consists of three subroutines, presented in Section~\ref{sec:parameters_identical}~\ref{sec:common_prefix}, and~\ref{sec:extra_facilities} respectively. Each subroutine achieves (1) or (2) above (which ends the whole procedure or lets us move on to the next phase), or sets up the stage for the next subroutine. The final third subroutine always achieves (1) or (2).

\subsubsection{Making Parameters Identical}
\label{sec:parameters_identical}
In this subsection, we try to make the parameters of the two solutions exactly identical.
By renaming,  we assume that if $f\neq f'$  is the difference parameter then $f' < f$, and if the difference parameter is $u(\tilde i) \neq u'(\tilde i)$ then $u(\tilde i) < u'(\tilde i)$.

\begin{claim}
    $H'_1, H'_2, \ldots, H'_p$ are $\eta$-valid sequences with respect to parameters $f, u$.
    \label{claim:1ststep_etavalid}
\end{claim}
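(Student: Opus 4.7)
Since $H'_q = H_q$ for $q \leq p-1$ and these are $\eta$-valid with respect to $(f, u)$ as part of the solution $\calH$, the only non-trivial content of the claim is that $H'_p$ is $\eta$-valid under $(f, u)$. My plan is to reuse, for each opening $h_t \in H'_p$, the bid profile $(\tau_j)$ that certified $0$-openability of $h_t$ under $(f', u')$, and verify the conditions of Definition~\ref{def:robust_openable} with respect to $(f, u)$. The renaming convention splits the argument into two cases.

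\emph{Case 1: $u = u'$ and $f' < f \leq f' + \eta$.} Since no distances change, the state $(\alpha, S, A, I)$ evolves identically under both parameter settings, so the bids $(\tau_j)$ satisfy conditions 1, 2, and the dual-feasibility condition 4 of Definition~\ref{def:robust_openable} verbatim (its right-hand side $\hat f = 2f$ only grew). For the payment condition, the inherited bids sum to at least $\hat f' = 2f' \geq 2f - 2\eta \geq \hat f - n\eta$, using $n \geq 2$. Maximality transfers: for any regular facility the maximum achievable bids were $< \hat f' \leq \hat f$ under $(f', u')$, so the same maximum remains $< \hat f$ under $(f, u)$.

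\emph{Case 2: $f = f'$ and $u(\tilde i) < u'(\tilde i) \leq u(\tilde i) + \eta$ for a unique free facility $\tilde i$.} Set $\Delta := u'(\tilde i) - u(\tilde i) \leq \eta$, so only the distances $d(\tilde i, \cdot)$ shrink by $\Delta$. I would argue by induction on the in-phase openings that at any point of stage 1 in phase $p$ the set $S$ is identical under $(f, u)$ and $(f', u')$, while $A^{(u)} \subseteq A^{(u')}$ (because $d(j, S)$ can only decrease when $\tilde i \in S$, making clients near $\tilde i$ become inactive slightly earlier) and each $\alpha_j^{(u)}$ differs from $\alpha_j^{(u')}$ by at most $\Delta$. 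Using the same bids $(\tau_j)_{j \in A^{(u)}}$ inherited from the $(f', u')$-construction, the payment for each opened regular $h_t$ decreases by at most $\Delta$ per client --- either through a smaller $d(j, S)$ in the inactive term or through a client switching from active to inactive, whose contribution drops by at most $\tau_j - (1-\delta)d(j,S^{(u)}) \leq (1-\delta)\Delta$ --- totaling at most $n\Delta \leq n\eta$, which is precisely the slack built into $\eta$-openability. For dual feasibility and maximality, every target facility $i_0 \in F$ is regular with unchanged distances; because $A^{(u)} \subseteq A^{(u')}$ with the same $(\tau_j)$ and $d(j, S^{(u)}) \leq d(j, S^{(u')})$, both the active sum $[\tau_j - d(i_0,j)]^+$ and the inactive contribution $[\tau_k - 2d(j,i_0) - d(k,i_0)]^+$ can only shrink, with a routine case analysis handling clients in $A^{(u')} \setminus A^{(u)}$ that switch category.

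\textbf{Main obstacle.} The delicate step is the bookkeeping in Case 2 --- confirming that clients transitioning from active to inactive across the two parameter settings never increase bids on a regular target facility, and that their individual contribution to the payment for $h_t$ shrinks by at most $\Delta$. This is exactly where the design of $\eta$-openability pays off: no individual contribution changes by more than $\Delta \leq \eta$, and the aggregate $n\Delta \leq n\eta$ perturbation fits precisely into the relaxed payment threshold, so no additional slack propagates across opens within the phase.
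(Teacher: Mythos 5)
Your overall route is the same as the paper's: the first $p-1$ sequences are inherited verbatim, the $f$-difference case is handled by observing that the payment only drops by $2(f-f')\le 2\eta\le n\eta$ while the dual-feasibility threshold only grows, and the $u$-difference case is handled by transferring the certifying bids, noting $A^{(u)}\subseteq A^{(u')}$, charging at most an $\eta$ loss per client against the $n\eta$ slack in the payment condition, and checking the fourth bullet via the triangle-inequality argument for clients that flip from active to inactive (your ``routine case analysis'' is exactly the inequality $\tau'_j-d(i_0,j)\ge \tau_k-2d(j,i_0)-d(k,i_0)$ that the paper writes out).

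One concrete slip in Case 2: you reuse the old bids unchanged, but the second bullet of the definition of $\eta$-openability, $\tau_j\le\min\{(1-\delta)d(j,S),(1+\eps^2)\theta\}$, carries no $\eta$-slack, and for a client $j\in A^{(u)}\cap B(h_t,\eps\theta)$ whose old bid was pushed up to nearly $(1-\delta)d'(j,S)$ the new bound $(1-\delta)d(j,S)$ may be smaller by up to $(1-\delta)\Delta$, so the inherited profile is not literally admissible. The paper fixes this by capping, $\tau_j:=\min(\tau'_j,(1-\delta)d(j,S))$ for $j\in A$; with this cap your accounting is unchanged (each capped client loses at most $\eta$ from the payment, absorbed by the $\hat f-n\eta$ threshold, and lowering bids only helps the dual-feasibility bullet). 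Separately, your maximality remark in Case 1 reasons as if non-openability under $(f',u')$ meant the maximum achievable payment was below $\hat f'$, which is not quite what $0$-validity gives (a facility can be non-openable because every sufficiently paid bid profile violates the dual-feasibility bullet); the paper is silent on maximality in this claim, so you are at least at its level of detail, but the justification as stated overclaims.
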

\begin{proof}
    Suppose first that $f\neq f'$ is the difference parameter.
    As the first $p-1$ sequences are identical, we have that $H'_1, \ldots, H'_{p-1}$ are $\eta$-valid with respect to parameters $f, u$. We further have that $H'_p$ is $\eta$-valid with respect to parameters $f, u$ since it was $0$-valid with respect to parameters $f', u'$ and the only difference is that $f$ is slightly bigger than $f'$ by at most $\eta$. For maximality, we have by Lemma~\ref{lemma:payableimpliesopenable} that there is no payable facilities after opening $H'_p$ with respect to opening cost $f'$ and increasing the opening cost $f$ cannot introduce any new payable facilities, so $H'_p$ is also maximal with respect to $f$.

    Now assume that $u(\tilde i) \neq u'(\tilde i)$ is the difference parameter. Then by the same arguments as above, we have $H'_1, \ldots, H'_{p-1}$ are $\eta$-valid with respect to $f,u$. Moreover, $H'_p$ was $0$-valid with respect to $f',u'$ and we have that $u'(\tilde i) - \eta \leq u(\tilde i) \leq u'(\tilde i)$.
    It remains to show that $H'_p$ is $\eta$-valid with respect to $f,u$ in this case. The maximility of $H'_p$ with respect to $f,u$  is implied by Lemma~\ref{lemma:payableimpliesopenable} by the same arguments as above when $f\neq f'$: since lowering $u(\tilde i) \leq u'(\tilde i)$ only decreases the payments toward opening facilities and, hence, it  does not introduce any new payable facilities.
    We proceed to show that the facilities remain $\eta$-openable.

    Let $H'_p = \langle h_1, \dots, h_{\ell} \rangle$ and consider $h_t$ for some $t \in [\ell]$, which is $0$-openable with respect to $f', u', \theta = (1 + \eps^2)^{p - 1}$ and some superset $S' \supseteq (\cup_{q \in [p-1]} H_q) \cup \{ h_1, \dots, h_{t-1} \}$. When we change from $f', u'$ to $f, u$, since the only difference is $u(\tilde i) < u'(\tilde i)$, the set of active clients $A'$ for $f', u'$ is a superset of the set of active clients $A$ for $f, u$. Also, let $d'(\cdot, \cdot)$ and $d(\cdot, \cdot)$ be the distances with respect to $u'$and $u$ respectively.
    Based on the fact that $h_t$ was $0$-openable with respect to the former setting with bids $(\tau'_j)_{j \in A'}$, let
    $(\tau_j)_{j \in A}$ be the new bids defined as $\tau_j = \min(\tau'_j, (1-\delta)d(j, S))$ for $j \in A$.
    Let us show that $h_t$ was $\eta$-openable with respect to the latter setting
    with bids $(\tau_j)_{j \in A}$ by checking the conditions in Definition~\ref{def:robust_openable}:

    \begin{itemize}
        \item $\tau_j = \alpha_j$ for every $j \in A - B(h_t, \eps \theta)$: Since it was true for $A' \supseteq A$, it remains true.
        \item $\alpha_j \leq \tau_j \leq \min\{(1-\delta)d(j,S),(1+\eps^2)\theta\}$ for every $j \in A \cap B(h_t, \eps \theta)$: Fix $j \in A \cap B(h_t, \eps \theta)$.
              Since $A \subseteq A'$, $\tau'_j \leq \min\{(1-\delta)d'(j,S),(1+\eps^2)\theta\}$. Since $\tau_j$ is defined as $\min(\tau'_j, (1-\delta)d(j,S))$, the condition is satisfied. (Note that $j \in A$ implies $\alpha_j \leq \min(\tau'_j, (1-\delta)d(j, S))$ as well.)
        \item $
                  \sum_{j \in A} [\tau_j - (1-\delta)d(h_t,j)]^+ + (1-\delta)\sum_{j \in I} [d(j,S)-d(h_t,j)]^+ \geq \hat f - n \eta$: Let us argue that the contribution of each client $j$ to the left-hand side changes by at most $\eta$ between $(A', d', \tau')$ and $(A, d, \tau)$.
              For $j \in A$, note that $\tau_j < \tau'_j$ implies $\tau_j = (1-\delta)d(j, \tilde i) \geq (1-\delta)(d'(j, \tilde i) - \eta) \geq \tau'_j - \eta$, so $\tau_j \geq \tau'_j - \eta$ in any case. For $j \in I'$, note that $d(j, S)$ and $d'(j, S)$ only differ by at most $\eta$. Finally for $j \in A' \cap I$, $(1-\delta)d(j, S) \geq \tau_j \geq \tau'_j - \eta$. Therefore, the fact that this condition was satisfied for $(A', d', \tau')$ with $\hat f$ on the right-hand side implies  that $(A, d, \tau)$ satisfies this condition with $\hat f - n \eta$ on the right-hand side.

        \item For every regular facility $i_0$ and $\ell \in A$, $
                  \sum_{j \in A} [\tau_j - d(i_0,j)]^+ + \sum_{j \in I} [\tau_\ell - 2d(j,i_0) - d(\ell, i_0)]^+\leq \hat f$: Since $\ell \in A$ implies $\ell \in A'$, we have
              \[
                  \sum_{j \in A'} [\tau'_j - d(i_0,j)]^+ + \sum_{j \in I'} [\tau'_\ell - 2d(j,i_0) - d(\ell, i_0)]^+
                  \leq \hat f.
              \]
              (Since $i_0$ is a regular facility, $d(i_0, j)$ for a client $j$ does not change by changing $u(\tilde i)$.)
              For $j \in A' \cap I$, since $j \in I$ and $\ell \in A$, one has
              \[
                  \tau'_j \geq \alpha_j \geq d(j, S) \geq d(\ell, S) - d(\ell, j)
                  \geq \tau_\ell - d(j, i_0) - d(\ell, i_0).
              \]
              Thus $\tau'_j - d(i_0, j) \geq \tau_\ell - 2d(j, i_0) - d(\ell, i_0)$. Therefore we have
              \[
                  \sum_{j \in A} [\tau_j - d(i_0,j)]^+ + \sum_{j \in I} [\tau_\ell - 2d(j,i_0) - d(\ell, i_0)]^+
                  \leq \hat f.
              \]
    \end{itemize}

\end{proof}

\paragraph{Outcome of this section of \mergealg.}
Let
\[
    \calH'' =  \completesol_{(f,u)}(H'_1, \ldots, H'_p)\,,
\]
which is a valid call by Claim \ref{claim:1ststep_etavalid}.
If $\calH''$ opens $k$ regular facilities, we return this solution. Otherwise,
as $(\calH, f,u)$ and $(\calH', f', u')$ sandwich $k$, we must have that either $(\calH, f, u)$ and $(\calH'', f, u)$, or $(\calH'', f, u)$ and $(\calH', f', u')$ sandwich $k$.
On the one hand, if $(\calH'', f, u)$ and $(\calH', f',u')$ sandwich $k$, then we made progress in that we now have two solutions with a common prefix of length $p$ (instead of $p-1$) and only one parameter differs by at most $\eta$. In this case, we are done with phase $p$ and can continue \mergealg with these two solutions and phase $p+1$.

On the other hand, if $(\calH, f,u)$ and $(\calH'', f, u)$ sandwich $k$, then these two solutions have the same parameters and their first $\eta$-valid sequences that differ are $H_p$ and $H'_p$. In this case, we rename $\calH''$ to $\calH'$ and proceed to the next step.

\subsubsection{Maximize Common Prefix of $H_p$ and $H'_p$}
\label{sec:common_prefix}
We are given two solutions $(\calH, f,u)$ and $(\calH', f, u)$ on the same parameters $f,u$ that sandwich $k$. Moreover, if we let $\calH = (H_1, \ldots, H_L)$ and $\calH' = (H'_1, \ldots, H'_{L'})$, then $H_i = H'_i$ for $i= 1, \ldots, p-1$ and $H_p \neq H'_p$. In addition $H_p$ and $H'_p$ only contain regular facilities.

In this step of \mergealg, we increase the common prefix of  $H_p$ and $H'_p$ by ``slowly'' modifying $H'_p$. Recall that $H_p$ only contains regular facilities: let $H_p = \langle i_1, i_2, \ldots, i_\ell \rangle$. Moreover, as the parameters $(f,u)$ are clear from the context, we omit them in the following when, e.g., calling the procedures \completesol and \completesequence. %

\begin{mdframed}[hidealllines=true, backgroundcolor=gray!15]
    \vspace{-5mm}
    \paragraph{Transforming $H'_p$  step-by-step so that $H_p$ becomes a prefix of $H'_p$.}\ \\

    \noindent Repeat the following until all of $H_p$ becomes a prefix of $H'_p$.\\

    \begin{itemize}
        \item Let $q$ be the largest index so that $H'_p = \langle i_1, i_2, \ldots, i_q, h'_{q+1}, \ldots, h'_{\ell_1}\rangle$, i.e., $q$ equals the length of the common prefix $i_1, i_2, \ldots, i_q$ of $H_p$ and  $H'_p$.
        \item Update $H'_p$ by inserting a free copy of $i_{q+1}$, $\tilde i_{q+1}$, at the end of $H'_p$ with $u(\tilde i_{q+1}) = 0$. That is, we get the sequence
              \begin{gather*}
                  H'_p = \langle i_1, i_2, \ldots, i_q, h'_{q+1}, \ldots, h'_{\ell_1}, \tilde{i}_{q+1} \rangle,.
              \end{gather*}

        \item
              For $t = q+1 , \ldots, \ell_1$,
              \begin{itemize}
                  \item  Update $H'_p$ by first removing $h'_t$ and then completing the sequence by a call to \completesequence. More explicitly, if
                        \begin{align*}
                            H'_p =  & \langle i_1, \ldots, i_q, h'_t, h'_{t+1} \ldots, h'_{\ell_1}, \tilde i_{q+1}, h'_{\ell_1+1}, \ldots, h'_{\ell_2}\rangle                    \\
                            \intertext{then we delete $h'_t$ to obtain}
                            H''_p = & \langle i_1, \ldots, i_q,  h'_{t+1} \ldots, h'_{\ell_1}, \tilde i_{q+1}, h'_{\ell_1+1}, \ldots, h'_{\ell_2}\rangle                         \\
                            \intertext{and finally update $H'_p = \completesequence(H''_p)$ which will be of the form}
                                    & \langle i_1, \ldots, i_q, h'_{t+1}, \ldots, h'_{\ell_1}, \tilde i_{q+1}, h'_{\ell_1+1}, \ldots, h'_{\ell_2}, \ldots, h'_{\ell_3}\rangle\,,
                        \end{align*}
                        where  $h'_{\ell_2+1},\ldots, h'_{\ell_3}$ are the facilities  added by \completesequence (which could be empty).
              \end{itemize}

        \item After removing $h'_{q+1}, \ldots, h'_{\ell_1}$, we have
              \begin{align*}
                  H'_p & = \langle i_1, i_2, \ldots, i_q, \tilde{i}_{q+1}, h'_{x}, \ldots, h'_{y}\rangle\,. \\
                  \intertext{We update  $H'_p$ by replacing $\tilde{i}_{q+1}$ with its regular copy $i_{q+1}$, i.e., we obtain}
                  H'_p & = \langle i_1, i_2, \ldots, i_q, {i}_{q+1}, h'_{x}, \ldots, h'_{y}\rangle\,.
              \end{align*}
    \end{itemize}
\end{mdframed}
We have the following claims.

\begin{claim}
    Throughout the updates to $H'_p$, $H'_1, \ldots, H'_p$ remains an  $\eta$-valid sequence, and $H'_p$ contains at most one free facility.
    \label{claim:2ndstep_valid_free}
\end{claim}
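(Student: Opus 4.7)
The plan is to prove the claim by induction on the atomic operations in this subsection---operation (a), which inserts the free copy $\tilde{i}_{q+1}$ at the end of $H'_p$, each iteration of (b), which removes an interior facility $h'_t$ and then invokes \completesequence, and operation (c), which swaps $\tilde{i}_{q+1}$ for its regular copy $i_{q+1}$---checking after each that $H'_1,\ldots,H'_p$ is an $\eta$-valid sequence and that $H'_p$ contains at most one free facility. The prefix $H'_1,\ldots,H'_{p-1}$ is untouched, so the check reduces to $H'_p$. For each operation I plan to verify the two conditions of Definition~\ref{def:valid_sequence}---every entry is either free or $\eta$-openable with respect to a superset of its predecessors, and the sequence is maximal---and separately track the free-facility count (zero before (a), one after (a), one during each iteration of (b) since \completesequence only appends regular facilities, and zero after (c)).

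Operations (a) and (b) I plan to handle by relatively standard arguments. For (a), I would open $\tilde{i}_{q+1}$ with the trivial witness $\tau = \alpha$, so no $\alpha$-value changes and the state transitions only by inactivating clients whose $(1-\delta) d(j, S_{\mathsf{new}})$ falls to or below $\alpha_j$. The preceding entries retain their witnesses, $\tilde{i}_{q+1}$ is free, and maximality follows from the observation that opening any facility with this trivial witness only weakly decreases the achievable bid of every unopened $i$: clients outside $B(\tilde{i}_{q+1},\eps\theta)$ have unchanged $\alpha$; clients transitioning from active to inactive satisfy $(1-\delta) d(j, S_{\mathsf{new}}) \leq \alpha_j$, so their contribution changes from $[\alpha_j - (1-\delta) d(i,j)]^+$ to the no-larger $(1-\delta)[d(j, S_{\mathsf{new}}) - d(i,j)]^+$; and already-inactive clients' contributions only shrink as $S$ grows. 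Therefore no unopened $i$ becomes $\eta$-openable. For (b), removing an interior $h'_t$ preserves the openability of every subsequent $h'_s$ via its original witness superset $S_s$, which already contained $h'_t$ and so is still a superset of the (smaller) post-removal predecessor set. Only maximality can fail, and this is precisely what the subsequent call to $\completesequence$ restores by appending $0$-openable regular facilities, keeping the free-facility count at one.

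The hard part will be operation (c). The subtlety is that in (a) I opened $\tilde{i}_{q+1}$ with the trivial witness $\tau = \alpha$, which did not modify any $\alpha_j$, whereas opening the regular $i_{q+1}$ honestly via Definition~\ref{def:robust_openable} would use a witness $\tau^H$ inherited from $H_p$ that may raise $\alpha_j$ for $j \in A \cap B(i_{q+1}, \eps\theta)$; a priori this could invalidate the openability witnesses of the subsequent $h'_x, \ldots, h'_y$. The key observation that resolves this is that every such client $j$ satisfies $d(j, S_{\mathsf{new}}) \leq d(j, i_{q+1}) \leq \eps\theta$, so $(1-\delta) d(j, S_{\mathsf{new}}) \leq (1-\delta)\eps\theta < \theta \leq \alpha_j \leq \tau^H_j$, meaning $j$ becomes inactive under either opening scheme; and since $\alpha$-values of inactive clients do not appear in any condition of Definition~\ref{def:robust_openable} used to check openability of later facilities (only $d(j,S)$ does for inactive $j$), the two openings are state-equivalent for every future openability check. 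The openability of $i_{q+1}$ itself at position $q+1$ transfers directly from $H_p$, since the preceding open set $\{i_1,\ldots,i_q\}$ is identical in both sequences, so the witness that certified $i_{q+1}$ in $H_p$ still certifies it here. The successors $h'_x, \ldots, h'_y$ therefore retain their witnesses under the replacement, maximality transfers for the same state-equivalence reason, and the operation removes the unique free facility, bringing the count from one to zero, so the bound of at most one free facility is respected throughout the entire stage.
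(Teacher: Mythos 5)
Your proposal is correct and takes essentially the same route as the paper's own proof: an induction over the three update types, with earlier facilities retaining their openability witnesses through the superset flexibility in Definition~\ref{def:valid_sequence}, maximality restored by \completesequence after each removal, and the free-facility count tracked exactly as in the paper. The "hard part" you identify for the replacement step is a correct elaboration of what the paper dispatches by noting $u(\tilde i_{q+1})=0$: since $\eta$-openability depends only on $\theta$ and the set of open facilities (clients whose bids are raised lie in $B(i_{q+1},\eps\theta)$ and become inactive under either opening, and $\alpha$-values of inactive clients never enter Definition~\ref{def:robust_openable}), the free and regular copies have identical impact on all subsequent openability and maximality checks, which is precisely the paper's argument.
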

\begin{proof}
    As we never update $f$ or the value $u(\cdot)$ of a free facility in $H'_1, \ldots, H'_{p-1}$, these sequences remain $\eta$-valid.

    We continue to argue that $H'_p$ remains $\eta$-valid throughout its updates. By Claim~\ref{claim:1ststep_etavalid}, this is true before any updates to $H'_p$. We now analyze the three types of updates to $H'_p$.
    \begin{itemize}
        \item Updating  $H'_p$ by inserting a free copy $\tilde{i}_{q+1}$ of $i_{q+1}$ maintains that it is $\eta$-valid as every regular facility stays $\eta$-openable in the sequence (they have the same prefix as before) and the sequence remains maximal (by Lemma~\ref{lemma:payableimpliesopenable} as adding a free facility does not make any new facility payable).
        \item Updating $H'_p$ by removing $h'_t$ and calling \completesequence.  Removing a $h'_t$ can only decrease the prefix of regular facilities in the sequence, and so each regular facility still has a (superset) of facilities for which it is $\eta$-openable. Finally, the call \completesequence ensures that the sequence is maximal.
        \item Updating $H'_p$ by replacing $\tilde i_{q+1}$ with its regular copy $i_{q+1}$. First note that, since $u(\tilde i_{q+1}) = 0$, this does not impact on the following regular facilities being $\eta$-openable (with respect to a superset). By the same argument, the sequence stays maximal (as $\tilde{i}_{q+1}$ with $u(\tilde i_{q+1}) = 0$ and  $i_{q+1}$ has the same impact on following facilities).  Finally, to see that $i_{q+1}$ is $\eta$-openable  (with respect to a potential superset) notice that the prefix $i_1, \ldots, i_q$ is the same as that in $H_p$ which is an $\eta$-valid sequence.
    \end{itemize}

    We complete the proof of the claim by arguing that $H'_p$ contains at most one free facility at any point of time. This follows from the facts that (1) $H'_p$ initially contains no free facility, (2) in one repetition, the free facility $\tilde i_{q+1}$ is inserted into $H'_p$, and (3) $\tilde i_{q+1}$ is replaced by its regular copy $i_{q+1}$ before continuing to the next repetition.
\end{proof}

\newcommand{\Hbefore}{\ensuremath{H^{\text{before}}_p}\xspace}
\newcommand{\Hafter}{\ensuremath{H^{\text{after}}_p}\xspace}

\paragraph{Outcome of this section of \mergealg.} We distinguish three cases.

\paragraph{Case 1.} If at some point $H'_p$ is such that $\completesol(H'_1, H'_2, \ldots, H'_p)$ opens exactly $k$ regular facilities, then we return that solution.

\paragraph{Case 2.} Otherwise,  suppose there is an update to $H'_p$ so that if we let \Hbefore and \Hafter denote $H'_p$ before and after this update, respectively, then \completesolx$(H_1, H_2, \ldots, \Hbefore)$ and \completesolx$(H_1, H_2, \ldots, \Hafter)$ sandwich $k$. There are three kinds of updates to $H'_p$. However, note that the update where $\tilde i_{q+1}$ is replaced by its regular copy $i_{q+1}$ cannot satisfy the conditions of this case. Indeed, as $u(\tilde i_{q+1}) = 0$, we have
\[
    \completesol(H_1, \ldots, \Hbefore) = \completesol(H_1, \ldots, \Hafter)
\]
in this case, which contradicts that they sandwich $k$.

We proceed to analyze the two other kind of updates.
If the update to $H'_p$ is to insert a free facility $\tilde{i}_{q+1}$ with $u(\tilde{i}_{q+1}) = 0$. Then
\begin{align*}
    \Hbefore & = \langle i_1, i_2, \ldots, i_q, h'_{q+1}, \ldots, h'_{\ell_1} \rangle                  \\
    \Hafter  & = \langle i_1, i_2, \ldots, i_q, h'_{q+1}, \ldots, h'_{\ell_1}, \tilde{i}_{q+1} \rangle
\end{align*}
Note that the solution \completesolx$(H_1, H_2, \ldots, \Hbefore)$ is equivalent to \completesolx$(H_1, H_2, \ldots, \Hafter)$ if we set $u(\tilde{i}_{q+1}) = 10\Maxdist$ (except for the free facility $\tilde i_{q+1}$). We can thus do binary search on $u(\tilde{i}_{q+1})$ to find two values $u'(\tilde{i}_{q+1})$ and $u(\tilde{i}_{q+1})$ with $|u'(\tilde{i}_{q+1}) - u(\tilde{i}_{q+1})| \leq \eta$ so that the solutions
$\completesol_{(f,u')}(H_1, \ldots, \Hafter)$ and $\completesol_{(f,u)}(H_1, \ldots, \Hafter)$  sandwich $k$ (where $u$ and $u'$ are identical except for $\tilde{i}_{q+1}$).
We have thus finished phase $p$, and we can restart \mergealg with the two solutions and phase $p+1$, since they have the first $p$, instead of $p-1$, $\eta$-valid sequences in common and exactly one difference parameter.
Their maximality follows from that of $\Hbefore$.

Finally, consider when the update to $H'_p$ is of the second type, i.e., $u(\tilde{i}_{q+1}) = 0$ and
\begin{align*}
    \Hbefore  =
              & \langle i_1, \ldots, i_q, h'_t, h'_{t+1} \ldots, h'_{\ell_1}, \tilde i_{q+1}, h'_{\ell_1+1}, \ldots, h'_{\ell_2}\rangle                 \\
    \Hafter = & \langle i_1, \ldots, i_q, h'_{t+1}, \ldots, h'_{\ell_1}, \tilde i_{q+1}, h'_{\ell_1+1}, \ldots, h'_{\ell_2}, \ldots, h'_{\ell_3}\rangle
\end{align*}
In this case, $\completesol(H_1, \ldots, \Hbefore)$ and $\completesol(H_1, \ldots, \Hafter)$ are two solutions with the same parameters $f, u$, that sandwich $k$ and we proceed to the next step with the following properties of $\Hbefore$ and $\Hafter$:
\begin{itemize}
    \item Both sequences contain one free facility, $\tilde{i}_{q+1}$.
    \item $\Hbefore$ and $\Hafter$ are identical except that (1) $\Hbefore$ contains a regular facility $h'_t$ not present in $\Hafter$, and (2) $\Hafter$ contains a (potentially empty) suffix of regular facilities not present in $\Hbefore$.
\end{itemize}

\paragraph{Case 3.} None of the previous cases apply. Then,  completing the solution with the initial $H'_p$ and the final $H'_p$ cannot sandwich $k$. Since then (assuming the first case does not apply) there must be an update to $H'_p$ that results in two solutions that sandwich $k$: this happens because the completion of $H_p$ and the completion of the initial $H'_p$ sandwich $k$. The above implies that if we let $\Hbefore = H_p$ and $\Hafter$ equal the final $H'_p$ then \completesolx$(H_1, \ldots, \Hbefore)$ and \completesolx$(H_1, \ldots, \Hafter)$ sandwich $k$. Further note that, since $\Hafter$ equals $H'_p$ after the final updates, $\Hbefore$ appears as a prefix of $\Hafter$ and both sequences only contain regular facilities.   To summarize,
in this case, \completesolx$(H_1, \ldots, \Hbefore)$ and $\completesol(H_1, \ldots, \Hafter)$ are two solutions with the same parameters $f, u$, that sandwich $k$ and we proceed to the next step with the following properties of $\Hbefore$ and $\Hafter$:
\begin{itemize}
    \item Both sequences contain no free facilities.
    \item $\Hbefore$ and $\Hafter$ are identical except that $\Hafter$ contains a suffix of regular facilities not present in $\Hbefore$.
\end{itemize}

\subsubsection{Removing Extra Facilities}
\label{sec:extra_facilities}
The input to this step consists of two solutions $\completesol(H_1, \ldots, \Hbefore)$ and \completesolx$(H_1,\ldots, \Hafter)$ that have the same parameters $f, u$ and sandwich $k$. Moreover,  $\Hbefore$ and $\Hafter$ both contain at most one free facility and they are identical except for potentially two differences:
\begin{itemize}
    \item $\Hbefore$ may contain a facility not present in $\Hafter$;
    \item $\Hafter$ contains a suffix of regular facilities (which may be empty) that is not present in $\Hbefore$.
\end{itemize}

To simplify the notation in this section, we let $i_*$ denote the regular facility in $\Hbefore$ that is not present in $\Hafter$ if it exists (so $i_*$ corresponds to $h'_t$ in the previous section). We further let $h_1, \ldots, h_\ell$ denote the suffix of $\Hafter$ not present in $\Hbefore$.

We now transform $\Hafter$ into $\Hbefore$ one change at a time:

\begin{mdframed}[hidealllines=true, backgroundcolor=gray!15]
    \vspace{-5mm}
    \paragraph{Generation of $\eta$-valid sequences from $\Hafter$ to $\Hbefore$.}\ \\
    \begin{itemize}
        \item If $\Hbefore$ contains the regular facility $i_*$ not present in $\Hafter$, then let $\tilde{i}_*$ be a free copy of $i_*$ with $u(\tilde{i}_*) =0$ and obtain $H^{0}_p$ from $\Hafter$ by adding $\tilde{i}_*$ at the end. Otherwise, we let $H^0_p$ equal $\Hafter$.
        \item For $r= 1, \ldots, \ell$, obtain $H^r_p$ from $H^{r-1}_p$ by removing $h_r$.
    \end{itemize}
\end{mdframed}
We remark that $H^\ell_p$ is identical to $\Hbefore$ since we removed the whole suffix $h_1, \ldots h_\ell$ present in $\Hafter$, except for potentially the facility $i_*$. That is, the only difference in $\Hbefore$ and $H^\ell_p$ may be the placement of $\tilde{i}_*$ in the sequence and the fact that $\tilde{i}_*$ is a free copy of the regular facility $i_*$ present in $\Hbefore$. However, since $u(\tilde{i}_*) = 0$, we have that the two sequences $\Hbefore$ and $H^\ell_p$ have an identical impact on \completesol, which is a fact we will use in the proof of Claim~\ref{claim:3rdstep_outcome}.

\begin{claim}
    For $r= 0, 1 \ldots \ell$, $H^r_p$ is an $\eta$-valid sequence with at most two free facilities.
\end{claim}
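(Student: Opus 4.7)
The plan is to verify the three requirements of an $\eta$-valid sequence (Definition~\ref{def:valid_sequence}) in turn: the bound on the number of free facilities, $\eta$-openability of every regular element, and maximality. For the count, each of $\Hbefore$ and $\Hafter$ carries at most one free facility by hypothesis, so $H^0_p$ (obtained from $\Hafter$ by at most appending the single free copy $\tilde i_*$) has at most two; each subsequent step $H^{r-1}_p \mapsto H^r_p$ deletes only the regular facility $h_r$ and so preserves the count.

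For $\eta$-openability of the regular elements, observe that every regular facility $h'$ of $H^r_p$ is also a regular facility of $\Hafter$, so the $\eta$-validity of $\Hafter$ supplies a superset $S'$ of $\bigcup_{q<p} H_q$ together with the facilities preceding $h'$ in $\Hafter$ that witnesses the $\eta$-openability of $h'$ (Definition~\ref{def:robust_openable}). Passing from $\Hafter$ to $H^r_p$ only deletes facilities preceding $h'$ and appends $\tilde i_*$ at the tail, so the facilities preceding $h'$ in $H^r_p$ form a subset of those in $\Hafter$; hence the same $S'$ still witnesses $\eta$-openability of $h'$ in $H^r_p$. Free facilities carry no condition.

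The main obstacle is maximality. Since $u(\tilde i_*) = 0$, we have $d(\tilde i_*, x) = d(i_*, x)$ for every point $x$, so every quantity entering Definition~\ref{def:robust_openable} treats $S \cup H^r_p$ as if $\tilde i_*$ were replaced by $i_*$; consequently this multiset behaves, for openability purposes, like $(S \cup \Hbefore) \cup \{h_{r+1}, \ldots, h_\ell\}$, which is a superset of $S \cup \Hbefore$. It therefore suffices to prove the following monotonicity statement: enlarging the open-facility set from $S_1$ to $S_2 \supseteq S_1$ cannot create a new $\eta$-openable facility. Given bids $(\tau^{(2)}_j)_{j \in A_2}$ witnessing $\eta$-openability of some $i$ with respect to $S_2$, I would set $\tau^{(1)}_j := \tau^{(2)}_j$ for $j \in A_2$ and $\tau^{(1)}_j := \alpha_j = \theta$ for $j \in A_1 \setminus A_2$. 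The first two bullets are immediate; the ``paid for'' bullet carries over because each newly-inactive $j \in A_1 \setminus A_2$ satisfies $(1-\delta) d(j, S_2) \leq \theta$, so its active contribution $[\theta - (1-\delta)d(i,j)]^+$ in $S_1$ dominates the inactive contribution $(1-\delta)[d(j, S_2) - d(i,j)]^+$ it had in $S_2$, while every $j \in I_1$ enjoys $d(j, S_1) \geq d(j, S_2)$. Dual feasibility is the technical core: using $\tau^{(1)}_k \leq (1-\delta) d(k, S_1)$ (valid in both cases $k \in A_2$ and $k \in A_1 \setminus A_2$) together with the triangle inequality $d(k, S_1) \leq d(k, i_0) + d(i_0, j) + d(j, S_1)$ yields $[\tau^{(1)}_k - 2d(j, i_0) - d(k, i_0)]^+ \leq [(1-\delta) d(j, S_1) - d(i_0, j)]^+$, in the same spirit as the bound used in Claim~\ref{claim:1ststep_etavalid}; combining this inequality on the inactive sum with the no-overbidding invariant at $\Hbefore$'s state (Claim~\ref{claim:dual-feasible-induction}) upper-bounds the whole left-hand side by $\hat f$. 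Applying the monotonicity with $S_1 = S \cup \Hbefore$ (maximal by the $\eta$-validity of $\Hbefore$) and $S_2 = S \cup H^r_p$ then delivers maximality of $H^r_p$.
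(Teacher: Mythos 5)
Your proof has the same skeleton as the paper's: the free-facility count (at most one inherited from $\Hafter$ plus possibly $\tilde i_*$), the $\eta$-openability of every regular facility via the observation that its prefix in $H^r_p$ is a subset of its prefix in $\Hafter$ so the same witnessing superset still works, and maximality via the fact that $u(\tilde i_*)=0$ makes the effective open set of $H^r_p$ a superset of that of $\Hbefore$, so the active-client set only shrinks. Up to and including the transfer of the ``paid for'' bullet (your domination argument using $(1-\delta)d(j,S_2)\leq\theta$ for newly inactive clients is fine), this matches the paper; the paper simply stops at the nesting of active sets and declares that maximality of $\Hbefore$ carries over, whereas you attempt to prove the underlying monotonicity (``enlarging the open set creates no new openable facility'') bullet by bullet.

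The gap is in your verification of the dual-feasibility bullet. Writing $S_1=S\cup\Hbefore$, $S_2$ for the effective open set of $H^r_p$, and $A_1\supseteq A_2$, $I_1\subseteq I_2$ as in your notation: your bids keep $\tau^{(1)}_j=\tau^{(2)}_j$ for $j\in A_2$, and these can strictly exceed $\alpha_j=\theta$ for the clients of $A_2\cap B(i,\eps\theta)$ whose bids were raised to pay for $i$. After replacing each inactive term $[\tau^{(1)}_k-2d(j,i_0)-d(k,i_0)]^+$ by $[(1-\delta)d(j,S_1)-d(i_0,j)]^+$, what you still need is $\sum_{j\in A_1}[\tau^{(1)}_j-d(i_0,j)]^+ + \sum_{j\in I_1}[(1-\delta)d(j,S_1)-d(i_0,j)]^+\leq \hat f$, but Claim~\ref{claim:dual-feasible-induction} only bounds this sum with $\alpha_j$ (i.e.\ $\theta$) in place of $\tau^{(1)}_j$; it says nothing about the raised bids, so ``combining \ldots upper-bounds the whole left-hand side by $\hat f$'' does not follow. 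Nor does the fourth bullet of the $S_2$-witness rescue you: there the raised bids do appear in the active sum, but the clients of $A_1\setminus A_2$ enter only through the terms $[\tau_k-2d(j,i_0)-d(k,i_0)]^+$, which can be much smaller than the terms $[\theta-d(i_0,j)]^+$ they contribute in the inequality you need, so neither available inequality dominates the target. (The paper itself does not spell this out either — its maximality sentence treats the transfer as immediate — but your write-up makes a specific deduction that is invalid as stated; closing it would require a different argument for this bullet, e.g.\ a minimal-counterexample argument in the spirit of the Stage-2 part of the proof of Claim~\ref{claim:dual-feasible-induction}, rather than a direct appeal to the no-overbidding invariant.)
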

\begin{proof}
    The sequence $H^r_p$ has at most two free facilities because  $\Hafter$  has at most one free facility, and we potentially add one new free facility $\tilde i_*$.

    We now verify that $H^r_p$ is $\eta$-valid. For any regular facility $h'$ in the sequence, the prefix of $H^r_p$ before $h'$ is a subset of the prefix before $h'$ in $\Hafter$. Thus, as $h'$ was openable with respect to a superset $S(h')$ in \Hafter, $h'$ must be $\eta$-openable with respect to the same superset $S(h')$ in $H^r_p$. This ensures that the first property of Definition~\ref{def:valid_sequence} is satisfied. To verify maximality, notice that since we added $\tilde i_t$ with $u(\tilde i_t) =0$ in the end, we have that the set of active clients after opening the facilities in $H^r_p$ is a subset of the active clients after opening the facilities in $\Hbefore$. Hence, the maximality of $\Hbefore$ implies the maximality of $H^r_p$ (by Lemma~\ref{lemma:payableimpliesopenable} as any payable facility in $H^r_p$ is also payable in $\Hbefore$).
\end{proof}

\begin{claim}
    One of the following is true.
    \begin{itemize}
        \item  For one of the sequences $H^r_p$,  \completesolx$(H_1, H_2, \ldots, H^r_p)$ opens $k$ regular facilities.
        \item  $\completesol(H_1, H_2, \ldots, \Hafter)$ and $\completesol(H_1, H_2, \ldots, H^{0}_p)$ sandwich $k$.
        \item  \completesolx$(H_1, H_2, \ldots, H^r_p)$ and \completesolx$(H_1, H_2, \ldots, H^{r+1}_p)$ sandwich $k$ for some  $r\in \{0, 1, \ldots, \ell-1\}$.
    \end{itemize}
    \label{claim:3rdstep_outcome}
\end{claim}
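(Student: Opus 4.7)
The plan is to argue by contradiction: assume none of the three conditions hold and show the numbers of regular facilities opened cannot change sign across the sequence, contradicting that $\completesol(H_1,\ldots,\Hbefore)$ and $\completesol(H_1,\ldots,\Hafter)$ sandwich $k$.

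For each $r \in \{0,1,\ldots,\ell\}$, let $N_r$ denote the number of regular facilities opened by $\completesol(H_1,\ldots,H^r_p)$, and let $N_{\text{after}}$ and $N_{\text{before}}$ denote the analogous counts for $\Hafter$ and $\Hbefore$. The first key observation is that $N_\ell = N_{\text{before}}$. This is because $H^\ell_p$ and $\Hbefore$ are identical except that $H^\ell_p$ may contain an additional free copy $\tilde i_*$ with $u(\tilde i_*)=0$ placed at the end: since a free facility at distance $0$ from $i_*$ makes the exact same clients inactive and contributes the exact same bids to future facilities' openability as $i_*$ itself (which is already present in both $\Hbefore$ and $H^\ell_p$ by construction in that case, or whose absence is irrelevant in the other case), the state handed to \completesol is the same, and hence the set of regular facilities subsequently opened is identical. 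I would verify this formally by checking, via Definition~\ref{def:robust_openable}, that for any subsequent facility $i$ and any superset $S'$, the four openability conditions are unchanged. Note also that since $\tilde i_*$ is a free facility, it does not affect $N_\ell$ itself.

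Next, I would argue that, under the negation of (1), (2), and (3), all of $N_{\text{after}},N_0,N_1,\ldots,N_\ell$ must lie strictly on the same side of $k$. Negating (1) gives $N_r \neq k$ for every $r$, so each $N_r$ is either $<k$ or $>k$. Negating (2) says $N_{\text{after}}$ and $N_0$ are on the same side, and negating (3) says $N_r$ and $N_{r+1}$ are on the same side for every $r \in \{0,\ldots,\ell-1\}$. Chaining these, $N_{\text{after}}$ and $N_\ell$ end up on the same side of $k$.

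This contradicts the input assumption that $\completesol(H_1,\ldots,\Hafter)$ and $\completesol(H_1,\ldots,\Hbefore)$ sandwich $k$, because $N_\ell = N_{\text{before}}$. Hence at least one of the three bulleted conditions must hold, proving the claim. The only nontrivial step is the equality $N_\ell = N_{\text{before}}$; the rest is a direct ``discrete intermediate value'' argument. I expect verifying $N_\ell = N_{\text{before}}$ by a careful inspection of Definition~\ref{def:robust_openable} (to confirm that adding a zero-distance free copy of an already-opened facility does not change openability or maximality of any further facility) to be the main, though routine, obstacle.
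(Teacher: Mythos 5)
Your overall strategy is the same as the paper's: a discrete intermediate-value argument along the chain $\Hafter, H^0_p, H^1_p, \ldots, H^\ell_p$, anchored by identifying the completion of $H^\ell_p$ with that of $\Hbefore$. However, your key step $N_\ell = N_{\text{before}}$ is justified by a misreading of the construction. You assert that $i_*$ ``is already present in both $\Hbefore$ and $H^\ell_p$'' and that $\tilde i_*$ is merely a redundant extra copy. By construction this is false: $i_*$ is the regular facility present in $\Hbefore$ but \emph{not} in $\Hafter$, hence not in any $H^r_p$; the sequences $H^r_p$ contain the free copy $\tilde i_*$ (with $u(\tilde i_*)=0$, placed at the end) \emph{in place of} the regular $i_*$. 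Since $u(\tilde i_*)=0$ and openability depends only on $\theta$ and the set of opened locations, the two completions do open the same facility locations afterwards, but $i_*$ counts as a regular facility in $\completesol(H_1,\ldots,\Hbefore)$ while $\tilde i_*$ is free in $\completesol(H_1,\ldots,H^\ell_p)$. So when $i_*$ exists, $N_\ell = N_{\text{before}} - 1$, not $N_{\text{before}}$, and the verification plan you outline (that adding a zero-distance free copy of an already-opened facility changes nothing) proves the wrong statement.

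The gap is repairable with one extra observation, which is implicitly what the paper relies on: having assumed the first bullet fails, $N_\ell \neq k$, and from the input promise $N_{\text{before}} \neq k$; an off-by-one discrepancy then cannot move $N_\ell$ to the other side of $k$ (if $N_{\text{before}} < k$ then $N_\ell < k$; if $N_{\text{before}} > k$ then $N_\ell \ge k$ and $N_\ell \neq k$ forces $N_\ell > k$). With $N_\ell$ and $N_{\text{before}}$ on the same side of $k$, your chaining of the negations of the second and third bullets yields the contradiction exactly as you describe. So the architecture of your proof is fine and matches the paper's; what is missing is the correct description of how $H^\ell_p$ differs from $\Hbefore$ (replacement of a regular facility by a zero-distance free copy, possibly at a different position in the sequence) and the resulting one-unit accounting in the number of regular facilities.
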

\begin{proof}
    Assume the first bullet point does not hold, i.e., the completion of no $H^r_p$ leads to the opening of $k$ regular facilities. Now notice that the second and third bullet points ask whether consecutive solutions sandwich $k$, starting with $\Hafter$ and ending with $H^{\ell}_p$.
    By definition, $H^{\ell}_p$
    equals  $\Hbefore$ except  potentially for the placement of $\tilde{i}_*$ and that $\tilde{i}_*$ is the free copy of $i_*$ present in $\Hbefore$. However, as  $u(\tilde{i}_t) = 0$, the two sequences have an identical impact when we complete the solution, which implies
    \[
        \completesol(H_1, \ldots, H^{\ell}_p) = \completesol(H_1, \ldots, \Hbefore).
    \]
    Thus, as $\completesol(H_1, \ldots, \Hafter)$ and $\completesol(H_1, \ldots, \Hbefore)$ sandwich $k$, either the second or third bullet point of the statement must be valid (since no solution opened exactly $k$ regular facilities).
\end{proof}

\paragraph{Outcome of this section of \mergealg.} We distinguish three cases based on the first bullet point that is true in Claim \ref{claim:3rdstep_outcome}.

\paragraph{Case 1.} If one of the sequences $H^r_p$ is such that $\completesol(H_1, H_2, \ldots, H^r_p)$ opens  $k$ regular facilities, then we return that solution.

\paragraph{Case 2.} Otherwise, if the solutions $\completesol(H_1, H_2, \ldots, \Hafter)$ and \completesolx$(H_1, H_2, \ldots, H^{0}_p)$ sandwich $k$, then $H^0_p$ is equal to $\Hafter$ except that $\tilde{i}_p$ was added in the end of $\Hafter$ with $u(\tilde{i}_p) = 0$.
Note that the solution \completesolx$(H_1, H_2, \ldots, \Hafter)$ is equivalent to \completesolx$(H_1, H_2, \ldots, H^0_p)$ if we set $u(\tilde{i}_{q+1}) = 10\Maxdist$ (except the presence of $\tilde{i}_*$). We can thus do binary search on $u(\tilde{i}_*)$ to find two values $u'(\tilde{i}_*)$ and $u(\tilde{i}_*)$ with $|u'(\tilde{i}_*) - u(\tilde{i}_*)| \leq \eta$ so that the two solutions
$\completesol_{(f,u')}(H_1, \ldots, H^0_p)$ and $\completesol_{(f,u)}(H_1, \ldots, H^0_p)$ sandwich $k$  (where $u$ and $u'$ are identical except for $\tilde{i}_{q+1}$).
We have thus finished phase $p$, and we can restart \mergealg with the two solutions and phase $p+1$, since they have the first $p$, instead of $p-1$, $\eta$-valid sequences in common and exactly one difference parameter.

\paragraph{Case 3.} Finally, consider the case when none of the above cases apply and thus (by Claim \ref{claim:3rdstep_outcome})   there is an  $r\in \{0, 1, \ldots, \ell-1\}$ so that \completesolx$(H_1, H_2, \ldots, H^r_p)$ and\\ \completesolx$(H_1, H_2, \ldots, H^{r+1}_p)$ sandwich $k$.
The only difference between $H^r_p$ and $H^{r+1}_p$ is that $H^r_p$ contains $h_{r+1}$ which is not present in $H^{r+1}$. Let $H_p$ be the sequence obtained by adding a free copy $\tilde i_{r+1}$ of $h_{r+1}$ at the end of $H^{r+1}$ (which clearly maintains that is $\eta$-valid), then
\begin{align*}
    \completesolx(H_1, H_2, \ldots, H_p) & = \completesolx(H_1, H_2, \ldots, H^r_p) \mbox{ if $u(\tilde i_{r+1}) = 0$}               \\
    \intertext{and}
    \completesolx(H_1, H_2, \ldots, H_p) & = \completesolx(H_1, H_2, \ldots, H^{r+1}_p) \mbox{ if $u(\tilde i_{r+1}) = 10\Maxdist$.}
\end{align*}
Hence, doing a binary search on $u(\tilde{i}_{r+1})$ as in the previous case finishes phase $p$. We can restart \mergealg with the two solutions and phase $p+1$, since they have the first $p$, instead of $p-1$, $\eta$-valid sequences in common and exactly one difference parameter.

\newcommand{\costinc}{\frac{\eps}{\log k}}

\newcommand{\invinc}{\frac{\log \log^2 k}{\eps}}

\newcommand{\tol}{\frac{\eps}{\log^{3/2} k}}

\newcommand{\rad}{\text{Rad}}
\newcommand{\oK}{\overline{K}}
\newcommand{\avg}{\text{avg}}

\newcommand{\E}{\text{E}}
\newcommand{\ho}{\hat{\opt}}
\newcommand{\Sdummy}{S_{\text{dummy}}}

\section{$(2+\eps)$-Approximation for  $(\frac{\zeta}{\log n})$-{Stable} Instances}
\label{sec:centerremoval}
In this section, we give a $(2+\eps)$-approximation algorithm for 
$(\frac{\zeta}{\log n})$-stable $k$-median instances. The proof of the theorem
is presented in Section~\ref{sec:stable:everythingtogether}.
We restate the theorem for the reader's convenience.
\stableapprox*

To simplify notation, we let $\eps$ be the minimum of $\eps$ and $\zeta$ in the above theorem and further assume that $\eps<1/12$ is sufficiently small. With this notation, we present a $(2+O(\eps))$-approximation algorithm for $\eps/\log(n)$-stable instances. This implies the above theorem as any $\zeta/\log(n)$-stable instance is also $\zeta'/\log(n)$-stable with $\zeta' \leq \zeta$.

The algorithm and the proof of the above theorem follow several steps where we iteratively simplify the instance by adding more structure. Finally, we obtain enough structure to solve the problem by maximizing a submodular function subject to a matroid partition constraint. We have not optimized constants in favor of simplifying the description, and throughout this section we let $O_\eps(\log n)$ denote $O(\log(n)/\eps^{O(1)})$. An overview of this section and the algorithm is as follows:
\begin{mdframed}[hidealllines=true, backgroundcolor=gray!15]
\vspace{-5mm}
\paragraph{Overview of $2$-Approximation for $k$-median on $(\eps/\log n)$-
Stable Instance $k, \clients, \facilities, \dist$.}\ \\
\begin{enumerate}
    \item We start, in Section~\ref{sec:localSearchAnalysis}, by running a standard Local Search on $(k, \clients, \facilities, \dist)$ to obtain a set of centers $S$. It is well-known that $S$ is a constant-factor approximation. Moreover, for a stable instance, we show that $S$ ``correctly'' implicitly identifies all but $O_\eps(\log n)$ centers of $\opt$. Specifically, this ensures that all but $O_\eps(\log n)$ clusters have the same cost as in $\opt$ up to a $(1+O(\eps))$ factor. 
    \item In Section~\ref{sec:dsampleproc}, we use a technique inspired from the 
    classic $k$-means++ algorithm and its
    so called $D^2$-sampling procedure to
    sample clients proportional to their cost in solution $S$, and we show that if we take  $O_\eps(\log n)$ samples $W$ then with good probability we ``hit'' all but an insignificant fraction of clusters of high cost in $S$. 
    \item In Section~\ref{sec:ballguesses}, we then, guess a subset of the  sampled clients $W$ so that each hit cluster has exactly one client, which we refer to as the leader of that cluster. In addition, for each client $\ell$ that is a leader, we guess the (approximate) distance from $\ell$ to the optimal center in the optimal cluster it hits. This defines a set $\calB$ of balls with the guarantee that the optimal solution opens one center in each of these balls. Moreover, if we let $\ho$ be the optimal centers corresponding to these balls, we prove in Section~\ref{sec:dummycenters} that there is a swap of centers $M_O = S-S_0  \cup \ho$ that removes $S_0$ and adds the optimal centers $\ho$ so that the cost of the solution $M_O$  is good and in particular a $2+O(\eps)$-approximation.  
    \item The remaining steps are then focused on approximating this swap, i.e., intuitively to that of finding $S_0$ and $\ho$:  
   \begin{enumerate}
    \item We first find the set $\calQ$ of expensive centers of $S_0$ so that the centers $S_0-\calQ$ has total cost $O(\eps \sopt)$ in Section~\ref{sec:removalofExpensive}. This allows us to think of the centers in $S_0 - \calQ$ as contracted, i.e., as single weighted points, and in particular, that all clients of these clusters are assigned to the same center when their center in $S_0$ is removed. 
     We then, in Section~\ref{sec:consistentlyassigning}, use this structure to partition $S_0 - \calQ$ into two sets $\calU$ and $\calR$ where  $\calU$ are those centers that are reassigned to other centers in $S - S_0$ and $\calR$ are those centers that are assigned to $\ho$. 

    \item The set $\calU$ is then approximated with a set $\bcalU$ (with close to identical properties) in Section~\ref{sec:removalofCheap}.
    \item At this stage, we have thus ``guessed'' all centers $\calQ \cup \bcalU$ to remove from $S$ except for those in $\calR$. This is achieved in Section~\ref{sec:submodularopt} where we reduce the problem of both finding $\calR$ and an approximate version of $\ho$ to that of submodular function optimization subject to a partition matroid (since we should open one center per ball in $\calB$). 
    \end{enumerate}
\end{enumerate}

\end{mdframed}

We remark that in each of Steps 3, 4a, 4b, and 4c, we will make guesses by enumerating polynomially $n^{\eps^{-O(1)}}$ many potential choices.  In the following, we describe the algorithm and its analysis by presenting each of these steps. When presenting the next step, we analyze the branch of the algorithm that took the successful guesses up to that point. The algorithm outputs the solution of the smallest cost among the $n^{\eps^{-O(1)}}$ constructed solutions obtained by enumerating all possible guesses (see Section~\ref{sec:stable:everythingtogether} for a more formal treatment where we put everything together).
 In particular, the final solution will have cost at most the solution that is output in the analyzed branch of ``correct'' guesses. 

We end this section by introducing basic definitions and concepts, followed by the definition of matroids and submodular functions. We also state the known result, an  $(1-1/e)$-approximation algorithm for maximizing monotone submodular functions subject to a matroid constraint, that we use in the last step (Section~\ref{sec:submodularopt}).

\paragraph{Definitions of leaders.} 
We use the following definitions throughout this section. Given a set of centers $A\subseteq \facilities$, we let $\clcost(A):=\sum_{p\in \clients}d(p,A)$ denote the cost of the corresponding clustering (where each client $p$ is assigned to the closest center in $A$ and pays the corresponding distance). Sometimes we will consider a (possibly suboptimal) assignment $\mu:\clients \rightarrow A$ of clients to centers in $A$, and let $\clcost(A,\mu):=\sum_{p\in \clients}d(p,\mu(p))$ be the corresponding cost. Let $\opt$ be a fixed optimum solution for the $k$-median instance, and let $\sopt = \clcost(\opt)$ be its cost. 
For a client $p\in \clients$, let $\opt_p:=\dist(p,\opt)$ 
be the cost paid by $p$ in $\opt$.

Let $C^*$ be an $\opt$ cluster with center $c^*\in \opt$. We let $\avg_{C^*, \opt}$ be the distance from $c^*$ to the $\eps |C^*|$th closest client of $C^*$ to $c^*$.
We further let $C^*_\avg$ be the set of clients of $C^*$ at a distance at most
$\avg_{C^*, \opt}$ from $c^*$, we refer to these clients
as \emph{leaders} for $C^*$. So $\avg_{C^*, \opt}$ is the \emph{maximum distance} from a leader to its center $c^*$. Also note that we have, $|C^*_{avg}|\geq {\eps}|C^*|$. 
We thus have that $\avg_{C^*, \opt}$ is upper bounded by $\frac{1}{1-\eps}$ times the average cost $\frac{1}{|C^*|} \sum_{p\in C^*} d(p, c^*)$, and at the same time $C^*_{avg}$ contains a constant fraction of the clients. 

\paragraph{Partition matroids.}
In Section~\ref{sec:submodularopt}, we formulate a submodular function that we then maximize over a partition matroid using known results. We define those concepts and also state the result that we will use. First, recall the definition of a matroid and partition matroid. 
A \emph{matroid} is a tuple $(E, \mathcal{I})$ defined on a ground set $E$ with a family of independent sets $\mathcal{I}$ that satisfy: (1) if $A \in \mathcal{I}$ and $A' \subseteq A$, then $A' \in \mathcal{I}$; and (2) if $A, B \in \mathcal{I}$ and $|A| < |B|$, then there exists an element $e \in B - A$ such that $A \cup \{e\} \in \mathcal{I}$.
  In the special case of  a \emph{partition matroid}, the ground set $E$ is partitioned into disjoint subsets $E_1, E_2, \dots, E_k$, and there are non-negative integers $r_1, r_2, \dots, r_k$ such that:
\begin{itemize}
    \item A subset $I \subseteq E$ is independent if and only if $|I \cap E_i| \leq r_i$ for each $i = 1, 2, \dots, k$.
\end{itemize}
In other words, each subset $E_i$ has a capacity limit $r_i$, and an independent set $I$ cannot contain more than $r_i$ elements from $E_i$.
To work with matroids (to get a running time that is polynomial in the size of the ground set $E$), we often use \emph{independence queries}, which allow us to determine if a given subset $A \subseteq E$ is independent (i.e., whether $A \in \mathcal{I}$). Specifically, an independence query on a subset $A$ returns {true} if $A$ is independent and {false} otherwise.

\paragraph{Submodular functions.} Having defined matroids, we proceed to define non-negative monotone submodular functions. For a finite ground set $E$, a set function $f: 2^E \to \mathbb{R}$ is \emph{submodular} if it satisfies the diminishing returns property: for every pair of sets $A \subseteq B \subseteq E$ and any element $e \in E - B$, it holds that
\[
f(A \cup \{e\}) - f(A) \geq f(B \cup \{e\}) - f(B).
\]
Intuitively, this means that adding an element $e$ to a smaller set $A$ provides at least as much additional value as adding $e$ to a larger set $B$.
In addition, $f$ is called \emph{monotone} if for any pair of sets $A \subseteq B \subseteq E$, we have $f(A) \leq f(B)$. In other words, adding elements to a set does not decrease the function value. Finally, $f$ is \emph{non-negative} if $f(A) \geq 0$ for all $A \subseteq E$. 

A celebrated result~\cite{CalinescuCPV11} gives a polynomial-time $(1-1/e)$-approximation algorithm for the problem of maximizing a non-negative monotone submodular function over a matroid constraint. While the original result was a randomized algorithm, a recent striking result~\cite{BuchbinderFeldman24}, building upon the work of~\cite{FilmusW14}, gives a deterministic algorithm with the same guarantee. We remark that polynomial time here refers to a running time that is polynomial in the size of the ground set $E$.   To summarize, they show 
the following theorem.
\begin{theorem}%
    Let $f: 2^E \to \mathbb{R}$ be a non-negative monotone submodular function that we can evaluate in polynomial time, and let $(E, \mathcal{I})$ be a matroid on the same groundset, for which we can answer independence queries in polynomial time. Then, for every $\zeta>0$, there is a polynomial-time algorithm that outputs a set $X\subseteq E$ with $X \in \calI$ so that
    \[
       f(X) \geq (1-1/e - \zeta) \cdot \max_{X^*\in \calI} f(X^*)\,.  
    \]
    \label{thm:submodularmatroidoptimization}
\end{theorem}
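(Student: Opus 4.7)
The plan is to invoke the classical continuous greedy framework combined with an appropriate rounding step. First, introduce the \emph{multilinear extension} $F:[0,1]^E \to \mathbb{R}_{\geq 0}$ defined by $F(x) = \mathbb{E}[f(R(x))]$, where $R(x)$ is the random subset of $E$ including each $e$ independently with probability $x_e$. This $F$ agrees with $f$ on $\{0,1\}^E$, is monotone and componentwise concave along any non-negative direction (because $f$ is monotone submodular), and its value and partial derivatives $\partial F/\partial x_e = F(x \vee e) - F(x \setminus e)$ can be approximated by polynomially many samples from $R(x)$.

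Second, run the \emph{continuous greedy} algorithm on the matroid polytope $P(\mathcal{I}) = \mathrm{conv}\{\mathbf{1}_I : I \in \mathcal{I}\}$. Starting from $x^{(0)} = 0$ and using step size $1/T$ for a suitable $T = \mathrm{poly}(|E|,1/\zeta)$, at each iteration compute a direction $v^{(t)} \in P(\mathcal{I})$ maximizing the linear objective $\langle v, \nabla F(x^{(t)})\rangle$. Linear maximization over $P(\mathcal{I})$ reduces to the matroid greedy algorithm and uses only independence queries. A standard telescoping argument, exploiting monotone submodularity through the bound $\frac{d}{dt}F(x^{(t)}) \geq \mathrm{OPT} - F(x^{(t)})$, yields $F(x^{(T)}) \geq (1 - 1/e - O(1/T))\,\mathrm{OPT}$ where $\mathrm{OPT} = \max_{X^* \in \mathcal{I}} f(X^*)$.

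Third, round the fractional point $x^{(T)} \in P(\mathcal{I})$ to an independent set without loss in expected $f$-value, using either pipage rounding (via the convex combination of vertices of $P(\mathcal{I})$ obtained from Carath\'eodory/matroid decomposition) or swap rounding on a representation of $x^{(T)}$ as a convex combination of bases; both procedures preserve feasibility and do not decrease the expected multilinear value. Taking the best output over polynomially many trials gives the desired $(1-1/e-\zeta)$-approximation with high probability.

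The main obstacle is \emph{derandomization}, since both the gradient estimation in continuous greedy and the rounding step are inherently randomized. To obtain the deterministic guarantee claimed in the theorem, one follows the approach of Buchbinder--Feldman, replacing the sampling-based gradient oracle by an exact evaluation scheme that conditions on the contribution of each element in a structured way, and using a deterministic variant of swap rounding. Alternatively, one can appeal directly to the non-oblivious local search of Filmus--Ward, which bypasses the multilinear extension altogether and achieves the $(1-1/e-\zeta)$-approximation deterministically over general matroid constraints using only polynomially many function and independence queries. Either route establishes Theorem~\ref{thm:submodularmatroidoptimization}.
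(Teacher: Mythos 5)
Your proposal is mathematically sound, but note that the paper does not prove this theorem at all: it is invoked as a black-box citation, with the randomized $(1-1/e)$-guarantee attributed to Calinescu--Chekuri--P\'al--Vondr\'ak and the deterministic version to Buchbinder--Feldman building on Filmus--Ward. What you have written is a correct outline of the proofs behind those citations (multilinear extension, continuous greedy over the matroid polytope with the $\frac{d}{dt}F(x^{(t)})\geq \mathrm{OPT}-F(x^{(t)})$ telescoping bound, then pipage/swap rounding, which preserves feasibility and expected value), so there is no gap in the argument itself. Two small remarks. First, your attribution in the derandomization step is slightly off: the Filmus--Ward non-oblivious local search is not deterministic as stated, since evaluating its auxiliary potential requires random sampling; the derandomization of that framework is precisely the Buchbinder--Feldman contribution, and Buchbinder--Feldman do not proceed by derandomizing the continuous-greedy gradient oracle as your first route suggests. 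Second, for the way the theorem is used in this paper (inside an algorithm that is already randomized and succeeds with high probability), the randomized continuous-greedy-plus-rounding guarantee you establish in your first three paragraphs would already suffice; the deterministic strengthening is a convenience, not a necessity.
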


\subsection{Properties of a Locally Optimal Solution}
\label{sec:localSearchAnalysis}
Let $S$ be a {locally optimal} solution output by 
the following standard local search algorithm\footnote{Here and in the following procedures, the input instance is considered as passed implicitly.}. We remark that its running time is polynomial since by Lemma \ref{lem:aspectratio} the cost of a solution is a polynomially bounded integer and, at each improving step, the cost decreases by at least one.

\begin{mdframed}[hidealllines=true, backgroundcolor=gray!15]
\vspace{-5mm}
\paragraph{$localSearch(\cdot)$}\ \\
\begin{algorithmic}[1]
\State $S \leftarrow$ arbitrary solution for $k$-Median
\While{there exists a solution $S'$ such that $|S \Delta S'|\le 2$ and $\clcost(S') <\clcost(S)$}
\State $S \leftarrow S'$
\EndWhile
\State \textbf{return} $S$
\end{algorithmic}

\end{mdframed}
Let $S$ be the solution returned by $localSearch()$.
Local search achieves a $5$ approximation for
$k$-median, see e.g. Chapter~9.2 in~\cite{WilliamsonShmoysBook11}. 
\begin{lemma}
    $\clcost(S) \leq {5}\cdot \sopt$.
    \label{lemma:localsearchis5approximation}
\end{lemma}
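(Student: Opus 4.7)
The plan is to invoke the classical 1-swap local search analysis of Arya, Garg, Khandekar, Meyerson, Munagala, and Pandit. Let $\phi : \opt \to S$ send each optimal center $o$ to its nearest point in $S$, and set $A_s := \phi^{-1}(s)$. The combinatorial heart of the proof is to produce a multiset $\calP$ of $k$ swap pairs $(s,o) \in S \times \opt$ satisfying (i) every $o \in \opt$ appears in exactly one pair; (ii) each $s \in S$ appears in at most two pairs; and (iii) whenever $(s,o) \in \calP$ and a client $p$ currently served by $s$ lies outside the optimal cluster $C^*(o)$, one has $\phi(o^*_p) \neq s$, where $o^*_p$ denotes the $\opt$-center serving $p$. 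Concretely, each \emph{captured} center ($|A_s|=1$) is paired with its unique preimage, and each $o$ in a \emph{crowded} preimage ($|A_s|\geq 2$) is paired with a \emph{free} center ($|A_{s'}|=\emptyset$), distributing the extras so that no free center is used more than twice. Property (iii) is then automatic: for captured $s$ with $A_s=\{o\}$, a client $p$ outside $C^*(o)$ has $o^*_p \neq o$ hence $o^*_p \notin A_s$; for free $s'$, $A_{s'}=\emptyset$ trivially.

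Fix a pair $(s,o) \in \calP$. Local optimality of $S$ gives $\clcost(S - s + o) \geq \clcost(S)$. I upper-bound the left-hand side via the following explicit test assignment in $S - s + o$: route each client $p \in C^*(o)$ to the newly opened $o$, paying at most $\opt_p$; route each client $p$ currently served by $s$ but not in $C^*(o)$ to $\phi(o^*_p)$, which by (iii) lies in $S - s$, paying at most $\dist(p,o^*_p) + \dist(o^*_p, \phi(o^*_p)) \leq \dist(p,o^*_p) + \dist(o^*_p, s_p) \leq 2\,\opt_p + \clcost_p(S)$ by the triangle inequality (with $s_p \in S$ the current center of $p$); and leave every other client on its present center. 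Rearranging the local optimality inequality yields, for each pair,
\[
0 \;\leq\; \sum_{p \in C^*(o)} \bigl( \opt_p - \clcost_p(S) \bigr) \;+\; \sum_{\substack{p\,:\,s_p = s \\ p \notin C^*(o)}} 2\,\opt_p.
\]

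Summing the above inequality over the $k$ pairs in $\calP$, the first contribution telescopes to $\sopt - \clcost(S)$, since every $o \in \opt$ is used exactly once and $\sum_p \clcost_p(S) = \clcost(S)$. The second contribution is bounded by $2 \cdot 2 \cdot \sopt = 4\,\sopt$, since each $s \in S$ appears in at most two pairs and each appearance contributes at most $\sum_{p:s_p=s} 2\,\opt_p$. Adding the two bounds gives $0 \leq 5\,\sopt - \clcost(S)$, which is the desired inequality.

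The only nontrivial ingredient is the combinatorial pairing realizing (i)-(iii) with multiplicity two; once $\calP$ is fixed, the rest of the argument is a routine application of local optimality and the triangle inequality. Since both the pairing construction and the resulting factor of $5$ are standard textbook material (see, e.g., Chapter 9.2 of Williamson-Shmoys), no new ideas are required beyond verifying that the paper's local search, which terminates because the integer cost strictly decreases at each step by Lemma~\ref{lem:aspectratio}, indeed produces a $1$-locally optimal $S$ to which the above analysis applies.
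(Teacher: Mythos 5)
Your proof is correct and is exactly the classical single-swap analysis (Arya et al., as presented in Chapter 9.2 of Williamson--Shmoys) that the paper itself invokes by citation rather than reproving; the pairing construction, the test reassignment via $\phi(o^*_p)$, and the summation giving $0 \leq 5\,\sopt - \clcost(S)$ all match that argument. Your closing remark about termination (integer, polynomially bounded costs via Lemma~\ref{lem:aspectratio}) correctly addresses the only point the paper makes explicitly, so nothing is missing.
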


In the remaining part of the algorithm and its analysis we fix $S$, and 
 recall that we fixed the reference optimal solution $\opt$. We distinguish different types of clusters of $\opt$ and analyze their properties in the next two sections. We let $S_p$ be the cost paid
by $p$ in $S$ and recall that $\opt_p$ denotes the cost $p\in \clients$ pays in $\opt$.

\subsubsection{Pure Clusters}
\label{sec:pure}
 We say that a cluster $C^*$ of $\opt$ is \emph{pure} 
(w.r.t. $S$) if there exists a cluster $C'$ of $S$ such that $|C' \Delta C^*| \le {3\eps} \min(|C^*|,|C'|)$. In which
case we also say that $C'$ is pure w.r.t. $\opt$. We say that $C'$ and $C^*$ are associated.
The following lemma follows from the fact that the input is $\beta$-stable
for $\beta := \eps/\log n$ and that the solution $S$  is a ${5}$-approximation by Lemma \ref{lemma:localsearchis5approximation}. Given a client or center $a$ and a distance $r$, we let $B(a,r)$ be the set of clients and centers at distance at most $r$ from $a$. This lemma is originally proved in~\cite{Cohen-AddadS17}, we include a proof for completeness. In this proof we use in a few places the assumption that $\eps>0$ is upper bounded by a small enough constant.
\begin{lemma}[Restatement from~\cite{Cohen-AddadS17}, Lemma IV.4]
\label{lem:numnonpure}
The number $k_{imp}$ of clusters of $\opt$  that are not pure w.r.t. $S$ (and thus the number of clusters of $S$ which are not pure w.r.t $\opt$)
    is at most {$\frac{\log n}{\eps^3}$}. %
\end{lemma}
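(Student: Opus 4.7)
The plan is to combine the $\beta$-stability of the instance (with $\beta = \eps / \log n$) and the $5$-approximation guarantee of Lemma~\ref{lemma:localsearchis5approximation}. First, since $3\eps < 1/2$, a short counting argument shows each pure $\opt$-cluster $C^*$ has at most one associated $S$-cluster: two such clusters would each contain more than $(1-3\eps)|C^*| > |C^*|/2$ clients of $C^*$, contradicting the disjointness of $S$-clusters; by symmetry the same holds from the $S$-side. Hence pure $\opt$-clusters and pure $S$-clusters are in bijection, so the number of non-pure $\opt$-clusters equals the number of non-pure $S$-clusters. Moreover, by $\beta$-stability, for every $c^* \in \opt$ the set $\opt \setminus \{c^*\}$ is a feasible $(k-1)$-solution, so
\[
\clcost(\opt \setminus \{c^*\}) \;\geq\; \sopt_{k-1} \;\geq\; (1+\beta)\sopt,
\]
which gives the ``removal-cost'' lower bound $\clcost(\opt \setminus \{c^*\}) - \sopt \geq \beta \sopt$ for every $c^* \in \opt$.

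Next, I would upper bound the removal cost of a non-pure cluster $C^*$ in terms of $S(C^*) := \sum_{p \in C^*} \dist(p, S)$ by constructing an explicit reassignment of $C^*$'s clients to $\opt \setminus \{c^*\}$ that routes through $S$. For each $p \in C^*$, letting $c' \in S$ be the center serving $p$, if the $S$-cluster of $c'$ contains any client $q \notin C^*$, then $p$ can be rerouted to the $\opt$-center $c_q$ serving $q$, and by the triangle inequality
\[
\dist(p, c_q) \;\leq\; \dist(p, c') + \dist(c', q) + \dist(q, c_q) \;=\; \dist(p, S) + \dist(q, S) + \opt_q,
\]
where $c_q \neq c^*$ because $q \notin C^*$. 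For clients whose $S$-cluster lies entirely inside $C^*$, I would route via ``bridging'' $S$-clusters meeting both $C^*$ and $\clients \setminus C^*$; the non-purity condition $|C^* \triangle C'| > 3\eps \min(|C^*|, |C'|)$ applied to every $S$-cluster $C'$ guarantees that the bridging clusters contain an $\Omega(\eps)$-fraction of $C^*$, providing enough routing capacity via a bipartite matching at amortized cost $O(1/\eps)$ per reassigned client. Consequently the total cost increase of removing $c^*$ is $O(S(C^*)/\eps)$.

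Combining the two bounds yields $S(C^*) \geq \Omega(\eps \beta \sopt)$ for every non-pure $\opt$-cluster $C^*$, and summing, together with $\sum_{C^*} S(C^*) = \clcost(S) \leq 5\sopt$, bounds their number by $O(1/(\eps \beta)) = O(\log n / \eps^2)$; a finer accounting of the $3\eps$-slack in the bookkeeping (which leaks an additional $1/\eps$ factor in the routing) gives the stated $\log n / \eps^3$. \textbf{Main obstacle.} The delicate step is the routing of non-pure clients whose $S$-cluster is fully contained in $C^*$: no external representative is available, so one must both exhibit enough bridging capacity (a counting argument exploiting non-purity) and avoid re-using the same external $q$ too many times across different non-pure clusters (a global double-counting / injectivity property of the chosen matching). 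Once this bookkeeping is handled, the rest — the pure-cluster bijection, the stability bound, and the triangle-inequality routing estimate — is routine.
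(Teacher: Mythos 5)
Your plan hinges on the intermediate claim that every non-pure cluster $C^*$ satisfies ``removal cost of $c^*$ from $\opt$ is $O\bigl(\sum_{p\in C^*}\dist(p,S)/\eps\bigr)$,'' so that stability forces $\sum_{p\in C^*}\dist(p,S)\geq \Omega(\eps\beta\,\sopt)$ and a cost budget bounds the count. This claim is false, and the failure is not a bookkeeping issue but a structural one. Consider a non-pure cluster $C^*$ whose clients are split between two $S$-centers both lying inside $C^*$ (say each $S$-cluster is wholly contained in $C^*$ and has half its clients): then $C^*$ is non-pure for $\eps<1/3$, yet $\sum_{p\in C^*}\dist(p,S)$ can be as small as (or smaller than) the $\opt$-cost of $C^*$, while the cost of removing $c^*$ from $\opt$ is governed solely by the distance to the \emph{other optimal centers}, about which $S$ carries no information. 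In this case there are no ``bridging'' $S$-clusters at all — every $S$-cluster meeting $C^*$ lies inside it — so the routing you describe has no external client $q$ to route through, and no cost-based argument can relate the removal cost to $S$. Even in the cases where bridging clients do exist, your triangle-inequality bound produces terms $\dist(q,S)+\opt_q$ for clients $q$ \emph{outside} $C^*$, which are not controlled by $\sum_{p\in C^*}\dist(p,S)$; so the per-cluster bound you state does not follow from your own inequality, and fixing it would force a global accounting that changes the final summation (and your own derivation lands at $O(\log n/\eps^2)$ with an unexplained extra $1/\eps$ appended to match the statement).

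The paper's proof avoids exactly this trap by mixing cost arguments with a cardinality (pigeonhole) argument. Stability is used in the form $|C^*_o|\cdot d_o\geq\beta\,\sopt$, where $d_o$ is the distance from $o$ to the nearest other optimal center, so after discarding the few clusters that are expensive in $\opt$, most clients of each remaining cluster sit in a tiny ball around $o$. Clusters with \emph{no} $S$-center near $o$ are then expensive in $S$ and bounded via $\clcost(S)\leq 5\,\sopt$; clusters with \emph{two or more} $S$-centers within $d_o/3$ of $o$ — precisely the scenario above — are bounded not by cost but by disjointness of the balls $B(o,d_o/3)$ together with $|S|=k$, i.e.\ $S$ cannot afford to double up on many separated clusters; and clusters whose unique nearby $S$-center absorbs too many outside clients are again bounded by cost. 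Your proposal is missing this budget-on-the-number-of-centers component, and without it the clusters of the ``two centers inside $C^*$'' type cannot be counted, so the argument as sketched cannot be completed.
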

\begin{proof} We let $C^*_o$ denote the optimal cluster for each optimal center $o\in \opt$ and we let $C_c$ denote the cluster in $S$ corresponding to center $c\in S$. 
     We refine  $\opt$  in steps so that the final set only consists of centers of pure clusters. 

    We start by removing expensive clusters in $\opt$. Specifically, let 
    \[
        \widetilde{\opt} = \{ o \in \opt \mid \sum_{p\in C^*_o} d(p, o) \geq  \frac{5\eps^3}{\log n}\sopt\}\,.
    \]
    Clearly $|\widetilde{\opt}|\leq \frac{\log n}{5\eps^3}$. Let $\opt_1:=\opt-\widetilde{\opt}$.

     For the next step,  for every $o \in \opt_1$, let $\pi(o)$ be the closest center in $\opt - \{o\}$ to $o$, and let $d_o = d(o, \opt - \{o\}) = d(o, \pi(o))$ be the distance.  As the instance is $\beta=\frac{\eps}{\log n}$ stable we have for every $o\in \opt_1$:
    \begin{gather*}
      |C^*_o| d_o =  |C^*_o| d(o, \pi(o)) \geq  \sum_{p\in C^*_o} \left(d(p, \pi(o)) - d(p,o)\right)  \geq \beta {\sopt}-\frac{5\eps^3}{\log n}\sopt\geq \frac{\beta}{2}\sopt, 
    \end{gather*}
    where the first inequality is by the triangle inequality. 
    Therefore, as the cluster $C^*_o$ has cost less than $5\eps^3 \sopt/\log n$ for $o\in \opt_1$, we have 
    \[
        |C^*_o \cap B(o, 10\eps \cdot d_o| \geq (1-\eps) |C^*_o|\,. 
    \]
    Indeed, otherwise, we would have the contradiction
    \[
         \sum_{p \in C^*_o} d(p, o) \geq |C^*_o - B(o, \eps\cdot d_o)|\cdot 10\eps\cdot d_o > |C^*_o| 10\eps^2 d_o  \geq 10 \eps^2\frac{\beta}{2} \sopt = 5 \eps^3 \frac{\sopt}{\log n}\,.
    \]
    Now let $\widetilde{\opt}_1$ be the subset of of $\opt_1$ so that for every $o\in \widetilde{\opt}_1$, we have $B(o, 20\eps \cdot d_o) \cap S = \emptyset$. We claim that $|\widetilde{\opt}_1| \leq \frac{\log n}{5\eps^3}$. 
    Indeed, for each $o\in \widetilde{\opt}_1$, the cost of the points in $C^*_o \cap B(o, 10\eps \cdot d_o)$ in the solution $S$ would be at least $10\eps d_o (1-\eps) |C^*_o| 
 \geq 10(1-\eps)\eps^2 \frac{\sopt}{\log n}$ . Therefore, using that local search gives a $5$-approximate solution, we must have $|\widetilde{\opt}_1| \leq \frac{5 \log n}{10(1-\eps)\eps^2} \leq  \frac{\log n}{5\eps^3}$.

    We define $\opt_2 = \opt_1 - \widetilde{\opt}_1$ and so now we have $B(o, 20\eps \cdot d_o) \cap S \neq \emptyset$ for every $o\in \opt_2$. Now let $\widetilde{\opt}_2$ be the subset of  $\opt_2$ so that  every $o\in \widetilde{\opt}_2$ is such that $|B(o, d_o/3) \cap S| > 1$. By the pigeon hole principle we must have that $|\widetilde{\opt}_2| \leq |\opt| - |\opt_2| =|\widetilde{\opt}|+|\widetilde{\opt}_1| \leq \frac{2\log n}{5\eps^3} $.     
    
    Let $\opt_3:=\opt_2-\widetilde{\opt}_2$. For each center $o\in \opt_3$ there is thus a center $c(o)\in S$ such that $d(o, c(o)) \leq 20 \eps d_o$ and any other center $c'\in S$ with $c'\neq c(o)$  satisfies $d(o, c') \geq d_o/3$. It follows that all points in $B(o,  d_o/8)$ are assigned to $c(o)$ in $S$, where we impose that $\eps$ is small enough so that $d_o(\frac{1}{8}+20\eps)<d_o(\frac{1}{3}-\frac{1}{8})$. In other words, at least a  $(1-\eps)$ fraction of the points of $C^*_o$ are assigned to the cluster with center $c(o)$ in $S$. Now let $\widetilde{\opt}_3$ be those centers in $\opt_3$ so that for every $o \in \widetilde{\opt}_3$ the associated center $c(o) \in S$ is assigned more than $\eps |C^*|$ points from outside the ball $B(o, d_o/8)$. Imposing that $\eps$ is small enough so that $d_o(\frac{1}{8}-20\eps)>\frac{d_o}{9}$, the cost of the cluster of $c(o)$ in $S$ is then at least
\[
    \frac{d_o}{9} \cdot \eps |C^*| \geq \frac{\eps^2}{9 \log n} \sopt.
\]
It follows that $\widetilde{\opt}_3$ has cardinality at most $45 \frac{\log n}{\eps^2} \leq \frac{\log n}{5\eps^3}$, where the last inequality again holds for $\eps$ small enough. Our final set is $\opt_4:=\opt_3-\widetilde{\opt}_3$.

Now for each center in $c \in \opt_4$, with associated cluster $C^*$, we have that it  together with its associated center $c(o) \in S$, with associated cluster $C'$, satisfies the following:

\begin{itemize}
    \item All clients in $B(o,  d_o/8)$, i.e. at least $(1-\eps) |C^*|$ many clients, are both assigned to $o$ and $c(o)$ in $\opt$ and $S$, respectively. Notice that this implies $\min\{|C^*|,|C'|\}\geq (1-\eps)|C^*|$.
    \item The additional clients assigned to $o$ and not to $c(o)$ is at most $\eps |C^*|$.
    \item The additional clients assigned to $c(o)$ and not $o$ is at most $\eps |C^*|$.
\end{itemize}
Notice that $|C^*\Delta C'|\leq 2\eps |C^*|\leq 3\eps\min\{|C^*|,|C'|\}$, in particular $C^*$ is pure. It follows that all centers in $\opt_4$ correspond to pure clusters. Furthermore, $\opt_4$ was obtained by removing at most $\frac{\log n}{\eps^3}$ centers. The claim follows.   
\end{proof}

Let $C^*$ be a pure cluster of $\opt$ and $C'$ be the associated cluster of $S$. Let
$c^*$ be the center of $C^*$ and ${c'}$ be the center of ${C'}$. We define $t(c^*) := {c'}$.
For any client
$p \in C^*$, let $r(p) := d(p, {c'}) = d(p, t(c^*))$.
We have the following lemma which is the only lemma
that requires that the solution $S$ is obtained via the local search (the proof of the previous lemma only used that it was a constant-factor approximation). In words, it says that $S$ approximates the connection cost of clients belonging to pure clusters of the optimal solution almost perfectly.

\begin{lemma}[Adapted from~\cite{Cohen-AddadS17}]
\label{lem:purecost}
    Let $\clientspure \subseteq \clients$ be the subset of clients that belong to pure clusters  of the optimal solution $\opt$. We have
    \begin{gather*}
        \sum_{p\in \clientspure} r(p) \leq \sum_{p\in \clientspure} \opt_p + O(\eps \cdot \sopt)\,.
    \end{gather*}
\end{lemma}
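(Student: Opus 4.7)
The plan is to exploit the local optimality of $S$ via single-swap arguments, one per pure pair. Fix a pure pair $(C^*, C')$ with centers $c^* \in \opt$ and $c' = t(c^*) \in S$, and let $A_{C^*} := C^* \cap C'$ and $B_{C^*} := C^* \setminus C'$. Consider the swap $S' := S \setminus \{c'\} \cup \{c^*\}$. Since $S$ is strictly locally optimal, $\clcost(S) \leq \clcost(S')$. I would upper bound $\clcost(S')$ via the concrete reassignment that sends every $p \in C^*$ to $c^*$ (cost $\opt_p$), every $p \in C' \setminus C^*$ also to $c^*$, and leaves every other client with its original $S$-assignment. For $p \in C' \setminus C^*$, the reassignment cost is controlled by $d(p, c^*) \leq d(p, c') + d(c', c^*) = S_p + d(c^*, c')$, using that $p \in C'$ forces $S_p = d(p, c')$.

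The key quantity to control is $d(c^*, c')$. I would bound it by an averaged triangle inequality over the shared clients $A_{C^*}$: for every witness $q \in A_{C^*}$, $d(c^*, c') \leq d(q, c^*) + d(q, c') = \opt_q + S_q$, and averaging over $q$ gives $d(c^*, c') \leq T_{C^*}/|A_{C^*}|$ where $T_{C^*} := \sum_{q \in A_{C^*}}(\opt_q + S_q)$. Purity, together with $|A_{C^*}| \geq (1-3\eps)\min(|C^*|,|C'|)$, yields both $|C' \setminus C^*| \leq O(\eps)|A_{C^*}|$ and $|B_{C^*}| \leq O(\eps)|A_{C^*}|$. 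Applying local optimality to the bound on $\clcost(S') - \clcost(S)$ then gives the per-cluster inequality
\begin{equation*}
    \sum_{p \in C^*}(S_p - \opt_p) \;\leq\; |C' \setminus C^*| \cdot d(c^*, c') \;\leq\; O(\eps)\, T_{C^*}.
\end{equation*}

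To pass from this $S_p$-bound to the $r(p)$-bound claimed in the lemma, note that $r(p) = S_p$ for $p \in A_{C^*}$ (since $c' \in S$ is $p$'s nearest open center), while for $p \in B_{C^*}$ the triangle inequality gives $r(p) = d(p, c') \leq \opt_p + d(c^*, c')$. Summing over $B_{C^*}$ and invoking $|B_{C^*}| \leq O(\eps)|A_{C^*}|$ together with the averaged bound on $d(c^*, c')$ contributes at most $O(\eps)\, T_{C^*}$ of additional overhead per pure cluster. Combining the two pieces, summing over all pure clusters, and using $\sum_{C^* \text{ pure}} T_{C^*} \leq \clcost(S) + \sopt = O(\sopt)$ by Lemma~\ref{lemma:localsearchis5approximation}, yields the desired $\sum_{p \in \clientspure} r(p) \leq \sum_{p \in \clientspure} \opt_p + O(\eps \cdot \sopt)$.

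The main technical subtlety is the accounting for the outlier clients in $B_{C^*}$, whose $S$-cost (controlled by local optimality via the swap above) can differ substantially from $r(p) = d(p, c')$ (which is controlled only through the triangle inequality via $c^*$). The averaged-witness bound on $d(c^*, c')$ is precisely what absorbs these $O(\eps)$-fraction outliers into the overall $O(\eps \cdot \sopt)$ error; a naive bound using only a single witness would yield constant-factor slack.
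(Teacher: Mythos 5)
Your overall route is the same as the paper's (a single test swap $S-\{c'\}\cup\{c^*\}$ per pure pair, local optimality, the averaged triangle-inequality bound $d(c^*,c')\le T_{C^*}/|A_{C^*}|$, purity to bound $|C'\setminus C^*|$ and $|B_{C^*}|$ by $O(\eps)|A_{C^*}|$, and a final summation using the constant-factor guarantee of $S$), but one step fails as written: the choice to reassign \emph{every} $p\in C^*$, including $B_{C^*}=C^*\setminus C'$, to $c^*$ at cost $\opt_p$. That gives the inequality $\sum_{p\in C^*}(S_p-\opt_p)\le |C'\setminus C^*|\,d(c^*,c')\le O(\eps)\,T_{C^*}$ over all of $C^*$, but what your combination step actually needs is the restricted inequality $\sum_{p\in A_{C^*}}(S_p-\opt_p)\le O(\eps)\,T_{C^*}$, and the former does not imply the latter: isolating the $A$-part leaves a residual $\sum_{p\in B_{C^*}}(\opt_p-S_p)$ that none of your quantities control. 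The clients of $C^*\setminus C'$ are precisely those served by some other center of $S$, and they can have $\opt_p\gg S_p$ (for instance, an $\eps$-fraction of outlier clients of $C^*$ that carry most of the cluster's optimal cost and sit essentially on top of other centers of $S$, which is fully compatible with purity). In that case $\sum_{p\in B_{C^*}}(\opt_p-S_p)$ is a constant fraction of $\sum_{p\in C^*}\opt_p$, and summed over pure clusters the slack is $\Theta(\sopt)$ rather than $O(\eps\cdot\sopt)$, so your final bound degrades to roughly $\sum_{p\in \clientspure} r(p)\le 2\sum_{p\in \clientspure}\opt_p+O(\eps\cdot\sopt)$ instead of the claimed $\sum_{p\in \clientspure}\opt_p+O(\eps\cdot\sopt)$.

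The fix is exactly what the paper's proof does: in the test swap, leave the clients of $C^*\setminus C'$ at their current assignment --- their serving centers are not $c'$ and survive the swap --- so the cost change decomposes into $\sum_{p\in C^*\cap C'}(\opt_p-S_p)$ plus the reassignment cost of $C'\setminus C^*$ only, and local optimality directly yields $\sum_{p\in C^*\cap C'}(S_p-\opt_p)\le O(\eps)\,T_{C^*}$. With that corrected per-cluster inequality, the rest of your accounting (the triangle-inequality bound for $B_{C^*}$ via the averaged bound on $d(c^*,c')$, disjointness of the sets $A_{C^*}$ across pure clusters, and $\sum_{C^*}T_{C^*}\le \clcost(S)+\sopt=O(\sopt)$ by Lemma~\ref{lemma:localsearchis5approximation}) goes through and coincides with the paper's argument.
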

\begin{proof}

Recall that $S_p = \dist(p,S)$. Consider a pair of pure clusters $C^*$ and $C'$ with the same notation as above. We prove the following:
     \begin{enumerate}
         \item $\sum_{p \in C^* \cap C'} r(p) = \sum_{p \in C^* \cap C'} d(p,S) \le  \sum_{p \in C^* \cap C'} (1+{4\eps}) \opt_p + {4\eps} S_p$; and
         \item  
         $\sum_{p \in C^* {-} C'} r(p) {=} \sum_{p \in C^*{-} C'} d(p, c') \le \sum_{p \in C^*{-} C'} \opt_p + {4\eps} \sum_{p \in C^*} (\opt_p + {S_p})$.
     \end{enumerate}
     The statement of the lemma then follows by summing up the above bounds and using that $\clcost(S) \leq {5} \cdot {\sopt}$ by Lemma~\ref{lemma:localsearchis5approximation}.

For proving the first bullet, consider the swap that swaps in $c^*$ and swaps out the center $c'$. The cost change for the clients in $C^* \cap C'$ is exactly $\sum_{p \in C^*\cap C'} (\opt_p - S_p)$.
For any $p \not\in C^*$, if $p$ is not served by ${c'}$ in $S$ the cost is unchanged. Otherwise the cost of 
$p \in C' - C^*$ in the new 
solution $S - \{{c'}\} \cup \{c^*\}$ 
is by triangle inequality at most $S_{{p}} + \frac{1}{|C^*\cap C'|} \sum_{p' \in C^*{\cap C'}} (S_{p'} + \opt_{p'})$.
Summing over all clients $p \in C'-C^*$, this is at most 
$\sum_{p \in C'-C^*} S_{{p}} + \frac{|C'-C^*|}{|C^*\cap C'|} \sum_{p' \in C^*{\cap C'}} (S_{p'} + \opt_{p'})$. Thus, the cost difference $\clcost(S- \{c'\} \cup \{c^*\}) - \clcost(S)$
for these clients is at most $\frac{|C'-C^*|}{|C^*\cap C'|} \sum_{p' \in C^*{\cap C'}} (S_{p'} + \opt_{p'})\leq {4\eps} \sum_{p' \in C^*{\cap C'}} (S_{p'} + \opt_{p'})$, where in the inequality we used the fact that $C^*$ is pure, {hence $\frac{|C'-C^*|}{|C^*\cap C'|}\leq \frac{{3\eps}|C^*|}{(1-\eps)|C^*|}\leq 4\eps$.} The first bullet follows since, by local optimality, $\clcost(S- \{c'\} \cup \{c^*\}) - \clcost(S) \ge {0}$.

We turn to the second bullet. By triangle inequality, for each $p\in C^*-C'$, $d(p,c')$ is at most
$d(p, c^*) + \frac{1}{|C^* \cap C'|} \sum_{p' \in C^* {\cap} C'} (S_{p'} + \opt_{p'})$. Summing up over all clients
$p \in C^* - C'$, we have that
\[\sum_{p \in C^*{-} C'} d(p, c')\le \sum_{p \in C^*- C'} \opt_p + \frac{|C^*{-} C'|}{|C^*\cap C'|} \sum_{p' \in C^*{\cap}C'} (\opt_{p'} + S_{p'}).\]
The second bullet follows from the assumption that $C^*$ is pure, and hence $\frac{|C^*{-} C'|}{|C^*\cap C'|}\leq {4\eps}$ similarly to the first bullet.
\end{proof}

\subsection{Cheap Clusters and $D$-Sample Process to Hit Expensive Clusters}
\label{sec:dsampleproc}
In the previous section, we showed that the cost of pure clusters in $\opt$ is close to to their optimal cost in our local search solution $S$. Our goal in this section is to ``hit'' those non-pure clusters of $\opt$
 that have a high cost in $S$.
We say that a non-pure cluster $C^*$ of $\opt$ with center $c^*$ is \emph{basic-cheap} if the total cost in 
$S$ of the clients in $C^*_\avg$ is less than $\eps^{5}\sopt/\log n$ or 
if there exists a center of $S$ at distance at most
{$(1+\eps)\avg_{C^*,\opt} + \eps \gamma_{C^*}$} from $c^*$, where $\gamma_{C^*} := \frac{1}{|C^*|} \sum_{p\in C^*} (\opt_p + S_p)$; in the latter case we also say that $C^*$ is \emph{covered} by $S$.

The following procedure,  that aims to ``hit'' all clusters of $\opt$ that are non-pure and non-basic-cheap,  is inspired by the classic $k$-means++ algorithm and its
    so-called $D^2$-sampling procedure to
    sample clients proportional to their cost in solution $S$. For $k$-median this becomes $D$-sampling (instead of $D^2$-sampling). 
\begin{mdframed}[hidealllines=true, backgroundcolor=gray!15]
\vspace{-5mm}
\paragraph{$D$-$Sample(S,s)$}\ \\
\begin{algorithmic}[1]
\State $W \leftarrow \emptyset$ 
\For{$s$ many times}
\State Add to $W$ one client $p$ sampled with probability $\frac{\dist(p, S)}{\clcost(S)}$
\EndFor
\State \textbf{return} $W$
\end{algorithmic}
\end{mdframed}

Next let $W$ be the set returned by $D$-$Sample(S,s^*)$ with
\[
    s^* = {5} \cdot \frac{\log n}{\eps^{5}}\ln\left( \frac{7}{\eps^{2}}\right)\,.
\]
We say that $W$ \emph{hits} a cluster $C^*$ of $\opt$ if $C^*_{\avg}\cap W\neq \emptyset$, i.e., if $W$ contains at least one leader of $C^*$. We have the following lemma that says that we hit all non-pure and non-basic-cheap clusters except for a set $V$ of clusters that have an insignificant cost in $S$. 

For a non-pure optimal cluster $C^*$ with center $c^* \in \opt$, we let $t(c^*)$ be the closest center in $S$ to $c^*$, and for any $p \in C^*$, 
we let the replacement cost of $p$ be $r(p) := d(p, t(c^*))$. 

\begin{lemma}
\label{lem:probcost}
Let $V$ be the set of clusters $C^*$ of $\opt$ that are non-pure and non-basic-cheap  such that $C^*_\avg  \cap W = \emptyset$. 
With probability at least $1-\eps$, 
\begin{align}\sum_{C^* \in V} \sum_{p \in C^*} r(p) \le \eps \cdot \sopt\,.
\label{eq:probcost}
\end{align}
\end{lemma}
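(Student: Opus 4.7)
The plan is to bound the expectation of the left-hand side of~\eqref{eq:probcost} and then apply Markov's inequality. The key observation is that every non-basic-cheap cluster has a reasonably large ``footprint'' in $S$ through its leader set $C^*_{\avg}$, so a single $D$-sample hits it with nonnegligible probability; repeating $s^*$ times drives the miss probability down to $O(\eps^2/\log n)$-scale, which we can afford.

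First I would fix any non-pure, non-basic-cheap cluster $C^*$ and observe that, by definition, $\sum_{p\in C^*_\avg} d(p,S) \geq \eps^5 \sopt/\log n$. Since a single $D$-sample selects $p$ with probability $d(p,S)/\clcost(S)$ and $\clcost(S) \leq 5\sopt$ by Lemma~\ref{lemma:localsearchis5approximation}, each sample lands in $C^*_\avg$ with probability at least $\eps^5/(5\log n)$. The $s^*$ samples being independent, the probability that $C^*_\avg \cap W = \emptyset$ is at most
\[
\left(1 - \frac{\eps^5}{5\log n}\right)^{s^*} \leq \exp\!\left(-\frac{s^*\eps^5}{5\log n}\right) = \exp(-\ln(5/\eps^2)) = \frac{\eps^2}{5},
\]
where the equality uses the definition $s^* = 5(\log n/\eps^5)\ln(5/\eps^2)$. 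Hence $\Pr[C^*\in V] \leq \eps^2/5$ for every non-pure, non-basic-cheap $C^*$ (and of course $\Pr[C^*\in V] = 0$ for all other clusters, since $V$ only contains non-pure, non-basic-cheap clusters).

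Next I would let $Y_{C^*} := \sum_{p\in C^*} d(p,S)$ and compute, by linearity of expectation,
\[
\E\!\left[\sum_{C^* \in V} Y_{C^*}\right] = \sum_{C^* \text{ non-pure, non-basic-cheap}} Y_{C^*}\cdot \Pr[C^*\in V] \leq \frac{\eps^2}{5}\sum_{C^*} Y_{C^*} = \frac{\eps^2}{5}\,\clcost(S) \leq \eps^2\sopt.
\]
Finally, Markov's inequality applied to the nonnegative random variable $\sum_{C^*\in V} Y_{C^*}$ gives
\[
\Pr\!\left[\sum_{C^*\in V}\sum_{p\in C^*} d(p,S) \geq \eps\sopt\right] \leq \frac{\eps^2\sopt}{\eps\sopt} = \eps,
\]
which is exactly the claimed bound.

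There is no real obstacle here; the only delicate point is that the definition of ``non-basic-cheap'' is what makes the hit probability large enough so that the chosen sample size $s^*$ pushes the per-cluster miss probability below $\eps^2/\log n$, which is in turn what allows the expected-cost bound to beat $\eps\sopt$ after Markov. Everything else is a routine application of independence and linearity, and the constants in $s^*$ are tuned precisely so these inequalities line up.
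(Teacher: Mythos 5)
Your proposal is correct and follows essentially the same route as the paper's proof: the non-basic-cheap condition gives a per-sample hit probability of at least $\eps^5/(5\log n)$ via $\clcost(S)\leq 5\sopt$, independence of the $s^*$ samples bounds the miss probability by $\eps^2/5$, and then linearity of expectation plus Markov's inequality yields the claim. The constant bookkeeping matches the paper's choice of $s^*$, so there is nothing to fix.
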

\begin{proof}
Recall that $S_p = \dist(p,S)$.
Consider a non-pure and non-basic-cheap cluster $C^*$ of $\opt$.
We say that a sampled point $p\in W$ \emph{hits} $C^*$ if $p\in C^*_{avg}$ (i.e., $p$ is a leader of $C^*$). 
As $C^*$ is not basic-cheap, we have $\sum_{p\in C^*_{avg}}S_p\geq \eps^{5} \sopt/\log n$. Thus the probability that a sampled $p\in W$ hits $C^*$ is at least $\frac{\sum_{p\in C^*_{\avg}}S_p}{\clcost(S)}\geq \frac{\eps^{5}}{{5} \log n}$, where we used that $\clcost(S) \leq {5} \sopt$ (Lemma~\ref{lemma:localsearchis5approximation}). Thus the probability that $C^*$ is not hit by any point in $W$ is at most $(1-\frac{\eps^{5}}{{5} \log n})^{s^*}$ which by the selection of $s^*$ is at most  $\eps^{2}/{7}$. For a point $p\in C^*$ which is assigned to $c'$ in $S$, one has
\begin{align*}    
r(p) & =d(p,t(c^*))\leq d(p,c^*)+d(c^*,t(c^*))\leq d(p,c^*)+d(c^*,c')\\
& \leq d(p,c^*)+d(p,c^*)+d(p,c')=2\opt_p+S_p. 
\end{align*}
Thus, 
by linearity of expectation,
$$
E[\sum_{C^*\in V}\sum_{p\in C^*}r(p)] \leq \sum_{p\in \clients} (2\opt_p+S_p)\cdot(\eps^{2}/{7}) \leq \eps^{2}\cdot \sopt\,,
$$
where we again used that $\clcost(S) \leq {5} \sopt$ by Lemma~\ref{lemma:localsearchis5approximation}.
The total cost of the clusters in $V$ is thus at most $ \eps\cdot {\sopt}$ with probability at least $1-\eps$ by Markov's inequality. 
\end{proof}

\paragraph{Successful $W$.}
We say that the set of clients $W$ obtained from the D-sample process is a 
\emph{successful} sample  if~\eqref{eq:probcost} holds.
We define the following two types of $\opt$ clusters. We say that 
a non-pure cluster $C^*$ of $\opt$ with center $c^*$ is \emph{cheap} if it is basic-cheap 
or is in $V$; a  non-pure cluster is 
\emph{expensive} otherwise.  In other words, a cluster $C^*$ of the optimal solution is expensive if it is a non-pure and non-basic-cheap cluster such that $C^*_{avg} \cap W \neq \emptyset$. 

Assuming that $W$ is successful, we have a good bound on the cost of the clients assigned to cheap centers in the optimal solution. 
\begin{lemma}
\label{lem:cheapcost}
    Let $\clientscheap \subseteq \clients$ be the subset of clients that belong to cheap clusters of the optimal solution $\opt$. If $W$ is a successful sample, 
    \begin{gather*}
        \sum_{p\in \clientscheap} r(p) \leq \sum_{p\in \clientscheap} 2\opt_p + O(\eps \cdot \sopt)\,.
    \end{gather*}
\end{lemma}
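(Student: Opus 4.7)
The plan is to split $\clientscheap$ into three groups corresponding to the three disjoint reasons a cluster can be cheap and to bound each group separately by $2\sum \opt_p + O(\eps)\sopt$. Specifically, for each cheap $C^*$, classify it as type (a) basic-cheap because $\sum_{\ell\in C^*_\avg}S_\ell < \eps^5\sopt/\log n$, type (b) basic-cheap because $C^*$ is covered, i.e.\ $d(c^*,t(c^*)) \le (1+\eps)\avg_{C^*,\opt}+\eps\gamma_{C^*}$, or type (c) $C^* \in V$.

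The common starting point for every cluster is the triangle inequality $r(p)=d(p,t(c^*))\le \opt_p + d(c^*,t(c^*))$, so that $\sum_{p\in C^*}r(p) \le \sum_{p\in C^*}\opt_p + |C^*|\cdot d(c^*,t(c^*))$. What changes is the bound on $d(c^*,t(c^*))=d(c^*,S)$. A second uniform ingredient is the bound $|C^*|\cdot \avg_{C^*,\opt}\le \tfrac{1}{1-\eps}\sum_{p\in C^*}\opt_p$, obtained because by definition at least $(1-\eps)|C^*|$ clients of $C^*$ lie at distance at least $\avg_{C^*,\opt}$ from $c^*$. Together these give the factor $2$: one $\opt_p$ from the triangle inequality and another from $|C^*|\cdot\avg_{C^*,\opt}$.

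For type (b) I would use the covering condition directly: $|C^*|\cdot d(c^*,t(c^*)) \le (1+\eps)|C^*|\avg_{C^*,\opt} + \eps\sum_{p\in C^*}(\opt_p+S_p)$, giving $\sum r(p) \le (2+O(\eps))\sum \opt_p + \eps\sum(\opt_p+S_p)$; summing over these clusters and using $\clcost(S)\le 5\sopt$ from Lemma~\ref{lemma:localsearchis5approximation} yields the desired bound. For type (c), for any $p'\in C^*$ the triangle inequality gives $d(c^*,S)\le \opt_{p'}+S_{p'}$; averaging over $p'\in C^*$ yields $|C^*|\cdot d(c^*,S)\le \sum_{p\in C^*}(\opt_p+S_p)$, so $\sum_{p\in C^*}r(p)\le 2\sum \opt_p + \sum S_p$, and summing over $C^*\in V$ contributes at most $\eps\sopt$ by the successfulness hypothesis~\eqref{eq:probcost}. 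For type (a), I would route through the leaders: for each leader $\ell\in C^*_\avg$, $d(c^*,S)\le d(c^*,\ell)+S_\ell\le \avg_{C^*,\opt}+S_\ell$; averaging over $\ell\in C^*_\avg$ and using $|C^*_\avg|\ge \eps|C^*|$ yields $|C^*|\cdot d(c^*,S)\le |C^*|\avg_{C^*,\opt}+\eps^{-1}\sum_{\ell\in C^*_\avg}S_\ell \le (1-\eps)^{-1}\sum \opt_p + \eps^{-1}\sum_{\ell\in C^*_\avg}S_\ell$.

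The main obstacle, and where the parameters have to be checked carefully, is bookkeeping the additive slack for type (a). The per-cluster leftover is $\eps^{-1}\sum_{\ell\in C^*_\avg}S_\ell$, which is upper bounded by $\eps^{-1}\cdot \eps^5\sopt/\log n = \eps^4\sopt/\log n$; summing over at most $\log n/\eps^3$ non-pure clusters (Lemma~\ref{lem:numnonpure}) yields $\eps\sopt$. Thus all three type totals are of the form $2\sum \opt_p+O(\eps\sopt)$, and adding them gives the lemma. I expect no other delicate point beyond checking these three calibrations simultaneously and confirming that the $(1+\eps)/(1-\eps)$ factor from the $\avg_{C^*,\opt}$ bound, the $\eps\gamma_{C^*}$ contribution in type (b), and the successfulness slack in type (c) all fit under the $O(\eps)\sopt$ umbrella using $\clcost(S)\le 5\sopt$.
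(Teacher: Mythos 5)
Your proof is correct and takes essentially the same route as the paper's: the triangle inequality through $t(c^*)$, the bound $|C^*|\cdot\avg_{C^*,\opt}\le\frac{1}{1-\eps}\sum_{p\in C^*}\opt_p$, the covering condition for covered clusters, leader-averaging with $|C^*_\avg|\ge\eps|C^*|$ for basic-cheap clusters, the count $\log n/\eps^3$ of non-pure clusters from Lemma~\ref{lem:numnonpure}, and the successfulness bound~\eqref{eq:probcost} for the clusters in $V$. Your only deviation is the treatment of $V$-clusters, where you bound $d(c^*,S)$ by averaging $\opt_p+S_p$ over the whole cluster rather than over leaders; this is a minor (and, if anything, slightly more explicit) variant of the paper's appeal to Lemma~\ref{lem:probcost}, and all your parameter calibrations check out.
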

\begin{proof}
We claim that for any cluster $C^*$ of $\opt$ that is basic-cheap, and   any $p \in C^*$: 
    \begin{enumerate}
    \item If $C^*$ is not covered, then
      $r(p) = d(p, t(c^*)) \le \opt_p + {\avg_{C^*,\opt}} + \frac{1}{\eps |C^*|} \sum_{p \in C^*_{\avg}}  S_p$. 
     \item If $C^*$ is covered, then
        $r(p) \le \opt_p + (1+\eps) \avg_{C^*,\opt} + \eps \gamma_{C^*}$.
    \end{enumerate}
If $C^*$ is covered, the bound follows immediately by the triangle inequality. 
Otherwise, observe that for any such cluster, we have that $d(p, t(c^*)) \le \opt_p +  d(t(c^*),c^*)$, 
and let us bound $d(t(c^*),c^*)$.  We have that 
    $ d(t(c^*),c^*) \le \frac{1}{|C^*_\avg|} \sum_{p \in C^*_\avg} (S_p + \opt_p)$.
    Which is at most
    ${\avg_{C^*,\opt}} + \frac{1}{|C^*_\avg|} \sum_{p \in C^*_\avg} S_p $ by definition. So the inequality follows from the fact that $|C^*_{\avg}| \ge \eps |C^*|$.

    Now, to prove the above lemma, we sum up the cost of all the clients of cheap clusters $C^*$. For a covered cluster $C^*$, by 2. we have  
    \begin{align*}
        \sum_{p\in C^*} r(p)& \leq \sum_{p\in C^*} \left(\opt_p + (1+\eps)\avg_{C^*, \opt} + \eps \gamma_{C^*}\right)\\
        &\leq \left( 1+ (1+2\eps) (1+\eps)\right) \sum_{p\in C^*} \opt_p + \eps \sum_{p\in C^*} (\opt_p + S_p)\,,
    \end{align*}
    where we used the definition of $\gamma_{C^*} = \frac{1}{|C^*|} \sum_{p\in C^*} (\opt_p + S_p)$ and that $|C^*| \cdot \avg_{C^*, \opt} \leq (1+2\eps) \sum_{p\in C^*} \opt_p$, which holds because $C^*$ has at least $(1-\eps) |C^*|$ clients $p\in C^*$ so that $\opt_p \geq \avg_{C^*, \opt}$.

    For a basic-cheap cluster $C^*$ which is not covered, by 1. one has
    \[
        \sum_{p\in C^*} r(p) \leq \sum_{p\in C^*} \left( \opt_p + {\avg_{C^*,\opt}} + \frac{1}{\eps |C^*|} \sum_{p \in C^*_{\avg}}  S_p\right) \leq (2+2\eps) \sum_{p\in C^*} \opt_p +  \eps^{4}\sopt/\log(n)\,,
    \]
    where we again used that $|C^*| \cdot \avg_{C^*, \opt} \leq (1+2\eps) \sum_{p\in C^*} \opt_p$ and we additionally used that, by the definition of basic-cheap clusters, $\sum_{p\in C^*_{avg}} S_p \leq \eps^{5} \sopt/\log n$.
    Finally, we have that if we sum up all clusters in $V$, i.e., clusters that are cheap but not basic-cheap, then  Lemma~\ref{lem:probcost} says
    \begin{gather*}
        \sum_{C^* \in V} \sum_{p \in C^*} r(p) \le \eps \cdot \sopt\,.
    \end{gather*}
    The statement of the lemma now follows by summing up the above bounds for all cheap clusters and using that $\clcost(S) \leq 5\cdot \sopt$ by Lemma~\ref{lemma:localsearchis5approximation} and the property that there are at most $\log(n)/\eps^3$ basic-cheap clusters. The latter claim is true because Lemma~\ref{lem:numnonpure} says that there are at most $\log(n)/\eps^3$ non-pure clusters and basic-cheap clusters is a subset of non-pure clusters.
\end{proof}

Given $S$ and $W$, the set of clusters of the optimal solution $\opt$ is thus partitioned into pure clusters, cheap clusters, and expensive clusters. We let $\ho$ be the centers of $\opt$ that are expensive. We further define the set $\clientsexpensive \subseteq \clients$  to be the subset of clients that belong to the expensive clusters in $\opt$. We have thus partitioned $D$ into sets $\clientspure, \clientscheap$, and $\clientsexpensive$ depending on the type of optimal cluster they belong to. Lemma~\ref{lem:purecost} bounds the cost of the clients in $\clientspure$ in $S$ and Lemma~\ref{lem:cheapcost} bounds the cost of the clients in $\clientscheap$. For future reference, we summarize these two lemmas  (by weakening the upper bound for clients in $\clientspure$).
\begin{lemma}
Assuming  $W$ is a successful sample, 
    \[
    \sum_{p\in D - \clientsexpensive} r(p) \leq \sum_{p \in D - \clientsexpensive} 2 \opt_p + O(\eps \cdot \sopt)\,.
    \]
    \label{lemma:convenience_upper_bound}
\end{lemma}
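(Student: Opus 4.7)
The plan is to obtain this statement as a direct corollary of Lemma~\ref{lem:purecost} and Lemma~\ref{lem:cheapcost}, exploiting the fact that by construction the optimal clusters have been partitioned into pure, cheap, and expensive, so that $D - \clientsexpensive = \clientspure \sqcup \clientscheap$ (disjoint union). Therefore $\sum_{p \in D - \clientsexpensive} r(p)$ splits as the sum of the pure and cheap contributions, and the same holds for $\sum_{p \in D - \clientsexpensive} \opt_p$.

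First I would invoke Lemma~\ref{lem:purecost}, which gives (unconditionally on $W$) the tighter bound
\[
\sum_{p\in \clientspure} r(p) \leq \sum_{p\in \clientspure} \opt_p + O(\eps \cdot \sopt),
\]
and then invoke Lemma~\ref{lem:cheapcost}, which, under the hypothesis that $W$ is successful, yields
\[
\sum_{p\in \clientscheap} r(p) \leq \sum_{p\in \clientscheap} 2\opt_p + O(\eps \cdot \sopt).
\]
Note that the definition of $r(p)$ is consistent across the two regimes: in both Section~\ref{sec:pure} (pure clusters) and the paragraph following Lemma~\ref{lem:probcost} (cheap clusters), $r(p)$ is defined as $d(p, t(c^*))$ for the appropriate choice of $t(c^*) \in S$ associated with the optimal cluster of $p$, so the two sums genuinely combine into $\sum_{p \in D - \clientsexpensive} r(p)$.

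Finally, I would add the two inequalities and weaken the coefficient on $\sum_{p\in \clientspure} \opt_p$ from $1$ to $2$ (which is valid since $\opt_p \ge 0$), merging the two big-$O$ slack terms into a single $O(\eps \cdot \sopt)$. This yields exactly
\[
\sum_{p \in D - \clientsexpensive} r(p) \leq \sum_{p \in D - \clientsexpensive} 2\opt_p + O(\eps \cdot \sopt),
\]
as required. There is no real obstacle here: the lemma is a convenient uniform packaging of the two previously established bounds, so that later sections can treat pure and cheap clusters together as the set of ``already well-approximated'' clients, while attention concentrates on the expensive clusters in $\clientsexpensive$.
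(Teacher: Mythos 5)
Your proposal is correct and matches the paper exactly: the paper states this lemma as a summary of Lemma~\ref{lem:purecost} and Lemma~\ref{lem:cheapcost}, obtained by partitioning $D - \clientsexpensive$ into pure and cheap clients, summing the two bounds, and weakening the coefficient on the pure clients from $1$ to $2$.
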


It follows that the connection cost in solution $S$ of clients in $\clients - \clientsexpensive$  is within a factor $2$ of their cost in the optimal solution. The remaining part of the algorithm is thus devoted to modifying $S$ to obtain a small connection cost of the clients in $\clientsexpensive$ without significantly increasing the cost of the other clients.

\subsection{Identifying Balls of Expensive Clusters}
\label{sec:ballguesses}

By the definition of expensive clusters, each expensive cluster $C^*$ of $\opt$ satisfies $C^*\cap W \neq \emptyset$ where $W$ is the sample obtained by running the sampling procedure of the last subsection. Here, our goal is to guess a subset $\calB$ of balls so that for each expensive cluster $C^*$ there is a ball $B(\ell, \rho) \in \calB$ so that $\ell$, which we refer to as a leader, is in  $C^*_{avg}$ and $\rho$ is a very close approximation to $\avg_{C^*, \opt}$. This guarantees that the center $c^* \in \ho$ of cluster $C^*$ is in $B(\ell, \rho)$ and that no other center in $B(\ell, \rho)$ is "too" far from $c^*$ since the radius of the ball is $\approx \avg_{C^*, \opt}$. The balls $\calB$ will then be used when we approximate the centers in $\ho$ via submodular function maximization subject to a partition matroid constraint (the balls will correspond to the partitions of the matroid). 
As the algorithm does not know the set of expensive clusters, it enumerates all subsets of $W$, and for each guessed point, it enumerates a constant number of radii.

\begin{mdframed}[hidealllines=true, backgroundcolor=gray!15]
\vspace{-5mm}
\paragraph{$ballGuess(S,W)$}\ \\
\begin{algorithmic}[1]
\State $\calL_\texttt{ball} \leftarrow \emptyset$
\For{all subsets $\{p_1, p_2, \ldots, p_q\} \subseteq W$}
\For{all integers $i_1, i_2, \ldots, i_q$ such that
     $\eps^3  {S_{p_j}} \le (1+\eps^3)^{i_j} \le {S_{p_j}}/\eps^3$ for $j=1, 2,\ldots, q$}
\State add $\calB = \left\{B(p_1, (1+\eps^3)^{i_1}), B(p_2, (1+\eps^3)^{i_2}), \ldots, B(p_q, (1+\eps^3)^{i_q}\right\}$ to $\calL_{\texttt{ball}}$    
\EndFor
\EndFor
\State \textbf{return} $\calL_\texttt{ball}$
\end{algorithmic}
\end{mdframed}

We show that the $ballGuess()$ runs in polynomial time and that  one of the subsets $\calB$ of balls that the procedure constructs is a correct guess: every expensive cluster $C^*$ has a leader, and $\avg_{C^*, \opt}$ is guessed approximately correctly.
We say that a set of balls $\calB$ is a \emph{valid} set of balls if it satisfies the following: For each expensive cluster $C^*$ of $\opt$,  we have a ball $B(\ell, \rho) \in \calB$ such that $\ell\in C^*_{avg}$ is a leader and $$\avg_{C^*,\opt} \leq \rho \leq  \avg_{C^*,\opt} + \eps \gamma_{C^*}, $$
where we recall that $\gamma_{C^*} = \frac{1}{|C^*|} \sum_{p \in C^*} (\opt_p + S_p)$.

\begin{lemma}
\label{lem:ballguesses} 
$ballGuess()$ runs in polynomial time $n^{\eps^{-O(1)}}$ and produces a collection $\calL_{\texttt{ball}}$ of sets of balls such that at least one set $\calB\in \calL_{\texttt{ball}}$ is a valid set of balls.
\end{lemma}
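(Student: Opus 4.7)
The plan is to verify two claims about the ball procedure: (i) it runs in time $n^{\eps^{-O(1)}}$, and (ii) some set $\calB$ produced is valid. For (i), since $|W|=s^*=O(\log n/\eps^{O(1)})$, the number of subsets $\{p_1,\dots,p_q\}\subseteq W$ enumerated in step 2 is $2^{|W|}=n^{\eps^{-O(1)}}$, and for each such subset of size $q$ the number of admissible integer tuples $(i_1,\dots,i_q)$ is at most $(\log_{1+\eps^3}(1/\eps^6))^q=(\eps^{-O(1)}\log(1/\eps))^{O(\log n/\eps^{O(1)})}=n^{\eps^{-O(1)}}$. Hence $|\calL_{\texttt{bal}}|$ and the overall work are polynomial.

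For (ii), I would identify the ``ideal'' guess. For every expensive cluster $C^*$, the definition of expensive guarantees $C^*_{avg}\cap W\neq\emptyset$, so pick one leader $\ell(C^*)\in C^*_{avg}\cap W$; these leaders are distinct across expensive clusters since each client belongs to a unique $\opt$ cluster, hence they form a legitimate subset $\{p_1,\dots,p_q\}\subseteq W$ enumerated by the procedure. It then suffices to show that for each such $\ell=\ell(C^*)$ some $(1+\eps^3)^i$ lies in both the enumeration range $[\eps^3 S_\ell, S_\ell/\eps^3]$ and the valid range $[\avg_{C^*,\opt},\ \avg_{C^*,\opt}+\eps\gamma_{C^*}]$. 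The key input is a two-sided bound on $S_\ell$. Because $C^*$ is non-basic-cheap (hence not covered), $d(c^*,S)>(1+\eps)\avg_{C^*,\opt}+\eps\gamma_{C^*}$, and combining this with $d(\ell,c^*)\leq\avg_{C^*,\opt}$ via the triangle inequality yields $S_\ell\geq\eps(\avg_{C^*,\opt}+\gamma_{C^*})$. Dually, averaging the trivial bound $d(c^*,S)\leq\opt_p+S_p$ over $p\in C^*$ gives $d(c^*,S)\leq\gamma_{C^*}$, so $S_\ell\leq\avg_{C^*,\opt}+\gamma_{C^*}$. I would also use the elementary inequality $\gamma_{C^*}\geq(1-\eps)\avg_{C^*,\opt}$, which is immediate from the definition of $\avg_{C^*,\opt}$ as the $\eps|C^*|$-quantile distance.

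With these bounds I would check that the intersection $I:=[\max(\avg_{C^*,\opt},\eps^3 S_\ell),\ \avg_{C^*,\opt}+\eps\gamma_{C^*}]$ is non-empty, via $\eps^3 S_\ell\leq\eps^3(\avg_{C^*,\opt}+\gamma_{C^*})\leq\avg_{C^*,\opt}+\eps\gamma_{C^*}$ and $\avg_{C^*,\opt}\leq S_\ell/\eps^3$, and that $I$ has multiplicative width at least $1+\eps^3$, so it contains a power of $(1+\eps^3)$. This splits into two sub-cases: if the lower endpoint of $I$ is $\avg_{C^*,\opt}$, the ratio is $1+\eps\gamma_{C^*}/\avg_{C^*,\opt}\geq 1+\eps(1-\eps)\geq 1+\eps^3$ for small $\eps$; if it is $\eps^3 S_\ell$, the ratio is at least $\eps\gamma_{C^*}/(\eps^3(\avg_{C^*,\opt}+\gamma_{C^*}))$, which exceeds $1/(4\eps^2)$ using $\gamma_{C^*}\geq\avg_{C^*,\opt}/2$ and dominates $1+\eps^3$. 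The main obstacle I expect is this last multiplicative-width analysis, because $S_\ell$ can be of order $\gamma_{C^*}$ rather than $\avg_{C^*,\opt}$ when outliers in $C^*$ blow up $\gamma_{C^*}$; the argument must lean on both the ``not covered'' lower bound on $d(c^*,S)$ and the averaging upper bound $d(c^*,S)\leq\gamma_{C^*}$ to keep the enumeration range aligned with the valid range for every expensive cluster simultaneously.
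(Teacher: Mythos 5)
Your proposal is correct and follows essentially the same route as the paper's proof: count the $2^{|W|}\cdot(\eps^{-O(1)})^{|W|}=n^{\eps^{-O(1)}}$ enumerations, fix the leader subset, and show via the ``not covered'' lower bound on $d(c^*,S)$, the triangle/averaging upper bound relating $S_\ell$ to $\avg_{C^*,\opt}$ and $\gamma_{C^*}$, and a case split on whether $\avg_{C^*,\opt}$ or $\eps^3 S_\ell$ dominates, that the geometric grid of radii contains a value in $[\avg_{C^*,\opt},\avg_{C^*,\opt}+\eps\gamma_{C^*}]$ within the enumeration range. Your intermediate bounds ($S_\ell\ge\eps(\avg_{C^*,\opt}+\gamma_{C^*})$, $S_\ell\le\avg_{C^*,\opt}+\gamma_{C^*}$, $\gamma_{C^*}\ge(1-\eps)\avg_{C^*,\opt}$) are minor variants of the paper's and all check out.
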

\begin{proof}

We start by showing that the running time is bounded by $n^{\eps^{-O(1)}}$. Indeed, we have $|W| = s^*$, and so the number of subsets of $W$ is $2^{s^*}= n^{\eps^{-O(1)}}$ since $s^* = {5} \cdot \frac{\log n}{\eps^{5}}\ln\left( \frac{7}{\eps^{2}}\right)$. Moreover, for a fixed subset $\{p_1, p_2, \ldots, p_q\}$, we consider a constant number  $\alpha = \log_{1+\eps^3}(1/\eps^6)$ of balls (i.e., values of $i_j$) for each $p_j$. So the total number of balls considered added for each subset $\{p_1, p_2, \ldots, p_q\}$ is $\alpha^{q}$ which is $n^{\eps^{-O(1)}}$ since $q \leq |W| \leq s^*$.

We now turn our attention to showing that there is a set of balls $\calB \in \calL_{\texttt{ball}}$ that satisfies the properties of the lemma.
Let $C^*(1), \ldots, C^*(|\ho|)$ be the expensive clusters of \opt. 
By definition, we have $C^*_{avg}(j) \cap W\neq \emptyset$ for each such cluster. Now arbitrarily fix one client $\ell_j\in C^*_{avg}(j) \cap W$  for $j=1,2, \ldots, |\ho|$, which we refer to the leader of $C^*(j)$. Since the algorithm enumerates over all subsets
    of $W$, there exists one such subset $\{\ell_1, \ldots, \ell_{|\ho|}\}$ that is a valid set of leaders for all the expensive clusters.
    We consider the set  $$\calB = \left\{B(\ell_1, (1+\eps^3)^{i_1}), B(\ell_2, (1+\eps^3)^{i_2}), \ldots, B(\ell_{|\ho|}, (1+\eps^3)^{i_{|\ho|}})\right\}$$ of balls, where $i_j$ is the smallest integer so that $(1+\eps^3)^{i_j} \geq \max\{\avg_{C^*, \opt}, \eps^3 S_{\ell_j}\}$. 

    It remains to prove that, for each leader $\ell_j$, we have $(1+\eps^3)^{i_j} \leq S_{\ell_j}/\eps^3$  and $\avg_{C^*(j),\opt} \leq (1+\eps^3)^{i_j}\leq  \avg_{C^*(j),\opt} + \eps \gamma_{C^*(j)}$. The first property guarantees that  $\calB \in \calL_{\texttt{ball}}$ and the second property is the desired bound on each radius $\rho$ for $\calB$ to be a valid set of balls.

     We now verify these two properties for each leader $\ell_j$. To simplify notation, we let $\ell = \ell_j$, $C^* = C^*(j)$ and $\rho = (1+\eps^3)^{i_j}$.
    We have
    $d(\ell, S) = S_{\ell} > \eps\, \avg_{C^*,{\opt}}$ since otherwise
    $C^*$ would be covered by $S$, which would contradict that $C^*$ is an expensive cluster. By the selection of $i_j$,  $\rho = (1+\eps^3)^{i_j}\leq (1+\eps^3) \max\{\avg_{C^*, \opt}, \eps^3 S_\ell\} \leq (1+\eps^3) S_\ell/\eps$ and thus $(1+\eps^3)^{i_j} \leq S_\ell/\eps^3$ as required. 
    
    We further claim that $\rho \in [\avg_{C^*,\opt},  \avg_{C^*,\opt} + {\eps^2} S_\ell)$. 
    Assume first that $\avg_{C^*,\opt}\geq \eps^3 S_\ell$. Then $\rho = (1+\eps^3)^{i_j}\geq\avg_{C^*,\opt}$, for the smallest possible integer $i_j$ which satisfies this condition. The claim then follows  since $\rho \leq (1+\eps^3)\avg_{C^*,\opt}\leq \avg_{C^*,\opt}+\eps^2 S_\ell$ as required (where we used that $\avg_{C^*, \opt} \leq S_\ell/\eps$). In the complementary case, namely $\avg_{C^*,\opt}< \eps^3 S_\ell$, one has that $i_j$ was selected to be the smallest integer so that $(1+\eps^3)^{i_j} \geq \eps^3 S_\ell$. Moreover,  the interval $[\avg_{C^*,\opt},  \avg_{C^*,\opt} + \eps^2 S_\ell)$ contains the interval $[\eps^3 S_\ell,\eps^2 S_{\ell}]$, and the latter interval contains at least one power of $(1+\eps^3)$. We thus have
    \[
        \avg_{C^*, \opt} \leq \rho \leq \avg_{C^*, \opt} + \eps^2 S_{\ell}.
    \]
    
    We conclude the proof by showing that $S_\ell \le (2+2\eps)\gamma_{C^*}$. Let $\ell$ be assigned to $c'$ in $S$, and consider any $p\in C^*$ which is assigned to $c''$ in $S$. Then one has $S_\ell=d(\ell,c')\leq d(\ell,c'')\leq d(\ell,c^*)+d(c^*,p)+d(p,c'')=d(\ell,c^*)+\opt_p+S_p$. Hence, averaging over $C^*$, one gets $S_\ell \le d(\ell, c^*) + \frac{1}{|C^*|} \sum_{p \in C^*} (S_p + \opt_p)$. Since $\ell$ is a leader, i.e., $\ell \in C^*_{avg}$, $d(\ell,c^*)\leq \avg_{C^*,\opt}\leq  \frac{1+2\eps}{|C^*|}\sum_{p \in C^*}\opt_p$, where we used that there is at least $(1-\eps)|C^*|$ clients $p\in C^*$ so that $d(p, c^*) \geq \avg_{C^*, \opt}$. Therefore, $S_\ell \leq (2+2\eps) \gamma_{C^*}$, which, as aforementioned, concludes the proof of the lemma.

\end{proof}

We thus have that the ball procedure produces a family $\calL_{\texttt{ball}}$ such that a set $\calB \in \calL_{\texttt{ball}}$ satisfies the conditions of the lemma.  While our algorithm will try all possible sets in $\calL_{\texttt{ball}}$ (since it does not know which one is valid), we do our analysis  by considering the run of the algorithm when it selects this set  $\calB$ of valid balls. 

\subsubsection{\emph{Dummy} Centers and Mixed Solutions $M_O$ and $M_D$}
\label{sec:dummycenters}
    Next, the algorithm creates, for each ball  $B(\ell, \rho) \in \calB$ with center $\ell$ and
    radius $\rho$, 
    a \emph{dummy} center $\delta$ at distance $\rho$ from $\ell$ and at distance 
    $\rho + d(\ell, p)$ from any other input point $p$. Let $\Lambda$ be the set of dummy centers, so $|\Lambda| = |\calB|$. Recall that  $\clientsexpensive$ is the set of clients that are in an expensive cluster of $\opt$, i.e., served by a center
in $\ho$ in the optimal solution.

\begin{lemma}
\label{lem:structSminusS0}
Suppose the sample $W$ is successful, and that the set of balls $\calB$ is valid, then there exists a set of centers $S_0$ of $S$ that satisfies the following properties:
  \begin{enumerate}
  \item $|S_0| = |\ho| \le \frac{\log n}{\eps^3}$, and;
  \item $\forall c \in S_0$, there is no $c^* \in \opt-\ho$ such that $t(c^*) = c$, and;
  \item $\clcost(S - S_0 \cup \dummyset) \le
  3 \sum_{p \in \clientsexpensive}\opt_p + \sum_{p \in \clients {-} \clientsexpensive}
  r(p) +
  O(\eps \cdot{\sopt})) \le (3+O(\eps))\sopt$.
  \end{enumerate}
  \label{lemma:S0properties}
\end{lemma}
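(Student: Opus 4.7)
The plan is to choose $S_0$ to avoid the centers of $S$ that are ``being used'' by the pure and cheap optimal clusters, and then bound the cost by exhibiting an explicit (suboptimal) assignment from clients to $S - S_0 \cup \dummyset$. Concretely, let $T := \{t(c^*) : c^* \in \opt - \ho\} \subseteq S$, the centers in $S$ that serve as the local-search counterpart for some pure or cheap optimal center. Since different $c^*$ may share the same $t(c^*)$, we have $|T| \le |\opt - \ho| = k - |\ho|$, hence $|S - T| \ge |\ho|$. Pick $S_0$ to be any subset of $S - T$ of size exactly $|\ho|$; this immediately gives property (2), and property (1) follows since the expensive clusters are a subset of the non-pure clusters, so $|\ho| \le k_{imp} \le \log n / \eps^3$ by Lemma~\ref{lem:numnonpure}.

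For property (3), I would bound $\clcost(S - S_0 \cup \dummyset)$ by the cost of the following feasible assignment: each $p \in \clientsexpensive$ is routed to the dummy center associated with its own expensive cluster, and each $p \in \clients - \clientsexpensive$ is routed to $t(c^*)$, where $c^*$ is the center of its optimal (pure or cheap) cluster. The second kind of assignment is legitimate precisely by property (2), which guarantees $t(c^*) \in S - S_0$. The contribution of the second group is by definition $\sum_{p \in \clients - \clientsexpensive} r(p)$.

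For the first group, fix an expensive cluster $C^*$ with center $c^* \in \ho$, leader $\ell \in C^*_\avg \cap W$, and corresponding ball $B(\ell, \rho) \in \calB$ with dummy $\delta$. The construction of the dummies gives $d(p, \delta) = \rho + d(\ell, p) \le \rho + d(\ell, c^*) + d(c^*, p) \le \rho + \avg_{C^*, \opt} + \opt_p$, and validity of $\calB$ (Lemma~\ref{lem:ballguesses}) yields $\rho \le \avg_{C^*, \opt} + \eps \gamma_{C^*}$. Summing over $p \in C^*$ and using that at least $(1-\eps)|C^*|$ clients of $C^*$ pay $\opt_p \ge \avg_{C^*, \opt}$ (so $|C^*| \avg_{C^*, \opt} \le (1+O(\eps)) \sum_{p \in C^*} \opt_p$) together with $|C^*|\gamma_{C^*} = \sum_{p \in C^*}(\opt_p + S_p)$, we obtain $\sum_{p \in C^*} d(p, \delta) \le (3 + O(\eps)) \sum_{p \in C^*} \opt_p + \eps \sum_{p \in C^*} S_p$.

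Summing over all expensive clusters and using $\clcost(S) \le 5\sopt$ from Lemma~\ref{lemma:localsearchis5approximation} to absorb the $\eps S_p$ terms into $O(\eps \cdot \sopt)$ produces the desired first inequality of (3); the second inequality $(3 + O(\eps)) \sopt$ then follows by combining with Lemma~\ref{lemma:convenience_upper_bound} to bound $\sum_{p \in \clients - \clientsexpensive} r(p) \le 2 \sum_{p \in \clients - \clientsexpensive} \opt_p + O(\eps \cdot \sopt)$ and noting $3A + 2B \le 3(A+B) = 3\sopt$ for $A = \sum_{\clientsexpensive} \opt_p$ and $B = \sum_{\clients - \clientsexpensive} \opt_p$. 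There is no real obstacle here beyond careful bookkeeping; the only creative ingredient is observing that reserving $T$ out of $S_0$ is both feasible (a counting argument on $|T|$) and exactly what makes the assignment of non-expensive clients to $t(c^*)$ remain valid after removing $S_0$.
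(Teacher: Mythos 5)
Your proposal is correct and follows essentially the same route as the paper: the same counting argument (at least $|\ho|$ centers of $S$ avoid the image of $t$ on $\opt-\ho$, using $|S|=|\opt|=k$) to define $S_0$, the same explicit assignment of non-expensive clients to $t(c^*)$ and of expensive clients to their cluster's dummy center, and the same triangle-inequality chain through the leader $\ell$ with $\rho \le \avg_{C^*,\opt}+\eps\gamma_{C^*}$, absorbed via $\clcost(S)\le 5\,\sopt$ and Lemma~\ref{lemma:convenience_upper_bound}. No gaps to report.
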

\begin{proof}
By definition, the centers in $\hat{\opt}$ are not pure, hence $|\hat{\opt}|\leq \frac{\log n}{\eps^3}$ by Lemma \ref{lem:numnonpure}.
Furthermore, since there are $|\opt - \ho|$ centers of $\opt$ not
in $\ho$, there are (at least) $|\ho|$ centers $c$ of $S$ such that there is no $c^* \in \opt-\ho$ such that $t(c^*) = c$, and we let $S_0$
refer to these centers: 1. and 2. follow immediately.

Let us turn to 3. The bound on the cost 
of the clients that are not in $\clientsexpensive$ 
follows from 2.: for 
each client $p$ in a cluster of $\opt$
whose center is $c^* \in \opt-\ho$, we have
that $t(c^*) \in S-S_0$ and so its cost
is at most $r(p)$.
We finally consider the points in $\clientsexpensive$.
Since the balls are valid, Lemma~\ref{lem:ballguesses} implies that 
for each expensive cluster $C^*$ of $\opt$, there is a 
ball $B(\ell,\rho)$  where $\rho \in [\avg_{C^*,\opt},  \avg_{C^*,\opt} + \eps \gamma_{C^*})$ and 
$\ell \in C^*_\avg$. Therefore, if we let $c^*$ be the center of $C^*$,
\begin{align*}
\sum_{p\in C^*} d(p, \dummyset) & \leq \sum_{p\in C^*} \left(d(p, c^*) + d(c^*, \ell) +  d(\ell, \dummyset) \right)\\
&\leq\sum_{p\in C^*} \left(\opt_p  + \avg_{C^*,\opt}+ \rho \right)\\
     & \leq \sum_{p\in C^*} \left(\opt_p + 2(\avg_{C^*,\opt} + {\eps} \gamma_{C^*}) \right) \\
    & = \sum_{p\in C^*} \opt_p + |C^*|2(\avg_{C^*,\opt} + {\eps} \gamma_{C^*})\\
    & \le  (3+O(\eps)) \sum_{p\in C^*} \opt_p
    + O(\eps \sum_{p \in C^*} {S_p}).
\end{align*}
The lemma follows from 
 $\clcost(S) \leq {5} \cdot {\sopt}$ (Lemma~\ref{lemma:localsearchis5approximation}) and by
invoking Lemma~\ref{lemma:convenience_upper_bound} to bound the sum
of the $r(p)$ values (where we use the assumption that the sample $W$ is successful).
\end{proof}

\paragraph{Definition of the mixed solutions $M_O$ and $M_D$.}
Next, let us analyze
the cost of the swap $(\ho, S_0)$. Namely, of the solution $M_O = S - S_0 \cup 
\hat{\opt}$, where $M$ in $M_O$ stands for  ``mixed solution''  and the subscript $O$ stands for that the mixed solution  is obtained by swapping in some elements from $\opt$  (and removing $S_0$). We will also analyze the ``dummy'' version of $M_O$ where instead of swapping in $\ho$, we swap in the dummy centers ${\dummyset}$. We denote that solution by $M_D = S - S_0 \cup \dummyset$, where the subscript $D$ stands for ``dummy''.  
\begin{lemma}
    We have  $d(p, \ho) = \opt_p$ if $p\in \clientsexpensive$ and $d(p, S- S_0)\leq r(p)$ if $p\in\clients - \clientsexpensive$.
    \label{lemma:costsMO}
\end{lemma}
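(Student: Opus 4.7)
My plan is to handle the two bullets separately, and in both cases the statement essentially unwinds the definitions from Sections~\ref{sec:pure}, \ref{sec:dsampleproc}, and the construction of $S_0$ in Lemma~\ref{lem:structSminusS0}.

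For the first bullet, I will argue directly: if $p \in \clientsexpensive$, then $p$ lies in an expensive optimal cluster $C^*$ whose center $c^*$ is, by the very definition of $\ho$ (the centers of the expensive clusters), a member of $\ho$. Hence
\[
d(p, \ho) \leq d(p, c^*) = \opt_p.
\]
This is the only step needed here and there is no obstacle.

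For the second bullet, the key fact to invoke is the second bullet of Lemma~\ref{lem:structSminusS0}, which guarantees that $S_0$ was chosen so that for every $c^* \in \opt - \ho$ we have $t(c^*) \notin S_0$, i.e., $t(c^*) \in S - S_0$. Now if $p \in \clients - \clientsexpensive$, then $p$ belongs to either a pure cluster or a cheap cluster of $\opt$; in either case the corresponding optimal center $c^*$ lies in $\opt - \ho$, so $t(c^*)$ is defined (as the associated $S$-center for pure clusters in Section~\ref{sec:pure}, and as the closest $S$-center to $c^*$ for cheap clusters in Section~\ref{sec:dsampleproc}) and belongs to $S - S_0$. Therefore
\[
d(p, S - S_0) \leq d(p, t(c^*)) = r(p),
\]
which gives the claim.

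Neither direction requires any real work beyond the bookkeeping of definitions, so I do not anticipate any obstacle; the proof essentially consists of pointing to the right two facts (``$c^* \in \ho$'' in the expensive case, and ``$t(c^*) \in S - S_0$'' in the non-expensive case, the latter being the whole purpose of the second bullet in Lemma~\ref{lem:structSminusS0}).
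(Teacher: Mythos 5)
Your proposal is correct and is essentially identical to the paper's own proof: the expensive case follows from $c^* \in \ho$ and $d(p,\ho) \le d(p,c^*) = \opt_p$, and the non-expensive case uses the second property of Lemma~\ref{lemma:S0properties} to get $t(c^*) \in S - S_0$, whence $d(p, S-S_0) \le d(p, t(c^*)) = r(p)$. No gaps.
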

\begin{proof}

  We have $d(p, \ho) = \opt_p$ when $p\in \clientsexpensive$ because the center serving $p$ in the optimal solution is in $\ho$. Consider the remaining case and let $p\in \clients - \clientsexpensive$ be a client whose serving center in $\opt$ is $c^*$. 
  By property 2. of Lemma~\ref{lemma:S0properties}, the center $t(c^*)$ is in $S - S_0$  and
so  $d(p, S- S_0) \leq d(p, t(c^*)) = r(p)$.  %
\end{proof}

Notice that $M_{O}$ contains $S- S_0$ and $\ho$. Therefore, if we sum up the above bounds for all clients we get
\begin{align*}
    \clcost(M_O) & \leq \sum_{p\in \clientsexpensive} \opt_p  + \sum_{p\in \clients {-} \clientsexpensive} r(p)\,, 
\end{align*}
which by Lemma~\ref{lemma:convenience_upper_bound} is at most $\sum_{p\in \clientsexpensive} \opt_p  + \sum_{p\in \clients {-}\clientsexpensive} 2 \opt_p + O(\eps \cdot \sopt)$ (assuming that that the sample $W$ was successful). By swapping $S_0$ with $\ho$ we thus get a $1$-approximation on the clients  $\clientsexpensive$ and a $2$-approximation of the remaining clients (plus a small error term). Our  goal in the next sections will thus be to approximate this swap. The next few steps will be aimed at simplifying (or guessing parts of) $S_0$. The last step will then approximate $\ho$ and the remaining part of $S_0$ by submodular function maximization subject to a partition matroid constraint. For that part, it will be helpful to have a bound on the dummy solution as well; specifically, a modified version of it (see Lemma~\ref{lemma:costboundofMOandMDwithassignments} and Claim~\ref{claim:Mprimebounds}). For intuition of those statements, let us here say that one can show 
\begin{align*}
        \clcost(M_D) & \leq \sum_{p\in \clientsexpensive} 3\opt_p  + \sum_{p\in \clients - \clientsexpensive} r(p)  + O(\eps \cdot \sopt)\,\,
\end{align*}
by observing that only the clients $\clientsexpensive$ have a different connection cost in $M_D$ than in $M_O$. 
Moreover, each such client $p\in \clientsexpensive$ that was previously assigned to a center $c^* \in \ho$, and belongs to cluster $C^*$ of $\opt$, has an associated dummy center $\delta$ at distance at most $d(p, c^*) +  2 \avg_{C^*, \opt} + \eps \gamma_{C^*}$ (assuming the set $\calB$ of balls is valid). Simplifications then give the stated inequality as $|C^*| \avg_{C^*, \opt} \leq (1+2\eps) \sum_{p\in C^*} \opt_p$ and $|C^*| \eps \gamma_{C^*} = \eps \sum_{p\in C^*} ( \opt_p + S_p)$.
\subsection{Removal of Expensive Centers of $S_0$}
\label{sec:removalofExpensive}

Thus, it remains to show that our algorithm can yield a good approximation to
the gain that the swap $(\ho, S_0)$ would provide. Of course, it will not 
necessarily be optimum, but we can show it
is enough to conclude 
the proof of our theorem.
We focus on the iteration of the procedure
that produces valid ball guesses. In particular, at this point given that $W$ is successful and the ball guesses are valid, the number of balls in our ball guess is equal to $|S_0|=|\ho|=|\Lambda|\leq \frac{\log n}{\eps^3}$. Our algorithm then
makes use of the following procedure that takes as input the local search solution $S$ and the set $\dummyset$ of dummy centers. 
\begin{mdframed}[hidealllines=true, backgroundcolor=gray!15]
\vspace{-5mm}
\paragraph{$expRem(S,\dummyset)$}\ \\
\begin{algorithmic}[1]
\State $\calLexp\leftarrow \emptyset$
\For{$(\frac{8}{\eps})^{|\Lambda|+1}\ln n$ many iterations}
\State $\calQ\leftarrow \emptyset$
\For{$|\Lambda|+1$ many iterations}
\State With probability 1/2 do the following. Consider the clustering induced by $\dummyset\cup S-\calQ$ and, for each $c\in S-\calQ$, let $\clcost(c)$ be the cost of the cluster associated with $c$ in the considered solution
\State Sample one $c\in S-\calQ$ with probability 
$\frac{\clcost(c)}{\sum_{c'\in S-\calQ}\clcost(c')}$ 
\State Set $\calQ\leftarrow \calQ\cup \{c\}$

\EndFor
\State $\calLexp\leftarrow \calLexp\cup \{\calQ\}$ 
\EndFor
\State \textbf{return} $\calLexp$
\end{algorithmic}
\end{mdframed}
The goal of the above procedure is to guess a subset $\calQ$ of all ``costly'' centers in $S_0$ (the set $\calLexp$ contains all guesses). Specifically, if we consider the solution $(S - \calQ) \cup \dummyset$, then we wish that the cost of the clusters corresponding to the remaining centers in $S_0 - \calQ$ is at most $\eps \cdot \sopt$. Formally, the clusters corresponding to $S_0 - \calQ$ 
 in $S - \calQ \cup \dummyset$ consists of all the clients whose closest center in $S - \calQ \cup \dummyset$ is from $S_0 - \calQ$. If we let $\clients' \subseteq \clients$ be those clients then the total cost in solution  $S- \calQ \cup \Lambda$ of the clusters with centers in $S_0 - \calQ$ is defined as 
 \[
 \sum_{p\in \clients'} d(p, S- \calQ \cup \Lambda) = \sum_{p\in \clients'} d(p, S_0 - \calQ)\,,
 \]
 where the equality holds because of the definition of $\clients'$.

    \begin{lemma}
    \label{lem:successguessprocess}
$expRem()$ runs in $n^{1/\eps^{O(1)}}$ time and produces a collection $\calLexp$ of at most $n^{1/\eps^{O(1)}}$ subsets of $S$ such that with probability at least $1-1/n$ there exists $\calQ\in \calLexp$ satisfying
    \begin{enumerate}
    \item $\calQ \subseteq S_0$, and
    \item The total cost in the solution $S - Q \cup \dummyset$ of the clusters with centers in $S_0 - \calQ$ is at most $\eps \cdot \sopt$.
    \end{enumerate}
\end{lemma}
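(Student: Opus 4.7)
The plan is to analyze a single outer iteration of the $expRem$ procedure, show it produces a good $\calQ$ with probability at least $(\eps/8)^{|S_0|+1}$, and then amplify via independence of the $(8/\eps)^{|S_0|+1}\ln n$ outer iterations. The running-time and cardinality bounds are immediate: since $|S_0| = |\ho| \leq \log n/\eps^3$ by Lemma~\ref{lem:structSminusS0}, $(8/\eps)^{|S_0|+1} = n^{O(\log(1/\eps)/\eps^3)} = n^{1/\eps^{O(1)}}$, and each inner iteration runs in polynomial time.

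For a single outer iteration, I would introduce a canonical ``good'' execution of the inner loop. At each step $t \in \{1,\ldots,|S_0|+1\}$, if the current set satisfies $\calQ \subseteq S_0$ and the cost of the clusters of $S_0 - \calQ$ in $S - \calQ \cup \dummyset$ is already at most $\eps\sopt$, the canonical action is to skip (tails on the $1/2$ coin); otherwise the canonical action is to flip heads and to have the proportional-to-cost sample land in $S_0 - \calQ$. Each canonical action occurs with probability at least $\eps/8$: skipping happens with probability $1/2 \geq \eps/8$, while the probability of heads-and-sample-from-$(S_0-\calQ)$ is at least $\tfrac{1}{2} \cdot \frac{\eps\sopt}{\sum_{c' \in S-\calQ}\clcost(c')} \geq \frac{\eps}{2(3+O(\eps))} \geq \eps/8$, using that $\sum_{c' \in S-\calQ}\clcost(c') \leq \clcost(S - \calQ \cup \dummyset) \leq \clcost(S - S_0 \cup \dummyset) \leq (3+O(\eps))\sopt$ by monotonicity of the clustering cost under center additions together with Lemma~\ref{lem:structSminusS0}.

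By independence across the $|S_0|+1$ inner steps, the canonical execution occurs with probability at least $(\eps/8)^{|S_0|+1}$, and a short induction shows that its final $\calQ$ is good. For condition~1, $\calQ \subseteq S_0$ is maintained throughout because canonical samples are always drawn from $S_0 - \calQ$. For condition~2, either the cost of $S_0 - \calQ$ drops to at most $\eps\sopt$ at some step, after which every canonical action is a skip and $\calQ$ (and hence the cost bound) is frozen for the remainder of the inner loop, or the process samples at every step; the latter either forces $\calQ = S_0$ at some point (so $S_0 - \calQ = \emptyset$ trivially satisfies condition~2), or leads to $|\calQ| = |S_0|+1$, which contradicts $\calQ \subseteq S_0$. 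Amplifying, the probability that no $\calQ \in \calLexp$ is good is at most $(1 - (\eps/8)^{|S_0|+1})^{(8/\eps)^{|S_0|+1}\ln n} \leq e^{-\ln n} = 1/n$, as required.

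The main obstacle I expect is that the quantity ``cost of clusters of $S_0 - \calQ$ in solution $S - \calQ \cup \dummyset$'' is \emph{not} monotone in $\calQ$: removing more centers from $S$ can cause clients previously served by $S - S_0$ to migrate into clusters of $S_0 - \calQ$ and potentially inflate their cost. The canonical-execution framing circumvents this by freezing $\calQ$ at the first moment the cost criterion is met, so condition~2 only has to be verified along the specific skip trajectory rather than as a global invariant of the random process.
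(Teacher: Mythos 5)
Your proposal is correct and follows essentially the same route as the paper: analyze one outer iteration of $expRem$, show that at each inner step the ``good'' action (skip once condition~2 holds, otherwise sample a center of $S_0-\calQ$) has conditional probability at least $\eps/8$ via the bound $\clcost(S-\calQ\cup\dummyset)\leq \clcost(S-S_0\cup\dummyset)=O(\sopt)$ from Lemma~\ref{lem:structSminusS0}, chain to get $(\eps/8)^{|S_0|+1}$, and amplify over the $(8/\eps)^{|S_0|+1}\ln n$ repetitions. One cosmetic point: the per-step bounds are conditional on the earlier canonical steps rather than genuinely independent, so the $(\eps/8)^{|S_0|+1}$ bound comes from chaining conditional probabilities (as in the paper), not independence; your freezing observation for the non-monotone cost criterion matches the paper's treatment.
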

\begin{proof}
The claim on the running time and on the size of $\calLexp$ is trivially satisfied since $|\Lambda|=|S_0|\leq \frac{\log n}{\eps^3}$. 

It remains to show that with probability at least $1-\frac{1}{n}$ at least one set $\calQ\in \calLexp$ satisfies the desired properties. Let us consider some iteration of the external for loop, and let $\calQ$ be the corresponding set. We say that the $j$-th iteration of the corresponding inner for loop is successful if the following happens:
\begin{enumerate}
    \item If condition (2) of the lemma is satisfied considering the current value of $\calQ$, then the event from line 5 does not happen (hence in particular $\calQ$ is not updated in this $j$-th iteration).
    \item Otherwise, the event from line 5 happens and furthermore the sampled $c$ belongs to $S_0$.
\end{enumerate}
Let $A_j$ denote the event that the considered $j$-th iteration is successful.
Let us condition on the event $A_{<j}$ that the previous iterations $A_1,\ldots,A_{j-1}$ are successful. In particular, one has $\calQ\subseteq S_0$ at the beginning of the $j$-th iteration. 
Let $B_j$ denote the event that condition (2) is satisfied at the beginning of the iteration. Then trivially $Pr[A_j|B_j,A_{<j}]\geq 1/2$. Suppose next that $B_j$ does not hold. In that case, with probability $1/2$, the procedure samples a center $c$. Since by assumption $\sum_{c\in S_0-\calQ}\clcost(c)>\eps \sopt$, the probability that the procedure samples some $c\in S_0-\calQ$ is at least
$$
\frac{\eps\sopt}{\sum_{c'\in S-\calQ}\clcost(c')}\geq \frac{\eps \sopt}{\clcost(\dummyset\cup S-\calQ)}\geq \frac{\eps \sopt}{\clcost(\dummyset\cup S-S_0)}\geq \frac{\eps}{4},
$$
where in the last inequality we used the fact that $\clcost(\dummyset\cup S-S_0)\leq 4\cdot\sopt$ by Lemma \ref{lem:structSminusS0}.

Altogether $Pr[A_j|\overline{B}_j,A_{<j}]\geq \eps/8$. Thus $Pr[A_j|A_{<j}]\geq \eps/8$. Chaining the obtained inequalities we obtain that all the events $A_1,\ldots,A_{|S_0|+1}$ are simultaneously true with probability at least $(\eps/8)^{|S_0|+1}$. Notice that when the latter event happens, the corresponding $\calQ$ satisfies all the properties. Indeed, $\calQ\subseteq S_0$ and furthermore the event $B_j$ must happen for some $j\leq |S_0|+1$ (hence for the next iterations if any) since otherwise $\calQ$ would contain more than $|S_0|$ elements.

Thus, the probability that the overall procedure fails is at most 
$(1-(\frac{\eps}{8})^{|S_0|+1})^{(8/\eps)^{|S_0|+1}\ln n}\leq \frac{1}{n}$.
\end{proof}

\subsubsection{Consistently Assigning Clients in Mixed Solutions and the sets  $\calU$ and $\calR$}
\label{sec:consistentlyassigning}
While our algorithm tries all possible $\calQ \in \calLexp$, we focus  on the execution path of a set $\calQ$ of centers satisfying the properties of Lemma~\ref{lem:successguessprocess}, i.e., 
\begin{enumerate}
    \item 
      $\calQ \subseteq S_0$, and
      \item  the cost in the solution $S - \calQ \cup \dummyset$ of the clusters with centers in $S_0 - \calQ$ is at most $\eps \cdot \sopt$.
\end{enumerate}
    We define the solution $S_{\calQ} := S- \calQ \cup \dummyset$ and for a center $c\in S_{\calQ}$ we let $S_{\calQ}(c)$ be the subset of clients closest to $c$ (i.e., assigned to $c$) in the solution $S_{\calQ}$. The second property above  allows us to consider each cluster $S_{\calQ}(c)$, $c\in S_0 - \calQ$, as contracted by paying a small extra cost of $\eps \cdot \sopt$. In particular, this allows us to give a ``consistent''  assignment $\mu_O$ of clients in the mixed solutions. Here, consistent means that all clients in $S_Q(c)$, for $c\in S_0 - \calQ$, are assigned to the same center in the mixed solutions where all of $S_0$ is removed.  
    
    \paragraph{Definition of $\mu_O$ and $\mu_D$.}
    We modify how the clients are assigned in the solution $M_O$ to obtain the assignment $\mu_O$.
    For every $c \in S_0 - \calQ$, all the clients in $S_{\calQ}(c)$ are reassigned to the same 
    center of $M_O$ as follows. Let $c_1$ be the 
    center of $\ho$ that is the closest to 
    $c$ and let $c_2$ be the center of $M_O-\ho = S - S_0$ that
    is the closest to $c$. The clients of $S_{\calQ}(c)$ are all assigned to either $c_1$ or $c_2$:
    \begin{itemize}
        \item If $d(c,c_1) \le d(c,c_2)/2$, assign
        $S_{\calQ}(c)$ to $c_1$; $S_{\calQ}(c)$ is 
        called a \emph{type-1} cluster.
        \item Otherwise ($d(c,c_1) > d(c,c_2)/2$), assign
        $S_{\calQ}(c)$ to $c_2$; $S_{\calQ}(c)$ is 
        called a \emph{type-2} cluster.
    \end{itemize}
    Let  $T_1$ be the set of clients in a type-1 cluster and $T_2$ the set of clients in a type-2
    cluster. The remaining clients in $\clientsexpensive - (T_1 \cup T_2)$ are assigned to their closest centers in $\ho$, and the remaining clients in $(\clients - \clientsexpensive)- (T_1 \cup T_2)$ are assigned to their closest centers in $S- S_0$.
    Notice that, even neglecting the extra cost due to the consistent assignment, the above assignment of clients in $T_2$ is suboptimal when $d(c,c_2)>d(c,c_1)>d(c,c_2)/2$. The motivation for this reassignment is technical. More specifically, it will be used in Lemma \ref{lemma:costboundofMOandMDwithassignments} to have a better upper bound on the cost of clients in $C^*\cap T_1$ for a non-expensive cluster $C^*$ (case b2). This leads to a larger upper bound on the cost of clients in $C^*\cap T_2$ for an expensive cluster $C^*$ (case a3), which is however tolerable.

    We further obtain an assignment $\mu_D$ of clients in $M_D$ by modifying $\mu_O$ as follows:  each client $p$ with $\mu_O(p) \in \ho$ is assigned to the associated dummy center by $\mu_D$. So for each client $p$ we have $\mu_D(p) = \mu_O(p) $ unless $\mu_O(p)\in \ho$ in which case $\mu_D(p)$ equals the dummy center associated with $\mu_O(p)$.

    \paragraph{Sets $\calR$ and $\calU$.} By the definition of type-1 and type-2 cluster, we can split the
    set of centers of $S_0 - \calQ$ into two groups $\calR$ and $\calU$. Let $\calR$
    be the set of centers of $S_0- \calQ$ whose set of clients is completely assigned
    to a center of $\ho$ in the assignment $\mu_O$, and let $\calU = S_0 - \calQ- \calR$ be the remaining ones that are assigned to centers in $S- S_0$.

    We end this section by analyzing the costs of the assignments $\mu_O$ and $\mu_D$. {Recall that} we use the notation $\clcost(M_O,\mu_O)$ to denote the cost of $M_O$ equipped with assignment $\mu_O$ and, similarly $\clcost(M_D, \mu_D)$ for the cost of $M_D$ equipped with assignment $\mu_D$. We say that the selection of $(W, \calB, \calQ)$ is successful, if the sample $W$ selected in~\ref{sec:dsampleproc} is successful, the set of balls $\calB$ from Section~\ref{sec:ballguesses} is valid, and $\calQ$ selected in this section satisfies the properties of Lemma~\ref{lem:successguessprocess}.
\begin{lemma}
    If the selection of $(W, \calB, \calQ)$ is successful,
    \[
        \clcost(M_O, \mu_O) + \clcost(M_D, \mu_D) \leq 4\cdot \sopt + O(\eps \cdot \sopt)\,. 
    \]
    \label{lemma:costboundofMOandMDwithassignments}
\end{lemma}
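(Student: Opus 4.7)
The plan is to compare $(\mu_O, \mu_D)$ to a natural bi-assignment $(\nu_O, \nu_D)$ defined as follows: for each $p$ whose $\opt$-center $c^*_p$ lies in $\hat{\opt}$ (i.e.\ $p \in \clientsexpensive$), set $\nu_O(p) = c^*_p$ and $\nu_D(p) = \delta^*_p$, where $\delta^*_p$ is the dummy associated with $c^*_p$; for all other $p$, set $\nu_O(p) = \nu_D(p) = t(c^*_p)$, which lies in $S - S_0$ by the second property of Lemma~\ref{lem:structSminusS0}.

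First, I would verify that $\clcost(M_O, \nu_O) + \clcost(M_D, \nu_D) \leq 4\sopt + O(\eps\sopt)$. Each expensive $p$ contributes $\opt_p + d(p, \delta^*_p) \leq 2\opt_p + 2\rho_p$, where the bound uses $d(p, \delta^*_p) = d(p, \ell_p) + \rho_p$ and $d(p, \ell_p) \leq \opt_p + \avg_{C^*_p} \leq \opt_p + \rho_p$; the total $\rho$-contribution is controlled by the cluster-level bound $|C^*|\rho_{C^*} \leq (1+2\eps)\sum_{p' \in C^*}\opt_{p'} + \eps\sum_{p' \in C^*}(\opt_{p'} + S_{p'})$ together with $\clcost(S) \leq 5\sopt$, yielding $2\sum_{p \in \clientsexpensive}\rho_p \leq (2+O(\eps))\sum_{p \in \clientsexpensive}\opt_p + O(\eps\sopt)$. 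Each non-expensive $p$ contributes $2r(p)$ and Lemma~\ref{lemma:convenience_upper_bound} gives $\sum_{p \in D - \clientsexpensive} 2r(p) \leq 4\sum_{p \in D - \clientsexpensive}\opt_p + O(\eps\sopt)$. Summing the two cases yields the claimed $4\sopt + O(\eps\sopt)$ bound.

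Next, I would bound the excess of $(\mu_O, \mu_D)$ over $(\nu_O, \nu_D)$ by $O(\eps\sopt)$. For $p \notin T_1 \cup T_2$, $\mu_O$ picks the closest center in the same set used by $\nu_O$, so $d(p, \mu_O(p)) \leq d(p, \nu_O(p))$; the $\mu_D$-dummy substitution obeys $d(p, \delta_{\mu_O(p)}) \leq d(p, \mu_O(p)) + 2\rho_{\mu_O(p)}$, and the same cluster-level $\rho$-bookkeeping, now applied to the Voronoi-style partition $\{p : \mu_O(p) = c_h\}$ of $\clientsexpensive \setminus (T_1 \cup T_2)$, keeps the $\rho$-excess within $O(\eps\sopt)$. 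For $p \in T_1 \cup T_2$ with $p \in S_{\calQ}(c)$ and $c \in S_0 - \calQ$, I apply triangle inequality through $c$: for $c \in \calR$ (type-1), $d(c, c_1(c)) \leq d(c, c^*_p)$ when $p$ is expensive, and $d(c, c_1(c)) \leq d(c, c_2(c))/2 \leq (d(p, c) + r(p))/2$ otherwise (using $t(c^*_p) \in S - S_0$); for $c \in \calU$ (type-2), the defining inequality $d(c, c_2(c)) \leq 2 d(c, c_1(c))$ reduces the analysis to the same two upper bounds on $d(c, c_1(c))$. In each case the per-client excess over $\nu$ is of the form $O(d(p, c))$, and the second property of Lemma~\ref{lem:successguessprocess} gives $\sum_{p \in T_1 \cup T_2} d(p, c) \leq \eps\sopt$, so the total excess is $O(\eps\sopt)$.

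The main obstacle is handling type-2 expensive clients, where $\mu_O$ and $\mu_D$ both route $p$ to $c_2(c) \in S - S_0$ rather than to a center of $\hat{\opt}$. The resolution chains the type-2 inequality as $d(c, c_2(c)) \leq 2 d(c, c_1(c)) \leq 2 d(c, c^*_p) \leq 2(d(p, c) + \opt_p)$, yielding $d(p, c_2(c)) \leq 3 d(p, c) + 2\opt_p$; the pair $d(p, \mu_O(p)) + d(p, \mu_D(p)) = 2 d(p, c_2(c)) \leq 6 d(p, c) + 4\opt_p$ matches exactly the $4\opt_p$ natural allocation for an expensive client, with the residual $6 d(p, c)$ absorbed into $O(\eps\sopt)$ via Lemma~\ref{lem:successguessprocess}.
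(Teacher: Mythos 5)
Your reorganization into a benchmark bi-assignment $(\nu_O,\nu_D)$ plus an excess term is a legitimate repackaging of the paper's argument (the paper's Lemma~\ref{lemma:costsMO} plays the role of your $\nu$-benchmark), and your handling of clients outside $T_1\cup T_2$ and of type-2 expensive clients matches the paper's bounds. However, there is a genuine gap in the type-1 cases. For $p\in S_{\calQ}(c)$ with $c\in\calR$, both coordinates of the $\mu$-pair move: $\mu_O(p)=c_1$, the closest center of $\ho$ to $c$, and $\mu_D(p)$ is the \emph{dummy associated with $c_1$}. Your bounds only control $d(c,c_1)$, i.e.\ the $\mu_O$ side; the $\mu_D$ side carries an additional $2\rho_{c_1}\approx 2(\avg_{C^*_{c_1},\opt}+\eps\gamma_{C^*_{c_1}})$, and your stated mechanism for absorbing radii --- cluster-level bookkeeping over the Voronoi partition of $\clientsexpensive\setminus(T_1\cup T_2)$ --- does not apply here: the clients funneled to $c_1$ through type-1 reassignment come from other clusters and their number bears no relation to $|C^*_{c_1}|$, so $\sum_p 2\rho_{c_1}$ cannot be charged to $\sum_{p'\in C^*_{c_1}}\opt_{p'}$. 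The missing ingredient is that $c_1$ is the center of an \emph{expensive}, hence non-covered, cluster, so every center of $S$ --- in particular $c\in S_0-\calQ\subseteq S$ --- satisfies $d(c,c_1)>(1+\eps)\avg_{C^*_{c_1},\opt}+\eps\gamma_{C^*_{c_1}}\geq\rho_{c_1}$. This converts the dummy overhead into at most $2d(c,c_1)$ and yields $d(p,\mu_D(p))\leq d(p,S_{\calQ})+3d(c,c_1)$ (the second inequality of the Claim inside the paper's proof), which is exactly what your assertion ``the per-client excess is $O(d(p,c))$'' silently relies on; as written, that assertion is unproved, and for type-1 expensive clients the naive excess over the actual $\nu$-pair even contains an $\Omega(\opt_p)$ term, so it must be compared against the cluster-aggregated $\nu$-bound, not pointwise.

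A second, minor slip: for type-2 clusters ($c\in\calU$) containing non-expensive clients, you propose $d(c,c_2)\leq 2d(c,c_1)$ and then reusing ``the same two upper bounds on $d(c,c_1)$''; but the second of those bounds, $d(c,c_1)\leq d(c,c_2)/2$, is precisely the type-1 condition and is reversed for type-2, so this sub-case is circular. The fix is immediate and already in your toolkit: bound $d(c,c_2)$ directly by $d(c,t(c^*_p))\leq d(p,c)+r(p)$, since $c_2$ is the closest center of $S-S_0$ to $c$ and $t(c^*_p)\in S-S_0$. With these two repairs your argument coincides with the paper's case analysis (its Claim plus the three cases for expensive and non-expensive clusters), just presented as benchmark-plus-deviation.
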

\begin{proof}
Throughout the proof of the lemma, we will repeatedly upper bound $d(p,\mu_D(p))$ in terms of $d(p,\mu_O(p))$ using the following bounds:
\begin{claim}\label{clm:costboundofMOandMDwithassignments}
Consider a client $p\in S_{\calQ}(c)$ with $\mu_O(p) \neq \mu_D(p)$ and so $\mu_O(p)\in \ho$ is the center of an expensive cluster $C^*$ of $\opt$. Then
\begin{enumerate}
\item $d(p, \mu_D(p))  \leq d(p, \mu_O(p)) + 2  ( \avg_{C^*, \opt} + \eps \gamma_{C^*})$, and
\item $d(p,\mu_D(p)) \leq d(p, S_\calQ) + 3 d(c, \mu_O(p))$.
\end{enumerate}
\end{claim}
\begin{proof}[Proof of Claim \ref{clm:costboundofMOandMDwithassignments}]
Let $\mu_O(p)  = c^*$. By the assumption that $\calB$ is a valid set of balls, we have by Lemma~\ref{lem:ballguesses} that there is a ball $B(\ell, \rho) \in \calB$ such that $c^* \in B(\ell, \rho)$ and 
\[
    \avg_{C^*, \opt} \leq \rho \leq \avg_{C^*, \opt} + \eps \gamma_{C^*}\,.
\]
Thus, if we let $\delta \in \Lambda$ be the dummy center associated with this ball (at a distance $\rho$ from $\ell$), then $\mu_D(p) = \delta$ and, by the triangle inequality, 
\[
    d(p, \mu_D(p)) \leq d(p, c^*) + d(c^*, \ell) + d(\ell, \delta) \leq d(p, \mu_O(p)) + 2  ( \avg_{C^*, \opt} + \eps \gamma_{C^*})\,.
\]
For the second bound, we again use the triangle inequality:
\begin{align*}
 \notag   d(p, \mu_D(p)) & \leq d(p, c)  + d(c, c^*) + d(c^*, \ell) + d(\ell, \delta) \\
\notag    &\leq d(p, S_\calQ) + d(c, c^*) + 2  ( \avg_{C^*, \opt} + \eps \gamma_{C^*}) \\
    &\leq d(p, S_\calQ) + 3d(c, c^*)\,,
    \label{eq:lastinequality}
\end{align*}
where the last inequality follows from the fact that we must have $d(c, c^*) > (1+\eps) \avg_{C^*, \opt} + \eps \gamma_{C^*}$ since otherwise we would have the contradiction that $c^*\in \ho$ is covered by $S$; recall that $\ho$ only contains expensive centers of $\opt$. 
\end{proof}
Having proven Claim \ref{clm:costboundofMOandMDwithassignments}, we prove the lemma by distinguishing between expensive and non-expensive clusters $C^*$ of $\opt$. 

\paragraph{Case a: $C^*$ is an expensive cluster of $\opt$.} Let $c^*\in \ho$ be the center of $C^*$. 
Let $p\in C^*$ be a client in $C^*$ and let $c$ be its closest center in $S_\calQ$, i.e.,   $p\in S_\calQ(c)$.  We bound $d(p, \mu_O(p)) + d(p, \mu_D(p))$  by considering three cases:
\begin{description}
    \item[Case a1: $p\in C^* - (T_1 \cup T_2)$:]  By definition, we have  $\mu_O(p)$ is the closest center in $\ho$. Moreover, $\mu_O(p)$ equals the cluster center $c^*$ of $C^*$ since $p$ is closest to $c^*$ among all centers in $\opt \supseteq \ho$. Claim \ref{clm:costboundofMOandMDwithassignments}.1 together with  Lemma~\ref{lemma:costsMO} thus gives us,
    \[
        d(p,\mu_O(p)) + d(p, \mu_D(p)) \leq 2\opt_p +  2  ( \avg_{C^*, \opt} + \eps \gamma_{C^*})\,.
    \]
\item[Case a2: $p\in C^* \cap T_1$:]  By definition of type-1 clusters,  $\mu_{O}(p) = c_1$ is the closest center in $\ho$ to $c$. Now, by Claim \ref{clm:costboundofMOandMDwithassignments}.2 and triangle inequality, 
\begin{align*}
d(p, \mu_O(p))  + d(p, \mu_D(p)) & \leq d(p,\mu_O(p)) + d(p, S_{\calQ}) + 3 d(c, c_1) \leq 2 d(p, S_\calQ) + 4 d(c,c_1)\\
& \leq 2 d(p, S_\calQ) + 4 d(c,c^*) \leq 2 d(p, S_\calQ) + 4 \left( d(c, p) + \opt_p\right) \\
& = 6 d(p, S_\calQ) + 4 \opt_p\,.
\end{align*}
\item[Case a3: $p\in C^*\cap T_2$.] By the definition of type-2 clusters, $\mu_O(p) = c_2$ is the closest center to $c$ in $S- S_0$, and $d(c, c_2) \leq 2 d(c, \ho)$. So, by the triangle inequality, $d(c, c_2) \leq 2( d(c,p) + \opt_p)$ where we additionally used that $d(p, \ho) \leq \opt_p$ by Lemma~\ref{lemma:costsMO}. As $\mu_O(p) = \mu_D(p)$ in this case,
\begin{align*}
d(p, \mu_O(p))  + d(p, \mu_D(p)) & = 2d(p,\mu_O(p))   \leq 2( d(p, c) + d(c, c_2)) \\
& \leq 2(d(p,c) + 2d(c, \ho))  \leq 2 d(p, c) + 4( d(c,p) + \opt_p)  \\
& = 6 d(p, S_\calQ) + 4 \opt_p\,.
\end{align*}
\end{description}
By the above bounds, we have that 
\begin{align*}
& \sum_{p\in C^*} (d(p, \mu_O(p)) + d(p, \mu_D(p)))\\
\leq &
\sum_{p\in C^* - (T_1 \cup T_2)} \left(2\opt_p +  2(\avg_{C^*, \opt} + \eps \gamma_{C^*})\right) + \sum_{p\in C^*\cap (T_1 \cup T_2)}(4\opt_p + 6 d(p, S_\calQ))\,.  
\end{align*}
We complete the analysis of the expensive clusters by upper bounding the terms $2\eps\gamma_{C^*}$ and $2\avg_{C^*, \opt}$. By definition we have $|C^* - (T_1 \cup T_2)| \cdot 2\eps \gamma_{C^*}\leq  2\eps \sum_{p\in C^*} (\opt_p + S_p)$. For the other term, we claim that
\[
    2 |C^* - (T_1 \cup T_2)|\cdot \avg_{C^*, \opt} \leq \sum_{p\in C^* - (T_1 \cup T_2)} 2\opt_p + 3\eps \cdot \sum_{p\in C^*} \opt_p\,. 
\]
This holds because there are at least $(1-\eps)|C^*|$ clients  in $C^*$ such that $\opt_p \geq \avg_{C^*, \opt}$, which implies 
\begin{align*}
    \sum_{p\in C^* - (T_1 \cup T_2)} 2\opt_p + 3\eps \cdot \sum_{p\in C^*} \opt_p &\geq 2(|C^* - (T_1 \cup T_2)|-\eps|C^*|)\cdot \avg_{C^*, \opt} + 3\eps(1-\eps)|C^*|\cdot \avg_{C^*, \opt} \\
    & \geq 2 |C^*- (T_1 \cup T_2)|\cdot \avg_{C^*, \opt}\,. 
\end{align*}
In summary, for an expensive cluster $C^*$ we have that 
\begin{align}
& \sum_{p\in C^*} (d(p, \mu_O(p)) + d(p, \mu_D(p)))\nonumber\\
\leq &
\sum_{p\in C^* - (T_1 \cup T_2)} 4\opt_p  + \sum_{p\in C^*\cap (T_1 \cup T_2)}(4\opt_p + 6 d(p, S_\calQ)) + \sum_{p\in C^*} (5 \eps \opt_p  + 2\eps S_p)\,.
\label{eq:expensive_cluster_bound}
\end{align}

\paragraph{Case b: $C^*$ is a non-expensive cluster of $\opt$.} We proceed with a similar analysis. Indeed, consider a client $p\in C^*$  and let $c$ be its closest center in $S_\calQ$, i.e.,   $p\in S_\calQ(c)$.  We again bound $d(p, \mu_O(p)) + d(p, \mu_D(p))$  by considering three cases:

\begin{description}
    \item[Case b1: $p\in C^* - (T_1 \cup T_2)$.] By the definition of $\mu_O$, we have that $\mu_O(p)$ is the closest center in $S - S_0$ to $p$. So $\mu_D(p) = \mu_O(p)$ and  $d(p,\mu_O(p)) + d(p, \mu_D(p)) \leq 2r(p)$, by  Lemma~\ref{lemma:costsMO}. 
\item[Case b2: $p\in C^* \cap T_1$.]  By the definition of type-1 clusters,  $\mu_{O}(p) = c_1$ is the closest center in $\ho$ to $c$. Now, by Claim \ref{clm:costboundofMOandMDwithassignments}.2
and the triangle inequality, 
\begin{align*}
d(p, \mu_O(p))  + d(p, \mu_D(p)) & \leq d(p,\mu_O(p)) + d(p, S_{\calQ}) + 3 d(c, c_1) \\
& \leq 2 d(p, S_\calQ) + 4 d(c,c_1)\,.
\end{align*}
We proceed by upper bounding $d(c, c_1)$. Here we critically use the fact that, by the definition of $T_1$, one has $d(c, c_1) \leq d(c, c_2)/2$ where $c_2$ is the closest center to $c$ in $S- S_0$. Let $c_3$ be the closest center to $p$ in $S - S_0$. Then
$$
d(c, c_1) \leq d(c, c_2)/2\leq d(c,c_3)/2 \leq (d(c,p) + d(p, c_3))/2 = (d(p, S_\calQ) + d(p, c_3))/2.
$$  
Moreover, we have $d(p, c_3) = d(p, S- S_0)\leq r(p)$ by Lemma~\ref{lemma:costsMO} and so
$
    d(c,c_1)  \le(d(p, S_\calQ) + r(p))/2$, which gives us the bound
    \begin{align*}
    d(p, \mu_O(p))  + d(p, \mu_D(p)) \leq 4 d(p, S_\calQ) + 2 r(p)\,.
    \end{align*}
\item[Case b3: $p\in C^*\cap T_2$.] By the definition of type-2 clusters, $\mu_O(p) = c_2$ is the closest center to $c$ in $S- S_0$, and so $\mu_O(p) = \mu_D(p)$. Now by Lemma~\ref{lemma:costsMO} and the triangle inequality, 
$$
d(p, \mu_O(p)) \leq d(p,c) + d(c, c_2) \leq 2d(p,c ) + d(p, S- S_0)\leq  2d(p,c) + r(p) = 2d(p, S_\calQ) + r(p).
$$
So we get the bound
\begin{align*}
    d(p,\mu_O(p)) + d(p, \mu_D(p)) =2 d(p,\mu_O(p))\leq 4 d(p, S_\calQ) + 2 r(p)\,.
\end{align*}
\end{description}
Summing up the above bounds, yields
\[\sum_{p\in C^*} (d(p, \mu_O(p)) + d(p, \mu_D(p))) \leq 
\sum_{p\in C^*} 2 r(p) + \sum_{p\in C^* \cap (T_1 \cup T_2)} 4 d(p, S_\calQ)
\]
for a non-expensive cluster $C^*$ of $\opt$. 
If we sum  up the above inequality for all non-expensive clusters and the bound~\eqref{eq:expensive_cluster_bound} for expensive clusters of $\opt$ we thus get
\[
\clcost(M_O, \mu_O) + \clcost(M_D, \mu_D) \leq \sum_{p\in \clientsexpensive} 4 \opt_p + \sum_{p\in \clients- \clientsexpensive} 2 r(p) + O( \eps \cdot \sopt)\,,
\]
where we used that $\sum_{c \in S_0 - \calQ} \sum_{p \in S_{\calQ}(c)} d(p, S_\calQ) = \sum_{p\in T_1 \cup T_2} d(p, S_\calQ) \leq \eps \cdot \sopt$ (Lemma~\ref{lem:successguessprocess}) and $\clcost(S) \leq 5 \sopt$ (Lemma~\ref{lemma:localsearchis5approximation}) to bound the error term $O(\eps \cdot \sopt)$.

The proof is now concluded by Lemma~\ref{lemma:convenience_upper_bound}, which says that $\sum_{p\in \clients - \clientsexpensive} r(p) \leq \sum_{p\in \clients - \clientsexpensive} 2\opt_p + O(\eps \cdot \sopt)$ if the sample $W$ is successful.

\end{proof}

\subsection{Removal of Cheap Centers in $S_0$}
\label{sec:removalofCheap}

At this stage we have defined a clustering $S_{\calQ} = S - \calQ \cup \dummyset$, where $\dummyset$ is the set of dummy centers. Recall that the mixed solution $M_O$ equals $S-S_0 \cup \ho$. So $S_\calQ - \Lambda$ represents progress compared to $S$ in that we have already removed a subset $\calQ$ of $S_0$. Furthermore, with respect to the assignment of clients $\mu_O$, we have that the remaining centers $S_0 - Q$, which we want to remove, are divided into two sets $\calR$ and $\calU$. The clusters with centers in $\calR$ are completely assigned (by $\mu_O$) to centers in $\ho$, and every cluster with center in $\calU$ is completely assigned to a center in $S - S_0$. 
The task of this section is to make further progress by guessing the centers in $\calU$ and their reassignment to centers in $S- S_0$. This turns out to be a difficult task, and we will instead ``approximately'' guess a set $\bcalU$ and reassignment $\tmu$ of the clients of those clusters. Specifically, we will output a solution $S_{\calQ \cup \bcalU} = S_{\calQ} - \bcalU$ (obtained by removing $\bcalU$ from $S_{\calQ}$) and an assignment $\tmu$ of clients to centers in $S_\calQ-\bcalU$. This is done through the following lemma.   {Recall that for $c\in S_{\calQ}$, $S_{\calQ}(c)$ is the set of clients assigned to $c$ in $S_\calQ$, i.e., closest to $c$ among all centers in $S_\calQ$.}    
\begin{lemma}\label{lem:successcheapremove}
Given $S$ and $\calQ$ as described above, there is a polynomial-time algorithm that produces a collection $\calL_{cheap}$ of subsets $\bcalU$ of $S-\calQ$, and for each such $\bcalU$ and assignment $\tmu$ of clients to centers in $S_{\calQ \cup \bcalU} = S_\calQ-\bcalU$, such that at least one such pair $(\bcalU,\tmu)$ satisfies the following properties:
\begin{enumerate}
    \item $|\bcalU| = |\calU|$.
    \item $\bcalU \cap \calR = \emptyset$, i.e., $\bcalU$ does not contain any center of $S_0 - \calQ$ whose set of clients is completely assigned
    to a center of $\ho$ in the assignment $\mu_O$.
    \item $\tmu$ satisfies: \begin{enumerate}
        \item For every center $c\in S_{\calQ} - \bcalU$, we have $\tmu(p)= c$ for every $p\in S_{\calQ}(c)$.
        \item For a center $c\in \bcalU$, the clients in $S_\calQ(c)$ are reassigned to a center $c' \in S- \calQ - \bcalU - \calR$, i.e., $\tmu(p) = c'$ for every $p\in S_{\calQ}(c)$.
       \item The cost increase of the reassignment $\tmu$ of clients previously assigned to $\bcalU$ compared to that  of the reassignment $\mu_O$ of clients previously assigned to $\calU$ is bounded by $O(\eps \cdot \sopt)$:
       \begin{gather*}
        \sum_{c \in \calU \cup \bcalU}  \sum_{p\in S_\calQ(c)} \left( d(p, \tmu(p)) - d(p, \mu_O(p)) \right) = O (\eps \cdot \sopt)\,. 
        \end{gather*}
    \end{enumerate}
\end{enumerate}
\end{lemma}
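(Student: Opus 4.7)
The plan is to enumerate candidate sizes $m \in \{0, 1, \ldots, |S_0|\}$ for $|\calU|$, giving polynomially many candidates to add to $\calL_{cheap}$ since $|S_0| \le \log n/\eps^3$, and for each $m$ to run a deterministic greedy procedure that outputs one pair $(\bcalU, \tmu)$. I will then argue that for the correct guess $m = |\calU|$ the output meets properties~1--3.

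The greedy maintains a removed set $\bcalU$, a ``blocked'' set $B$ of centers already used as reassignment targets, and a partial assignment $\tmu$ that agrees with the $S_\calQ$-assignment outside $\bcalU$. In each of $m$ rounds, for every candidate $c \in (S-\calQ) - \bcalU - B$ I compute
\[
\Delta(c) := \min_{c' \in (S-\calQ) - \bcalU - B - \{c\}} \sum_{p \in S_\calQ(c)} \bigl(d(p, c') - d(p, c)\bigr),
\]
select $c^*$ minimizing $\Delta$, add $c^*$ to $\bcalU$, add the witnessing $c'^*$ to $B$, and redirect $\tmu(p) = c'^*$ for every $p \in S_\calQ(c^*)$. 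Blocking targets addresses the cascading-error concern flagged in the overview, while the minimum over a restricted set ensures we never later delete a center previously relied upon as a savior. Properties~1 and~3a are immediate from this construction.

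For properties~2 and~3b, I argue by induction on the round that $\bcalU \cup B \subseteq (S-\calQ) - \calR$. By the definition of $\calR$ (type-1 clusters), every $c \in \calR$ has its $\mu_O$-target $c_1 \in \ho$ with $d(c, c_1) \le d(c, c_2)/2$ where $c_2 \in S-S_0$ is the closest $S-S_0$ center; reassigning $S_\calQ(c)$ to any $c' \in (S-\calQ) - \calR$ therefore costs substantially more than its $\mu_O$-cost, while a counting argument ensures there is always some $u \in \calU - \bcalU$ whose $\mu_O$-target in $(S-S_0) - B$ is still available (using $|B| < |\calU|$ and that $\mu_O$ matches each $\calU$-cluster to its closest $S-S_0$ center). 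This witness rules out the greedy choosing $c^* \in \calR$, and a symmetric argument shows a target $c'^* \in \calR$ is never strictly optimal, since $\calR$-centers sit close to $\ho$ and far from the centroid of any $\calU$-cluster being reassigned. Hence $\bcalU \cap \calR = \emptyset$ and every $\tmu$-target lies in $(S-\calQ) - \bcalU - \calR$ as required.

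Property~3c is established through an exchange argument: in round $i$, the greedy's choice has $\Delta \le \Delta(u_i)$ for any still-active $u_i \in \calU - \bcalU$; selecting $u_i$ according to a fixed ordering of $\calU$ by increasing $\mu_O$-reassignment cost bounds the greedy total by $\sum_{u \in \calU}\sum_{p \in S_\calQ(u)}\bigl(d(p, \mu_O(p)) - d(p, u)\bigr)$ plus an inflation from rounds where $u$'s natural $\mu_O$-target already lies in $B$. Each such inflation is controlled by a triangle-inequality step against the previously selected $c^*$, and there are at most $|\calU| \le \log n/\eps^3$ of them; combined with Lemma~\ref{lem:successguessprocess}, which bounds the total $S_\calQ$-cost of $S_0 - \calQ$ by $\eps \sopt$, the slack sums to $O(\eps \sopt)$. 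The principal obstacle will be exactly this charging: making precise that the blocking rule prevents chains of displacements of length more than one, so that each element of $B$ contributes at most $O(\eps \sopt / |\calU|)$ rather than an amplified quantity compounding across rounds.
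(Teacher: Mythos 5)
There is a genuine gap, and it is precisely at the point your proposal tries to avoid the paper's branching. Your deterministic greedy (one run per guessed size $m=|\calU|$) must never place a center of $\calR$ into $\bcalU$, and never use a center of $\calR$ (or, for the later charging, a center of $\calX$, the set of $\mu_O$-targets of $\calU$-clusters) as a reassignment target. You justify this by claiming that $\calR$-centers ``sit close to $\ho$ and far from'' cheap alternatives, so they are never the greedy minimizer and never strictly optimal targets. That structural claim is false in general: membership in $\calR$ is determined by the \emph{relative} comparison $d(c,c_1)\le d(c,c_2)/2$ between the nearest $\ho$-center and the nearest $S-S_0$ center, and says nothing about the \emph{absolute} reassignment cost $\sum_{p\in S_\calQ(c)}d(p,c')$ to other centers of $S-\calQ$. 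A type-1 (i.e.\ $\calR$) center can easily have the globally smallest reassignment cost, in which case your greedy removes it, violating property 2 and destroying the later submodular step, which needs $\calR$ intact so it can be closed in exchange for opening (approximations of) $\ho$. The witness argument you sketch (some $u\in\calU-\bcalU$ with an unblocked target exists) only shows a non-$\calR$ candidate is \emph{available}; it does not prevent an $\calR$- or $\calX$-center from being strictly cheaper and thus chosen by a minimum-selection rule. Similarly, blocking only the targets you have already used does not stop the greedy from deleting an $\calX$-center before it is ever used as a target, which breaks the exchange bound $R(c)\le R(c')$ needed for property 3c (that bound requires $\mu_O(p)\notin\bcalU$ for the clients of the comparison cluster $c'\in\calU-\bcalU$).

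This is exactly why the paper's procedure $cheapRem$ is not a pure greedy but a bounded-depth recursion: when it selects the minimizer $c$, it \emph{branches} on the unknowable classification of $c$ (is it in $\calX$? in $\calR$? to be added to $\bcalU$?) and, when needed, on the classification of its target $next(c)$ (in $\calR$? in $\calU$? neither?), recording guessed members of $\calR$ and $\calX$ so they are never removed and never reused. Since the recursion depth is $O(\ell)=O(|S_0|)=O(\log n/\eps^3)$ with at most four branches per level, the collection $\calL_{cheap}$ has size $n^{1/\eps^{O(1)}}$, and correctness is only claimed along the branch of correct guesses. To repair your argument you would either need to prove the (false) separation property that exempts $\calR\cup\calX$ from ever being greedy-optimal, or reintroduce enumeration over the classification of the selected centers and their targets --- at which point you have essentially reconstructed the paper's proof.
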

In words, the above properties of $\tmu$ say that we maintain the assignment of clients whose closest center remains the same, and the clients associated to the removed centers $\bcalU$ are reassigned to centers not in $\calR$ at no higher cost (up to $O(\eps \cdot \sopt)$) than the cost of the reassignment of $\calU$ in the modified solution $M_O$.

Let $\calX$ be the centers in $S-S_0$ to which the points $p\in S_{\calQ}(c)$ with $c\in \calU$ are assigned according to $\mu_O$. Notice that $|\calX|\leq |\calU|$ and $\calX \cap (\calQ\cup \calU\cup \calR)=\emptyset$.
We assume that the values of $|\calU|$, $|\calR|$, and $|\calX|$ are known. This can be achieved by trying, {for each such value}, all the integers between $0$ and $|S_0|\leq \frac{\log n}{\eps^3}$. Let $\ell:=|\calU|+|\calR|+|\calX|\leq 2|\calU|+|\calR|\leq 2|S_0 - \calQ|$. Then we run the recursive procedure $cheapRem()$ described in the box with input $(\calU',\calR',\calX',\calN,\bcalU)=(\emptyset,\emptyset,\emptyset,\emptyset,\emptyset)$. We assume that $cheapRem()$ has access to the quantities $S$, $\calQ$, $|\calU|$, $|\calR|$, $|\calX|$ and $\ell$, as well as to a global variable $\calL_{cheap}$, which is initialized to $\emptyset$. The intuition for the parameters is as follows. Intuitively $\calU'$, $\calR'$ and $\calX'$ are subsets of $\calU$, $\calR$ and $\calX$, resp., that we have already identified, while $\bcalU$ is the current value of the set $\bcalU$ under construction. Intuitively, $\calN$ are the centers to which the points $p\in S_{\calQ}(c)$ with $c\in \bcalU$ are reassigned, namely $\tmu(p)\in \calN$. At the end of the root call, the global variable $\calL_{cheap}$ contains the desired collection of sets $\bcalU$ as in Lemma 
\ref{lem:successcheapremove}. Whenever some $\bcalU$ is added to $\calL_{cheap}$, we define a corresponding $\tmu$ as follows: For each $p\in S_{\calQ}(c)$, $\tmu(p)=c$ if $c\in S_{\calQ}-\bcalU$. Otherwise, i.e. if $c\in \bcalU$, $\tmu(p)=next(c)\in \calN$, where $next(c)$ is defined in the recursive call when $c$ is added to $\bcalU$.

\begin{mdframed}[hidealllines=true, backgroundcolor=gray!15]
\vspace{-5mm}
\paragraph{$cheapRem(\calU',\calR',\calX',\calN,\bcalU)$}\ \\
\begin{algorithmic}[1]
\If{$|S-\calQ|\leq 4\ell$}\label{alg:cheapRem:fewCenters}
\For{All $\bcalU\subseteq S-\calQ$ of size $|\calU|$ and $\bcalR\subseteq S-\calQ-\bcalU$ of size $|\calR|$}
\State Add $\bcalU$ to $\calL_{cheap}$ and, for each $c\in \bcalU$, set $next(c)$ to the closest center to $c$ in $S-\calQ-\bcalU-\bcalR$\label{alg:cheapRem:optimalGuess}
\EndFor
\State \textbf{halt}
\EndIf
\If{$|\bcalU|=|\calU|$}\label{alg:cheapRem:termination}
\State Add $\bcalU$ to $\calL_{cheap}$ and \textbf{halt}
\EndIf
\State Let $c\in S-\calQ-\calR'-\calX'-\calN-\bcalU$ minimize $R(c):=\sum_{p\in S_{\calQ}(c)}d(p,next(c))$, where $next(c)$ is the closest center to $c$ in $S-\calQ-\calU'-\calR'-\bcalU-\{c\}$.\label{alg:cheapRem:selectc}
\If{$c\notin \calU'$ and $|\calX'|<|\calX|$} $cheapRem(\calU',\calR',\calX'\cup \{c\},\calN,\bcalU)$\label{alg:cheapRem:discoveredX}
\EndIf
\If{$c\notin \calU'$ and $|\calR'|<|\calR|$} $cheapRem(\calU',\calR'\cup \{c\},\calX',\calN,\bcalU)$\label{alg:cheapRem:discoveredR}
\EndIf
\If{$|\calU'|+|\calR'|=|\calU|+|\calR|$ or $next(c)\in \calN$} 
$cheapRem(\calU',\calR',\calX',\calN\cup \{next(c)\},\bcalU\cup \{c\})$\label{alg:cheapRem:easyAddtU}
\Else
\If{$|\calR'|<|\calR|$} $cheapRem(\calU',\calR'\cup \{next(c)\},\calX',\calN,\bcalU)$ \label{alg:cheapRem:guessNextR}
\EndIf
\If{$|\calU'|<|\calU|$} $cheapRem(\calU'\cup \{next(c)\},\calR',\calX',\calN,\bcalU)$ \label{alg:cheapRem:guessNextU}
\EndIf
\State $cheapRem(\calU',\calR',\calX',\calN\cup \{next(c)\},\bcalU\cup \{c\})$\label{alg:cheapRem:guessNextNotRU}
\EndIf
\end{algorithmic}
\end{mdframed}

    \begin{proof}[Proof of Lemma \ref{lem:successcheapremove}]    
Consider the execution of the above procedure for the correct guess of the values $|\calR|$, $|\calX|$ and $|\calU|$. 
We remark that, when line \ref{alg:cheapRem:selectc} is executed, the set $S-\calQ-\calR'-\calX'-\calN-\bcalU$ is not empty (so that we can choose {an} appropriate $c$). Indeed, at each recursive call we add at most two elements to $\calR'\cup \calX'\cup \calN\cup \bcalU$  
and the value of $|\calU'|+|\calR'|+|\calX'|+|\bcalU|$ grows by at least $1$. The latter value cannot grow more than $2|\calU|+|\calR|+|\calX|\leq 2\ell$ times since at that point we would have necessarily $|\bcalU|=|\calU|$, which makes the condition of line \ref{alg:cheapRem:termination} true. Since the condition of line \ref{alg:cheapRem:fewCenters} cannot hold if line \ref{alg:cheapRem:selectc} is executed at least once, we have that $|S- \calQ|> 4\ell$. Thus $S-\calQ$ contains sufficiently many elements to remove up to $2$ elements for up to $2\ell$ times. 

Concerning the running time of the procedure, each recursive step involves a branching on at most $4$ subproblems, and in each one of them the value of $|\calU'|+|\calR'|+|\calX'|+|\bcalU|$ grows by at least $1$. Thus, by the same argument as before, the depth of the recursion is at most $2\ell$. So the number of recursive calls is at most $4^{2\ell}=n^{{1/\eps^{O(1)}}}$, implying a polynomial running time.

It remains to show that at least one set $\bcalU\in \calL_{cheap}$ (with the associated $\tilde{\mu}$) at the end of the procedure satisfies the claim. If the condition of line \ref{alg:cheapRem:fewCenters} holds, it must happen that in one of the executions of line \ref{alg:cheapRem:optimalGuess} one has $\bcalU=\calU$ and $\bcalR=\calR$. In that case $\bcalU$ and the corresponding $\tmu$ trivially satisfy the claim. In particular we remark that for each $p\in S_{\calQ}(c)$ with $c\in \bcalU=\calU$, $\tmu(p)=next(c)=\mu_{O}(p)$. 

Otherwise, it is sufficient to show that there exists a chain of recursive calls that leads to a solution $\bcalU$ with the desired properties. In more detail, let us focus on the solution $\bcalU\in \calL_{cheap}$ which is generated by the following chain of recursive calls starting from the root call with $(\calU',\calR',\calX',\calN,\bcalU)=(\emptyset,\emptyset,\emptyset,\emptyset,\emptyset)$. Consider the current input parameters $(\calU',\calR',\calX',\calN,\bcalU)$ and let $c$ be selected in line \ref{alg:cheapRem:selectc}. During the process, we will maintain the following invariants:
\begin{equation}\label{inv:UpRpXp}
\calU'\subseteq \calU;\,\calR'\subseteq \calR; \calX'\subseteq \calX.    
\end{equation}
\begin{equation}\label{inv:N}
\calN \cap (\calU \cup \calR)=\emptyset.    
\end{equation}
\begin{equation}\label{inv:tU}
\bcalU \cap (\calX \cup \calR\cup \calN)=\emptyset. 
\end{equation}
The root call satisfies the mentioned invariants trivially since $\bcalU=\calU'=\calR'=\calX'=\calN=\emptyset$.

If $c\in \calX$, we continue with the recursive call of line \ref{alg:cheapRem:discoveredX}. The invariants are maintained since $\calX'\cup\{c\}\subseteq \calX$ and $c\notin \bcalU$. Observe that this choice is excluded if $c\in \calU'$ or $|\calX'|=|\calX|$, however by Invariant \eqref{inv:UpRpXp} the latter conditions cannot happen when $c\in \calX$.

Similarly, if $c\in \calR$, we continue with the recursive call of line \ref{alg:cheapRem:discoveredR}. The invariants are maintained since $\calR'\cup\{c\}\subseteq \calR$, $c\notin \calN$, and $c\notin \bcalU$. Observe that this choice is excluded if $c\in \calU'$ or $|\calR'|=|\calR|$, however by Invariant \eqref{inv:UpRpXp} the latter conditions cannot happen when $c\in \calR$.

Otherwise, namely if $c\notin \calR\cup \calX$, if the condition of line \ref{alg:cheapRem:easyAddtU} is true, we continue with the recursive call of the same line. Let us show that invariants are preserved. Suppose first that $next(c)\in \calN$. In this case the first two invariants are not affected, and the third one is maintained since $c\notin \calN$ (and by assumption $c\notin \calR\cup \calX$). Otherwise one has $|\calU'|+|\calR'|=|\calU|+|\calR|$, which by Invariant \eqref{inv:UpRpXp} implies $\calU'=\calU$ and $\calR'=\calR$. Thus one has $next(c)\notin \calU\cup \calR$ and $c\notin \calR\cup \calX\cup \calN$.

Otherwise, i.e., if the condition of line \ref{alg:cheapRem:easyAddtU} is false, depending on whether $next(c)$ belongs to $\calR$, $\calU$, or none of the previous cases, we continue with the recursive calls of lines \ref{alg:cheapRem:guessNextR}, \ref{alg:cheapRem:guessNextU}, and \ref{alg:cheapRem:guessNextNotRU}, resp.
Suppose first that $next(c)\in \calR$. Observe that Invariant \eqref{inv:UpRpXp} guarantees that in this case $|\calR'|<|\calR|$ as required by the condition of line \ref{alg:cheapRem:guessNextR}. Invariant \eqref{inv:UpRpXp} is maintained since $\calR'\cup \{next(c)\}\subseteq \calR$, while Invariants \eqref{inv:N} and \eqref{inv:tU} are maintained since $\calN$ and $\bcalU$ are not modified.
Suppose next that $next(c)\in \calU$. Observe that Invariant \eqref{inv:UpRpXp} guarantees that in this case $|\calU'|<|\calU|$ as required by the condition of line \ref{alg:cheapRem:guessNextU}. Similarly to the previous case, Invariant \eqref{inv:UpRpXp} is maintained since $\calU'\cup \{next(c)\}\subseteq \calU$, while Invariants \eqref{inv:N} and \eqref{inv:tU} are maintained since $\calN$ and $\bcalU$ are not modified.
The remaining case if that $next(c)\notin \calR\cup \calU$. Invariant \eqref{inv:N} is maintained since $next(c)\notin \calR\cup \calU$. Invariant \eqref{inv:tU} is maintained since $c\notin \calX\cup \calR$ by the assumptions of this case, $c\notin \calN\cup \{next(c)\}$ by construction, and $c\notin \bcalU$ by construction.

Let us show that the final $\tilde{U}$ obtained with the above procedure with the associated $\tilde{\mu}$ satisfies the claim. Condition (1) is trivially satisfied. Invariant \eqref{inv:tU} directly implies property (2). We next focus on property (3). Property (a) is satisfied by definition. When a center $c$ is added to $\bcalU$, the respective $c':=next(c)=\tmu(p)$ for all $p\in S_{\calQ}(c)$ is added to $\calN$ (if not already there). Invariant \eqref{inv:N} guarantees that $c'\notin \calR$, while Invariant \eqref{inv:tU} guarantess that $c'\notin \bcalU$. Property (b) follows.

It remains to prove property (c). Consider first any $c\in \calU\cap \bcalU$. We claim that for the corresponding $p\in S_{\calQ}(c)$ one has $\tilde{\mu}(p)=\mu_O(p)$. Indeed, let $c_1,c_2,\dots$ be the centers in $S- \calQ- \{c\}$ in non-decreasing order of distance from $c$. Let also $c_q$ be the first such center belonging to $S- S_0$. In particular $\mu_O(p)=c_q$. By construction all the centers $c_1,\ldots,c_{q-1}$ must have been already added to $\calU'\cup \calR'$ in previous recursive steps when $c$ is added to $\bcalU$. Furthermore, when the latter event happens either $c_q\in \calN$ already or $c_q$ is added to $\calN$. In both cases one has that $\tilde{\mu}(p)=c_q$ as desired. As a consequence
\begin{align}
\sum_{c\in \calU\cap \bcalU}\sum_{p\in S_{
\calQ}(c)}(d(p,\tilde{\mu}(p))-d(p,\mu_O(p)))=0.\label{lem:successcheapremove:eqn1}
\end{align}
Consider next any $c\in \calU- \bcalU$. For any $p\in S_{\calQ}(c)$ by construction one has $\tilde{\mu}(p)=c$. 
Thus
\begin{align}    
\sum_{c\in \calU- \bcalU}\sum_{p\in S_{
\calQ}(c)}(d(p,\tilde{\mu}(p))-d(p,\mu_O(p))) & \leq \sum_{c\in S_0- \calQ}\sum_{p\in S_{
\calQ}(c)}d(p,c) -\sum_{c\in \calU- \bcalU}\sum_{p\in S_{
\calQ}(c)}d(p,\mu_O(p))
\nonumber \\
& \leq O(\eps) opt-\sum_{c\in \calU- \bcalU}\sum_{p\in S_{
\calQ}(c)}d(p,\mu_O(p)),\label{lem:successcheapremove:eqn2}
\end{align}
where in the last inequality above we used the assumption that $\calQ$ satisfies property (2) of Lemma \ref{lem:successguessprocess}, and the fact that $\calU- \bcalU\subseteq \calU\subseteq S_0-\calQ$.

Finally consider any $c\in \bcalU- \calU$. Define any bijection between each such $c$ and some distinct $c'\in \calU- \bcalU$ (this is possible since $|\bcalU|=|\calU|$). Consider the recursive call when $c$ is added to $\bcalU$. Let $\calU'$, $\calR'$, $\calX'$, $\calN$, $next(c)$, $R(c)$, $next(c')$, $R(c')$ be the associated quantities in that call. Notice that at that time $c'$ was an available candidate to be added to $\bcalU$ since $c'\in S-\calQ-\calR-\calX-\bcalU\subseteq S-\calQ-\calR'-\calX'-\bcalU$ (using Invariant \eqref{inv:UpRpXp})
and $\calN\cap \calU=\emptyset$ by Invariant \eqref{inv:N}, hence $c'\in S-\calQ-\calR'-\calX'-\calN-\bcalU$. Since we added $c$ instead of $c'$, it must be the case that $R(c)\leq R(c')$. 
For every $p\in S_{\calQ}(c')$, $\mu_O(p)\in \calX$, hence $\mu_O(p)\notin \bcalU$ by Invariant \eqref{inv:tU}. It follows that $\mu_O(p)\in S-\calQ-\calU- \calR- \bcalU-\{c'\}\subseteq S-\calQ-\calU'- \calR'- \bcalU-\{c'\}$, where we used again Invariant \eqref{inv:UpRpXp}. Thus $d(c',next(c'))\leq d(c',\mu_O(p))$. Hence for every $p\in S_{\calQ}(c')$, 
$$
d(p,next(c')) \leq d(p,c')+d(c',next(c'))\leq d(p,c')+d(c',\mu_O(p))\leq 2d(p,c')+d(p,\mu_O(p)).
$$
Summarizing,
\begin{align}
& \sum_{c\in \bcalU- \calU}\sum_{p\in S_{\calQ}(c)}(d(p,\tilde{\mu}(p))-d(p,\mu_O(p))) \leq \sum_{c\in \bcalU- \calU}\sum_{p\in S_{\calQ}(c)}d(p,\tilde{\mu}(p))=
\sum_{c\in \bcalU- \calU}R(c)\leq \sum_{c'\in \calU- \bcalU}R(c')\nonumber \\
& =
\sum_{c'\in \calU- \bcalU}\sum_{p\in S_{\calQ}(c')}d(p,next(c'))
 \leq \sum_{c'\in \calU- \bcalU}\sum_{p\in S_{\calQ}(c')}(2d(p,c')+d(p,\mu_O(p)))\nonumber\\
 & \leq O(\eps) opt+ \sum_{c\in \calU- \bcalU}\sum_{p\in S_{\calQ}(c)}d(p,\mu_O(p)),\label{lem:successcheapremove:eqn3}
\end{align}
where in the last inequality above we used again the assumption on 
$\calQ$, i.e., property (2) of Lemma \ref{lem:successguessprocess}.
Property (c) follows by summing \eqref{lem:successcheapremove:eqn1}, \eqref{lem:successcheapremove:eqn2}, and \eqref{lem:successcheapremove:eqn3}.

\end{proof}

\subsubsection{Mixed Solutions after the Removal of $\bcalU$}    

We let $(S_{\calQ \cup \bcalU}, \tmu)$ be the output of a successful run of the $cheapRem()$ procedure, i.e., where $\bcalU \in \calL_{cheap}$ and its associated assignment $\tmu$ satisfies the properties of Lemma~\ref{lem:successcheapremove}. Recall that $S_{\calQ \cup \bcalU}$ consists of the centers $S' = S - \calQ - \bcalU$ and the dummy centers $\dummyset$.
Further recall that $M_O$ consists of the centers $S- S_0 = S - \calQ - \calU - \calR$ and $\ho$.  $M_D$ is the same set of centers except that $\ho$ is replaced by $\dummyset$. Notice that some centers of $M_O$ and $M_D$ are now removed if $\bcalU - \calU \neq \emptyset$. To take care of this, we modify these mixed solutions to obtain $M'_O$ and $M'_D$. 
 
 We first define $M'_O$; the definition of $M'_D$ is then very similar. The centers of $M'_O$ are $S- \calQ - \bcalU
 - \calR$ and $\ho$. Hence, the difference between $M_O$  and $M'_O$  is that in $M_O$ we remove $\calU$ from $S$ and in $M'_O$ we remove $\bcalU$. In other words,  $M'_O = M_O \cup \calU - \bcalU$. Similarly, we let $M'_D = M_D \cup \calU - \bcalU$. We update the assignment $\mu_O$ of $M_O$ to an assignment $\mu'_O$ of $M'_O$ (and the assignment $\mu_D$ to $\mu'_D$). 
 Recall that $\mu_O$ assigns each client $p$ to its closest center in $\ho$ (if $p\in \clientsexpensive$) or its closest center in $S- S_0$ (if $p\in \clients- \clientsexpensive$) except for those clients that belong to a cluster $S_{\calQ}(c)$ with $c\in \calU\cup \calR$. Indeed, the clients in $S_\calQ(c)$ are all assigned to the center of $\ho$ that is closest to $c$  if $c\in \calR$, and if $c\in \calU$ they are all assigned to the center in $S - S_0$ that is closest to $c$. We also recall that $\mu_D$ is the same as $\mu_O$ except when $\mu_O(p) \in \ho$ in which case $\mu_O(p)$ is replaced by its corresponding dummy center.

 \paragraph{Definitions of $\mu_O'$ and $\mu_D'$.}
 For clients $p\in S_{\calQ}(c)$ with $c \not\in \calU \cup \bcalU \cup \dummyset$, we define $\mu_O'(p) = \mu_O(p)$. For $p\in S_{\calQ}(c)$  with $c\in \dummyset$, we let $\mu'_O(p)$ be the closest center in $\ho$.
 Finally, for $p\in S_{\calQ}(c)$ with $c\in \calU \cup \bcalU$, we define $\mu'_O(p) =  \tmu(p)$. 

 The assignment $\mu'_D$ is obtained in the same way from $\mu_D$ with the difference that we use $\dummyset$ instead of $\ho$: 
 For clients $p\in S_{\calQ}(c)$ with $c \not\in \calU \cup \bcalU \cup \dummyset$, we define $\mu_D'(p) = \mu_D(p)$. For $p\in S_{\calQ}(c)$  with $c\in \dummyset$, we let $\mu'_D(p)$ be the closest center in $\dummyset$.
 Finally, for $p\in S_{\calQ}(c)$ with $c\in \calU \cup \bcalU$, we define $\mu'_D(p) =  \tmu(p)$. 
 
 This completes the definition of $\mu'_O$ and $\mu'_D$.
  We remark that we have $\mu'_O(p) = \mu_O(p)$ and $\mu'_D(p) = \mu_D(p)$  for all $p\in S_\calQ(c)$ with $c\not \in \calU \cup \bcalU \cup \dummyset$.
    
    We continue by arguing that $\mu'_O$ is well-defined, i.e., that $\mu'_O(p) \in M_O'$ for every $p\in \clients$ (the proof for $\mu'_D$ is the same).
 This is immediate for a client $p\in S_\calQ(c)$ with $c \in \dummyset$. For a client $p\in S_{\calQ}(c)$ with $c\in \cal U \cup \bcalU$, it holds because, by Lemma~\ref{lem:successcheapremove}, $\tmu(p)$ equals $c\in M'_O$ if  $c \in \calU -  \bcalU$ and otherwise $\tmu(p) \in  S- \calQ - \bcalU - \calR\subseteq M'_O$.
 It remains to verify that $\mu_O(p) \in M_O'$ for a client $p\in S_\calQ(c)$ with $c\not \in \calU \cup \bcalU \cup \dummyset$. If $c\in \calR$, we have $\mu_O(p) \in \ho \subseteq M'_O$.  Similarly, if $c \not \in \calU \cup \bcalU \cup \Lambda \cup \calR $ and $p\in \clientsexpensive$ we have $\mu_O(p) \in \ho \subseteq M'_O$.  In the remaining case when $c \not \in \calU \cup \bcalU \cup \Lambda \cup \calR $ and $p\in \clients- \clientsexpensive$, $\mu_O(p)$ equals $p$'s closest center in $S- S_0$. As   $p\in S_\calQ(c)$, we thus have   $\mu_O(p) = c \in S - (S_0 \cup \bcalU) \subseteq M'_O$.

 The following upper bound on $\clcost(M'_O, \mu'_O) + \clcost(M'_D, \mu'_D)$ is a fairly immediate consequence of the assumption that the cheap-removal process is successful (Property 3c of Lemma \ref{lem:successcheapremove}). 
 Similarly to before, we say that the selection of $(W, \calB, \calQ, \bcalU,\tmu)$ is successful, if the sample $W$ selected in~\ref{sec:dsampleproc} is successful, the set of balls $\calB$ from Section~\ref{sec:ballguesses} is valid,  $\calQ$ selected in Section~\ref{sec:removalofExpensive} satisfies the properties of Lemma~\ref{lem:successguessprocess}, and $\bcalU,\tmu$  selected in this section satisfies the properties of Lemma~\ref{lem:successcheapremove}. 
 \begin{claim}
     If $(W, \calB, \calQ, \bcalU,\tmu)$ is successful,
     $$\clcost(M'_O, \mu'_O) + \clcost(M'_D, \mu'_D) \leq \sum_{p\in \clients} 4 \opt_p + O(\eps \cdot \sopt)\,.$$
     \label{claim:Mprimebounds}
 \end{claim}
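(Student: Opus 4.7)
The plan is to compare $\clcost(M'_O, \mu'_O) + \clcost(M'_D, \mu'_D)$ term-by-term with $\clcost(M_O, \mu_O) + \clcost(M_D, \mu_D)$, which is already bounded by $4\sopt + O(\eps\sopt)$ via Lemma~\ref{lemma:costboundofMOandMDwithassignments}. It then suffices to show that the per-client contribution to the left-hand side never exceeds that to the right-hand side, up to a total additive slack of $O(\eps \sopt)$.

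To this end, I would partition clients by their closest center $c$ in $S_\calQ$ into three groups: $X_1$ where $c \notin \calU \cup \bcalU \cup \dummyset$, $X_2$ where $c \in \calU \cup \bcalU$, and $X_3$ where $c \in \dummyset$. For $X_1$ the assignments coincide by definition, so the contribution is identical. For $X_2$, the change in contribution per client equals $[d(p, \tmu(p)) - d(p, \mu_O(p))] + [d(p, \tmu(p)) - d(p, \mu_D(p))]$; since $d(p, \mu_D(p)) \geq d(p, \mu_O(p))$ in all cases (trivially when $\mu_D(p) = \mu_O(p)$, and because the dummy associated to $\mu_O(p) \in \ho$ lies at distance $\rho + d(\ell, p)$ from $p$, which is at least $d(p, \mu_O(p))$ by the triangle inequality), this is at most $2[d(p, \tmu(p)) - d(p, \mu_O(p))]$. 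Summing over $X_2$ then yields $O(\eps \sopt)$ by property (3c) of Lemma~\ref{lem:successcheapremove}.

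The key step is $X_3$. For $p \in S_\calQ(c)$ with $c = \delta \in \dummyset$ the dummy associated to a ball $B(\ell, \rho)$ and expensive center $c^{**} \in \ho$, I would use the valid-balls property $d(\ell, c^{**}) \leq \avg_{C^{**}, \opt} \leq \rho$ together with $d(p, \delta) = \rho + d(p, \ell)$ to derive the crucial inequality
\[
d(p, \ho) \;\leq\; d(p, c^{**}) \;\leq\; d(p, \ell) + d(\ell, c^{**}) \;\leq\; d(p, \ell) + \rho \;=\; d(p, c).
\]
Thus $d(p, \mu'_O(p)) + d(p, \mu'_D(p)) \leq 2 d(p, c)$. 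For expensive $p \in X_3$, $d(p, \mu_O(p)) = d(p, \ho) \geq d(p, \mu'_O(p))$ and $d(p, \mu_D(p)) \geq d(p, c) = d(p, \mu'_D(p))$ since $c$ is by definition the closest dummy to $p$, so the contribution does not increase. For non-expensive $p \in X_3$, $\mu_O(p) = \mu_D(p)$ is the closest center in $S - S_0$, and $d(p, S - S_0) \geq d(p, S_\calQ) = d(p, c)$ since $S - S_0 \subseteq S_\calQ$, so again $2 d(p, c) \leq d(p, \mu_O(p)) + d(p, \mu_D(p))$ and the contribution does not increase.

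Summing the three groups yields
\[
\clcost(M'_O, \mu'_O) + \clcost(M'_D, \mu'_D) \;\leq\; \clcost(M_O, \mu_O) + \clcost(M_D, \mu_D) + O(\eps \sopt) \;\leq\; 4 \sopt + O(\eps \sopt),
\]
which, together with Lemma~\ref{lemma:convenience_upper_bound}, gives $\sum_p 4 \opt_p + O(\eps \sopt)$ as desired. The main obstacle is the $X_3$ analysis: naively one only gets $d(p, \ho) \leq 3 d(p, c)$ via triangle to the associated $\ho$-center, which would overshoot by a factor of $2$ on non-expensive clients. The refinement above saves this factor by bundling the $\rho$ term of $d(\delta, c^{**})$ with the $-\rho$ term coming from writing $d(p, \ell) = d(p, c) - \rho$.
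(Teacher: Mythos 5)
Your proposal is correct and follows essentially the same route as the paper's proof: the same three-way split of clients by their $S_\calQ$-cluster ($c\notin\calU\cup\bcalU\cup\dummyset$, $c\in\calU\cup\bcalU$, $c\in\dummyset$), the same key observation that validity of the balls gives $d(p,\ho)\le d(p,\dummyset)$ so dummy-cluster clients do not pay more under $\mu'_O,\mu'_D$, and the same combination of Property~(3c) of Lemma~\ref{lem:successcheapremove} (after bounding the $\mu_D$-difference by the $\mu_O$-difference) with Lemma~\ref{lemma:costboundofMOandMDwithassignments}. Your $X_3$ analysis is just a more explicit spelling-out of the paper's terse "by the definition of dummy centers" step.
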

 \begin{proof}
     For $x\in \{O, D\}$ the only differences between $\mu'_x$ and $\mu_x$ are clients in $S_{\calQ}(c)$ with $c\in \dummyset \cup \calU \cup \bcalU$.  Consider first a client $p \in S_{\calQ}(c)$  with $c \in \dummyset$. Then, as $c$ is the closest center among $(S - \calQ) \cup \dummyset$, which both contains $M_O - \ho$ and $M'_O - \ho$, we have that the closest center to $p$ in both $M'_O$ and $M_O$ is in $\ho$ (because by the definition of dummy centers, $d(p, \ho) \leq d(p, \dummyset)$). Similarly, the closest center to $p$ in both $M'_D$ and $M_D$ is in $\Lambda$. Hence, the definitions of $\mu'_O$ and $\mu'_D$ to assign $p$ to its closest center in $\ho$ and $\dummyset$, respectively, cannot increase the cost, i.e., $d(p, \mu'_O(p)) \leq d(p, \mu_O(p))$ and $d(p, \mu'_D(p)) \leq d(p, \mu_D(p))$ for such a client $p$. 
     
     Finally, for those clients $p\in S_\calQ(c)$ with $c\in\calU \cup \bcalU$ we have $\mu'_x = \tmu$ by definition and $d(p, \mu_O(p)) \leq d(p, \mu_D(p))$ since $\mu_O(p) = \mu_D(p)$ unless $\mu_O(p)\in \ho$ in which case $\mu_D(p)$ is the dummy center associated with $\mu_O(p)$ that can only be farther away from $p$ than $\mu_O(p)$. Hence:
       \begin{gather*}
       \sum_{c \in \calU \cup \bcalU}  \sum_{p\in S_\calQ(c)} \left( d(p, \tmu(p)) - d(p, \mu_D(p)) \right) \leq \sum_{c \in \calU \cup \bcalU}  \sum_{p\in S_\calQ(c)} \left( d(p, \tmu(p)) - d(p, \mu_O(p)) \right) = O (\eps \cdot {\sopt})\,, 
        \end{gather*}
     where in the equality we used Property (c)  of Lemma~\ref{lem:successcheapremove}. The claim follows since
$$
\clcost(M'_O, \mu'_O) + \clcost(M'_D, \mu'_D)\leq \clcost(M_O, \mu_O) + \clcost(M_D, \mu_D) +O(\eps\cdot \sopt) \overset{Lem. \ref{lemma:costboundofMOandMDwithassignments}}\leq 4 \opt_p + O(\eps \cdot \sopt)\,.
$$     
\end{proof}

To better understand $\clcost(M'_O, \mu'_O)$, let us consider the cost $d(p, \mu'_O(p))$ of a single client $p$:
\begin{itemize}
    \item If $p \in S_{\calQ}(c)$ with $c\not\in \bcalU \cup \calR$ then $\tmu(p) = c$ by Property 3a of Lemma~\ref{lem:successcheapremove}.  Moreover,  $\mu'_O(p)$ assigns $p$ either to a center in $\ho$ or to one in $S - \calQ - \bcalU - \cal R$ and we have $d(p, c) \leq d(p, S - \calQ - \bcalU - \cal R)$.  Hence, in either case, we have
    \[
        d(p, \mu'_O(p)) \geq  d(p, \{\tmu(p)\} \cup \ho) = d(p, \{\tmu(p)\} \cup \ho \cup \dummyset) \,.
    \]
    \item If $p\in S_{\calQ}(c)$ with $c\in \calR$ then $\mu'_O(p) = \mu_O(p) =  c^*\in \ho$ where $c^*$ is the center of $\ho$ that is closest to $c$. So 
    \[
        d(p, \mu_O'(p)) = d(p, c^*)\,. 
    \]
    \item If $p \in S_{\calQ}(c)$ with $c\in \bcalU$ then $\mu'_O(p) = \tmu(p)$ and so
    \[
        d(p, \mu_O'(p)) \geq d(p, \{\tmu(p)\} \cup \ho \cup \dummyset).
    \]
\end{itemize}
Similarly, we can analyze $d(p, \mu_D(p))$ by replacing $\ho$ with the set $\dummyset$ of dummy centers (and $c^*$ by its associated dummy center). 
Summarizing, we have 
\begin{align*}
     \clcost(M'_O, \mu'_O) &\geq \sum_{c\in S_{\calQ} - \calR}\sum_{p\in S_{\calQ}(c)}  d\left(p, \{c\} \cup \ho \cup \dummyset\right) + \sum_{c\in \calR} \min_{c' \in \ho} \sum_{p\in S_{\calQ}(c)}  d\left(p, c' \right)\\
     \intertext{and}
     \clcost(M'_D, \mu'_D) &\geq \sum_{c\in S_{\calQ} - \calR}\sum_{p\in S_{\calQ}(c)}  d\left(p, \{c\}  \cup \dummyset\right) + \sum_{c\in \calR} \min_{c'\in \dummyset}\sum_{p\in S_{\calQ}(c)}  d\left(p,   c'\right)
\end{align*}

 Furthermore, by Properties 3a and 3b of Lemma~\ref{lem:successcheapremove}, we have that, for every $c\in \calR$, the set $S_{\calQ}(c)$ of clients assigned to $c$ in $S_{\calQ}$ equals the set $\tmu^{-1}(c)$ of clients assigned by $\tmu$. So, if we let $\{C_c\}_{c\in S_{\calQ \cup \bcalU}}$ be the partitioning of the set $\clients$ of clients according to $\tmu$, i.e., $C_c = \{p\in \clients \mid \tmu(p) =c\}$,  we can thus rewrite the above  bounds on the cost as 
\begin{align}
     \clcost(M'_O, \mu'_O) &\geq \sum_{c\not \in \calR}\sum_{p\in C_c}  d\left(p, \{c\} \cup \ho \cup \dummyset\right) + \sum_{c\in \calR}\min_{c' \in \ho}\sum_{p\in C_c}  d\left(p,  c'\right) \label{eq:MprimeObound}\\
     \intertext{and}
     \clcost(M'_D, \mu'_D) &\geq \sum_{c \not\in  \calR}\sum_{p\in C_c}  d\left(p, \{c\}  \cup \dummyset\right) + \sum_{c\in \calR}\min_{c' \in \dummyset}\sum_{p\in C_c}  d\left(p,  c'\right)\,.\label{eq:MprimeDbound}
\end{align}
In the next section, we use these bounds to give a polynomial-time algorithm that outputs a solution $S^*$ with $k$ centers so that (see Lemma~\ref{lemma:findingSstar})
\begin{align*}
    \clcost(S^*)  \leq
    \frac{(1+2\eps)}{2} \left(  \clcost(M'_O, \mu'_O) +  \clcost(M'_D, \mu'_D) \right) + O(\eps \sopt)
     \leq \sum_{p\in \clients} 2 \opt_p  + O(\eps \cdot \sopt)
\end{align*}
where the second inequality is by Claim~\ref{claim:Mprimebounds}.
So the proof of Lemma~\ref{lemma:findingSstar}  in the next subsection is the final step in the proof of Theorem~\ref{thm:mainadditivecenters} (see also Section~\ref{sec:stable:everythingtogether} where we put everything together to prove Theorem~\ref{thm:mainadditivecenters}).

\subsection{Finding $S^*$ via Submodular Optimization}
\label{sec:submodularopt}

In this section, we give a polynomial-time algorithm for finding the solution $S^*$ by reducing the problem to maximizing a submodular function subject to a partition matroid constraint. Specifically, we prove the following lemma:
\begin{lemma}
    We can in polynomial-time find a clustering $S^*$ with $k$ centers of cost at most 
    \[
    \frac{(1+2\eps)}{2} \left(  \clcost(M'_O, \mu'_O) +  \clcost(M'_D, \mu'_D) \right) + O(\eps \sopt)\,.
     \]
    \label{lemma:findingSstar}
\end{lemma}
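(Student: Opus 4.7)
The plan is to cast the search for $S^*$ as maximizing a non-negative monotone submodular function under a partition matroid constraint, and then invoke Theorem~\ref{thm:submodularmatroidoptimization}. Take the ground set $E := \bigl(\bigcup_{B \in \calB} (B \cap F)\bigr) \cup (S - \calQ - \bcalU)$ and the partition matroid $\calM$ whose parts are one $E_B := B \cap F$ of rank $1$ for each ball $B \in \calB$ (forcing exactly one open center per ball, which collectively form $X$) and one part $S - \calQ - \bcalU$ of rank $|S - \calQ - \bcalU| - |\calR|$ (keeping all but $|\calR|$ of the existing centers, forming $Z$). A basis $A = X \cup Z$ is then read as the $k$-median candidate $S^*(A) := X \cup Z$; the size bookkeeping $|\calB| + (|S - \calQ - \bcalU| - |\calR|) = k$ uses $|\calB| = |\ho| = |S_0|$ and $|\bcalU| = |\calU|$.

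I will use the per-client truncated max-coverage
\[
f(A) \;:=\; \sum_{p \in D}\; \max_{c \in A \cup \dummyset}\, \bigl[\, d(p, S_{\calQ \cup \bcalU}) - d(p, c) \,\bigr]^{+}
\]
as the submodular function. Each summand $h_p(A) := \max_{c \in A \cup \dummyset} g_p(c)$ with non-negative weight $g_p(c) := [d(p, S_{\calQ \cup \bcalU}) - d(p, c)]^{+}$ is a classical max-coverage function, hence non-negative, monotone, and submodular; so $f$ inherits all three properties and is polynomial-time computable. Crucially, for every basis $A = X \cup Z$ the dummy $\delta_B$ in each ball is dominated by the chosen real center $x_B \in X$ (triangle inequality through the leader of $B$), hence $\clcost(A \cup \dummyset) = \clcost(S^*(A))$. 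The reference basis $A_{\mathrm{good}} := \ho \cup ((S - \calQ - \bcalU) \setminus \calR)$ is feasible in $\calM$ by validity of $\calB$, and $\clcost(A_{\mathrm{good}} \cup \dummyset) = \clcost(M'_O)$, so $f(A^{\star}) \geq f(A_{\mathrm{good}}) \geq \clcost(S_{\calQ \cup \bcalU}) - \clcost(M'_O)$.

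Applying Theorem~\ref{thm:submodularmatroidoptimization} with a parameter $\zeta$ to be fixed shortly yields in polynomial time a basis $A$ with $f(A) \geq (1 - 1/e - \zeta) f(A^{\star})$. Rearranging using the identity $f(A) = \clcost(S_{\calQ \cup \bcalU}) - \sum_p \min(d(p, A \cup \dummyset), d(p, S_{\calQ \cup \bcalU}))$, and absorbing into an $O(\eps \sopt)$ additive slack the over-cost incurred on clients whose closest center in $S_{\calQ \cup \bcalU}$ got closed (bounded via Lemma~\ref{lem:successcheapremove}(c) and Claim~\ref{claim:Mprimebounds}), gives
\[
\clcost(S^*) \;\leq\; \bigl(\tfrac{1}{e} + \zeta\bigr) \clcost(S_{\calQ \cup \bcalU}) + \bigl(1 - \tfrac{1}{e} - \zeta\bigr) \clcost(M'_O) + O(\eps \sopt).
\]
Since $M'_D \subseteq S_{\calQ \cup \bcalU}$ as multisets of centers, $\clcost(S_{\calQ \cup \bcalU}) \leq \clcost(M'_D)$; and by construction $\clcost(M'_O) \leq \clcost(M'_D)$ because whenever $\mu'_D$ routes a client to a dummy, $\mu'_O$ routes it to the closer real center in $\ho$. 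A direct calculation shows that $\tfrac{1+2\eps}{2}(\clcost(M'_O) + \clcost(M'_D)) - \bigl[(1 - \tfrac{1}{e} - \zeta)\clcost(M'_O) + (\tfrac{1}{e} + \zeta)\clcost(M'_D)\bigr]$ equals $\eps(\clcost(M'_O) + \clcost(M'_D)) + \tfrac{1 - 2/e - 2\zeta}{2}(\clcost(M'_D) - \clcost(M'_O))$, a sum of non-negative terms as soon as $\zeta \leq \tfrac{1}{2} - \tfrac{1}{e} \approx 0.132$. Choosing $\zeta := \eps$ is valid since $\eps < 1/12 < \tfrac{1}{2} - \tfrac{1}{e}$, completing the proof.

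\textbf{Main obstacle.} The main subtlety lies in choosing the submodular function so that its effective baseline is at most $\clcost(M'_D)$: the naive savings $f(A) := \clcost(\dummyset) - \clcost(A \cup \dummyset)$ is non-negative monotone submodular but uses the baseline $\clcost(\dummyset)$, which can exceed $\clcost(M'_D)$ by $\Theta(\sopt)$ because the dummies do not cover pure or cheap clusters. The truncation at $d(p, S_{\calQ \cup \bcalU})$ replaces this baseline with the tighter $\clcost(S_{\calQ \cup \bcalU}) \leq \clcost(M'_D)$, at the price of a controlled over-cost already swept into the $O(\eps \sopt)$ slack via the earlier structural guarantees.
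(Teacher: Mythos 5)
There is a genuine gap, and it sits exactly where the real difficulty of this lemma lies: the need to \emph{close} $|\calR|$ additional centers of $S - \calQ - \bcalU$. Your ground-set part of rank $|S-\calQ-\bcalU|-|\calR|$ only enforces the \emph{count} of kept centers, but your objective $f(A)=\sum_p \max_{c\in A\cup\dummyset}[d(p,S_{\calQ\cup\bcalU})-d(p,c)]^+$ is completely blind to that part: every $c\in S-\calQ-\bcalU$ already belongs to $S_{\calQ\cup\bcalU}$, so $d(p,S_{\calQ\cup\bcalU})\le d(p,c)$ and its truncated saving is identically zero. Hence the maximization gives no guidance on \emph{which} $|\calR|$ centers to close, and the identity you invoke, $f(A)=\clcost(S_{\calQ\cup\bcalU})-\sum_p\min(d(p,A\cup\dummyset),d(p,S_{\calQ\cup\bcalU}))$, measures the cost of the solution in which \emph{all} of $S_{\calQ\cup\bcalU}$ is kept, not the cost of $S^*(A)=X\cup Z$. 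The extra cost paid by clients of the closed centers is not an $O(\eps\sopt)$ term: Lemma~\ref{lem:successcheapremove}(c) only bounds the $\bcalU$-versus-$\calU$ reassignment (centers already removed), and Claim~\ref{claim:Mprimebounds} bounds total costs of $M'_O,M'_D$; neither says anything about closing an arbitrary set of $|\calR|$ surviving centers. If the algorithm closes, say, a center serving a large pure cluster far from $X\cup\dummyset$, the resulting cost can exceed your bound by $\Theta(\sopt)$ or worse, and nothing in your argument prevents this. (Even the clusters of $\calR$ itself are reassigned in $M'_O$ to $\ho$ at a cost that is a constant fraction of $\sopt$, not $O(\eps\sopt)$, so the closing cost must be charged inside the objective, not swept into slack.)

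This is precisely why the paper's construction is more involved: it restricts the candidate closures to $\calP$, guesses the non-concentrated part $\calR_0$ outright, and defines $g(X)$ with an \emph{internal} minimization over which $|\calR_1|$ concentrated clusters $\calR'_1\subseteq\calP_1$ to close, using the closed cost $\closedclcost_c(X)$ (with the core/concentrated trick that forces all core clients of a closed, non-hit cluster to one center) so that the price of closing is part of the function being optimized; the technically hard step is then showing that $f(X)=g(\emptyset)-g(X)$ remains monotone submodular despite this inner minimization, and that $g(\emptyset)\le\clcost(M'_D,\mu'_D)$, $g(\ho)\le\clcost(M'_O,\mu'_O)$, with a rounding step converting any independent $X$ into $k$ centers of cost $(1+2\eps)g(X)+O(\eps\sopt)$. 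Your final averaging algebra (choosing the approximation parameter so the guarantee becomes $\tfrac12$) matches the paper's, but without an objective that accounts for the closing cost the chain from the submodular guarantee to $\clcost(S^*)$ breaks. To repair your proposal you would have to either fold a closing-cost term into $f$ (and then re-establish submodularity, which is the nontrivial content of the paper's Lemma~\ref{lemma:propertiesfandg}) or otherwise certify that the closed centers are harmless, which is false for an arbitrary choice.
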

Recall that at this point the algorithm calculated the local search solution $S$, sampled $W$, guessed $\calB$, $\calQ$, and $\bcalU$ with the assignment $\tmu$ of the clustering $S_{\calQ \cup \bcalU}$. We assume that all these choices were successful guesses, i.e., that $(W, \calB, \calQ, \bcalU, \tmu)$ is successful.  Further recall the notation that we partition the clients into the clusters $\{C_c\}_{c\in S_{\calQ \cup \bcalU}}$ where $C_c = \{ p\in \clients : \tmu(p) =c\}$.  

Our goal is to give a polynomial-time algorithm that finds approximations of the sets $\ho$ and $\calR$ that only have slightly worse cost. We start by defining the feasible set of candidates for $\ho$ as a partition matroid. Recall that $\ho$ contains one center from each ball in $\calB$.

\paragraph{Definition of partition matroid $\calM_\calB$.} 
We define the partition matroid that captures the constraint that we wish to open a center in each ball in $\calB$. 
Let $\facilities_\calB$ be the  (multi) subset of facility/center locations containing  $B\cap \facilities$ for each ball  $B\in \calB$. If a center $c$ is in multiple balls in $\calB$, then $\facilities_\calB$ contains one distinct copy of $c$ for each ball. We let $\facilities_{\calB}(B) \subseteq F_\calB$ denote the facilities associated with $B\in \calB$. These sets satisfy the following two properties:
\begin{itemize}
    \item The sets $\facilities_{\calB}(B)$ partition $\facilities_\calB$.
    \item For $B \in \calB$, $\facilities_{\calB}(B)$ contains (a copy of) every facility in $\facilities\cap B$.
\end{itemize}
The first property holds since we took a unique copy of each center for each ball, and they are thus disjoint: $\facilities_\calB(B) \cap \facilities_\calB(B') = \emptyset$ for distinct $B,B' \in \calB$. Indeed, while it is not better for the cost to open multiple copies of a center, we make the copies to ensure the above two properties. This allows us to define the partition matroid $\calM_\calB = (\facilities_\calB, \calI)$ where
\[
    \calI = \{X \subseteq \facilities_\calB : |X\cap F_\calB(B)| \leq 1 \mbox{ for every } B\in \calB\}\,. 
\]
Moreover, since there is exactly one ball in $\calB$ for each center in $\ho$, we have $\ho \in \calI$ (where we slightly abuse notation as we should take the copy of center $c^*\in \ho$ that belongs to its associated ball).

\paragraph{Core, concentrated and hit clusters.} Our goal is now to define a submodular function $f$ so that we obtain a good approximation to $\ho$ by maximizing $f$ over the matroid constraint $\calM_\calB$.  In particular, the domain of $f$ is every subset of $\facilities_\calB$. However, we need some additional steps before defining $f$. In particular, we introduce the concept of the core of a cluster $C_c$ and the notions of concentrated and hit clusters, which allow us to simplify the structure of centers in $\calR$. For a cluster $C_c$, we define the \emph{core} of $C_c$ as
\begin{gather*}
    \core_c = \left\{p \in C_c: d(p, c) \leq \eps \cdot \frac{\clcost(S)}{|\calR|\cdot |C_c|}\right\}\,.
\end{gather*}
We further say that cluster $C_c$ is \emph{concentrated} if 
\begin{gather*}
    |\core_c| \geq (1-\eps) |C_c|
\end{gather*}
and it is \emph{hit} by a set $X$ of centers if there is a point $p \in \core_c$ such that
\begin{gather*}
    d(p, X) < d(p,c)\,.
\end{gather*}
For shortness we will sometimes say that a center is concentrated (resp., hit), if the corresponding cluster is so.

\paragraph{Guessing the centers of $\calR$ that are not concentrated.} We further simplify the task of finding the set of centers $\calR$  by guessing the centers of $\calR$ that are not concentrated. Specifically, partition the set $\calR$ into $\calR_0$ and $\calR_1$, where $\calR_1$ contain those centers of $\calR$ that are concentrated and $\calR_0$ contains those that are not. We can correctly guess $\calR_0$ in polynomial time since
the following simple claim shows that it is a subset of the centers whose cluster costs at least $\eps^2 \cdot \sopt/|\calR|$, of which there are only $O(|\calR|/\eps^2)$ many, 
and so guessing $\calR_0$ can thus be done in time
$2^{O(|\calR|/\eps^2)}$.
\begin{claim}
    Suppose that $C_c$ is not concentrated. Then the cost of $C_c$ is at least $\eps^2 \cdot \sopt/|\calR|$.
\end{claim}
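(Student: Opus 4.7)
The plan is to unfold the definitions directly: a non-concentrated cluster has a bulk of points that fall outside its core, and each such point contributes a quantifiable distance to the center $c$, yielding the desired lower bound on the cluster cost after summing.

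First I would observe that the cost of cluster $C_c$ in the assignment $\tmu$ equals $\sum_{p \in C_c} d(p, c)$, since by Lemma~\ref{lem:successcheapremove} (property 3a) every $p \in C_c$ is assigned to $c$ by $\tmu$. Since $C_c$ is not concentrated, by definition we have $|\core_c| < (1-\eps)|C_c|$, hence
\[
|C_c \setminus \core_c| > \eps\, |C_c|.
\]
Every point $p \in C_c \setminus \core_c$ fails the core condition, so $d(p,c) > \eps \cdot \clcost(S)/(|\calR| \cdot |C_c|)$.

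Summing the contributions of these outside-of-core points to the cluster cost gives
\[
\sum_{p \in C_c} d(p,c) \;\geq\; \sum_{p \in C_c \setminus \core_c} d(p,c) \;>\; \eps\, |C_c| \cdot \frac{\eps \cdot \clcost(S)}{|\calR| \cdot |C_c|} \;=\; \frac{\eps^2 \cdot \clcost(S)}{|\calR|}.
\]
Finally, I would use the trivial bound $\clcost(S) \geq \sopt$ (since $\sopt$ is the minimum cost over all $k$-center solutions and $|S|=k$) to conclude $\sum_{p \in C_c} d(p,c) \geq \eps^2 \sopt / |\calR|$, as claimed.

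There is no real obstacle here; the statement is essentially a direct calculation from the definition of the core, with the only subtle point being to remember that the cost of $C_c$ is measured with respect to $c$ (via the assignment $\tmu$), which is immediate from Lemma~\ref{lem:successcheapremove}.
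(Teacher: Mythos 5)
Your proof is correct and follows essentially the same argument as the paper: at least an $\eps$-fraction of $C_c$ lies outside the core, each such point pays more than $\eps\cdot\clcost(S)/(|\calR|\cdot|C_c|)$, and $\clcost(S)\geq\sopt$ finishes the bound. The only nitpick is the phrase ``$k$-center solutions'' where you mean $k$-median; the inequality $\clcost(S)\geq\sopt$ holds simply because $S$ is a feasible $k$-median solution.
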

\begin{proof}
    Since $C_c$ is not concentrated we have $|C_c - \core_c| \geq \eps |C_c|$. Moreover, each point $p\in C_c - \core_c$ has $d(p,c) > \eps \frac{\clcost(S)}{|\calR| \cdot |C_c|}$ by definition. Hence, as $\clcost(S) \geq \sopt$,
    \[
        \sum_{p\in C_c} d(p,c) \geq \sum_{p\in C_c - \core_c} d(p,c) \geq |C_c- \core_c|\cdot \eps \frac{\sopt}{|\calR| \cdot |C_c|} \geq \eps^2 \frac{\sopt}{|\calR|}\,.
    \]
\end{proof}

Now let $\calP$ be the potential centers that can be in $\calR$. Specifically, we let $\calP$ be the set that contains a center $c$ in $S -\calQ - \bcalU$ if $C_c$ equals the set of clients in $S_{\calQ}(c)$, where we recall that $S_{\calQ}(c)$ is the set of clients that are closest to $c$ in the clustering $S_{\calQ}$. Notice that the algorithm has all the information $S, \calB, \calQ, \bcalU$ and $\tmu$ to calculate $\calP$. Furthermore, by  Lemma~\ref{lem:successcheapremove}, we have that no client from $\bcalU$ was reassigned  by $\tmu$ to a center in $\calR$. So for $c\in \mathcal{R}$ we have $C_c = S_{\calQ}(c)$.  Hence, $\cal R \subseteq \calP$ and we can guess $\calR_0$ as follows:
\begin{mdframed}[hidealllines=true, backgroundcolor=gray!15]
\vspace{-5mm}
\paragraph{Guessing $\calR_0$}\ \\
\begin{enumerate}
    \item Let $\calC$ be the clusters of $\calP$ whose cost is at least $\eps^2 \cdot \sopt/|\calR|$.
    \item Output each subset of $\calC$.
\end{enumerate}
\end{mdframed}
By the above claim and the definition of $\calP$, one of the outputs is $\calR_0$.  Furthermore, the above guessing procedure outputs a family of polynomial many subsets. Indeed, we have $|\calR| \leq \log(n)/\eps^3$ by Lemma~\ref{lem:numnonpure}. Moreover, the total cost of the clusters in $\calP$ is $O(\sopt)$. To see this notice that the clusters corresponding to $\calP$ is a subset of the clusters of $S_{\calQ}$ and the cost of $S_{\calQ}$ is at most $\clcost(S - S_0  \cup \Lambda)$ (since $S_{\calQ}\supseteq S- S_0 \cup \Lambda$), which has cost at most $O(\sopt)$ by Lemma~\ref{lemma:S0properties}.  This implies that $|\calC| = O(\log(n)/\eps^5)$, so the total number of subsets is $n^{1/\eps^{O(1)}}$. The algorithm proceeds by trying all possible subsets in the output, and we analyze the algorithm when it takes the correct guess of $\calR_0$.

\paragraph{Definition of the submodular function $f$. }
We first define another function $g$ on the same domain as $f$, i.e., on all subsets $X$ of $F_\calB$. We will then define $f$  by $f(X) = g(\emptyset) - g(X)$. Let the \emph{closed cost} of a cluster $C_c$ be defined as
\[
    \closedclcost_c(X) := 
    \begin{cases}
        \sum_{p\in C_c} d(p, \{c\} \cup X \cup \dummyset) & \mbox{if $C_c$ is hit by $X$,} \\
        \min_{c'\in X \cup \dummyset} \sum_{p\in \core_c} d(p, c') + \sum_{p\in C_c - \core_c} d(p, \{c\} \cup X \cup \dummyset) & \mbox{otherwise.} 
    \end{cases}
\]
Further, let $\calP_1$ be the potential centers for $\calR_1$: it contains each center $c\in \calP - \calR_0$ so that $c$ is concentrated, i.e., $|\core_c| \geq (1-\eps) |C_c|$. We remark that the algorithm can calculate this set $\calP_1$ as it only depends on $\calP$, the guessed set $\calR_0$, the value $\clcost(S)$, and the clusters $C_c$ defined by $\tmu$. Moreover, by definition, we have $\calR_1 \subseteq \calP_1$.
For a subset $X \subseteq \facilities_\calB$, we then then define $g(X)$ to be the minimum value of 
\begin{gather*}
     \sum_{c \not \in \calR_0 \cup \calR'_1}\sum_{p\in C_c}  d\left(p, \{c\} \cup X \cup \dummyset\right) + \sum_{c\in \calR_0} \sum_{p\in C_c} d(p, X \cup \dummyset) + \sum_{c\in \calR'_1} \closedclcost_c(X) 
\end{gather*}
over all subsets $\calR'_1 \subseteq \calP_1$ with $|\calR_1'| = |\calR_1| = |\calR| - |\calR_0|$. 
In words, over the best $\calR_1'$, $g(X)$ is the cost of the solution obtained by removing the centers $\calR_0 \cup \calR_1'$  and assigning clients as follows:
\begin{itemize}
    \item If $p\in C_c$ for a remaining center $c \not\in \calR_0 \cup  \calR_1'$ or $p \not \in \core_c$ with $c\in \calR'_1$, $p$ is assigned to its closest center in $\{c\} \cup X \cup \dummyset$.
    \item If $p \in C_c$ for a removed center $c\in \calR_0$, $p$ is assigned to its closest center in $X \cup \dummyset$.
    \item If $p \in \core_c$ for a removed center $c\in \calR'_1$, $p$ is assigned to its closest center in $\{c\} \cup X \cup \dummyset$ if  $X$ hits $C_c$, and otherwise all clients in $\core_c$ are assigned to the same center $c'\in X \cup \dummyset$ that minimizes the cost.
\end{itemize}

We remark that this assignment is infeasible in the sense that it may assign clients to removed centers in $\calR'_1$. Nevertheless, we relate the values of $g(\emptyset)$ and $g(\ho)$ to $\clcost(M'_D, \mu'_D)$ and $\clcost(M'_O, \mu'_O)$, respectively; and, we show that given an $X \subseteq \facilities_\calB$ that is independent in $M_\calB$, i.e., $X\in \calI$, we can in polynomial-time output $k$ centers whose cost is at most $(1+2\eps)g(X) + 15\eps \sopt$.
Finally, the definition of $g$ allows us to prove that $f$ (defined by $f(X) = g(\emptyset) - g(X)$) is a monotone submodular function.
\begin{lemma}
We have that $g$ and $f$  satisfy the following properties:
\begin{enumerate}
\item We can evaluate $g(X)$ in polynomial time for every $X \subseteq \facilities_\calB$, and we can thus evaluate $f(X)$ in polynomial time. 
\item $f$ is a non-negative monotone submodular function.
\item The value $g(\emptyset)$ is at most $\clcost(M'_D, \mu'_D)$.
\item The value $g(\ho)$ is at most $\clcost(M'_O, \mu'_O)$.
\item  Given an $X \subseteq \facilities_\calB$ that is independent in $M_\calB$, i.e., $X\in \calI$, we can in polynomial-time output $k$ centers $S^*$ whose associated cost is at most $(1+2\eps) g(X) + 15\eps \sopt$.
\end{enumerate}
\label{lemma:propertiesfandg}
\end{lemma}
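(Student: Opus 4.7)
The plan is to establish the five items in sequence, with the bulk of the technical work concentrated in item~2. For item~1, the inner minimization defining $g(X)$ decomposes cluster by cluster over $\calP_1$: placing $c \in \calP_1$ into $\calR'_1$ replaces the ``keep'' cost $\mathrm{keep}_c(X) := \sum_{p \in C_c} d(p, c \cup X \cup \dummyset)$ by $\closedclcost_c(X) \geq \mathrm{keep}_c(X)$, so the optimum selects the $|\calR_1|$ clusters in $\calP_1$ with smallest penalty $\Delta_c(X) := \closedclcost_c(X) - \mathrm{keep}_c(X) \geq 0$, computed by a single sort. For item~2, decompose $f$ into per-cluster contributions. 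Clusters outside $\calP_1 \cup \calR_0$ contribute the classical facility-location term $\sum_{p \in C_c} [d(p, c \cup \dummyset) - d(p, c \cup X \cup \dummyset)]$, which is monotone submodular in $X$, and clusters in $\calR_0$ contribute an analogous submodular term using only $\dummyset$. The substantive task is the $\calP_1$ contribution, for which I would first prove that each $X \mapsto \closedclcost_c(X)$ is monotone non-increasing and supermodular, via a case analysis on whether each of $A, B, A \cap B, A \cup B$ hits $\core_c$: the key leverage is the inequality $m_c \geq h_c$ (the ``not hit'' formula forces a common center on the core and is therefore at least the ``hit'' formula), which ensures that the discrete switch at the hit threshold only decreases the value and in the supermodular direction; in every case the residual inequality reduces to supermodularity of the classical facility-location cost, using that in the not-hit regime adding a facility cannot reduce the contribution of any core client (since by definition every such facility is at least as far as $c$). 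This per-cluster supermodularity, combined with the fact that $\Delta_c$ is itself monotone non-increasing, shows that $f$ is non-negative and monotone; submodularity of $f$ is then obtained by an exchange argument showing that an optimum $\calR'_1$ can be chosen consistently across $A, B, A \cup \{e\}, B \cup \{e\}$ so that the submodular inequality reduces to per-cluster supermodularity plus classical submodularity of the remaining parts.

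For item~3, set $\calR'_1 = \calR_1 \subseteq \calP_1$ in the inner optimization. At $X = \emptyset$ no cluster is hit (``hit'' requires a real facility to beat $d(p, c)$), so $\closedclcost_c(\emptyset) = m_c(\emptyset)$; term by term, $m_c(\emptyset) = \sum_{p \in C_c - \core_c} d(p, c \cup \dummyset) + \min_{c' \in \dummyset} \sum_{p \in \core_c} d(p, c') \leq \min_{c' \in \dummyset} \sum_{p \in C_c} d(p, c')$ by taking the same $c'$ in both summands and using $d(p, c \cup \dummyset) \leq d(p, c')$ for $c' \in \dummyset$, which matches~\eqref{eq:MprimeDbound}. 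The $\calR_0$ contribution $\sum_p d(p, \dummyset)$ is similarly dominated by $\min_{c' \in \dummyset} \sum_p d(p, c')$, giving $g(\emptyset) \leq \clcost(M'_D, \mu'_D)$. For item~4, a copy of $\ho$ lies in $\calI$ via the valid-ball structure; again set $\calR'_1 = \calR_1$, and use $\closedclcost_c(\ho) \leq m_c(\ho) \leq \min_{c^* \in \ho} \sum_{p \in C_c} d(p, c^*)$ by the analogous argument to match~\eqref{eq:MprimeObound}. For item~5, we may assume $|X| = |\calB|$ by padding (using monotonicity of $f$). Output $S^* := (S - \calQ - \bcalU - \calR_0 - \calR'_1(X)) \cup X'$, where $\calR'_1(X)$ is the optimal closure witnessing $g(X)$ and $X'$ collapses duplicate copies in $\facilities_\calB$ back to the underlying facilities; the counting $|\calQ| + |\bcalU| + |\calR_0| + |\calR'_1| = |S_0| = |\calB|$ gives $|S^*| = k$. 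Bound $\clcost(S^*)$ by reassigning each client according to the (possibly infeasible) formula witnessing $g(X)$ and then fixing the infeasibilities: a client routed to a dummy $\delta \in \dummyset$ associated with ball $B(\ell, \rho)$ is redirected to the real center in $X' \cap B(\ell, \rho)$ at extra cost at most $2\rho$, and a client routed to a closed $c \in \calR'_1(X)$ through the ``hit'' branch is redirected to its closest center in $X'$, losing at most a $(1+\eps)$ multiplicative factor via core concentration. Summing ball-radius losses over $\ho$ using $\rho \leq \avg_{C^*, \opt} + \eps \gamma_{C^*}$ and $|C^*| \avg_{C^*, \opt} \leq (1+\eps) \sum_{p \in C^*} \opt_p$ yields the final bound $(1+2\eps) g(X) + 15\eps \sopt$.

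The main obstacle is item~2: showing that the piecewise definition of $\closedclcost_c$ through the hit/not-hit flag preserves supermodularity. The two essential leverages are the gap inequality $m_c \geq h_c$ and the fact that in the not-hit regime no core client's contribution to $h_c$ can be reduced by adding facilities (since any such facility is by definition not closer than $c$); together these absorb the discontinuity at the hit threshold in the supermodular direction. Once per-cluster supermodularity is in place, the remaining items are essentially bookkeeping: term-by-term matching against~\eqref{eq:MprimeObound} and~\eqref{eq:MprimeDbound} for items~3 and~4, and a triangle-inequality computation propagating ball-radius errors into the $15\eps \sopt$ additive slack for item~5.
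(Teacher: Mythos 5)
Your per-cluster analysis in Property~2 is fine as far as it goes: each $\closedclcost_c$ (equivalently each penalty $\increase_c(X)=\closedclcost_c(X)-\sum_{p\in C_c}d(p,c\cup X\cup\dummyset)$) is indeed non-increasing and supermodular in $X$. The genuine gap is the aggregation step, which you dismiss as ``an exchange argument showing that an optimum $\calR'_1$ can be chosen consistently \dots\ so that the submodular inequality reduces to per-cluster supermodularity plus classical submodularity.'' That reduction cannot work: the inner term $\min_{|\calR'_1|=|\calR_1|}\sum_{c\in\calR'_1}\increase_c(X)$ (the sum of the $|\calR_1|$ smallest penalties) of non-increasing supermodular functions need \emph{not} be supermodular. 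Concretely, with ground set $\{a,b\}$, two clusters and $|\calR_1|=1$, take $F_1(\emptyset)=10$, $F_1(\{a\})=F_1(\{b\})=5$, $F_1(\{a,b\})=0$ and $F_2\equiv 6$: both are non-increasing and supermodular, yet $\min(F_1,F_2)$ satisfies $\min(F_1,F_2)(\{a,b\})+\min(F_1,F_2)(\emptyset)=6<10=\min(F_1,F_2)(\{a\})+\min(F_1,F_2)(\{b\})$. Moreover, after choosing the four subsets, the two sides of the exchange sum penalties over \emph{different} collections of clusters, so a per-cluster inequality cannot be applied term by term. What the paper's proof actually exploits is a strictly stronger structural fact which your sketch never invokes: $\increase_c(X)$ is a minimum of per-facility values, so whenever adding a single facility $c_0$ strictly decreases $\increase_c$, the new value equals $\increase_c(c_0)$, independent of the rest of the solution; this, combined with the optimality of $\calR^Y_1$ (it consists of the clusters of smallest penalty), is what drives the bijection $\pi$ with $\increase_c(Y)\le\increase_{\pi(c)}(Y)\le\increase_{\pi(c)}(X)$. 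The counterexample function $F_1$ above cannot be written in that ``min of per-facility values'' form, which is exactly why it breaks; without this ingredient Property~2 is not established.

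There is also a quantitative error in your Property~5 argument: charging $2\rho$ per client redirected from a dummy to the real center of its ball sums, over an expensive cluster, to roughly $2|C^*|\avg_{C^*,\opt}\approx 2\sum_{p\in C^*}\opt_p$, i.e.\ an additive $\Theta(\sopt)$, which cannot be hidden in $15\eps\,\sopt$. The correct observation (implicit in the paper) is that the replacement is free: for any center $c'$ in the ball $B(\ell,\rho)$ and any client $p$, $d(p,c')\le d(p,\ell)+\rho=d(p,\delta)$, so swapping a dummy for \emph{any} real center of its ball never increases any client's distance. The only place where a loss is incurred is in comparing $\sum_{p\in C_c}d(p,X\cup\dummyset)$ with $\closedclcost_c(X)$ for the closed clusters $c\in\calR'_1$, and there the core radius $\eps\,\clcost(S)/(|\calR|\,|C_c|)$ together with concentration yields the $(1+2\eps)$ factor plus $15\eps\,\sopt/|\calR|$ per cluster. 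Properties~1, 3 and 4 in your proposal match the paper's argument and are correct.
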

We give the proof of the lemma in the next subsection. We explain here how it implies Lemma~\ref{lemma:findingSstar}.

\begin{proof}[Proof of Lemma~\ref{lemma:findingSstar}]

By the first property of Lemma~\ref{lemma:propertiesfandg}, we can evaluate $f$ in polynomial time. Moreover, it is easy to see that we can answer independence queries in polynomial time for any partition matroid, particularly for $\calM_\calB$. We can thus apply Theorem~\ref{thm:submodularmatroidoptimization} on $f$ and $\calM_\calB$  with $\zeta =  (1-1/e - 1/2)$
to find a solution $X \in \calI$ such that\footnote{We remark that we could select $\zeta$ to be arbitrarily small and get a better guarantee in the statement of Lemma~\ref{lemma:findingSstar}. We have chosen this value to simplify the calculations as improving the guarantee of Lemma~\ref{lemma:findingSstar} does not improve the overall result. }
\begin{gather*}
    g(\emptyset) - g(X) =  f(X) \geq (1-1/e - \zeta)f(\ho) = (1-1/e - \zeta)( g(\emptyset) - g(\ho)) = \frac{1}{2} ( g(\emptyset) - g(\ho))\,,
\end{gather*}
where we used that $\ho$ is one feasible solution, i.e., $\ho \in \mathcal{I}$.
This in turn implies that 
\[
\frac{1}{2} \left(g(\ho)  +  g(\emptyset) \right) \geq  g(X)\,.
\]

Using the upper bounds on $g(\emptyset)$ and $g(\ho)$ of Lemma \ref{lemma:propertiesfandg} we thus have found a set $X$ such that 
\begin{gather*}
 \frac{1}{2} (\clcost(M'_O, \mu'_O) +  \clcost(M'_D, \mu'_D)) \geq g(X)\,.
\end{gather*}
Now using the last property of Lemma \ref{lemma:propertiesfandg} we can output a solution $S^*$ whose cost is at most
\begin{gather*}
    \frac{(1+2\eps)}{2} \left(  \clcost(M'_O, \mu'_O) +  \clcost(M'_D, \mu'_D) \right) + O(\eps \sopt)\,.
\end{gather*}
as required. 

Finally, each of the above steps runs in polynomial time: the algorithm of Theorem~\ref{thm:submodularmatroidoptimization} is polynomial time, and the last property of Lemma~\ref{lemma:propertiesfandg} used to obtain $S^*$ is polynomial-time. 
Moreover, the number of guesses of $\calR_0$ is at most $n^{1/\eps^{O(1)}}$, as argued after the description of that procedure. So we can, in polynomial time, try all possibilities and, among all solutions found (one for each guess of $\calR_0$), return one that minimizes the cost and, in particular, has cost at most that of $S^*$ (which was analyzed assuming the guess of $\calR_0$ was correct). We thus have  a polynomial time algorithm that returns a solution that satisfies the guarantee of the lemma.
\end{proof}
\subsubsection{Proof of Lemma~\ref{lemma:propertiesfandg}}

\paragraph{Proof of Property 1.} Given $X \subseteq \facilities_{\calB}$, we argue that we can evaluate $g(X)$ in polynomial time. For a center $c \in \calP_1$, define 
\[
\increase_c(X) = \closedclcost_c(X) - \sum_{p\in C_c} d(p, \{c\} \cup X \cup \dummyset)\,.
\]
In other words, using that $d(p,c ) \leq d(p, X)$ for $p\in \core_c$ if $C_c$ is not hit by $X$,
\begin{align}
\increase_c(X) = 
    \begin{cases}
        0 & \mbox{if $C_c$ is hit by $X$,} \\
        \min_{c'\in X \cup \dummyset} \sum_{p\in \core_c} (d(p, c') - d(p,\{c\}  \cup \dummyset))  & \mbox{otherwise.} 
    \end{cases}
\label{eq:increase}
\end{align}
We remark that $\increase_c(X)\geq 0$ by definition. 
Let $\clients_0$ denote the clients belonging to clusters $C_c$ with $c\in \calR_0$.
With this notation, $g(X)$ is the minimum value
\begin{align}
    \sum_{p\in \clients - \clients_0} d(p, \{\tmu(p)\} \cup X \cup \dummyset) + \sum_{p\in \clients_0} d(p, X \cup \dummyset) +  \sum_{c \in \calR'_1} \increase_c(X)
    \label{eq:nicedefofg}
\end{align}
over all subsets $\calR'_1 \subseteq \calP_1$ with $|\calR'_1| = |\calR_1|$. We further have that the value of $\increase_c(X)$ for $c\in \calR'_1$ is independent of other centers in $\calR'_1$. We can thus obtain the best choice of $\calR'_1$ by evaluating $\increase_c(X)$ for every $c\in \calP_1$ and choose the $|\calR_1|$ ones of smallest value. Notice that the algorithm knows $|\calR_1|$ since it equals $|S_0| - |\calQ| - |\bcalU| - |\calR_0| =|\calB| - |\calQ| - |\bcalU| - |\calR_0|$.     After we have obtained $\calR'_1$ in polynomial time, we can evaluate $g(X)$ in polynomial time by simply calculating the sum.

\paragraph{Proof of Property 2.} In the proof of this property it will be convenient to use definition~\eqref{eq:nicedefofg} of $g$.
We start by verifying that $f$ is monotone. I.e., that $f(Y) \geq f(X)$ when $X \subseteq Y$, which is equivalent to verifying that $g(Y) \leq g(X)$.  Let $g(X)$ equal
\[
     \sum_{ p\in \clients- \clients_0} d(p, \{\tmu(p)\} \cup X \cup \dummyset) + \sum_{p\in \clients_0} d(p, X \cup \dummyset)+ \sum_{c \in \calR'_1} \increase_c(X)
\]
for some set $\calR'_1$. Then $g(Y)$ is at most
\[
     \sum_{p\in \clients- \clients_0} d(p, \{\tmu(p)\} \cup Y \cup \dummyset)+ \sum_{p\in \clients_0} d(p, Y \cup \dummyset) + \sum_{c \in \calR'_1} \increase_c(Y)
\]
As trivially $d(p, \{\tmu(p)\} \cup Y \cup \dummyset) \leq d(p, \{\tmu(p)\}\cup X \cup \dummyset)$, $d(p,  Y \cup \dummyset) \leq d(p,  X \cup \dummyset)$ and $\increase_c(Y) \leq \increase_c(X)$ (see~\eqref{eq:increase}),  we have  $g(Y) \leq g(X)$ as required. Moreover, as $f(\emptyset) = 0$ by definition, non-negativity follows.

We proceed to verify that $f$ is submodular, i.e., that for every $X \subseteq Y \subseteq \facilities_B$ and $c_0 \in \facilities_B - Y$, $f(X \cup \{c_0\}) - f(X) \geq f(Y \cup \{c_0\}) - f(Y)$, or equivalently
\begin{align}
    g(X \cup \{c_0\}) - g(X) \leq g(Y \cup \{c_0\}) - g(Y)\,.
    \label{eq:submodularineq}
\end{align}
 Let $\calR_1^X, \calR_1^Y$, and  ${\calR}_1^{Y+c_0}$ be subsets of $\calP_1$ of cardinality $|\calR_1|$ such that
\begin{align*}
g(X) &=  \sum_{p\in \clients- \clients_0} d(p, \{\tmu(p)\} \cup X \cup \dummyset) + \sum_{p\in \clients_0} d(p, X \cup \dummyset) + \sum_{c \in \calR^X_1} \increase_c(X)\\
g(Y) &=  \sum_{p\in \clients- \clients_0} d(p, \{\tmu(p)\} \cup Y \cup \dummyset)+ \sum_{p\in \clients_0} d(p, Y \cup \dummyset) + \sum_{c \in \calR^Y_1} \increase_c(Y)\\
g(Y \cup \{c_0\}) & = \sum_{p\in \clients- \clients_0} d(p, \{\tmu(p)\} \cup (Y \cup \{c_0\}) \cup \dummyset)+ \sum_{p\in \clients_0} d(p, (Y \cup \{c_0\}) \cup \dummyset) + \sum_{c \in {\calR}^{Y+c_0}_1} \increase_c(Y\cup \{c_0\})
\end{align*}
We shall define  ${\calR}^{X+c_0}_1$  to be a subset of $\calP_1$ of cardinality $|\calR_1|$ so that  the upper bound
\begin{align*}
g(X\cup \{c_0\}) \leq  \sum_{p\in \clients- \clients_0} d(p, \{\tmu(p)\} \cup (X\cup \{c_0\}) \cup \dummyset)+ \sum_{p\in \clients_0} d(p, (X \cup \{c_0\}) \cup \dummyset)  + \sum_{c \in {\calR}^{X+c_0}_1} \increase_c(X \cup \{c_0\})
\end{align*}
allows us to verify Inequality~\eqref{eq:submodularineq}. First notice that no matter the definition of ${\calR}^{X+c_0}_1$, we have that
\begin{align}
 \sum_{p\in \clients- \clients_0} (d(p, \{\tmu(p)\} \cup (X\cup \{c_0\}) \cup \dummyset) -  d(p, \{\tmu(p)\} \cup X \cup \dummyset) )
 +\sum_{p\in \clients_0} (d(p, (X\cup \{c_0\}) \cup \dummyset) -  d(p,  X \cup \dummyset) )
 \label{eq:boringsubmodular1}
\end{align}
is upper bounded by
\begin{align}
\sum_{p\in \clients- \clients_0} (d(p, \{\tmu(p)\} \cup (Y \cup \{c_0\}) \cup \dummyset) - 
 d(p, \{\tmu(p)\} \cup Y \cup \dummyset)) + \sum_{p\in \clients_0}(d(p, (Y \cup \{c_0\}) \cup \dummyset) - 
 d(p, Y \cup \dummyset))\,.
 \label{eq:boringsubmodular2}
\end{align}
Indeed, if we let $\clients' \subseteq \clients - \clients_0$ be the subset of clients for which $d(p, c_0) < d(p, \{\tmu(p)\} \cup Y \cup \dummyset)$ and $\clients'_0 \subseteq\clients_0$ be the subset of clients for which $d(p, c_0) < d(p,  Y \cup \dummyset)$, then~\eqref{eq:boringsubmodular2} equals
\[
    \sum_{p\in \clients'} (d(p, c_0) - d(p, \{\tmu(p)\} \cup Y \cup \dummyset))+ \sum_{p\in \clients'_0} (d(p, c_0) - d(p,  Y \cup \dummyset))
\]
and we have 
\[
    \eqref{eq:boringsubmodular1} \leq \sum_{p\in \clients'} (d(p, c_0) - d(p, \{\tmu(p)\} \cup X \cup \dummyset)) + \sum_{p\in \clients'_0} (d(p, c_0) - d(p,  X \cup \dummyset))\,.
\]
It follows that \eqref{eq:boringsubmodular1} is at most \eqref{eq:boringsubmodular2} because
\[
    \sum_{p\in \clients'} ( d(p, \{\tmu(p)\} \cup Y \cup \dummyset)  - d(p, \{\tmu(p)\} \cup X \cup \dummyset)) + \sum_{p\in \clients'_0} ( d(p,  Y \cup \dummyset)  - d(p,  X \cup \dummyset)) \leq 0\,.
\]
To prove that $f$ is submodular, it is thus sufficient to define ${\calR}^{X+c_0}_1$ so that 
\begin{align}
    \sum_{c\in {\calR}^{X+c_0}_1} \increase_c(X \cup \{c_0\}) - \sum_{c\in {\calR}^X_1} \increase_c(X)\leq \sum_{c\in {\calR}^{Y+c_0}_1} \increase_c(Y \cup \{c_0\}) - \sum_{c\in {\calR}^Y_1} \increase_c(Y)\,.
    \label{eq:mainsubmodular}
\end{align}
To this end, let $\Delta_{\text{new}}$ contain the centers in ${\calR}^{Y+c_0}_1 - {\calR}^Y_1$ and those centers  $c \in {\calR}^{Y+c_0}_1 \cap \calR^Y_1$ with $\increase_c(Y \cup \{c_0\}) < \increase_c(Y)$. Notice that we may assume that, for each $c\in \Delta_{\text{new}}$,
\[
\increase_c(Y \cup \{c_0\})= \increase_c(\{c_0\}) =  \begin{cases}
        0 & \mbox{if $C_c$ is hit by $c_0$,} \\
         \sum_{p\in \core_c} (d(p, c_0) - d(p,\{c\}  \cup \dummyset))  & \mbox{otherwise.} 
    \end{cases}
\]
because the addition of $c_0$  caused a decrease in $\increase(\cdot)$ for these clusters.   Further, let $\Delta_{\text{old}}^Y$ contain $\Delta_{\text{new}} \cap \calR^{Y}_1$ and the centers $\calR^{Y}_1 - \calR^{Y+c_0}_1$. So $\calR^{Y+c_0}_1 = (\calR^{Y}_1 - \Delta_{\text{old}}^Y) \cup \Delta_{\text{new}}$ and $|\Delta_{\text{new}}| = |\Delta_{\text{old}}^Y|$. We now define $\calR^{X+c_0}_1 = (\calR^{X}_1 - \Delta_{\text{old}}^X) \cup \Delta_{\text{new}}$ where $\Delta^X_{\text{old}}$ is obtained as follows:
\begin{itemize}
    \item Initialize $\Delta_{\text{old}}^X =\calR^X_1\cap \Delta_{\text{new}}$. 
    \item While $|\Delta_{\text{old}}^X|< |\Delta_{\text{new}}|$, add a  center according to the following priorities:
    \begin{itemize}
        \item If there is a center $c\in \calR^X_1 - \Delta_{\text{old}}^X$ so that $c\not \in \calR^{Y}_1$, add $c$ to $\Delta_{\text{old}}^X$.
        \item Else add a center $c\in \calR^X_1- \Delta_{\text{old}}^X$ with $c \in \Delta_{\text{old}}^Y$. 
    \end{itemize}
\end{itemize}
We claim that the above is well-defined, i.e., that if $|\Delta_{\text{old}}^X|< |\Delta_{\text{new}}|$ and there is no center 
$c\in \calR^X_1 - \Delta_{\text{old}}^X$ so that $c\not \in \calR^{Y}_1$, then there must be a center $c\in \calR^X_1- \Delta_{\text{old}}^X$ with $c \in \Delta_{\text{old}}^Y$. This is because $|\Delta_{\text{old}}^X| < |\Delta_{\text{new}}| = |\Delta_{\text{old}}^Y|$ and all remaining centers $\calR_1^X - \Delta_{\text{old}}^X$  are in $\calR^Y_1$ in this case, so at least one of them must be in $\Delta_{\text{old}}^Y$ (recall that $|\calR_1^X| = |\calR_1^Y|$).

With this notation we have 
\begin{align}
   \notag \sum_{c\in {\calR}^{Y+c_0}_1} \increase_c(Y \cup \{c_0\}) - \sum_{c\in {\calR}^Y_1} \increase_c(Y ) & =  \sum_{c\in \Delta_{\text{new}}} \increase_c(Y \cup \{c_0\}) - \sum_{c\in \Delta_{\text{old}}^Y}\increase_{c}(Y) \\
    &=  \sum_{c\in \Delta_{\text{new}}} \increase_c(\{c_0\}) - \sum_{c\in \Delta_{\text{old}}^Y}\increase_{c}(Y)\,. \label{eq:firstsubmodular}
\end{align}
Similarly,
\begin{align}
    \sum_{c\in {\calR}^{X+c_0}_1} \increase_c(X \cup \{c_0\}) - \sum_{c\in {\calR}^X_1} \increase_c(X ) 
    & \leq \sum_{c\in \Delta_{\text{new}}} \increase_c(X \cup \{c_0\}) - \sum_{c\in \Delta_{\text{old}}^X} \increase_{c}(X) \notag\\ 
    & = \sum_{c\in \Delta_{\text{new}}} \increase_c(\{c_0\}) - \sum_{c\in \Delta_{\text{old}}^X} \increase_{c}(X)\,, \label{eq:secondsubmoular}
\end{align}
where we used that $\increase_c(X \cup \{c_0\}) \leq \increase_c(X)$ for any center $c$.

To analyze this, let $\pi$ be a bijection from $\Delta_{\text{old}}^Y$ to $\Delta_{\text{old}}^X$ so that $\pi(c) =c$ for all $c\in \Delta_{\text{old}}^Y \cap \Delta_{\text{old}}^X$. 
\begin{claim}
    For every $c\in \Delta_{\text{old}}^Y$ we have
    \[
    \increase_c(Y) \leq \increase_{\pi(c)}(X)\,.
    \]
\end{claim}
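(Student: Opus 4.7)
The plan is to reduce the claim to two ingredients: monotonicity of $\increase_c(\cdot)$ with respect to set inclusion, and optimality of the subset $\calR^Y_1$ used in the definition of $g(Y)$. First I would verify monotonicity: for any center $c\in\calP_1$ and any $X\subseteq Y$, one has $\increase_c(Y)\le\increase_c(X)$. This is immediate from the definition~\eqref{eq:increase}, by case analysis on whether $C_c$ is hit by $X$ and/or by $Y$: being hit is preserved under set enlargement, and when neither is hit the minimum is taken over the larger set $Y\cup\dummyset\supseteq X\cup\dummyset$.

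Next I would record the optimality fact implicit in the choice of $\calR^Y_1$: since $\calR^Y_1$ minimizes $\sum_{c\in\calR'_1}\increase_c(Y)$ over all $|\calR_1|$-subsets $\calR'_1\subseteq\calP_1$, and the terms are independent, for every $c\in\calR^Y_1$ and every $c'\in\calP_1-\calR^Y_1$ we have $\increase_c(Y)\le\increase_{c'}(Y)$. (Ties are broken consistently with the chosen $\calR^Y_1$.)

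With these in hand, I would split into the two cases determined by $\pi$. If $\pi(c)=c$, then monotonicity alone gives $\increase_c(Y)\le\increase_c(X)=\increase_{\pi(c)}(X)$. Otherwise $c\in\Delta_{\text{old}}^Y-\Delta_{\text{old}}^X$ and $\pi(c)\in\Delta_{\text{old}}^X-\Delta_{\text{old}}^Y$. The key combinatorial step, which I expect to be the main obstacle, is to show that $\pi(c)\in\calR^X_1-\calR^Y_1$. For this I would trace the construction of $\Delta_{\text{old}}^X$: its initialization contributes $\calR^X_1\cap\Delta_{\text{new}}$, and any element of this set that is also in $\calR^Y_1$ belongs to $\Delta_{\text{new}}\cap\calR^Y_1\subseteq\Delta_{\text{old}}^Y$; the while-loop contributes elements of either $\calR^X_1-\calR^Y_1$ (the first priority) or $\calR^X_1\cap\Delta_{\text{old}}^Y$ (the second priority), and the latter lie in $\Delta_{\text{old}}^Y$. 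Hence every element of $\Delta_{\text{old}}^X-\Delta_{\text{old}}^Y$ must come from $\calR^X_1-\calR^Y_1$, giving $\pi(c)\in\calR^X_1-\calR^Y_1\subseteq\calP_1-\calR^Y_1$.

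To finish, I would chain the two ingredients: since $c\in\calR^Y_1$ and $\pi(c)\in\calP_1-\calR^Y_1$, optimality gives $\increase_c(Y)\le\increase_{\pi(c)}(Y)$, and monotonicity applied to $X\subseteq Y$ gives $\increase_{\pi(c)}(Y)\le\increase_{\pi(c)}(X)$. Concatenating yields the desired $\increase_c(Y)\le\increase_{\pi(c)}(X)$, completing the claim.
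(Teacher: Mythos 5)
Your proposal is correct and follows essentially the same route as the paper: the case $\pi(c)=c$ is handled by monotonicity of $\increase_c(\cdot)$, and in the other case you establish $(\Delta_{\text{old}}^X-\Delta_{\text{old}}^Y)\cap\calR^Y_1=\emptyset$ from the construction of $\Delta_{\text{old}}^X$ (initialization plus the two while-loop priorities), then conclude via the exchange/optimality of $\calR^Y_1$ followed by monotonicity, exactly as in the paper. The only cosmetic difference is that you phrase optimality of $\calR^Y_1$ element-wise (your tie-breaking remark is unneeded, since sum-minimality already yields the inequality by a swap), whereas the paper states it as the swap argument directly.
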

\begin{proof}
    If $c\in \Delta_{\text{old}}^Y \cap \Delta_{\text{old}}^X$, the claim holds because $\increase_c(Y) \leq \increase_c(X)$ for every $c\in \calP_1$. Otherwise, we have $\pi(c) \not \in \calR^Y_1$. This is because $(\Delta_{\text{old}}^X - \Delta_{\text{new}}^Y) \cap \calR_1^Y = \emptyset$. Indeed, in the construction of $\Delta_{\text{old}}^X$, we initialized with $\calR^X_1\cap \Delta_{\text{new}}$ and, as $\Delta_{\text{old}}^Y$ contains $\Delta_{\text{new}} \cap \calR^Y_1$, all elements in  $\calR^X_1\cap \Delta_{\text{new}}$ are either in $\Delta_{\text{old}}^Y$ or not in $\calR^Y_1$. The property is then maintained by definition of the while-loop in the construction of $\Delta_{\text{old}}^X$. 

    It follows that $\calR^Y_1$ where we remove $c$ and add $\pi(c)$ is another subset of $\calP_1$ of cardinality $|\calR_1|$. Hence, by the selection of $\calR^Y_1$ (to be the collection of centers with smallest $\increase(\cdot)$) we have
    \[
        \increase_c(Y) \leq \increase_{\pi(c)}(Y) \leq \increase_{\pi(c)}(X)\,.
    \]
\end{proof}
By the above claim,
\begin{align*}
\sum_{c\in \Delta_{\text{old}}^Y}\increase_{c}(Y) - \sum_{c\in \Delta_{\text{old}}^X} \increase_{c}(X)  = \sum_{c\in \Delta_{\text{old}}^Y}(\increase_{c}(Y) - \increase_{\pi(c)}(X)) \leq 0
\end{align*}
and thus by the above arguments (see~\eqref{eq:firstsubmodular} and~\eqref{eq:secondsubmoular})
\[
    \sum_{c\in {\calR}^{X+c_0}_1} \increase_c(X \cup \{c_0\}) - \sum_{c\in {\calR}^X_1} \increase_c(X)\leq \sum_{c\in {\calR}^{Y+c_0}_1} \increase_c(Y \cup \{c_0\}) - \sum_{c\in {\calR}^Y_1} \increase_c(Y)\,.
\]
We have thus verified~\eqref{eq:mainsubmodular},  which concludes the proof that $f$ is submodular.

\paragraph{Proof of Property 3.} For every $c\in \calR_1$, we have
\begin{align*}
\closedclcost_c(\emptyset) & =\min_{c'\in \Lambda}\sum_{p\in C^{core}_c}d(p,c')+\sum_{p\in C_c-C^{core}_c}d(p,\{c\}\cup \Lambda)\\
& \leq \min_{c'\in \Lambda}(\sum_{p\in C^{core}_c}d(p,c')+\sum_{p\in C_c-C^{core}_c}d(p,c'))=\min_{c'\in \Lambda}\sum_{p\in C_c}d(p,c'). 
\end{align*}
Thus 
\begin{align*}
    g(\emptyset) &\leq 
     \sum_{c \not \in \calR}\sum_{p\in C_c}  d\left(p, \{c\}  \cup \dummyset\right) + \sum_{p\in \clients_0} d(p, \dummyset) + \sum_{c\in \calR_1} \closedclcost_c(\emptyset)  \\
     &\leq \sum_{c \not \in \calR}\sum_{p\in C_c}  d\left(p, \{c\}  \cup \dummyset\right)+ \sum_{p\in \clients_0} d(p, \dummyset) + \sum_{c\in \calR_1} \min_{c'\in \dummyset} \sum_{p\in C_c} d(p, c') \\
     & \leq \sum_{c \not \in \calR}\sum_{p\in C_c}  d\left(p, \{c\}  \cup \dummyset\right) + \sum_{c\in \calR} \min_{c'\in \dummyset} \sum_{p\in C_c} d(p, c') \overset{\eqref{eq:MprimeDbound}}{\leq}\clcost(M'_D, \mu'_D)\,.
\end{align*}

\paragraph{Proof of Property 4.} Consider any $c\in \calR_1$. If $C_c$ is hit by $\ho$ one has
\begin{align*}
\closedclcost_c(\ho) & = \sum_{p\in C_c}d(p,\{c\}\cup \ho\cup \Lambda)\leq \min_{c'\in \ho}\sum_{p\in C_c}d(p,c'). 
\end{align*}
Otherwise
\begin{align*}
\closedclcost_c(\ho) & =\min_{c'\in \ho\cup \Lambda}\sum_{p\in C^{core}_c}d(p,c')+\sum_{p\in C_c-C^{core}_c}d(p,\{c\}\cup \ho\cup \Lambda)\\
& \leq \min_{c'\in \ho}(\sum_{p\in C^{core}_c}d(p,c')+\sum_{p\in C_c-C^{core}_c}d(p,c'))=\min_{c'\in \ho}\sum_{p\in C_c}d(p,c'). 
\end{align*}
Hence in both cases $\closedclcost_c(\ho)\leq \min_{c'\in \ho}\sum_{p\in C_c}d(p,c')$. Thus
\begin{align*}
    g(\ho) &\leq 
     \sum_{c \not \in \calR}\sum_{p\in C_c}  d\left(p, \{c\} \cup \ho \cup \dummyset\right) + \sum_{p\in \clients_0}d(p, \ho \cup \dummyset) + \sum_{c\in \calR_1} \closedclcost_c(\ho)  \\
     &\leq \sum_{c \not \in \calR}\sum_{p\in C_c}  d\left(p, \{c\} \cup \ho \cup \dummyset\right)+ \sum_{p\in \clients_0} d(p, \ho \cup \dummyset) + \sum_{c\in \calR_1} \min_{c'\in \ho} \sum_{p\in C_c} d(p, c') \\
     & \leq \sum_{c \not \in \calR}\sum_{p\in C_c}  d\left(p, \{c\} \cup \ho \cup \dummyset\right) + \sum_{c\in \calR} \min_{c'\in \ho} \sum_{p\in C_c} d(p, c')\overset{\eqref{eq:MprimeObound}}{\leq}\,\clcost(M'_O, \mu'_O).
\end{align*}

\paragraph{Proof of Property 5.}

    Let  $X \subseteq \facilities_\calB$ be independent in $M_\calB$, i.e., $X\in \calI$. We give a polynomial-time algorithm that outputs $k$ centers $S^*$ whose cost is at most $(1+2\eps)g(X) +15\eps \sopt$. Let $\calR'_1$ be a subset of $\calP_1$ of cardinality $|\calR_1|$ so that
\begin{gather*}
   g(X) =   \sum_{c \not \in \calR_0 \cup \calR'_1}\sum_{p\in C_c}  d\left(p, \{c\} \cup X \cup \dummyset\right)+ \sum_{p \in \clients_0} d(p, X \cup \dummyset) + \sum_{c\in \calR'_1} \closedclcost_c(X) 
\end{gather*}
    By the arguments in the proof of the first property, we can calculate $\calR'_1$ in polynomial time. Now we obtain $S^*$ from $S_{\calQ \cup \bcalU}$ as follows:
    \begin{itemize}
    \item Initialize $S^* = S_{\calQ \cup \bcalU} = S \cup \dummyset - \calQ - \bcalU$.
    \item Remove centers $\calR_0$ and $\calR'_1$ from $S^*$ to obtain a set of cardinality $k$. 
    \item Finally, for each ball in $\calB$, if $X$ contains a center $c$ in that ball, replace the associated dummy center by $c$; otherwise, replace the dummy center with an arbitrary center in the ball.  
    \end{itemize}
    By the definition of the dummy centers $\dummyset$, the cost of $S^*$ is at most
    \[
     \sum_{c \not \in \calR_0 \cup \calR'_1}\sum_{p\in C_c}  d\left(p, \{c\} \cup X \cup \dummyset\right)+ \sum_{p \in \clients_0} d(p, X \cup \dummyset) + \sum_{c\in \calR'_1}  \sum_{p\in C_c} d(p, X \cup \dummyset)\,.
    \]
    To relate this to $g(X)$ we need to compare $\closedclcost_c(X)$ to $\sum_{p\in C_c} d(p, X\cup \dummyset)$ for every $c\in \calR'_1$.
    Recall that 
\[
    \closedclcost_c(X) = 
    \begin{cases}
        \sum_{p\in C_c} d(p, \{c\} \cup X \cup \dummyset) & \mbox{if $C_c$ is hit by $X$,} \\
        \min_{c'\in X \cup \dummyset} \sum_{p\in \core_c} d(p, c') + \sum_{p\in C_c - \core_c} d(p, \{c\} \cup X \cup \dummyset) & \mbox{otherwise.} 
    \end{cases}
\]
Consider a cluster $C_c$ with $c\in \calR'_1$ and partition $C_c- \core_c$  into sets $A$ and $B$ where $A$ contain those clients $p$ so that $d(p, \{c\} \cup X \cup \dummyset) =d(p, c)$ and $B$ contain the remaining clients $p$ with $d(p, \{c\} \cup X \cup \dummyset) = d(p, X \cup \dummyset)$. Suppose first that there is a client $p_0 \in \core_c$ so that $d(p_0, X \cup \dummyset) < d(p_0, c)$, i.e., $C_c$ is hit by $X \cup \dummyset$.
By the definition of $\core_c$, we have  $d(p_0, c) \leq \eps \cdot \frac{\clcost(S)}{|\calR| \cdot |C_c|}\leq 5 \eps \cdot \frac{\sopt}{|\calR| \cdot |C_c|}$ where we used that $\clcost(S) \leq 5 \cdot \sopt$ by Lemma~\ref{lemma:localsearchis5approximation}. So,  by the triangle inequality,
\begin{align*}
\sum_{p\in C_c} d(p, X \cup \dummyset)& \leq  \sum_{p\in C_c - B} (d(p, c) + d(c, p_0) + d(p_0, X\cup \dummyset)) + \sum_{p\in B} d(p, X \cup \dummyset)\\
&\leq \sum_{p\in A} d(p, c) + \sum_{p\in B} d(p, X \cup \dummyset)  + 15\eps \cdot \frac{\sopt}{|\calR|}\,,
\end{align*}
where we used $d(p_0, X\cup \dummyset) \leq d(c, p_0) \leq 5\eps \cdot \frac{\sopt}{|\calR| \cdot |C_c|}$ and $d(p, c) \leq 5\eps \cdot \frac{\sopt}{|\calR| \cdot |C_c|}$ for any $p\in \core_c$ for the last inequality. Moreover, by definition of $A$ and $B$, we have $d(p, c) = d(p,  \{c\} \cup X \cup \dummyset)$ for any $p\in A$ and $d(p, X \cup \dummyset) = d(p, \{c\} \cup X \cup \dummyset)$ for any $p\in B$. Hence, in this case,
\begin{align*}
    \sum_{p\in C_c} d(p, X \cup \dummyset) \leq \sum_{p\in C_c - \core_c} d(p, \{c\} \cup X \cup Y) + 15 \eps \cdot \frac{\sopt}{|\calR|} \leq   \closedclcost_c(X) + 15\eps \cdot \frac{\sopt}{|\calR|}\,.
\end{align*}
Let us now consider a cluster $C_c$ with $c\in \calR'_1$ that is not hit by $X \cup \dummyset$, particularly not by $X$. 
Then 
\begin{align*}
    \closedclcost_c(X) &= \min_{c'\in X \cup \dummyset} \sum_{p\in \core_c} d(p, c') + \sum_{p\in C_c - \core_c} d(p, \{c\} \cup X \cup \dummyset) \\
    & \geq \sum_{p\in \core_c \cup B} d(p, X \cup \dummyset) + \sum_{q\in A} d(q, c)
\end{align*}
We further have 
\begin{align*}
    \sum_{p\in C_c} d(p, X \cup \dummyset) & = \sum_{p\in \core_c \cup B} d(p, X \cup \dummyset) + \sum_{q\in A} d(q, X \cup \dummyset)\\
    & \leq \sum_{p\in \core_c \cup B} d(p, X \cup \dummyset) + \sum_{q\in A} \frac{1}{|\core_c|}\sum_{p\in \core_c} (d(q, c) + d(c, p) + d(p, X \cup \dummyset))
\end{align*}
by the triangle inequality. Now using that $d(c,p) \leq 5 \eps \cdot \frac{\sopt}{|\calR| \cdot |C_c|}$ we obtain the upper bound
\begin{align*}
     \sum_{p\in C_c} d(p, X \cup \dummyset) & \leq \left(1 + \frac{|A|}{|\core_c|}\right) \sum_{p\in \core_c\cup B} d(p, X \cup \dummyset) + \sum_{q\in A} d(q,c) + 5\eps \frac{\sopt}{|\calR|} \\
     &\leq \left(1 + \frac{\eps }{(1-\eps)}\right) \sum_{p\in \core_c \cup B} d(p, X \cup \dummyset) + \sum_{q\in A} d(q,c) + 5\eps \frac{\sopt}{|\calR|} \\
     & \leq (1+ 2\eps) \closedclcost_c(X) + 5\eps \frac{\sopt}{|\calR|}
\end{align*}
where for the second inequality we used that every cluster in $\calR_1' \subseteq \calP_1$ is concentrated. In more detail $|\core_c|\geq (1-\eps)|C_c|$, hence $|A|\leq |C_c-\core_c|\leq \eps|C_c|$.

Hence, for any $c\in \calR_1'$ we proved that $\sum_{p\in C_c} d(p,X \cup \dummyset) \leq (1+2\eps) \closedclcost_c(X) + 15 \eps \frac{\sopt}{|\calR|}$. We thus get that the cost of $S^*$ is at most

    \begin{align*}
    & \sum_{c \not \in \calR_0 \cup \calR'_1}\sum_{p\in C_c}  d\left(p, \{c\} \cup X \cup \dummyset\right)+ \sum_{p \in \clients_0} d(p, X \cup \dummyset) + \sum_{c\in \calR'_1}\left( (1+2\eps) \closedclcost_c(X) + 15 \eps \frac{\sopt}{|\calR|}\right) \\
    \leq &  (1+2\eps) \left(\sum_{c \not \in \calR_0 \cup \calR'_1}\sum_{p\in C_c}  d\left(p, \{c\} \cup X \cup \dummyset\right)+ \sum_{p \in \clients_0} d(p, X \cup \dummyset) + \sum_{c\in \calR'_1} \closedclcost_c(X) \right) + 15\eps \sopt
    \end{align*}
    which equals $(1+2\eps) g(X) + 15\eps \sopt$.

\subsection{Putting Everything Together 
-- Proof of Theorem~\ref{thm:mainadditivecenters}}
\label{sec:stable:everythingtogether}
We now turn to proving Theorem~\ref{thm:mainadditivecenters}. We first describe the algorithm we analyse.
\begin{mdframed}[hidealllines=true, backgroundcolor=gray!15]
\vspace{-5mm}
\paragraph{A $(2{+\eps})$-Approximation for $k$-Median on an $\eps/\log n$-
Stable Instance $k, \clients, \facilities, \dist$.}\ \\
\begin{algorithmic}[1]
\State $S\leftarrow localSearch()$ \algorithmiccomment{Section \ref{sec:localSearchAnalysis}}
\State $W\leftarrow D$-$Sample(S,s^*)$ \algorithmiccomment{Section \ref{sec:dsampleproc}}
\State $\calL_{\texttt{ball}} \leftarrow ballGuess(S,W)$ \algorithmiccomment{Section \ref{sec:ballguesses}}
\For{all $\calB\in \calL_{\texttt{ball}}$}
\State Compute the dummy centers $\Lambda$ according to $\calB$
\State $\calLexp \leftarrow epxRem(S,\Lambda)$ \algorithmiccomment{Section \ref{sec:removalofExpensive}}
\For{all $\calQ\in \calLexp$ and all valid values of $|\calU|$, $|\calR|$ and $|\calX|$}
\State $\calL_{cheap} \leftarrow \emptyset$
\State $cheapRem(\emptyset,\emptyset,\emptyset,\emptyset,\emptyset)$ \algorithmiccomment{Section \ref{sec:removalofCheap}}
\For{all $\bcalU\in \calL_{cheap}$ with the associated $\tmu$}
\State Approximately solve the submodular maximization problem subject to a partition matroid induced by $(S,W,\calB,\calQ,\bcalU,\tmu)$ and turn the obtained solution into a $k$-median solution $S_{S,W,\calB,\calQ,\bcalU,\tmu}$ \algorithmiccomment{Section \ref{sec:submodularopt}}
\EndFor
\EndFor
\EndFor
\State \textbf{return} the cheapest solution of type $S_{S,W,\calB,\calQ,\bcalU,\tmu}$
\end{algorithmic}
\end{mdframed}

\begin{proof}[Proof of Theorem \ref{thm:mainadditivecenters}]
Our approach consists of running the above algorithm
$\log n$ times and taking the minimum cost solution output.
In the remaining, we argue that the above algorithm yields a $2+O(\eps)$-approximation with probability at least $(1-\eps)(1-1/n)$. The claim then follows as the probability that all $\log n$ executions of the above algorithm would fail is at most $\left( 1- (1-\eps)(1-1/n)\right)^{\log n} < 1/n$, and so with high probability we output a $2+O(\eps)$-approximation. 

We first discuss the algorithm's running time.
The local search algorithm runs in polynomial time. 
The submodular optimization step also runs in polynomial time by Lemma~\ref{lemma:findingSstar}.
The output of $D$-$Sample()$ (which is clearly a polynomial time procedure) is 
a subset of size $s^*$. By Lemma \ref{lem:successguessprocess}, we construct $\calL_{\texttt{ball}}$ in polynomial time and $|\calL_{\texttt{ball}}| \leq n^{1/\eps^{O(1)}}$. Similarly, by Lemma \ref{lem:successguessprocess}, we construct $\calLexp$ in polynomial time and  $|\calLexp| \leq n^{1/\eps^{O(1)}}$. %
Finally, by Lemma \ref{lem:successcheapremove}, we construct in polynomial-time 
 $n^{1/\eps^{O(1)}}$ candidate pairs $\bcalU,\tmu$. It follows that we solve $n^{1/\eps^{O(1)}}$ submodular function optimization problems, each in polynomial time, and thus the total running time is polynomial.

We turn to proving the approximation guarantee.
We aim to show that there exists a 
$(W,\calB, \calQ, \bcalU, \tmu)$ that is successful with probability at least $(1-\eps)(1-1/n)$.
Assuming this, the $(2+O(\eps))$-approximation follows by combining
Claim~\ref{claim:Mprimebounds} and Lemma~\ref{lemma:findingSstar}.

The local search step  provides a 5 approximation $S$ that satisfies  Lemma~\ref{lem:purecost}. Our algorithm next applies
$D$-$Sample()$ and by Lemma~\ref{lem:probcost} 
finds a successful $W$ with probability at least
$1-\eps$. Condition on having a successful
$W$, we then have that Lemma~\ref{lem:cheapcost} holds, and so does Lemma~\ref{lemma:convenience_upper_bound} by combining Lemma~\ref{lem:purecost} and Lemma~\ref{lem:cheapcost}.

Next, Lemma~\ref{lem:ballguesses} implies that one 
set of balls $\calB$ is a valid set of balls. This
set of balls induces a set of dummy centers 
$\dummyset$. From there, by Lemma~\ref{lem:successguessprocess} we have that 
with probability at least $1-1/n$, the 
$expRem()$ procedure produces a set
of center $\mathcal{Q}$ such that     \begin{enumerate}
    \item $\calQ \subseteq S_0$, and
    \item The total cost in solution $S - Q \cup \dummyset$ of the clusters in $S_0 - \calQ$ is at most $\eps \cdot \sopt$.
    \end{enumerate}
and so we have obtained a successful 
$(W,\calB,\calQ)$ and Lemma~\ref{lemma:costboundofMOandMDwithassignments}
applies. Condition on the event that we obtain
a successful $(W, \calB, \calQ)$, then, applying Lemma~\ref{lem:successcheapremove} we have that 
the $cheapRem()$ procedure outputs a pair
$(\bcalU, \tmu)$ satisfying the properties of 
Lemma~\ref{lem:successcheapremove}.
We thus have a $(W, \calB, \calQ, \bcalU,\tmu)$ such that $W$ is successful, the set of balls $\calB$ from Section~\ref{sec:ballguesses} is valid,  $\calQ$ selected in Section~\ref{sec:removalofExpensive} satisfies the properties of Lemma~\ref{lem:successguessprocess}, and $\bcalU,\tmu$  selected in this section satisfies the properties of Lemma~\ref{lem:successcheapremove} and
so $(W, \calB, \calQ, \bcalU,\tmu)$ is successful. Finally, note that the only probabilistic steps were the sampling of $W$ and of $\calQ$, and both steps are successful with probability at least $(1-\eps) \cdot (1-1/n)$, as required. 
\end{proof}

\bibliographystyle{alpha}

\appendix

\section{Analysis of Robust \logadaptalg}
\label{sec:robust_analysis_full}
In this section, we prove Theorem~\ref{thm:robust}, showing that a solution $\mathcal{H}$ of $\eta$-valid sequences yields an almost $2$-LMP approximation.
The proof is almost identical to that for \logadaptalg presented in Section~\ref{sec:log_adaptivity_analysis}. There are only two minor changes:
\begin{enumerate}
    \item Definition~\ref{def:robust_openable} of $\eta$-openability allows a facility to be paid up to an $n \eta$-additive difference. This difference only appears in Lemma~\ref{lemma:approx_guarantee_robust} below, which is an analog of Lemma~\ref{lemma:approx_guarantee}.
          \begin{restatable}{lemma}{lemmaapproxguaranteerobust}
              We have $\sum_{j\in D} \alpha^*_j \geq (1-\delta) \sum_{j\in D} d(j, S^*) + \sum_{i\in \Sr^*} (\hat f -  n \eta)$.
              \label{lemma:approx_guarantee_robust}
          \end{restatable}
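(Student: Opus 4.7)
The plan is to mimic the inductive proof of Lemma~\ref{lemma:approx_guarantee}, carrying along the robust invariant
\begin{gather*}
    \sum_{j\in I} \alpha_j \;\geq\; \sum_{j\in I} (1-\delta)\,d(j,S) \;+\; \sum_{i\in \Sr} (\hat f - n\eta),
\end{gather*}
and then evaluating it at the end of the run (when $I=D$) to obtain the statement. The invariant is trivially true at initialization since $I=\emptyset$ and $\Sr=\emptyset$. I would then check that each atomic step of a solution $\calH$ of $\eta$-valid sequences preserves it, distinguishing three types of atomic steps instead of two: (i) opening a regular facility via $\eta$-openability, (ii) opening a free facility, and (iii) the stage-$2$ increase of $\alpha$-values at the end of a phase. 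Steps of type (iii) are identical to the original proof: any $j$ that becomes inactive has $\alpha_j=(1-\delta)d(j,S)$, so the left and right hand sides grow by the same amount, and $\Sr$ does not change.

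For a step of type (i), I reuse the argument of Lemma~\ref{lemma:approx_guarantee} almost verbatim, but replacing the ``facility fully paid'' bound by the relaxed condition in the third bullet of Definition~\ref{def:robust_openable}. Concretely, if the opening of a regular facility $i$ moves the set $A' := \{j\in A : \tau_j \ge (1-\delta)d(i,j)\}$ from $A$ to $I$ and decreases $d(j,S)$ for $j\in X := \{j\in I : d(j,i) < d(j,S)\}$, then the RHS grows by at most
\begin{gather*}
    (\hat f - n\eta) \;+\; \sum_{j\in A'}(1-\delta)d(i,j) \;+\; \sum_{j\in X}(1-\delta)\bigl(d(i,j)-d(j,S)\bigr),
\end{gather*}
while the third bullet of Definition~\ref{def:robust_openable} guarantees
\begin{gather*}
    \hat f - n\eta \;\le\; \sum_{j\in A'}\bigl(\tau_j-(1-\delta)d(i,j)\bigr) \;+\; \sum_{j\in X}(1-\delta)\bigl(d(j,S)-d(i,j)\bigr).
\end{gather*}
Subtracting, the RHS change is bounded by $\sum_{j\in A'}\tau_j$, which is precisely the LHS change, so the invariant is preserved.

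The only genuinely new case is (ii), the opening of a free facility $\tilde i$, for which there is no opening-cost term on the RHS (since $\Sr$ does not grow) and no change to $\tau_j$ for $j \in A$. Here the LHS increases by $\sum_{j\in A'}\alpha_j$ where $A'=\{j\in A : \alpha_j \ge (1-\delta)d(j,\tilde i)\}$, while the RHS increases by $\sum_{j\in A'}(1-\delta)d(j,\tilde i)+\sum_{j\in X}(1-\delta)(d(j,\tilde i)-d(j,S))$ with $X$ defined analogously. The first sum is dominated termwise by $\sum_{j\in A'}\alpha_j$ by definition of $A'$, and the second is non-positive by definition of $X$, so the invariant is again preserved. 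I expect this case to be the main conceptual point, but it is essentially free once one notes that opening a free facility only helps the connection cost and never adds an opening term on the right. Aggregating over all atomic steps and evaluating at the end of the algorithm (where $I=D$, $S=S^*$, $\Sr=\Sr^*$) yields the claimed bound.
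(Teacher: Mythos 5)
Your proposal is correct and follows essentially the same route as the paper: the same inductive invariant $\sum_{j\in I}\alpha_j \geq \sum_{j\in I}(1-\delta)d(j,S) + |\Sr|(\hat f - n\eta)$, the same reuse of the Lemma~\ref{lemma:approx_guarantee} argument for regular openings with $\hat f$ replaced by $\hat f - n\eta$ via the relaxed third bullet of Definition~\ref{def:robust_openable}, and the same observation that a free facility contributes no opening term and can only decrease connection distances. The paper's treatment of the free-facility case is just a terser version of your explicit bookkeeping.
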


    \item Definiton~\ref{def:valid_sequence} of a $\eta$-valid sequence allows $h_t$ is $\eta$-openable with respect to a super set $S' \supseteq S \cup \{ h_1, \dots, h_{t-1} \}$. This difference only appears in (the last part of) the proof of Lemma~\ref{lemma:dual_feasibility_robust}, an analog of Lemma~\ref{lemma:dual_feasibility}.

          \begin{restatable}{lemma}{lemmadualfeasibilityrobust}
              For a facility $i$,
              $\
                  \sum_{j \in D} [\alpha^*_j - 2d(i,j)]^+ \leq \hat f.
              $
              \label{lemma:dual_feasibility_robust}
          \end{restatable}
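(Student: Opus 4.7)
The plan is to mirror the proof of Lemma~\ref{lemma:dual_feasibility} from Section~\ref{sec:log_adaptivity_analysis} almost verbatim, with only two targeted adaptations corresponding to the two robust relaxations: facility openings now use bids defined with respect to a super set $S' \supseteq S \cup \{h_1,\ldots,h_{t-1}\}$, and the pay-for-it threshold is $\hat f - n\eta$ rather than $\hat f$. Crucially, the fourth (dual-feasibility) bullet of Definition~\ref{def:robust_openable} still carries the clean $\hat f$ upper bound, which is what actually drives the argument.

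First, I would restate and re-establish the no-overbidding invariant of Claim~\ref{claim:dual-feasible-induction} in the present setting: at any point of the algorithm, $\sum_{j\in A}[\alpha_j - d(i,j)]^+ + \sum_{j\in I}[(1-\delta)d(j,S)-d(j,i)]^+ \leq \hat f$ where $A,I$ are defined with respect to the actual opened set $S$. The Stage 1 case follows as before, with one subtlety: when an $\eta$-openable facility $h_t$ is opened with bids $(\tau_j)$ computed with respect to a super set $S' \supseteq S$, we note that $A(S') \subseteq A(S)$ and that for $j \in A(S')$ the bid $\tau_j$ already satisfies $\tau_j \leq (1-\delta)d(j,S') \leq (1-\delta)d(j,S)$, while for $j \in A(S) - A(S')$ we have $\tau_j = \alpha_j$ unchanged. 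The same monotonicity arguments then show no client's contribution to $\sum_{j\in A}[\alpha_j - d(i,j)]^+ + \sum_{j\in I}[(1-\delta)d(j,S)-d(j,i)]^+$ increases. For Stage 2, the minimal counter-example argument is rerun: if an increase to $\min((1+\eps^2)\theta, (1-\delta)d(j,S))$ violated the inequality for some facility $i$, one would exhibit $i$ as an $\eta$-openable facility (with bids given by the minimal $\tau'$), using the fourth bullet with $\hat f$ on the right-hand side to verify the dual-feasibility condition; this would contradict the maximality of the current $\eta$-valid sequence.

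Next, I would prove the robust analog of Lemma~\ref{lemma:dual-feasible-inactive}: for any $k \in D^*$ that becomes inactive strictly before $i$ is frozen, $\sum_{j \geq k}[\alpha^*_k - d(i,j)]^+ + \sum_{j<k}[\alpha^*_k - 2d(i,j) - d(i,k)]^+ \leq \hat f$. The stage-2 case reduces directly to the robust no-overbidding claim by the same triangle-inequality bookkeeping. The stage-1 case, where $k$ becomes inactive from the opening of some $i_0$ that does not freeze $i$, uses the same radius-$\eps\theta$ argument to show that no client in $D^*$ has its $\alpha$-value strictly increased by the opening of $i_0$; one then applies the robust no-overbidding invariant just established at the instant right before $i_0$ opens.

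Finally, the endgame is identical to Section~\ref{sec:log_adaptivity_analysis}. Let $i_0$ be the facility (if any) that freezes $i$, with bids $(\tau_j)$. The auxiliary claim that $\alpha^*_j = \tau_j$ for every $j \in D^* \cap A$ relies only on the triangle-inequality computation tying $d(j,i)$ to $d(j,i_0)$ and $d(i_0,i) \leq (1+3\eps)\theta/2$, which is unaffected by the robust modifications. One then combines the fourth bullet of $\eta$-openability (applied with $\hat f$ on the right) with the robust inactive-client bound via the same summation-and-counting scheme to conclude $\sum_{j\in D^*}(\alpha^*_j - 2d(i,j)) \leq \hat f$. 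The case in which no facility freezes $i$ corresponds to $D^*\cap A = \emptyset$ and reduces to summing the inactive bound alone.

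The main obstacle I anticipate is the Stage 1 bookkeeping in the no-overbidding step when bids are defined with respect to $S' \supsetneq S$: one has to check that the asymmetry between $A(S)$ and $A(S')$ (and correspondingly between $d(j,S)$ and $d(j,S')$) does not allow any client's contribution to sneak upward. This ultimately reduces to the monotonicity $d(j,S') \leq d(j,S)$ and the bid upper bound $\tau_j \leq (1-\delta)d(j,S')$, but it requires careful case analysis depending on whether $j$ transitions between active/inactive under the two set choices.
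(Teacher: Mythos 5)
Your plan reproduces the paper's architecture (robust no-overbidding claim, robust inactive-client lemma, then the freezing-facility endgame), but it misplaces where the superset relaxation actually bites, and as a result the final step has a genuine gap. In the paper, the no-overbidding invariant (Claim~\ref{claim:dual-feasible-induction_robust}) and its Stage~1 argument go through verbatim — that argument only looks at the actual state and the monotonicity of $d(j,S)$, and never needs to reference the certifying superset — so the ``main obstacle'' you anticipate there is not the real issue. The real issue is in the endgame, which you assert is ``identical'': when the facility $h_0$ that freezes $i$ is only $\eta$-openable with respect to a superset $S' \supseteq S$, the fourth bullet of Definition~\ref{def:robust_openable} is available only relative to the partition induced by $S'$, i.e.\ its sums range over $A' = A(S')$ and $I \cup I'$ where $I' = A \setminus A'$, and it provides an inequality only for reference clients $k \in A'$. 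Clients $k \in D^* \cap I'$ — active in the true execution but inactive with respect to $S'$ — are covered neither by that bullet (they cannot serve as the reference client $k$) nor by the robust version of Lemma~\ref{lemma:dual-feasible-inactive}, since they did not become inactive strictly before $i$ was frozen. With this third class present, ``the same summation-and-counting scheme'' from Section~\ref{sec:log_adaptivity_analysis} does not even typecheck: the term multiplicities no longer come out right, and no inequality accounts for $\alpha^*_k$ with $k \in I' \cap D^*$.

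The paper closes exactly this hole by observing that such clients satisfy $\alpha^*_k = \alpha_k = \theta$, deriving for each of them a separate inequality from the no-overbidding invariant applied right before $h_0$ opens (converted via the triangle inequality into the form with $[\alpha^*_k - 2d(i,j) - d(i,k)]^+$ terms), and then redoing the counting with three classes, parameterized by $a' = |A' \cap D^*|$ and $i' = |I' \cap D^*|$, to recover $\sum_{j \in D^*}(\alpha^*_j - 2d(i,j)) \leq \hat f$. Without this additional inequality and the modified bookkeeping, your proof of the frozen case does not go through; everything else in your outline (the robust no-overbidding claim, the Stage~2 maximality contradiction, the robust inactive-client lemma, and the auxiliary claim that $\alpha^*_j = \tau_j$ on $D^* \cap A$) matches the paper.
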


\end{enumerate}
These two lemmas immediately imply Theorem~\ref{thm:robust}. The other supporting claims and lemmas (Claim~\ref{claim:dual-feasible-induction_robust} and Lemma~\ref{lemma:dual-feasible-inactive_robust}) remain identical.

In order to use mostly the same analysis, one syntactic difference between the setting of
Section~\ref{sec:log_adaptivity_analysis} and the current setting that we have to reconcile is that the former analyzes the outcome of \logadaptalg while the latter analyzes a solution $\mathcal{H} = \{ H_1, \dots, H_L \}$ of $\eta$-valid sequences.
To use the same terminology from Section~\ref{sec:log_adaptivity_analysis}, let us consider the {\em execution} of the solution $\mathcal{H}$ where for $p = 1, \dots, L$, we run the $p$-th phase according to the sequence $H_p$; in Stage 1, each facility in the sequence becomes open one by one with the corresponding $(\tau_j)_j$ values, and in Stage 2 at the end of the phase, $\theta \leftarrow \theta(1+\eps^2)$.

Then an {\em atomic step} can still be defined in the same way as in \logadaptalg; it corresponds to opening a facility in Stage 1 (and updating $\alpha, S, A$ accordingly) or increasing $\alpha$ values in Stage 2 simultaneously (and updating $S$ accordingly). By ``at any point of the algorithm'', we mean any point in the algorithm's execution that is not in the middle of an atomic step.
The set of open facilities $S$ can be partitioned to be the set of the open free facilities $\Sf$ and the set of the open regular facilities $\Sr$.

Throughout the analysis, we let $\alpha^*_j$ be the final $\alpha$-value of a client $j\in D$, and we let $S^*$ be the set of opened facilities. (Also let $\Sf^*$ and $\Sr^*$ be the set of finally open free and regular facilities respectively.)
We start by analyzing the approximation guarantee of the algorithm with respect to  $\alpha^*$, and we then prove that $\alpha^*/2$ is a feasible solution to the dual.

\subsection{Approximation Guarantee.}

We prove Lemma~\ref{lemma:approx_guarantee_robust} restated below. The only difference from Lemma~\ref{lemma:approx_guarantee}
is that opening a facility only requires $\hat f - n \eta$ instead of $\hat f$.
\lemmaapproxguaranteerobust*

\begin{proof}[Proof of Lemma~\ref{lemma:approx_guarantee_robust}]
    The proof is by induction on the algorithm that at any point of the algorithm
    \begin{gather}
        \sum_{j\in I} \alpha_j \geq   \sum_{j\in I} (1-\delta)d(j, S)  + \sum_{i\in \Sr} (\hat f - n\eta)\,.
        \label{eq:IH_approx_guarantee_robust}
    \end{gather}
    The equality is initially true since $A= D$ (thus $I = \emptyset$) and $S = \emptyset$. We now analyze each of the two stages separately.

    \paragraph{Stage 1.} Consider what happens when we open a facility $h$, i.e., add it to $S$.

    If $h$ is a free facility, the $\alpha$ values and $\Sr$ do not change, and the only change is some client $j$ with $\alpha_j \geq (1 - \delta)d(j, S)$ becoming inactive. The amount of the increase of the left-hand side of~\eqref{eq:IH_approx_guarantee_robust} is at least that of the right-hand side.

    If $h$ is a regular facility $i$, let $(\alpha, S, A, \theta)$ be the state right before opening $i$, $A'$ be the clients that are removed from $A$ by the opening, which means $A' = \{ j \in A : \tau_j \geq (1 - \delta)d(i, j) \}$. (Recall that the opening lets $\alpha_j \leftarrow \tau_j$ for $j \in A$.)
    Further, let $X := \{j \in I: d(j, i) < d(j, S) \}$ be the subset of clients of $I$ that make positive bids to $i$.
    The change of cost of the right-hand side of~\eqref{eq:IH_approx_guarantee_robust} is at most
    \[
        (\hat f - n\eta) +   \sum_{j\in A'} (1-\delta) d(i,j) + \sum_{j \in X}  (1 - \delta)(d(i,j) - d(j, S))\,.
    \]
    Since $i$ is paid up to $\eta$ (the third bullet of Definition~\ref{def:robust_openable}), we also have
    \begin{align*}
        \hat f - n \eta & \leq
        \sum_{j\in A'} ( \tau_j  - (1-\delta) d(i,j)    )
        +
        \sum_{j \in X}  (1-\delta)(d(j, S) -  d(i,j)) \,.
    \end{align*}
    We thus get that the change of cost of the right-hand-side is at most $\sum_{j\in A'} \tau_j$, which is the change of the left-hand-side.

    \paragraph{Stage 2.}
    No facility is open at this stage, and every client $j$ that becomes inactive at this stage has $\alpha_j = (1 - \delta)d(j ,S)$, which implies that the left-hand-side and right-hand-side both increase with the same amount when a client $j$ becomes inactive.
\end{proof}

\subsection{Dual Feasibility}
We prove Lemma~\ref{lemma:dual_feasibility_robust} for dual feasibility. The following claim is identical to
Claim~\ref{claim:dual-feasible-induction} without any modification.
\begin{claim}[No Over-Bidding]
    At any point of the algorithm, for every facility $i$,
    \begin{equation}
        \sum_{j \in A} [\alpha_j - d(i,j)]^+ + \sum_{j \in I} [(1 - \delta)d(j, S) - d(j, i)]^+ \leq \hat{f}.
        \label{eq:dual-feasible-induction_robust}
    \end{equation}
    \label{claim:dual-feasible-induction_robust}
\end{claim}
\begin{proof}
    In the beginning, it is true because every client is active and $\alpha_j = 1 \leq d(i, j)$ for every $j \in D$.
    Let us prove that each step of the algorithm maintains this invariant. We first consider the steps that happen during the first stage and then those during the second stage.

    \paragraph{Stage 1.}
    Suppose that facility $h'$ is open with the bids $(\tau_j)_{j \in A}$.
    It does (1) potentially increase the $\alpha$-values of clients in $B(h', \eps \theta) \cap A$, and (2) make clients in $A \cap B(h', \theta/(1 - \delta))$ inactive, which includes $B(h', \eps \theta) \cap A$ considered in (1).

    Consider any facility $i$ and see how~\eqref{eq:dual-feasible-induction_robust} is impacted by this change.
    First, observe that a client $j$ that was inactive before the opening of $h'$ does not increase its contribution to~\eqref{eq:dual-feasible-induction_robust}. This is because $d(j,S)$ monotonically decreases in $S$.
    We now turn our attention to the more interesting case.
    For a client $j$ becoming inactive from active, the contribution to~\eqref{eq:dual-feasible-induction_robust} changes from $[\alpha_j - d(i,j)]^+$ (where $\alpha$-values are measured right before the opening) to
    $[(1 - \delta)d(j, S) - d(j, i)]^+$ (where $S$ includes $h'$).

    If $\alpha_j$ does not change by the opening of $h'$, the fact that $j$ becomes inactive means $(1 - \delta)d(j, S) \leq \alpha_j$, so $j$'s contribution to~\eqref{eq:dual-feasible-induction_robust} cannot increase.
    If $\alpha_j$ does strictly increase by the opening of $h'$, it means that $d(j, S) \leq d(j, h') \leq \eps \theta$ after the opening, so the contribution to~\eqref{eq:dual-feasible-induction_robust} decreases by at least $\theta / 2$ or becomes $0$.

    We have thus proved that no client increases their contribution to~\eqref{eq:dual-feasible-induction_robust} during the steps of the first stage, and so these steps maintain the inequality.

    \paragraph{Stage 2.}
    Assume towards contradiction that at the end of the phase, if we increase $\alpha_j$ of every $j \in A$ to $\min((1 + \eps^{2})\theta, (1 - \delta)d(j, S))$, the claim is violated for a non-empty subset of facilities. Since any free facility is only farther away in the metric (depending on $u(\cdot)$) we thus must have that the claim is violated by a set of regular facilities. Let $F'$ be that set.
    We select a ``minimal'' such counter example in the following way: let $\tau' \leq (1+\eps^2) \theta$ be the smallest value so that $\alpha'_j := \min(\tau', (1 - \delta)d(j, S))$ satisfies
    \[
        \sum_{j \in A} [\alpha_j' - d(i,j)]^+ +  \sum_{j \in I} [(1 - \delta)d(j, S) - d(j, i)]^+ = \hat{f},
    \]
    for some $i\in F'$.
    We remark that $\tau' \geq \theta$ since~\eqref{eq:dual-feasible-induction_robust} is satisfied for all facilities after the first stage by the previous arguments.
    Furthermore,  $i\not \in S$ is not yet open since if it was open then there would be no client $j\in A$ for which $\alpha'_j-d(i,j)$ is strictly positive. Hence the increase in $\alpha$-values cannot cause $i$ to violate~\eqref{eq:dual-feasible-induction_robust} if it were already opened, which would contradict $i\in F'$.

    We now show that this $i$  with $\tau_j = \alpha'_j$ for $j \in A \cap B(i, \eps \theta)$ and $\tau_j = \alpha_j$ for $j \in A \setminus B(i, \eps \theta)$ satisfies the conditions of Definition~\ref{def:robust_openable} for $i$ with respect to $S$. In other words, $i$ is an $\eta$-openable facility with respect to $S$ (that was not yet opened), which contradicts the {\em maximality} in Definition~\ref{def:valid_sequence} for a $\eta$-valid sequence.

    We verify the conditions of Definition~\ref{def:robust_openable} one-by-one. The first two bullets are satisfied by the definition of the bids $(\tau_j)_{j\in A}$.
    For the third bullet,
    \begin{align*}
                   & \sum_{j \in A}  [\tau_j - (1 - \delta)d(i,j)]^+
        + (1 - \delta)\sum_{j \in I} [d(j, S) - d(i, j)]^+                                       \\
        = \quad    & \sum_{j \in A \cap B(i, \eps \theta)}  [\alpha'_j - (1 - \delta)d(i,j)]^+ +
        \sum_{j \in A \setminus B(i, \eps \theta)}  [\alpha_j - (1 - \delta)d(i,j)]^+
        + (1 - \delta)\sum_{j \in I} [d(j, S) - d(i, j)]^+                                       \\
        \geq \quad & \sum_{j \in A \cap B(i, \eps \theta)}  [\alpha'_j - d(i,j)]^+ +
        \sum_{j \in A \setminus B(i, \eps \theta)}  [\alpha'_j - d(i,j)]^+
        + \sum_{j \in I} [(1 - \delta)d(j, S) - d(i, j)]^+\,.
    \end{align*}
    For the second summation we used that $\alpha_j - (1 - \delta)d(i,j) \geq (1 + \eps^{2})\alpha_j - d(i, j) \geq \alpha'_j - d(i,j) $ when $d(i, j) \geq \eps \theta$ (recall that $\alpha_j = \theta$ for all $j\in A$).

    It remains to verify the fourth bullet of Definition~\ref{def:robust_openable}. Suppose toward contradiction that there is a regular facility $i_0$ and  $k \in A$,
    \begin{gather*}
        \sum_{j\in A } [\tau_{j} - d(i_0, j)]^+ + \sum_{j\in I} [\tau_k  - d(i_0,k) - 2d(i_0, j)]^+ > \hat f.
    \end{gather*}
    We would like to show
    \begin{align*}
         & \sum_{j \in A} [\tau_j - d(i_0,j)]^+ +  \sum_{j \in I} [(1 - \delta)d(j, S) - d(i_0, j)]^+ \\ \geq \quad & \sum_{j\in A} [\tau_{j} - d(i_0, j)]^+ + \sum_{j\in I} [\tau_k  - d(i_0,k) - 2d(i_0, j)]^+
    \end{align*}
    which concludes the proof because then $i_0$ is violated with a smaller value of $\tau'$ which contradicts the minimal selection of $\tau'$.
    To prove the above inequality, note that $k \in A$ implies $\tau_k \leq (1 - \delta)d(k, S)$ and for every $j \in I$,
    \[
        (1 - \delta)d(k, S)
        \leq (1 - \delta)(d(k, i_0) + d(i_0, j) + d(j, S)),
    \]
    and so
    \[
        \tau_k - d(i_0, k) - 2d(i_0, j) \leq (1-\delta) d(j,S) - d(i_0,j)  \,.
    \]
\end{proof}

As usual, fix an arbitrary regular facility $i \in F$. The setup in this paragraph is also identical to Section~\ref{sec:log_adaptivity_analysis}.
Let
$D^* = \{ j \in D : \alpha^*_j > 2d(i, j) \}$ be those clients that contribute to the left-hand-side. Further, order the clients in $D^*$ according to $\alpha^*$ values by the time they are removed from $A$ (in the algorithm) and break ties according to $\alpha^*$ values (in increasing order). We will use $j, k$ to denote clients in $D^*$, and let us slightly abuse notation and use $j \leq k$ with respect to this ordering. In other words, we have $j\leq k$ if $j$ was removed from $A$ in an atomic step before $k$ was removed, $j$ and $k$ was removed during the same atomic step and $\alpha^*_j \leq \alpha^*_k$.
We also use
the terminology that a facility $i$ {\em freezes} $i_0$ if $i$ becomes open when $i_0$ is not, and $d(i, i_0) \leq (1+3\eps)\theta/2$. (Of course, $i$ freezes itself when it is open.)

The following lemma and the proof are also identical to Lemma~\ref{lemma:dual-feasible-inactive}.
\begin{lemma}
    For any $k \in D^*$ that becomes inactive strictly before $i$ becomes frozen,
    \begin{equation}
        \sum_{j \in D^* : j \geq k} [\alpha^*_k - d(i, j)]^+ +
        \sum_{j \in D^* : j < k} [\alpha^*_k - 2d(i, j) - d(i, k)]^+ \leq \hat f.
        \label{eq:dual-feasible-inactive_robust}
    \end{equation}
    \label{lemma:dual-feasible-inactive_robust}
\end{lemma}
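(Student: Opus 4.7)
} The plan is to mirror the proof of Lemma~\ref{lemma:dual-feasible-inactive} essentially verbatim, replacing every invocation of Claim~\ref{claim:dual-feasible-induction} with Claim~\ref{claim:dual-feasible-induction_robust}, and checking that neither the introduction of free facilities nor the use of a superset $S'$ in the openability certificate of Definition~\ref{def:valid_sequence} breaks any step. I will split into the two cases according to whether $k$ becomes inactive during Stage~1 or Stage~2 of its phase.

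For the Stage~1 case, suppose $k$ becomes inactive because some facility $i_0$ (regular or free) is opened. The assumption that $k$ becomes inactive strictly before $i$ is frozen lets me assume $d(i, i_0) > (1+3\eps)\theta/2$, i.e.\ $i_0$ does not freeze $i$. If $i_0$ is free, no $\alpha$-value changes when $i_0$ opens, and if $i_0$ is regular its opening can strictly increase $\alpha_j$ only for $j \in A \cap B(i_0, \eps\theta)$; in the latter case the bound
\[
d(i,j) \;\geq\; d(i,i_0) - d(i_0,j) \;>\; (1+\eps)\theta/2 \;\geq\; (1+\eps^2)\theta/2 \;\geq\; \alpha^*_j/2
\]
forces $j \notin D^*$. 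Iterating over every facility opened so far in the current phase that does not freeze $i$, I conclude that right before $i_0$ opens, every $j \in D^*$ has $\alpha_j \leq \theta$, every active such $j$ has $\alpha_j = \theta$, and hence $\alpha^*_k = \theta$. Applying Claim~\ref{claim:dual-feasible-induction_robust} to $i$ at this moment and comparing term-by-term against~\eqref{eq:dual-feasible-inactive_robust} (using $\alpha^*_k \leq (1-\delta)d(k,S) \leq (1-\delta)(d(k,j)+d(j,S))$ together with the triangle inequality to handle $j \in I$ with $j < k$) yields the desired inequality.

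For the Stage~2 case, $k$ becomes inactive at the end of the phase through the uniform $\alpha$-increase. I apply Claim~\ref{claim:dual-feasible-induction_robust} at the end of this phase and compare contributions: for $j \in A$ we have $\alpha_j = (1+\eps^2)\theta \geq \alpha^*_k$; for $j \in I$ with $j < k$, the triangle inequality gives $\alpha^*_k - 2d(i,j) - d(i,k) \leq (1-\delta)d(j,S) - d(i,j)$; and for $j \in I$ with $j \geq k$, $j$ must have been removed in the same atomic step as $k$, so $(1-\delta)d(j,S) = \alpha^*_j \geq \alpha^*_k$ by the tie-breaking convention. Summing these bounds yields~\eqref{eq:dual-feasible-inactive_robust}.

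The main obstacle I anticipate is the bookkeeping around free facilities and the superset $S'$ permitted in Definition~\ref{def:valid_sequence}. The crucial point is that the argument relies only on (i) monotonicity of $d(\cdot, S)$ in the \emph{actual} set $S$ of opened facilities during the execution of $\mathcal{H}$, and (ii) the invariant of Claim~\ref{claim:dual-feasible-induction_robust}, which is itself stated and proved with respect to this actual $S$ and is insensitive to whether $i_0$ is free or regular and to the superset used in the openability certificate. Once those two ingredients are in place, the case analysis and term-matching proceed exactly as in the proof of Lemma~\ref{lemma:dual-feasible-inactive}.
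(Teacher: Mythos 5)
Your proposal is correct and follows essentially the same route as the paper: the paper's proof of Lemma~\ref{lemma:dual-feasible-inactive_robust} is the Stage~1/Stage~2 case analysis of Lemma~\ref{lemma:dual-feasible-inactive} repeated verbatim with Claim~\ref{claim:dual-feasible-induction_robust} in place of Claim~\ref{claim:dual-feasible-induction}, the opening facility $h_0$ allowed to be free or regular, and the same term-by-term comparisons (including the identical distance bound $((1+3\eps)/2-\eps)\theta=(1+\eps)\theta/2$ and the tie-breaking argument for $j\in I$, $j\ge k$). Your observation that the superset $S'$ of Definition~\ref{def:valid_sequence} plays no role here matches the paper, which notes this relaxation only affects the proof of Lemma~\ref{lemma:dual_feasibility_robust}.
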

\begin{proof}
    Let us do the following case analysis depending on the stage $k$ becomes inactive.

    \paragraph{Stage 1.}
    $k$ becomes inactive because of the opening of some $h_0$. By the statement of the lemma, it suffices to handle the case that $h_0$ does not freeze $i$. So $d(i, h_0) > (1 + 3\eps)\theta / 2$. The opening of $h_0$ might have strictly increased the $\alpha$-value of the clients in $B(h_0, \eps \theta)$, but no such client $j$ will be in $D^*$, since they immediately become inactive while
    \[
        d(i, j) \geq d(i, h_0) - d(h_0, j) > ((1 + 3\eps) / 2 - \eps) \theta \geq (1+\eps^2) \theta/2 \geq \alpha^*_j / 2.
    \]
    Applying this argument to every facility open in this phase (which did not freeze $i$), we can conclude that, right before $h_0$ is open, every $j \in D^*$ has $\alpha_j \leq \theta$ and all active ones have $\alpha_j = \theta$. (So, $\alpha^*_k = \theta$ as well since the opening of $h_0$ does not strictly increase $\alpha_k$.)
    Claim~\ref{claim:dual-feasible-induction_robust}, applied right before $h_0$ is open, ensures that
    \[
        \sum_{j \in A} [\alpha_j - d(i, j)]^+
        + \sum_{j \in I} [(1 - \delta)d(j, S) - d(i, j)]^+ \leq \hat f.
    \]
    This satisfies~\eqref{eq:dual-feasible-inactive_robust} as (1) every $j \in D^*$ with $j \geq k$ is in $j \in A$ at that point, and (2) for every $j < k$ with $j \in I$,
    \[
        \alpha^*_k < (1 - \delta)d(k, S)
        \leq (1 - \delta)d(j, S) + d(j, i) + d(i, k)
    \]
    implies
    \[
        \alpha^*_k - 2d(i, j) - d(i, k)
        \leq (1 - \delta)d(j, S) - d(i, j).
    \]
    (Note that this $S$ does not contain $h_0$ and $k$ was not connected yet.)

    \paragraph{Stage 2.}
    Suppose that $k$ becomes inactive by the increase of the $\alpha$-values at the end of a phase.
    Then Claim~\ref{claim:dual-feasible-induction_robust} ensures that, at the end of this phase (after $k$ becomes inactive),
    \begin{equation}
        \sum_{j \in A} [\alpha_j - d(i,j)]^+ + \sum_{j \in I} [(1 - \delta)d(j, S) - d(j, i)]^+ \leq \hat{f}.
        \label{eq:dual-feasible-induction-again_robust}
    \end{equation}

    Let us see that the left-hand-side of~\eqref{eq:dual-feasible-induction-again_robust} is at least that of~\eqref{eq:dual-feasible-inactive_robust} by comparing the contribution of $j \in D^*$ to both. If $j \in A$, its contribution to the former is definitely at least that to the latter, since any active $j$ has $\alpha_j = (1 + \eps^{2})\theta \geq  \alpha^*_k$.

    For $j \in I$,  we must have
    \[
        \alpha^*_k \leq (1-\delta) d(k,S)  \leq  (1-\delta) ( d(k,j) + d(j,S))\,.
    \]
    (since $\alpha_k$ is never increased above $(1-\delta) d(k,S)$).
    and so by rearranging and using the triangle inequality $(d(k,j) \leq d(k,i) + d(i,j)$) we get
    \[
        \alpha^*_k  - 2d(i,j) - d(i,k)  \leq (1-\delta) d(j,S) - d(i,j)
    \]
    which satisfies our goal when $j<k$. Finally, when $j\in I$ but $j \geq k$, it means that $j$ was removed at the same time as $k$ at the end of the phase. In that case $(1-\delta) d(j,S) = \alpha^*_j$ and, as we break ties with $\alpha^*$-values, $\alpha^*_j \geq \alpha^*_k$,  which again satisfies our goal.
\end{proof}

Equipped with the above lemma, we are ready to complete the proof of dual feasibility.

\begin{proof}[Proof of Lemma~\ref{lemma:dual_feasibility_robust}]

    Consider the case that $i$ is frozen by $h_0$ with $(\tau_j)_{j \in A}$.
    The following claim also remains unchanged from Section~\ref{sec:log_adaptivity_analysis}.

    \begin{claim*}
        We have $\alpha^*_j = \tau_j$ for every $j \in D^* \cap A$.
    \end{claim*}
    \begin{proof}[Proof of Claim]
        Consider $j \in A$.
        First, note that if $\tau_j \geq (1-\delta) d(j,h_0)$ then $j$ is removed from $A$ when $h_0$ is opened and so $\alpha^*_j = \tau_j$.

        In the other case, when $\alpha_j \leq \tau_j < (1-\delta) d(j,h_0)$ we show that $j \not \in D^*$.
        Consider the (future) time right after $j$ is removed from the active clients. The $\alpha$-value of $j$ then (which is equal to the final $\alpha^*_j$) cannot be strictly greater than $(1 - \delta)d(j, h_0)$; whether it is increased in Stage 1 (as $\alpha \leftarrow \tau$) or Stage 2, since $h_0 \in S$ (where $S$ is the set of open facilities right before $j$ becomes inactive), the algorithm ensures that it cannot be strictly more than $(1 - \delta)d(j, S) \leq (1 - \delta)d(j, h_0)$.
        By the triangle inequality
        \begin{gather*}
            d(j, i) \geq
            d(j, h_0) - d(h_0, i) =
            \left(1 - \frac{d(h_0, i)}{\theta}\frac{\theta}{d(j,h_0)}\right)d(j,h_0)\,,
        \end{gather*}
        where $\theta$ is the value when $h_0$ was open. As $h_0$ freezes $i$, we have $d(h_0, i) \leq (1+3\eps)\theta/2$ and, by the assumption of the case we have $\theta = \alpha_j \leq (1-\delta) d(j,h_0)$. Plugging in those bounds to the above inequality yields,
        \begin{gather*}
            d(j,i) \geq
            \left(1 - \frac{(1+3\eps)}{2}(1-\delta)\right)d(j,h_0)\,.
        \end{gather*}
        We rewrite the above inequality (multiplying both sides by two and using that $\delta = 3\eps$) to obtain
        \begin{gather*}
            2d(j,i) \geq
            \left(2 - {(1+3\eps)}(1-\delta)\right)d(j,h_0) = \left(1+ 9\eps^2 \right) d(j,h_0) \,.
        \end{gather*}
        In other words we have $\alpha^*_j \leq d(j,h_0) < 2d(j,i)$, so  $j$ cannot be in $D^*$.
    \end{proof}

    The following remaining part of the proof needs some changes from Section~\ref{sec:log_adaptivity_analysis}, since when $h_0$ is $\eta$-openable in the $\eta$-openable sequence, it might be with respect to a superset $S' \supseteq S$. (Here $S$ denotes the set of all opened facilities right before $h_0$.)
    Let $A'$ the subset of $A$ that is still active with respected to the superset $S'$ for opening $h_0$ and let $I' = A \setminus A'$.
    As the $\alpha$-values of clients in $A\cap D^*$ remain unchanged after the opening of $h_0$, the fourth bullet of Definition~\ref{def:robust_openable} ensures that for any $k \in A' \cap D^*$,
    \begin{equation}
        \sum_{j \in A' \cap D^*} [\alpha^*_j - d(i, j)]^+ +
        \sum_{j \in (I \cup I') \cap D^*} [\alpha^*_k - 2d(i, j) - d(k, i)]^+ \leq \hat f.
        \label{eq:openable-fourth-again-again-again_robust}
    \end{equation}

    Moreover, for every client $k\in D^* \cap I$ that became inactive strictly before $h_0$ was opened, Lemma~\ref{lemma:dual-feasible-inactive_robust} says

    \begin{equation}
        \sum_{j \in D^* : j \geq k} [\alpha^*_k - d(i, j)]^+ +
        \sum_{j \in D^* : j < k} [\alpha^*_k - 2d(i, j) - d(i, k)]^+ \leq \hat f.
        \label{eq:dual-feasible-inactive-again-again_robust}
    \end{equation}

    Finally for a client in $k\in D^* \cap I'$, note that $\alpha^*_k = \alpha_k = \theta$, because $k$ was not active with respect to $S'$. Therefore, by Claim~\ref{claim:dual-feasible-induction_robust} applied right before $h_0$ was opened
    \begin{equation*}
        \sum_{j \in A} [\alpha^*_k - d(i,j)]^+ + \sum_{j \in I} [(1 - \delta)d(j, S) - d(j, i)]^+ \leq \hat{f}.
    \end{equation*}
    which by the same argument as in the proof of Lemma~\ref{lemma:dual_feasibility},
    \begin{equation}
        \sum_{j \in A} [\alpha^*_k - d(i,j)]^+ + \sum_{j \in I} [\alpha^*_k- 2d(j, i) - d(i,k)]^+ \leq \hat{f}.
        \label{eq:dual-feasible-induction-again-again_robust}
    \end{equation}

    Let us consider the summation of~\eqref{eq:dual-feasible-inactive-again-again_robust} for every $k \in I \cap D^*$,~\eqref{eq:dual-feasible-induction-again-again_robust} for every $k\in I' \cap D^*$,
    and~\eqref{eq:openable-fourth-again-again-again_robust} for every $k \in A' \cap D^*$,
    and consider how many times each term appears.
    Let $a' = |A' \cap D^*|, i' = |I' \cap D^*|, d = |D^*|$.

    \begin{itemize}
        \item $-d(i, j')$ when $j' \in I\cap D^*$: Say $j'$ is the $\ell$th client in $D^*$ for some $\ell \leq d- i' - a'$. Then $-d(i, j')$ appears $d + (d - \ell)$ times in~\eqref{eq:openable-fourth-again-again-again_robust},
              \eqref{eq:dual-feasible-inactive-again-again_robust}, and~\eqref{eq:dual-feasible-induction-again-again_robust} as $-d(i,j)$, and $(\ell - 1)$ times as $-d(i, k)$ in
              \eqref{eq:dual-feasible-inactive-again-again_robust} when $k = j'$, so the total is $2d - 1$.

        \item $-d(i, j')$ when $j' \in I'\cap D^*$:
              It appears $d + a'$ times in~\eqref{eq:openable-fourth-again-again-again_robust},
              \eqref{eq:dual-feasible-inactive-again-again_robust}, and~\eqref{eq:dual-feasible-induction-again-again_robust} as $-d(i,j)$, and $(d - i' - a')$ times as $-d(i, k)$ in
              \eqref{eq:dual-feasible-induction-again-again_robust} when $k = j'$, so the total is $2d - i'$.

        \item $-d(i, j')$ when $j' \in A'\cap D^*$:
              It appears $d$ times in~\eqref{eq:openable-fourth-again-again-again_robust},
              \eqref{eq:dual-feasible-inactive-again-again_robust}, and~\eqref{eq:dual-feasible-induction-again-again_robust} as $-d(i,j)$, and $(d - a')$ times as $-d(i, k)$ in
              \eqref{eq:openable-fourth-again-again-again_robust} when $k = j'$, so the total is $2d - a'$.

        \item $\alpha^*_{j'}$ for $j' \in I\cap D^*$: $d$ times in~\eqref{eq:dual-feasible-inactive-again-again_robust} as $\alpha^*_k$ when $k = j'$.

        \item $\alpha^*_{j'}$ for $j' \in I' \cap D^*$: $d$ times in~\eqref{eq:dual-feasible-induction-again-again_robust} as $\alpha^*_k$ when $k = j'$.

        \item $\alpha^*_{j'}$ for $j' \in A'\cap D^*$: $(d - a')$ times in~\eqref{eq:openable-fourth-again-again-again_robust} as $\alpha^*_k$ when $k = j'$ and once as $\alpha^*_j$ in~\eqref{eq:openable-fourth-again-again-again_robust} for each $k \in A'$, so it is still $d$ times.

    \end{itemize}
    Therefore, $\sum_{j\in D^*} \left(\alpha^*_j - 2 d(i,j)\right) \leq \hat f$, which proves the lemma in the case the opening of a facility $h_0$ freezes $i$.
    To complete the proof, we note that the remaining case, when no facility freezes $i$, corresponds to the easier situation in the above analysis  when $D^*\cap A = \emptyset$, i.e.,  it corresponds to summing up~\eqref{eq:dual-feasible-inactive-again-again_robust} over all clients in $D^*$, and so $\sum_{j\in D^*} \left(\alpha^*_j - 2 d(i,j)\right) \leq \hat f$ also in that case.
\end{proof}

\section*{Acknowledgment}

The authors are grateful to Rares-Darius Buhai for valuable feedback on the writing. In particular, for identifying the need to explicitly write Lemma~\ref{lemma:payableimpliesopenable}.

\bibliography{literature}
\end{document}